\newcommand{\naturals}{\mathbb{N}}
\newcommand{\setcompr}[2]{\left\{#1\;:\;#2\right\}}
\newcommand{\set}[1]{\left\{#1\right\}}
\newcommand{\angl}[1]{\left\langle#1\right\rangle}
\newcommand*{\rmbrace}{|\mskip-4mu\}}
\newcommand*{\lmbrace}{\{\mskip-4mu|}
\newcommand*{\mset}[1]{\lmbrace#1\rmbrace}
\newcommand{\M}{\mathbb{M}}
\newcommand{\mempty}{\boxslash}
\newcommand{\ltr}[1]{\mathtt{#1}}
\newcommand{\lp}[1]{\mathbf{#1}}
\newcommand{\SP}{\ensuremath{\mathsf{SP}}}
\newcommand{\pipe}{\;\;|\;\;}
\newcommand{\terms}{\ensuremath{\mathcal{T}}}
\newcommand{\sacc}{\ensuremath{\mathcal{F}}}
\newcommand{\pdepth}[1]{\ensuremath{\mathsf{depth}(#1)}}
\newcommand{\sem}[1]{%
  \mathchoice%
  {\left\llbracket#1\right\rrbracket} 
  {\llbracket#1\rrbracket} 
  {\llbracket#1\rrbracket} 
  {\llbracket#1\rrbracket} 
}
\newcommand{\supp}{\pi}
\newcommand{\runrel}{\rightarrow}
\newcommand{\arun}[2][]{\mathrel{\raisebox{-3pt}{$\xrightarrow[#1]{#2}$}}}
\newcommand{\epstrans}{\mathrel{\arun{1}_A}}
\newcommand{\sderiv}{\delta_\Sigma}
\newcommand{\pderiv}{\gamma_\Sigma}
\newcommand{\sarun}[1]{\arun{#1}_\Sigma}
\newcommand{\ssupp}{\supp_\Sigma}
\newcommand{\At}{\mathsf{At}}
\newcommand{\sst}[1]{{#1}^\mathbf{s}}
\newcommand{\pst}[1]{{#1}^\mathbf{p}}
\let\c@thm\undefined%
\theoremstyle{plain}
\declaretheorem[name=Theorem,numberwithin=section]{thm}
\newtheorem{lem}[thm]{Lemma}
\newtheorem{cor}[thm]{Corollary}
\newtheorem{fact}[thm]{Fact}
\theoremstyle{thmC}
\newtheorem{lemC}[thm]{Lemma}
\theoremstyle{definition}
\newtheorem{exa}[thm]{Example}
\newtheorem{defi}[thm]{Definition}
\crefname{prop}{Proposition}{Propositions}
\crefname{thm}{Theorem}{Theorems}
\crefname{clm}{Claim}{Claims}
\crefname{lem}{Lemma}{Lemmas}
\crefname{lemC}{Lemma}{Lemmas}
\crefname{cor}{Corollary}{Corollaries}
\crefname{exa}{Example}{Examples}
\crefname{defi}{Definition}{Definitions}
\crefname{fact}{Fact}{Facts}
\crefname{section}{Section}{Sections}
\crefname{figure}{Figure}{Figures}
\begin{document}

\title[Equivalence checking for weak bi-Kleene algebra]%
{Equivalence checking for weak bi-Kleene algebra\rsuper{*}}
\titlecomment{\lsuper{*}An earlier version of this paper was published at CONCUR'17~\cite{kappe-brunet-luttik-silva-zanasi-2017}}

\author[T.~Kapp\'{e}]{Tobias Kapp\'{e}\rsuper{a}}
\author[P.~Brunet]{Paul Brunet\rsuper{b}}
\author[B.~Luttik]{Bas Luttik\rsuper{c}}
\author[A.~Silva]{Alexandra Silva\rsuper{b}}
\author[F.~Zanasi]{Fabio Zanasi\rsuper{b}\texorpdfstring{\vspace{-7mm}}{}}

\address{\lsuper{a}Cornell University, Ithaca, New York, USA}
\email{tkappe@cornell.edu}
\address{\lsuper{b}University College London, United Kingdom}
\address{\lsuper{c}Eindhoven University of Technology, The Netherlands}

\thanks{%
  T.~Kapp\'{e} was partially supported by the ERC Starting Grant 679127 (ProFoundNet) and DARPA grant HR001120C0107 (Pronto).
  A.~Silva was partially supported by the ERC Starting Grant 679127 (ProFoundNet) and a Leverhulme Prize (PLP--2016--129).
  P.~Brunet acknowledges support from EPSRC grant n.\ EP/R006865/1.
  F.~Zanasi acknowledges support from EPSRC grant n.\ EP/R020604/1.
}

\keywords{sr-expressions, pomset automata, Kleene theorem, language equivalence}

\begin{abstract}
  Pomset automata are an operational model of \emph{weak bi-Kleene algebra}, which describes programs that can \emph{fork} an execution into parallel threads, upon completion of which execution can \emph{join} to resume as a single thread.
  We characterize a fragment of pomset automata that admits a decision procedure for language equivalence.
  Furthermore, we prove that this fragment corresponds precisely to \emph{series-rational expressions}, i.e., rational expressions with an additional operator for bounded parallelism.
  As a consequence, we obtain a new proof that equivalence of series-rational expressions is decidable.
\end{abstract}

\maketitle

\section{Introduction}
Kleene's theorem states the correspondence between the operational world of automata and the denotational world of expressions, on formal languages~\cite{kleene-1956}.
This famous discovery has proven pivotal to establish later results, such as Kozen's axiomatisation of equivalence of rational expressions~\cite{kozen-1994}, as well as to transpose the application of algorithms from a denotational to an operational setting --- for instance, one can leverage Hopcroft and Karp's algorithm for finite automata~\cite{hopcroft-karp-1971} to decide equivalence of rational expressions.

In spite of their simplicity, finite automata and rational expressions provide valuable tools in analyzing the behaviour of sequential programs~\cite{kozen-1996}.
The behavioural patterns of present-day programs, however, are not limited to sequential scenarios, where each event either strictly precedes or succeeds all others.
Indeed, reasoning about programs that run on multi-core processors requires us to adapt our descriptions such that two events need not be strictly ordered, but instead may occur \emph{in parallel}.
The study of \emph{concurrent Kleene algebra}~\cite{hoare-moeller-struth-wehrman-2009}, in the broadest sense, is concerned with extending techniques from rational expressions and finite automata to reason about systems that include parallelism.

We propose \emph{pomset automata} as an operational model for a fragment of concurrent Kleene algebra known as \emph{weak bi-Kleene algebra}, which describes programs where an execution may \emph{fork} into parallel computations, to \emph{join} the results of those computations later on, resuming execution.
The language semantics of these automata is given by sets of \emph{partially ordered multisets}, or \emph{pomsets}.
The first main contribution is a proof that language equivalence of states is decidable for the class of \emph{fork-acyclic} finite pomset automata.
The second main contribution is a Kleene theorem, which shows that this same fragment corresponds precisely to the denotational model of bi-Kleene algebra, known as \emph{series-rational expressions}, or \emph{sr-expressions} for short~\cite{lodaya-weil-2000,laurence-struth-2014} --- that is, rational expressions extended with parallel composition.
This correspondence then yields a decision procedure for deciding equivalence of series-rational expressions, via pomset automata.

\medskip
In \cref{section:related-work}, we discuss related work; we go over the necessary background regarding pomsets in \cref{section:preliminaries}.
In \cref{section:pomset-automata}, we introduce pomset automata and their semantics.
Additionally, we introduce a (structural) restriction on pomset automata, defining the class of \emph{fork-acyclic} pomset automata.
In \cref{section:language-equivalence}, we develop an algorithm for checking language equivalence of a subclass of fork-acyclic pomset automata; in \cref{sec:well-structured-automata}, we extend this procedure to fork-acyclic pomset automata in general.
In \cref{section:expressions-to-automata}, we show how to obtain a pomset automaton that recognizes the language of a series-rational expression; conversely, in \cref{section:automata-to-expressions}, we show how to obtain an equivalent series-rational expression from a finite and fork-acyclic pomset automaton.
We list directions for further work in \cref{section:further-work}.

For the sake of self-containment, we include proofs of formal claims that are not cited.
Routine proofs are delegated to the appendices to ensure brevity.

\section{Related work}%
\label{section:related-work}
There exist three pomset-based operational models for sr-expressions in the literature.
\emph{Branching automata} were pioneered by Lodaya and Weil~\cite{lodaya-weil-2000}.
These are non-deterministic finite automata enriched with two additional types of transition to mediate forking and joining of computation.
No decision procedure for language equivalence of branching automata is known.
Branching automata also come equipped with a reverse construction, which shows how to obtain an equivalent sr-expression from a particular class of branching automata.
The difference is that the description of this class involves a \emph{semantic} condition, i.e., makes a statement about the pomsets that can be accepted by a state in the branching automaton, whereas fork-acyclic pomset automata are described in purely structural terms.

Another model, proposed by Jipsen and Moshier~\cite{jipsen-moshier-2016} based on~\cite{lodaya-weil-2000} and also called \emph{branching automata}, is given by non-deterministic finite automata, enriched with a relation that specifies where computation may be joined, provided all threads can be traced back to a given common state.
In some sense, pomset automata are a dual to this model, in that they specify where execution can be forked, after which all threads can be joined at a given state after termination.
As far as we can tell, there is no known decision procedure for language equivalence of branching automata.
Branching automata in the style of Jipsen and Moshier also come with a translation back to sr-expressions; being based on~\cite{lodaya-weil-2000}, this construction inherits the semantic description of automata to which it can be applied.

Petri nets, specifically safe Petri nets, can also be used to describe the behaviours modelled by sr-expressions~\cite{brunet-pous-struth-2017,lodaya-ranganayakulu-rangarajan-2003}.
The advantage of this approach is that it allows one to use results from Petri net theory to study sr-expressions.
Furthermore, particularly in the case of~\cite{brunet-pous-struth-2017}, one can leverage the encoding of sr-expressions into Petri nets to develop a decision procedure for equivalence of sr-expressions, as well as a more general type of equivalence that allows threads to sequentialise (corresponding to concurrent Kleene algebra proper).
However, it should be noted that safe Petri nets can express a form of concurrency that is strictly more general than the type of concurrency that can be described by concurrent Kleene algebra~\cite{grabowski-1981}.
As such, converting a safe Petri net to an equivalent sr-expression necessarily discards some behaviour~\cite{lodaya-ranganayakulu-rangarajan-2003}.

The operational models discussed above associate an automaton or Petri net to an sr-expression by induction on the structure of the expression, using a Thompson-style translation~\cite{thompson-1968}.
In contrast, our expressions-to-automata translation generalizes Antimirov's construction~\cite{antimirov-1996}, and thus allows the operational representation to be constructed lazily.
This is particularly beneficial for algorithms that explore the state space of automata step-by-step, as it prevents them from computing the entire state space.

In~\cite{baeten-luttik-muller-vantilburg-2016}, Baeten et al.\ give an operational semantics to sr-expressions in terms of (non-deterministic) transitions systems, by interpreting the parallel composition as interleaving, which obviates the need for a parallel thread construction.
They show that there exist transition systems that are not bisimilar to any sr-expression, and characterise the fragment of systems for which such an sr-expression does exist.
A full Kleene theorem is recovered when they extend sr-expressions with interaction.

Our algorithm to check language equivalence in pomset automata was inspired by the work of Laurence and Struth~\cite{laurence-struth-2014} on a more general form of sr-expressions; specifically, the idea of checking language equality by computing the \emph{atoms} of languages is due to them.

\medskip

This paper is an extension of a CONCUR'17 paper~\cite{kappe-brunet-luttik-silva-zanasi-2017}; in comparison, the present work contains a more generalised presentation of pomset automata that allows for non-determinism, which enables us to define the class of \emph{well-structured} pomset automata.
We then use this structural restriction to guide the construction of a decision procedure for language equivalence of fork-acyclic pomset automata.
Moreover, the syntactic derivatives are now presented in the style of Antimirov~\cite{antimirov-1996}, rather than Brzozowski~\cite{brzozowski-1964}.

Another closely related paper is~\cite{kappe-brunet-luttik-silva-zanasi-2019}, in which we generalised the Kleene theorem of~\cite{kappe-brunet-luttik-silva-zanasi-2017} to \emph{series-parallel regular expressions}, which include the parallel variant of the Kleene star known as \emph{parallel star}.
For this generalisation to work, one needs to loosen the notion of fork-acyclicity to the strictly more liberal \emph{well-nestedness}; crucially, the definition of well-nested pomset automata relies on the presentation of pomset automata found in~\cite{kappe-brunet-luttik-silva-zanasi-2017}.
Furthermore, in~\cite{kappe-brunet-luttik-silva-zanasi-2019}, we show that language equivalence of pomset automata in general is undecidable --- this justifies the fact that the decision procedure presented in this paper requires fork-acyclicity.

\section{Preliminaries}%
\label{section:preliminaries}

We fix a finite set of symbols $\Sigma$, referred to as the \emph{alphabet}.
When $S$ and $T$ are sets, we write $T^S$ for the set of functions from $S$ to $T$.
We write $2^S$ for the set of subsets of $S$, which can be identified with the functions from $S$ to the two-element set $2 = \set{0, 1}$.

A \emph{multiset} over a set $S$ is a ``subset'' of $S$ where elements may occur more than once; more formally, it is a function $\phi: S \to \naturals$.
Finite multisets are denoted using double braces, e.g., $\phi = \mset{1, 1}$ is the multiset over $\mathbb{N}$ where $\phi(n) = 2$ if $n = 1$, and $\phi(n) = 0$ otherwise.
In the following, we fix multisets $\phi, \psi$ over $S$.
When $s \in S$, we use $s \in \phi$ as a shorthand for $\phi(s) \neq 0$.
We say that $\phi$ is \emph{finite} if there are finitely many $s \in S$ such that $s \in \phi$.
If $\phi$ is finite, then the \emph{size} of $\phi$, denoted $|\phi|$, is given by $\sum_{s \in S} \phi(s)$.
We write $\M(S)$ for the \emph{set of finite multisets} over $S$.
We denote the empty multiset by $\mempty$, and write $\phi \sqcup \psi$ for the \emph{disjoint union} of $\phi$ and $\psi$, where $(\phi \sqcup \psi)(s) = \phi(s) + \psi(s)$.
For instance, $\mset{0} \sqcup \mset{0, 1} = \mset{0, 0, 1}$.

A \emph{non-deterministic finite automaton} (\emph{NFA}) is a tuple $A = \angl{Q, \delta, F}$ where $Q$ is a finite set of \emph{states}, with $F \subseteq Q$ the \emph{accepting states}, and $\delta: Q \times \Sigma \to 2^Q$ is a function.
The \emph{language} of $q \in Q$ in $A$, denoted $L_A(q)$, is the set of words $w = \ltr{a}_1\cdots\ltr{a}_n$ such that there exist $q = q_0, \dots, q_n \in Q$ where $q_{i+1} \in \delta(q_i, \ltr{a}_{i+1})$ for $0 \leq i < n$, and $q_n \in F$.

\subsection{Pomsets}

We commonly represent an execution of a program as a \emph{word} over some finite alphabet $\Sigma$.
In such a word, each \emph{position} corresponds to an \emph{event} in the execution whose \emph{name} is given by the symbol on that position; events are ordered according to their positions.
For instance, if $\Sigma = \set{\ltr{a}, \ltr{b}, \ltr{c}}$, then the word $\ltr{abca}$ represents an execution where an event of type $\ltr{a}$ occurs, followed by events of type $\ltr{b}$, $\ltr{c}$ and $\ltr{a}$, in that order.

To represent an execution of a program with concurrency, we need to relax this model to allow a partial order on events.
For instance, a concurrent program may execute an event of type $\ltr{a}$ before forking into two threads that execute events of type $\ltr{b}$ and $\ltr{c}$ respectively, after which the threads join to perform an event of type $\ltr{a}$.
Note now, in this execution, there is no ordering of the events labelled by $\ltr{b}$ and $\ltr{c}$ --- they are \emph{concurrent}.

The model most commonly found in the literature to account for such executions was proposed independently by Winkowski~\cite{winkowski-1977} and Pratt~\cite{pratt-1982}, and studied extensively by Winkowski~\cite{winkowski-1979} and Grabowski~\cite{grabowski-1981}; we adopt Pratt's terminology.
Defining the model requires some patience, as the indirection between \emph{events} and their \emph{names}, which is given implicitly in words by \emph{positions} and their \emph{symbols} is slightly tricky to generalize.

\begin{defi}%
  \label{definition:labelled-poset}
  A \emph{labelled partially ordered set} (\emph{labelled poset}) is a tuple $\lp{u} = \angl{S_\lp{u}, \leq_\lp{u}, \lambda_\lp{u}}$ consisting of a \emph{carrier} set $S_\lp{u}$, a partial order $\leq_\lp{u}$ on $S_\lp{u}$ and a \emph{labelling} function $\lambda_\lp{u}: S_\lp{u} \to \Sigma$.
\end{defi}

For technical reasons, we adopt the convention that the carrier of a labelled poset is a subset of $\mathbb{N}$; under this convention, the collection of labelled posets is a proper set.

The definition above gets us close to where we need to be.
For instance, the example execution above can be represented by the labelled poset $\angl{S_\lp{u}, \leq_\lp{u}, \lambda_\lp{u}}$, in which
\begin{mathpar}
  S_\lp{u} = \set{1, 2, 3, 4}
  \and
  1 \leq_\lp{u} 2 \leq_\lp{u} 4
  \and
  1 \leq_\lp{u} 3 \leq_\lp{u} 4
  \and
  \lambda_\lp{u} = \set{1 \mapsto \ltr{a}, 2 \mapsto \ltr{b}, 3 \mapsto \ltr{c}, 4 \mapsto \ltr{a}}
\end{mathpar}

However, if the event labelled by $\ltr{b}$ were represented by $5$ instead of $2$ (adjusting $\leq_\lp{u}$ and $\lambda_\lp{u}$ accordingly), then this new labelled poset would still represent the same execution; in some sense, we care only about the labels of the events, and their order.
Hence, we should abstract from the exact contents of the carrier.
This is done as follows.

\begin{defi}[Pomsets]
  Let $\lp{u} = \angl{S_\lp{u}, \leq_\lp{u}, \lambda_\lp{u}}$ and $\lp{v} = \angl{S_\lp{v}, \leq_\lp{v}, \lambda_\lp{v}}$ be labelled posets.
  A \emph{labelled poset isomorphism} from $\lp{u}$ to $\lp{v}$ is a bijection $h: S_\lp{u} \to S_\lp{v}$, such that $\lambda_\lp{v} \circ h = \lambda_\lp{u}$, and for $s,s' \in S_\lp{u}$ we have $s \leq_\lp{u} s'$ if and only if $h(s) \leq_\lp{u} h(s')$.
  We write $\lp{u} \cong \lp{v}$ if such an isomorphism exists between $\lp{u}$ and $\lp{v}$, and note that $\cong$ is an equivalence.

  A \emph{partially ordered multiset}, or \emph{pomset} for short, is an equivalence class of labelled posets; we write $[S_\lp{u}, \leq_\lp{u}, \lambda_\lp{u}]$ for the equivalence class of the labelled poset $\angl{S_\lp{u}, \leq_\lp{u}, \lambda_\lp{u}}$.
\end{defi}

Since the collection of labelled posets forms a proper set, $\cong$ is a proper relation.
Consequently, a pomset is also a proper set, as is the set of pomsets.

We write $1$ for the \emph{empty pomset}, i.e., the isomorphism class consisting of the unique empty labelled poset.
When $\ltr{a} \in \Sigma$ we may write $\ltr{a}$ to denote the unique pomset containing one event, which is labelled with $\ltr{a}$; such a pomset is called \emph{primitive}.

\begin{defi}[Pomset composition]%
  \label{definition:pomset-composition}
  Let $U = [S_\lp{u}, \leq_\lp{u}, \lambda_\lp{u}]$ and $V = [S_\lp{v}, \leq_\lp{v}, \lambda_\lp{v}]$ be pomsets.
  Without loss of generality, we may assume that $S_\lp{u}$ and $S_\lp{v}$ are disjoint.

  We define two types of composition.
  The \emph{sequential composition} of $U$ and $V$, denoted $U \cdot V$, is the pomset
  $[S_\lp{u} \cup S_\lp{v}, {\leq_\lp{u}} \cup {\leq_\lp{v}} \cup {(S_\lp{u} \times S_\lp{v})}, \lambda_\lp{u} \cup \lambda_\lp{v}]$.
  The \emph{parallel composition} of $U$ and $V$, denoted $U \parallel V$, is the pomset
  $[S_\lp{u} \cup S_\lp{v}, {\leq_\lp{u}} \cup {\leq_\lp{v}}, \lambda_\lp{u} \cup \lambda_\lp{v}]$.
\end{defi}

In the above, $\lambda_\lp{u} \cup \lambda_\lp{v}: S_\lp{u} \cup S_\lp{v} \to \Sigma$ is the function that agrees with $\lambda_\lp{u}$ on $S_\lp{u}$, and with $\lambda_\lp{v}$ on $S_\lp{v}$ --- note that $\lambda_\lp{u} \cup \lambda_\lp{v}$ is well-defined, given that $S_\lp{u}$ and $S_\lp{v}$ are disjoint.
The order relations of $U \cdot V$ and $U \parallel V$ are partial orders for the same reason.

Of course, one should check that the operators given above are well-defined, i.e., that $U \cdot V$ and $U \parallel V$ are the same (as in, given by isomorphic labelled posets) regardless of the disjoint labelled posets that are chosen as representatives.
This turns out to be the case.

It is easy to see that both operators are associative, and that $\parallel$ is commutative.
Furthermore, $1$ is the unit of both sequential and parallel composition, i.e., $U \cdot 1 = 1 \cdot U = U \parallel 1 = U$ for all pomsets $U$~\cite{gischer-1988}.
For the remainder of this paper, we adopt the convention that $\cdot$ binds more tightly than $\cdot$, i.e., $U \cdot V \parallel W$ should be read as $(U \cdot V) \parallel W$.

\begin{defi}[Pomset types]
  Let $U$ be a pomset.
  We say that $U$ is \emph{sequential} if there exist non-empty pomsets $U_1, U_2$ such that $U = U_1 \cdot U_2$.
  Also, $U$ is a \emph{sequential prime} if it is non-empty, and for all pomsets $V$ and $W$ such that $U = V \cdot W$, we have $V = 1$ or $W = 1$.

  Similarly, we say that $U$ is \emph{parallel} if there exist non-empty pomsets $U_1, U_2$ such that $U = U_1 \parallel U_2$.
  Also, $U$ is a \emph{parallel prime} if it is non-empty, and for all pomsets $V$ and $W$ such that $U = V \parallel W$, it holds that $V = 1$ or $W = 1$.
\end{defi}

For our type of concurrency, we study a specific type of pomset, as follows.
\begin{defi}
  The set of \emph{series-parallel pomsets}, or \emph{sp-pomsets}, denoted $\SP(\Sigma)$, is the smallest set that contains the empty and primitive pomsets, and is closed under sequential and parallel composition.
  In other words, $\SP(\Sigma)$ is the smallest set satisfying the rules
  \begin{mathpar}
    \inferrule{~}{%
      1 \in \SP(\Sigma)
    }
    \and
    \inferrule{
      \ltr{a} \in \Sigma
    }{%
      \ltr{a} \in \SP(\Sigma)
    }
    \and
    \inferrule{%
      U, V \in \SP(\Sigma)
    }{%
      U \cdot V \in \SP(\Sigma)
    }
    \and
    \inferrule{%
      U, V \in \SP(\Sigma)
    }{%
      U \parallel V \in \SP(\Sigma)
    }
  \end{mathpar}
\end{defi}

It should be clear that the pomsets in $\SP(\Sigma)$ built without parallel composition have a total order on their nodes, and hence correspond exactly to words.
We will make this identification throughout this paper, writing $\Sigma^*$ for the set of words over $\Sigma$.

Series-parallel pomsets have the convenient property that they may be partitioned into empty, primitive, sequential and parallel pomsets, in the following way.

\begin{lemC}[{\cite[Theorem 3.1]{gischer-1988}}]%
  \label{lemma:sp-unique-kind}
  Let $U \in \SP(\Sigma)$.
  Exactly one of the following is true:
  \emph{(i)}~$U$ is empty, or
  \emph{(ii)}~$U$ is primitive, or
  \emph{(iii)}~$U$ is sequential, or
  \emph{(iv)}~$U$ is parallel.
\end{lemC}

Another useful tool in dissecting pomsets comes from \emph{factorisation}.
In a sense, factorising a pomset is analogous to writing a word as the sequence of its letters; the difference here is that whereas words can be composed only sequentially, pomsets can also be composed in parallel --- hence, we obtain two types of factorisation.

\begin{defi}[Factorisation]
  Let $U$ be a pomset.
  When $U = U_1 \cdots U_n$ with $U_1, \dots, U_n$ sequential primes, we refer to the sequence $U_1, \dots, U_n$ as a \emph{sequential factorisation} of $U$.

  Similarly, when $U = U_1 \parallel \dots \parallel U_n$ such that $U_1, \dots, U_n$ are parallel primes, we refer to the multiset $\mset{U_1, \dots, U_n}$ as a \emph{parallel factorisation} of $U$.
\end{defi}

Just like a word can be uniquely written as a concatenation of its symbols, so does sequential factorisation give rise to such a unique concatenation.

\begin{lem}[{\cite[Lemma~3.2]{gischer-1988} and~\cite[Proposition~2]{grabowski-1981}; see also~\cite{graham-knuth-motzkin-1972}}]%
  \label{lemma:sequential-factorisation-unique}
  Sequential factorisations exist uniquely for sp-pomsets.
\end{lem}

We conclude with a similar statement about parallel factorisation.

\begin{restatable}{lem}{restateparallelfactorisationunique}%
  \label{lemma:parallel-factorisation-unique}
  Parallel factorisations exist uniquely for sp-pomsets.
\end{restatable}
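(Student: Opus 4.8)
The plan is to characterise parallel primes graph-theoretically and then read off the factorisation from the \emph{comparability graph} of a pomset. For a representative labelled poset $\angl{S_\lp{u}, \leq_\lp{u}, \lambda_\lp{u}}$ of a pomset $U$, consider the undirected graph on $S_\lp{u}$ that connects two distinct events $s, s'$ exactly when they are comparable, i.e.\ $s \leq_\lp{u} s'$ or $s' \leq_\lp{u} s$. The central observation I would prove first is the following equivalence: a non-empty pomset is a parallel prime if and only if its comparability graph is connected. Both directions rest on the same fact, namely that writing $U = V \parallel W$ with $V, W$ non-empty amounts precisely to splitting $S_\lp{u}$ into two non-empty parts between which no comparability holds: parallel composition adds no order across its components, so the two carriers land in different connected components; conversely, if the comparability graph falls apart into a part $C$ and its (non-empty) complement, then restricting $U$ to $C$ and to its complement exhibits $U$ as a non-trivial parallel composition.

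With this lemma in hand, existence is immediate and does not even require induction. Since every sp-pomset is finite, its comparability graph has finitely many connected components $C_1, \dots, C_n$; let $U_i$ be the restriction of $U$ to $C_i$, carrying the inherited order and labelling. Each $U_i$ is non-empty and has a connected comparability graph (a connected component is connected), so each $U_i$ is a parallel prime by the equivalence above. Moreover $U = U_1 \parallel \dots \parallel U_n$: the carriers partition $S_\lp{u}$, and since events in different components are never comparable in $U$, the order of $U$ is exactly the union of the orders of the $U_i$ with no cross-relations --- which is precisely the order of the parallel composition. Hence $\mset{U_1, \dots, U_n}$ is a parallel factorisation of $U$. (As an alternative, one could argue existence by induction using \cref{lemma:sp-unique-kind}, treating empty, primitive and sequential pomsets as single primes and merging the factorisations of the two components in the parallel case; the sequential case uses that a sequential pomset has a connected comparability graph.)

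For uniqueness, suppose $U = U_1 \parallel \dots \parallel U_n$ with every $U_i$ a parallel prime. By the equivalence, each $U_i$ has a connected comparability graph, and because parallel composition introduces no comparabilities between distinct components, no event of $U_i$ is comparable to any event of $U_j$ for $i \neq j$. Therefore the carriers of $U_1, \dots, U_n$ are exactly the connected components of the comparability graph of $U$, and each $U_i$ is forced to be the restriction of $U$ to its component. Since the partition into connected components is an invariant of $U$, any two parallel factorisations of $U$ consist of the restrictions of $U$ to the same blocks, and hence agree as multisets of (isomorphism classes of) pomsets.

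The step I expect to require the most care is the characterisation lemma: one must check that restricting a pomset to a set of events genuinely yields a pomset and that parallel composition corresponds \emph{exactly}, and not merely up to added order, to comparability-disconnectedness, so that no cross-relations are silently introduced or dropped. A secondary subtlety is the final multiset bookkeeping in the uniqueness argument, where two distinct connected components may be isomorphic: this is harmless, since both factorisations then record that isomorphism class with the same multiplicity, but it should be stated explicitly to justify equality of multisets rather than merely of underlying partitions.
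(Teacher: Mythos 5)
Your proof is correct, but it takes a genuinely different route from the paper. The paper proves this lemma by mirroring its proof of \cref{lemma:sequential-factorisation-unique}: it imports a Levi-style decomposition lemma (\cref{lemma:pomset-levi-parallel}, from prior work: if $U \parallel V = W \parallel X$ then the four pomsets decompose over a common grid $Y_0, Y_1, Z_0, Z_1$) and then argues uniqueness by induction on the minimum length of two given factorisations, using \cref{lemma:sp-unique-kind} and primality to dispatch the base cases and the grid lemma to peel off one prime at a time in the inductive step. Your comparability-graph argument instead gives an explicit \emph{characterisation} of parallel primes (connectedness), from which both existence and uniqueness fall out at once: the factors of any factorisation are forced to be the restrictions of $U$ to the connected components. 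This buys you several things the paper's proof does not: it is self-contained (no imported Levi lemma), it proves existence explicitly (the paper's appendix proof only addresses uniqueness, taking existence as implicit), and it establishes the statement for arbitrary finite pomsets rather than relying on sp-structure --- note that sp-ness enters your argument only through finiteness of the carrier, whereas the paper's induction genuinely uses \cref{lemma:sp-unique-kind}. The paper's approach, in turn, buys uniformity: the same template handles the sequential case, where no comparability-graph characterisation is available. Two small points to tie off if you write this up fully: handle the empty pomset explicitly (zero components, empty multiset as the factorisation), and state the observation that labelled-poset isomorphisms preserve comparability and hence map components to components, which is what makes ``the partition into connected components is an invariant of $U$'' legitimate at the level of isomorphism classes --- you flag both subtleties yourself, and neither is a gap.
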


\subsection{Pomset languages}

All possible executions of a sequential program (described as words) can be collected in a set to form a \emph{language} describing the behaviour of a program.
Analogously, we can collect pomsets in a \emph{pomset language} to describe the behaviour of a concurrent program, as follows.

\begin{defi}
  A \emph{pomset language} is a set of pomsets; a pomset language made up of sp-pomsets is referred to as a \emph{series-parallel language}, or \emph{sp-language} for short.

  The composition operators of pomsets lift in a pointwise manner; concretely
  \begin{mathpar}
    L \cdot L' = \setcompr{U \cdot U'}{U \in L, U' \in L'}
    \and
    L \parallel L' = \setcompr{U \parallel U'}{U \in L, U' \in L'}
  \end{mathpar}
  The Kleene closure also applies to pomset languages, as follows
  \[
    L^* = \bigcup_{n \in \naturals} L^n
    \quad \mbox{in which} \quad
    L^0 = \set{1}
    \quad \mbox{and} \quad
    L^{n+1} = L \cdot L^n
  \]
\end{defi}

To relate pomset languages over different alphabets, the notion of \emph{substitution} is useful; a substitution allows us to translate a pomset language over one alphabet into a pomset language over another alphabet, just by substituting the letters.
\begin{defi}
  Let $\Delta$ be an alphabet.
  A \emph{substitution} is a function $\zeta: \Sigma \to 2^{\SP(\Delta)}$.
  We can lift the domain $\zeta$ to $\SP(\Sigma)$ inductively, as follows:
  \begin{mathpar}
    \zeta(1) = \set{1}
    \and
    \zeta(U \cdot V) = \zeta(U) \cdot \zeta(V)
    \and
    \zeta(U \parallel V) = \zeta(U) \parallel \zeta(V)
  \end{mathpar}
  When $L$ is a pomset language, we write $\zeta(L)$ for the set $\bigcup_{U \in L} \zeta(U)$.

  Finally, we call $\zeta$ \emph{atomic} if all of the following hold:
  \begin{enumerate}[(i)]
  \item
    for $\ltr{a} \in \Sigma$ we have that $\zeta(\ltr{a})$ consists of sequential primes exclusively, and

  \item
    for $\ltr{a}, \ltr{b} \in \Sigma$, we have that $\zeta(\ltr{a}) \cap \zeta(\ltr{b}) \neq \emptyset$ if and only if $\ltr{a} = \ltr{b}$.
  \end{enumerate}
\end{defi}

\noindent
Atomic substitutions have the following useful properties.

\begin{restatable}{lem}{restateatomicproperties}%
\label{lemma:atomic-properties}
Let $L, L' \subseteq \Sigma^*$, and let $\zeta$ be a substitution.
If $\zeta$ is atomic, then
\begin{mathpar}
\zeta(L \cap L') = \zeta(L) \cap \zeta(L')
\and
\zeta(L \setminus L') = \zeta(L) \setminus \zeta(L')
\and
\zeta(L) = \emptyset \iff L = \emptyset
\end{mathpar}
\end{restatable}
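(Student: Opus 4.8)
The plan is to reduce the whole proposition to two facts about how $\zeta$ acts on individual words and then obtain all three identities by elementary set-theoretic bookkeeping. Writing an arbitrary $w \in \Sigma^*$ as $w = \ltr{a}_1 \cdots \ltr{a}_n$, the two facts are: \textbf{(a)} $\zeta(w) \neq \emptyset$ for every $w$, and \textbf{(b)} $\zeta(w) \cap \zeta(w') = \emptyset$ whenever $w \neq w'$. Since $\zeta(L) = \bigcup_{w \in L} \zeta(w)$ for word languages, (a) and (b) say that $\{\zeta(w) : w \in \Sigma^*\}$ is a family of non-empty, pairwise disjoint sets; thus every pomset $P \in \zeta(L)$ lies in $\zeta(w)$ for a \emph{unique} word $w$, and $P \in \zeta(L)$ exactly when that $w$ lies in $L$. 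From this observation the intersection identity follows by distributing the union over the intersection and discarding the cross terms $\zeta(w) \cap \zeta(w')$ with $w \neq w'$; the difference identity follows because the unique word associated to $P$ decides membership in both $\zeta(L)$ and $\zeta(L')$ simultaneously; and the emptiness equivalence follows because a union of non-empty sets is empty iff the index set $L$ is empty.

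Establishing (a) is immediate. Instantiating atomicity condition (ii) with $\ltr{a} = \ltr{b}$ forces $\zeta(\ltr{a}) \cap \zeta(\ltr{a}) = \zeta(\ltr{a}) \neq \emptyset$ for every letter, and $\zeta(1) = \set{1}$ handles the empty word; since $\zeta(w)$ is a sequential product of the non-empty sets $\zeta(\ltr{a}_1), \dots, \zeta(\ltr{a}_n)$, it is itself non-empty.

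The main obstacle is (b), which is where the hypotheses genuinely do their work. Suppose $P \in \zeta(w) \cap \zeta(w')$ with $w = \ltr{a}_1 \cdots \ltr{a}_n$ and $w' = \ltr{b}_1 \cdots \ltr{b}_m$. Then $P = p_1 \cdots p_n$ with $p_i \in \zeta(\ltr{a}_i)$ and $P = q_1 \cdots q_m$ with $q_j \in \zeta(\ltr{b}_j)$. By atomicity condition (i) every $p_i$ and every $q_j$ is a sequential prime, so $p_1, \dots, p_n$ and $q_1, \dots, q_m$ are both sequential factorisations of $P$. Invoking uniqueness of sequential factorisation (\cref{lemma:sequential-factorisation-unique}) yields $n = m$ and $p_i = q_i$ for each $i$. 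Consequently $p_i \in \zeta(\ltr{a}_i) \cap \zeta(\ltr{b}_i)$, so this intersection is non-empty, and atomicity condition (ii) then forces $\ltr{a}_i = \ltr{b}_i$. Hence $w = w'$, establishing (b). The delicate point, and the only place where atomicity is truly essential, is recognising the two representations of $P$ as honest sequential factorisations so that \cref{lemma:sequential-factorisation-unique} applies and then reading off letter equality from (ii); once this is done, the three identities in the statement are purely formal.
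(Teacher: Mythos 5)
Your proof is correct and takes essentially the same route as the paper's: the paper likewise isolates the key fact that each pomset lies in $\zeta(w)$ for at most one word $w$ (your disjointness fact (b)), proved exactly as you do via sequential primality from atomicity condition (i), uniqueness of sequential factorisation (\cref{lemma:sequential-factorisation-unique}), and letter-matching from condition (ii), and then derives all three identities by the same elementary set-theoretic bookkeeping. Your fact (a) also mirrors the paper's use of condition (ii) to obtain $\zeta(\ltr{a}) \neq \emptyset$ in the emptiness claim.
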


\subsection{Series-rational expressions}
Rational expressions can denote languages, which in turn can describe the behaviour of a program.
Continuing our analogy, \emph{series-rational expressions}~\cite{lodaya-weil-2000} can denote the behaviour of a program with fork/join-style concurrency.
Essentially, these are rational expressions extended with parallel composition.

\begin{defi}
The set of \emph{series-rational expressions}, or \emph{sr-expressions} for short, denoted $\terms$, is the smallest set generated by the grammar
\[
    e, f ::= 0 \pipe 1 \pipe \ltr{a} \in \Sigma \pipe e + f \pipe e \cdot f \pipe e \parallel f \pipe e^*
\]
\end{defi}

Like rational expressions, series-rational expressions have a straightforward semantics in terms of sp-languages, where each operator is interpreted as an operator on sp-languages.

\begin{defi}[Semantics]
We define $\sem{-}: \terms \to 2^{\SP(\Sigma)}$ inductively, as follows.
\begin{mathpar}
\sem{0} = \emptyset
\and
\sem{1} = \set{1}
\and
\sem{\ltr{a}} = \set{\ltr{a}}
\and
\sem{e + f} = \sem{e} \cup \sem{f}
\\
\sem{e \cdot f} = \sem{e} \cdot \sem{f}
\and
\sem{e \parallel f} = \sem{e} \parallel \sem{f}
\and
\sem{e^*} = \sem{e}^*
\end{mathpar}
When $L$ is a pomset language and there exists an $e \in \terms$ such that $L = \sem{e}$, we say that $L$ is a \emph{series-rational language}, or \emph{sr-language} for short.
\end{defi}

An important property of sr-expressions is whether their semantics contains the empty pomset.
We syntactically characterise such sr-expressions.
\begin{defi}%
\label{definition:accepting-terms}
  $\sacc$ is the smallest subset of $\terms$ that satisfies the following for all $e, f \in \terms$:
  \begin{mathpar}
    \inferrule{~}{%
      1 \in \sacc
    }
    \and
    \inferrule{%
      e \in \sacc
    }{%
      e + f \in \sacc
    }
    \and
    \inferrule{%
      f \in \sacc
    }{%
      e + f \in \sacc
    }
    \and
    \inferrule{%
      e, f \in \sacc
    }{%
      e \cdot f \in \sacc
    }
    \and
    \inferrule{%
      e, f \in \sacc
    }{%
      e \parallel f \in \sacc
    }
    \and
    \inferrule{~}{%
      e^* \in \sacc
    }
  \end{mathpar}
\end{defi}

To see that $\sacc$ indeed characterizes the empty pomset property, we have

\begin{restatable}{lem}{restatesemvssacc}%
  \label{lemma:sem-vs-sacc}
  Let $e \in \terms$.
  Now $e \in \sacc$ if and only if $1 \in \sem{e}$.
\end{restatable}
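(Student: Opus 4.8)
The plan is to proceed by structural induction on $e \in \terms$, establishing the biconditional $e \in \sacc \iff 1 \in \sem{e}$ directly at each case. The whole argument hinges on one elementary observation about pomset composition: since both $\cdot$ and $\parallel$ form the union of the carrier sets of their arguments, a composite pomset is empty precisely when both of its factors are. Formally, $U \cdot V = 1$ iff $U = V = 1$, and likewise $U \parallel V = 1$ iff $U = V = 1$. I would record this as a preliminary remark, since it drives the two multiplicative cases.

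For the base cases I would check the three constants directly. For $e = 1$ both sides hold: $1 \in \sem{1} = \set{1}$, and $1 \in \sacc$ by the first rule. For $e = 0$ and $e = \ltr{a}$ both sides fail: $\sem{0} = \emptyset$ and $\sem{\ltr{a}} = \set{\ltr{a}}$ contains only a primitive (hence non-empty) pomset, so $1$ belongs to neither; meanwhile neither $0$ nor $\ltr{a}$ is the conclusion of any rule defining $\sacc$, so by minimality neither lies in $\sacc$.

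The inductive step treats the four operators. For $e = f + g$ I would use $\sem{f + g} = \sem{f} \cup \sem{g}$ to reduce $1 \in \sem{f + g}$ to ``$1 \in \sem{f}$ or $1 \in \sem{g}$'', apply the induction hypothesis, and match this against the two sum rules for $\sacc$. For $e = f \cdot g$ and $e = f \parallel g$ the preliminary observation turns $1 \in \sem{f} \cdot \sem{g}$ (respectively $1 \in \sem{f} \parallel \sem{g}$) into ``$1 \in \sem{f}$ and $1 \in \sem{g}$'', and again the induction hypothesis together with the corresponding product or parallel rule closes the case. For $e = f^*$ both sides are unconditionally true: $\sem{f^*} = \bigcup_{n} \sem{f}^n$ contains $\sem{f}^0 = \set{1}$, while the star rule places every $f^*$ in $\sacc$.

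The one point demanding care — and the nearest thing to an obstacle — is the backward (inversion) direction of the compound cases, where I must argue that, for instance, $f + g \in \sacc$ can only have arisen from one of the sum rules and hence forces $f \in \sacc$ or $g \in \sacc$. This is exactly where the minimality of $\sacc$ is used: because $\sacc$ is the smallest set closed under the rules, membership of a term whose head operator is $+$, $\cdot$, $\parallel$, or $^*$ is determined solely by the rule(s) matching that operator, which yields the required inversion principle. Everything else is a routine unfolding of the semantics against the rules.
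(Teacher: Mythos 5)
Your proof is correct, and it arrives at the same case analysis as the paper's, but it is organized differently. The paper runs two separate inductions: the direction $e \in \sacc \Rightarrow 1 \in \sem{e}$ is proved by induction on the derivation of $e \in \sacc$ (rule induction), so the premises of the last rule applied are available directly and no inversion principle is ever needed; the converse is proved by structural induction on $e$, where the only base case is $e = 1$, since $1 \notin \sem{0}$ and $1 \notin \sem{\ltr{a}}$. You instead establish the biconditional in a single structural induction on $e$, which forces you to invoke an inversion principle for $\sacc$ --- that membership of a compound term can only arise from the rule(s) whose conclusion matches its head operator --- justified by minimality of $\sacc$. That appeal is standard and sound (for each operator the matching rules are exactly as you list: disjunction of premises for $+$, conjunction for $\cdot$ and $\parallel$, unconditional for $^*$), but it is precisely the bookkeeping the paper's choice of rule induction sidesteps. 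On the other side of the ledger, your preliminary observation that $U \cdot V = 1$ iff $U = V = 1$, and likewise for $\parallel$, makes explicit a fact the paper uses silently in its right-to-left direction for $f \cdot g$ and $f \parallel g$; recording it is a small gain in rigor, since it is exactly what licenses splitting $1 \in \sem{f} \cdot \sem{g}$ into $1 \in \sem{f}$ and $1 \in \sem{g}$. Net effect: one induction plus an inversion lemma versus two inversion-free inductions --- the mathematical content is the same, and both arguments are complete.
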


\section{Pomset automata}%
\label{section:pomset-automata}

We now turn our attention to pomset automata, which are intended as an operational model for bi-Kleene algebra.
Intuitively, a pomset automaton is a \emph{non-deterministic} automaton enriched with an additional type of transition.
Instances of this type of transition tell us where execution may fork into several ``threads'', as well as where execution resumes when each of these threads has reached an accepting state.
More formally, we have the following.

\begin{defi}
  A \emph{pomset automaton} (PA) is a tuple $\angl{Q, F, \delta, \gamma}$ where
  \begin{itemize}
  \item $Q$ is a set of \emph{states}, with $F \subseteq Q$ the \emph{accepting states}, and
  \item $\delta: Q \times \Sigma \to 2^Q$ is the \emph{sequential transition function}, and
  \item $\gamma: Q \times \M(Q) \to 2^Q$ is the \emph{parallel transition function}.
  \end{itemize}
  Lastly, for all $q \in Q$, there are only finitely many $\phi \in \M(Q)$ such that $\gamma(q, \phi) \neq \emptyset$.
\end{defi}

In the above, $q' \in \gamma(q, \mset{r_1, \dots, r_n})$ should be interpreted to mean that, in state $q$, the automaton may fork into states $r_1, \dots, r_n$, and when all of these have reached an accepting state, may resume computation from $q'$.
The final requirement ensures that only finitely many fork transitions of a state can be meaningful, i.e., lead to an accepting state.

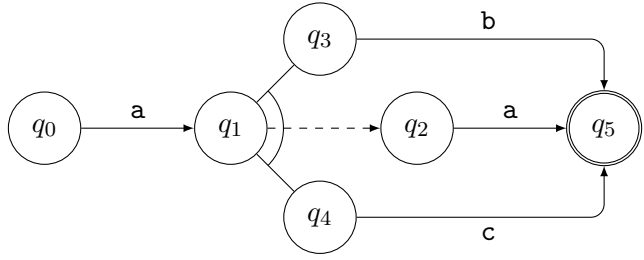
\begin{figure}
  \centering
  \begin{tikzpicture}[every node/.style={transform shape}]
    \node[state] (q0) {$q_0$};
    \node[state,right=15mm of q0] (q1) {$q_1$};
    \node[state,above right=7mm of q1] (q3) {$q_3$};
    \node[state,below right=7mm of q1] (q4) {$q_4$};
    \node[state,right=15mm of q1] (q2) {$q_2$};
    \node[state,accepting,right=15mm of q2] (q5) {$q_5$};

    \draw (q0) edge[-latex] node[above] {$\ltr{a}$} (q1);
    \draw[-] (q1) edge (q3);
    \draw[-] (q1) edge (q4);
    \draw[dashed] (q1) edge[-latex] (q2);
    \draw[rounded corners=5pt,-latex] (q3) -| node[above,xshift=-4em] {$\ltr{b}$} (q5);
    \draw[rounded corners=5pt,-latex] (q4) -| node[below,xshift=-4em] {$\ltr{c}$} (q5);
    \draw (q1) + (-45:7mm) arc (-45:45:7mm);
    \draw (q2) edge[-latex] node[above] {$\ltr{a}$} (q5);
  \end{tikzpicture}
  \caption{PA accepting $\ltr{a} \cdot (\ltr{b} \parallel \ltr{c}) \cdot \ltr{a}$.}\label{figure:example-pa}
\end{figure}

Visually, we can represent pomset automata in a style similar to finite automata, as in \cref{figure:example-pa}.
There, a state is represented by a vertex, which is doubly circled when the state is accepting.
Transitions are represented by edges: the edge between $q_0$ and $q_1$ encodes that $q_1 \in \delta(q_0, \ltr{a})$, and the multi-ended edge that connects $q_1$ to $q_2$ and $q_3$ as well as $q_4$ signifies that $q_2 \in \gamma(q_1, \mset{q_3, q_4})$.
To avoid visual clutter, we draw neither the sequential transitions where $\delta(q, \ltr{a}) = \emptyset$, nor the parallel transitions where $\gamma(q, \mset{r_1, \dots, r_n}) = \emptyset$.
Our convention that for $q \in Q$ there are only finitely many $\phi \in \M(Q)$ with $\gamma(q, \phi) \neq \emptyset$ ensures that a PA with finitely many states also has only finitely many transitions, and can therefore be drawn.

\medskip

For the remainder of this section, we fix a PA $A = \angl{Q, F, \delta, \gamma}$.
We proceed to define how one can ``read'' an sp-pomset $U$ while transitioning from a state $q$ to a state $q'$, in a way that matches the intuition given to the parallel transition function.
This information is encoded in a ternary relation between states, pomsets, and states, called the \emph{run relation}.
If a state $q$ is related to a pomset $U$ and another state $q'$ by this relation, it means that, starting in state $q$, we can read the pomset $U$ to end up in state $q'$.
The pomsets that we can read to reach an accepting state form the language of a state.

\begin{defi}[Runs]%
  \label{definition:runs}
  We define ${\rightarrow_A} \subseteq Q \times \SP(\Sigma) \times Q$ as the smallest relation satisfying
  \begin{mathpar}
    \inferrule{~}{%
      q \arun{1}_A q
    }
    \and
    \inferrule{
      q' \in \delta(q, \ltr{a})
    }{%
      q \arun{\ltr{a}}_A q'
    }
    \and
    \inferrule{%
      q \arun{U}_A q'' \\
      q'' \arun{V}_A q'
    }{%
      q \arun{U \cdot V}_A q'
    }
    \and
    \inferrule{%
      \forall 1 \leq i \leq n.\ q_i \arun{U_i}_A q_i' \in F \\\\
      q' \in \gamma(q, \mset{q_1, \dots, q_n})
    }{%
      q \arun{U_1 \parallel \dots \parallel U_n}_A q'
    }
  \end{mathpar}
  The \emph{language} of $q \in Q$, denoted $L_A(q)$, is the set $\setcompr{U \in \SP(\Sigma)}{q \arun{U}_A q' \in F}$.
\end{defi}

Given an element $q \arun{U}_A q'$ of the run relation, a proof tree (using the inference rules above) witnessing that this triple occurs in $\rightarrow_A$ contains all structural information about how the pomset automaton reads the pomset, i.e., the order in which the states were visited, which fork transitions were used, et cetera.
Such a proof tree can therefore be called a \emph{run} of the pomset automaton; we shall often abuse this nomenclature and refer to individual elements of the run relation as runs, with their underlying proof tree implicitly present.

\begin{exa}%
  \label{example:runs}
  Suppose $A$ is the PA in \cref{figure:example-pa}.
  Then, we have that
  \begin{mathpar}
    q_0 \arun{\ltr{a}}_A q_1
    \and
    q_2 \arun{\ltr{a}}_A q_5
    \and
    q_3 \arun{\ltr{b}}_A q_5
    \and
    q_4 \arun{\ltr{c}}_A q_5
  \end{mathpar}
  From the latter two runs and the fact that $q_2 \in \gamma(q_1, \mset{q_3, q_4})$, it follows that $q_1 \arun{\ltr{b} \parallel \ltr{c}}_A q_2$ by the last rule above.
  By applying the third rule of run composition to this run and the first two runs above, we find that $q_0 \arun{\ltr{a} \cdot (\ltr{b} \parallel \ltr{c}) \cdot \ltr{a}}_A q_5$.
  Since $q_5 \in F$, we have that $\ltr{a} \cdot (\ltr{b} \parallel \ltr{c}) \cdot \ltr{a} \in L_A(q_0)$, that is to say, $q_0$ accepts the pomset $\ltr{a} \cdot (\ltr{b} \parallel \ltr{c}) \cdot \ltr{a}$.
\end{exa}

It is useful to distinguish runs based on the rules that induce them.
To this end, we establish the following terminology for $q, q' \in Q$ and $U \in \SP(\Sigma)$.
If $q \arun{U}_A q'$ follows by an application of the first rule, we speak of a \emph{trivial run}.
Also, if $q \arun{U}_A q'$ has a derivation in which the second rule is applied last, this run is a \emph{sequential unit run}.
Furthermore, if $q \arun{U}_A q'$ is a consequence of the last rule, this run is a \emph{parallel unit run}.
The sequential and parallel unit runs are collectively referred to as \emph{unit runs}.
Lastly, if $q \arun{U}_A q'$ is a result of applying the third rule, i.e., there exist $U_1, U_2 \in \SP(\Sigma)$ and $q'' \in Q$ such that $q \arun{U_1}_A q''$ as well as $q'' \arun{U_2}_A q'$, and neither of these is trivial, then $q \arun{U}_A q'$ is known as a \emph{composite run}.
By definition of $\rightarrow_A$, each run falls into at least one of these categories.

\begin{exa}%
  \label{example:run-sorts}
  Returning to \cref{example:runs} above, we find that $q_0 \arun{\ltr{a}}_A q_1$ is a sequential unit run, and $q_1 \arun{\ltr{b} \parallel \ltr{c}}_A q_2$ is a parallel unit run.
  An example of a trivial run is $q_5 \arun{1}_A q_5$.
  Lastly, the run $q_0 \arun{\ltr{a} \cdot (\ltr{b} \parallel \ltr{c}) \cdot \ltr{a}}_A q_5$ is a composite run.
\end{exa}

\begin{rem}
The type of a run is not uniquely determined by the kind of pomset that labels it.
For example, even though most parallel unit runs are labelled by a parallel pomset, this is not true in general: if $A$ is the PA in \cref{figure:run-confusion}, then we can construct the parallel unit run $q_1 \arun{\ltr{a}}_A q_2$, even though $\ltr{a}$ is not parallel.
Similarly, not every run labelled by the empty pomset is trivial.
For instance, if $q, q' \in Q$ such that $q' \in \gamma(q, \mempty)$, then $q \arun{1}_A q'$.
We deal with this kind of confusion between run types in \cref{section:language-equivalence}.
\end{rem}

We reap the benefits of our new vocabulary with the following useful lemma.

\begin{restatable}{lem}{restaterundeconstruct}%
  \label{lemma:run-deconstruct}
  Let $q \arun{U}_A q'$.
  There exist $q = q_0, \dots, q_\ell = q' \in Q$ and $U_1, \dots, U_\ell \in \SP(\Sigma)$, such that $U = U_1 \cdots U_\ell$, and for all $1 \leq i \leq \ell$ we have that $q_{i-1} \arun{U_i} q_i$ is a unit run.
\end{restatable}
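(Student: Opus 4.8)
The plan is to proceed by induction on the derivation of the run $q \arun{U}_A q'$; that is, rule induction using the fact that $\rightarrow_A$ is the smallest relation closed under the four rules of \cref{definition:runs}. Reading the conclusion of the lemma as a predicate $P(q, U, q')$ on triples of $\rightarrow_A$, it suffices to show that $P$ is preserved by each of the four rules. The crucial point, foreshadowed by the remark preceding this lemma, is that the induction must be on derivations rather than on the shape of $U$, since the kind of a run is not determined by the kind of pomset labelling it.

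For the base cases I would treat the three non-composite rules. If $q \arun{U}_A q'$ is a trivial run, then $q = q'$ and $U = 1$, so I take $\ell = 0$: the state sequence is the single state $q_0 = q = q'$, there are no factors, the empty product equals $1 = U$, and the condition quantified over $1 \leq i \leq \ell$ holds vacuously. If the run arises from the sequential-unit rule, then $U = \ltr{a}$ for some $\ltr{a} \in \Sigma$ with $q' \in \delta(q, \ltr{a})$; here I take $\ell = 1$, $q_0 = q$, $q_1 = q'$ and $U_1 = \ltr{a}$, and $q_0 \arun{U_1}_A q_1$ is by construction a sequential unit run. Finally, if the run arises from the parallel-unit rule, then the entire run $q \arun{U}_A q'$ is itself a parallel unit run, so I take $\ell = 1$, $q_0 = q$, $q_1 = q'$, and the single factor equal to $U$ itself. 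I would stress that a parallel unit run counts as one indecomposable unit run even though its label is a parallel (hence non-primitive) pomset.

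For the inductive step, suppose $q \arun{U}_A q'$ is obtained from the third (sequential composition) rule, so $U = V \cdot W$ with $q \arun{V}_A q''$ and $q'' \arun{W}_A q'$ for some $q'' \in Q$. By the induction hypothesis applied to each premise I obtain decompositions $q = p_0, \dots, p_m = q''$ with factors $V_1, \dots, V_m$, and $q'' = r_0, \dots, r_k = q'$ with factors $W_1, \dots, W_k$, each factor giving a unit run. Since $p_m = q'' = r_0$, the two state sequences glue, and concatenating the factor sequences yields a decomposition of $q \arun{U}_A q'$ of length $\ell = m + k$ into unit runs. Associativity of sequential composition (noted after \cref{definition:pomset-composition}) gives $V_1 \cdots V_m \cdot W_1 \cdots W_k = V \cdot W = U$, which establishes $P(q, U, q')$.

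The argument presents no genuine obstacle; the only delicate points are bookkeeping ones. First, one must allow $\ell = 0$ so that the trivial run is covered, since there need not be any unit run labelled by $1$ available to play the role of a first factor. Second, one must avoid classifying runs by the shape of their labels: a parallel unit run is atomic for this decomposition in spite of its parallel label, and conversely the empty or primitive shape of a label does not by itself pin down the rule that produced the run. Keeping the induction firmly on derivations disposes of both subtleties automatically.
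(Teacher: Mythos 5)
Your proof is correct and takes essentially the same route as the paper's: both argue by induction on the construction of $\rightarrow_A$, taking $\ell = 0$ for the trivial run, $\ell = 1$ for sequential unit runs, treating an entire parallel unit run as a single indecomposable factor ($\ell = 1$, $U_1 = U$), and gluing the two decompositions obtained from the induction hypothesis in the composite case. The only cosmetic difference is that you file the parallel-unit rule among the base cases while the paper lists it as an inductive case; this is immaterial, since neither proof needs the induction hypothesis for that rule.
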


The minimal $\ell$ for a given run as obtained above is known as the \emph{length} of the run.

\begin{figure}
  \centering
  \begin{subfigure}[b]{0.45\textwidth}
    \centering
    \begin{tikzpicture}
      \node[state] (q1) {$q_1$};
      \node[state,above right=7mm of q1] (q3) {$q_3$};
      \node[state,accepting,below right=7mm of q1] (q4) {$q_4$};
      \node[state,accepting,right=15mm of q3] (q5) {$q_5$};
      \node[state,accepting,right=15mm of q1] (q2) {$q_2$};

      \draw[-] (q1) edge (q3);
      \draw[-] (q1) edge (q4);
      \draw[dashed] (q1) edge[-latex] (q2);
      \draw (q3) edge[-latex] node[above] {$\ltr{a}$} (q5);
      \draw (q1) + (-45:7mm) arc (-45:45:7mm);
    \end{tikzpicture}
    \caption{A PA with a parallel unit run, from $q_1$ to $q_2$, which is labelled by a primitive pomset.}%
    \label{figure:run-confusion}
  \end{subfigure}
  \hspace{5mm}
  \begin{subfigure}[b]{0.45\textwidth}
    \centering
    \begin{tikzpicture}
      \node[state,accepting] (q1) {$q_1$};
      \node[state,above right=7mm of q1] (q3) {$q_3$};
      \node[state,accepting,right=15mm of q1] (q2) {$q_2$};
      \node[state,accepting,below right=7mm of q1] (q4) {$q_4$};
      \node[state,accepting,right=15mm of q4] (q5) {$q_5$};

      \draw[-] (q1) edge (q3);
      \draw[-] (q1) edge (q4);
      \draw[dashed] (q1) edge[-latex] (q2);
      \draw (q3) edge[bend right,-latex] node[above left] {$\ltr{a}$} (q1);
      \draw (q1) + (-45:7mm) arc (-45:45:7mm);
      \draw (q4) edge[-latex] node[above] {$\ltr{b}$} (q5);
    \end{tikzpicture}
    \caption{A PA with unbounded parallelism as a result of a fork-cycle.}%
    \label{figure:fork-cycle}
  \end{subfigure}
  \caption{Some problematic pomset automata.}%
  \label{figure:problematic-pas}
\end{figure}
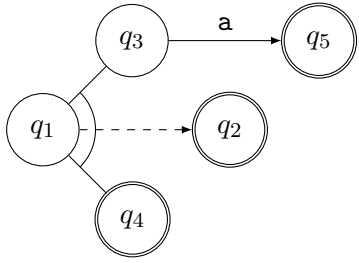
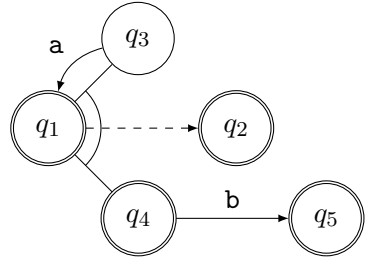

\subsection{Fork-acyclicity}
As it turns out, pomset automata are a rather powerful model of computation.
For instance, they can be used to recognize context-free languages; consequently, language equivalence of states is undecidable in general~\cite{kappe-brunet-luttik-silva-zanasi-2019}.
The heart of the argument from op.\ cit.\ is that we can use the forks of a pomset automaton to simulate something that resembles the call stack of a recursive program.
To prevent this excessive amount of expressive power, we need to put a structural restriction on PAs; specifically, what we will do is ensure that the level of (nested) forks in constructing any run is bounded from above.%
\footnote{%
  Earlier automata models for sr-expressions had to apply similar restrictions~\cite{lodaya-weil-2000,jipsen-moshier-2016}.
}

\begin{exa}
Suppose $A$ is the PA in \cref{figure:fork-cycle}.
We then find that $q_4 \arun{\ltr{b}}_A q_5$ and $q_3 \arun{\ltr{a}}_A q_1$.
Since $q_2 \in \gamma(q_1, \mset{q_3, q_4})$, we know that $q_1 \arun{\ltr{a} \parallel \ltr{b}}_A q_2$.
However, because $q_3 \arun{\ltr{a}}_A q_1$, it follows that $q_3 \arun{\ltr{a} \cdot (\ltr{a} \parallel \ltr{b})}_A q_2$, and hence we also find that $q_1 \arun{\ltr{a} \cdot (\ltr{a} \parallel \ltr{b}) \parallel \ltr{b}}_A q_2$.
This pattern can be repeated indefinitely, leading to an unbounded number of forks in the construction of runs originating from $q_1$.
In addition, we note that $L_A(q_1)$ is not series-rational~\cite{laurence-struth-2014}.
\end{exa}

The problem in the example above is that $q_1$ can fork into a state $q_3$, which can itself reach $q_1$ and fork again, giving rise to an unbounded number of nested forks.
More generally, this behaviour can occur if $q_3$ can construct a run that somehow involves $q_1$, through a series of transitions and forks.
Hence, the first step to inhibit this kind of recursion is to get a handle on the states that can be involved in constructing a run that originates in a given state, by transitioning to that state, by forking into it, or some combination of the two.

\begin{defi}[Support]
  We define $\preceq_A$ as the smallest preorder on $Q$ satisfying for $q \in Q$
  \begin{mathpar}
    \inferrule{%
      \ltr{a} \in \Sigma \\ q'\in \delta(q, \ltr{a})
    }{%
      q' \preceq_A q
    }
    \and
    \inferrule{%
      \phi \in \M(Q) \\
      q' \in \gamma(q, \phi)
    }{%
      q' \preceq_A q
    }
    \and
    \inferrule{%
      r \in \phi \in \M(Q) \\
      \gamma(q, \phi) \neq \emptyset
    }{%
      r \preceq_A q
    }
  \end{mathpar}
  We refer to $\preceq_A$ as the \emph{support relation} of $A$.
  This relation in turn gives rise to the \emph{strict support relation} $\prec_A$, which is the strict order in which $q' \prec_A q$ holds if $q' \preceq_A q$ and $q \not\preceq_A q'$.
\end{defi}

\begin{exa}
  Returning to the PA in \cref{figure:fork-cycle}, we see that $q_5 \preceq_A q_4$, since $q_5 \in \delta(q_4, \ltr{b})$ (by the first rule).
  Because $q_2 \in \gamma(q_1, \mset{q_3, q_4})$, it follows that $q_2 \preceq_A q_1$ (by the second rule) as well as $q_3, q_4 \preceq_A q_1$ (by the third rule).
  By transitivity, it then follows that $q_5 \preceq_A q_1$.
\end{exa}

Intuitively, if $q'$ is necessary to establish some run originating from $q$, then $q' \preceq_A q$; hence, we say that $q'$ \emph{supports} $q$.
In particular, if $r \preceq_A q$ because there exists a $\phi \in \M(Q)$ with $r \in \phi$ and $\gamma(q, \phi) \neq \emptyset$, then $r$ serves as the start of one or more threads that $q$ may fork into, and we say that $r$ is a \emph{fork target} of $q$.
Support can be mutual, for instance when $q' \in \delta(q, \ltr{a})$ and $q \in \delta(q', \ltr{a})$; consequently, $\preceq_A$ need not be antisymmetric.

To break fork cycles, we can define the restriction necessary to avoid infinitely nested forks, by stipulating that a state cannot be supported by any of its fork targets.

\begin{defi}[Fork-acyclicity]
  We say that $A$ is \emph{fork-acyclic} if for $q, r \in Q$ such that there exists a $\phi \in \M(Q)$ with $r \in \phi$ and $\gamma(q, \phi) \neq \emptyset$, we have that $r \prec_A q$, i.e., $q \not\preceq_A r$.
\end{defi}

\begin{exa}
  Returning to the PA in \cref{figure:fork-cycle}, we see that $\gamma(q_1, \mset{q_3, q_4}) \neq \emptyset$, while $q_1 \preceq_A q_3$.
  Hence, this PA is not fork-acyclic.
  On the other hand, the PA in \cref{figure:run-confusion} is fork-acyclic, because neither $q_3$ nor $q_4$ is supported by $q_1$.
\end{exa}

\subsection{Boundedness}
Our primary interest is pomset automata with finitely many states, known as \emph{finite pomset automata}.
It is sometimes convenient to relax this property and speak of PAs with infinitely many states, but where each state relies on only finitely many states to establish its runs.
To formalize this, we introduce the following notions.

\begin{defi}[Support and boundedness]
  We say that $Q' \subseteq Q$ is \emph{support-closed} if for all $q \in Q'$ with $q' \preceq_A q$ we have $q' \in Q'$.
  The \emph{support} of $q \in Q$, denoted $\supp_A(q)$, is the smallest support-closed set containing $q$.
  When $\pi_A(q)$ is finite for all $q \in Q$, we say that $A$ is \emph{bounded}.
\end{defi}

\begin{exa}
  In \cref{figure:fork-cycle}, the set $\set{q_2, q_4, q_5}$ is support-closed, while the set $\set{q_3}$ is not, since $q_1 \preceq_A q_3$.
  The support of $q_1$ is given by the set of all states, while the support of $q_4$ is given by $\set{q_4, q_5}$.
  Like all finite PAs, this PA is bounded.
\end{exa}

When working with bounded and fork-acyclic automata, it is also useful to reason based on how many nested forks can occur in a single run.
To this end, the following is useful.

\begin{defi}[Depth]%
  \label{definition:depth}
  If $q$ is a state in $A$, we write $D_A(q)$ for the \emph{depth of $q$ in $A$}, which is the maximum $n$ such that there exist $q_1, \dots, q_n \in Q$ with $q_1 \prec_A q_1 \prec_A \dots \prec_A q_n = q$, if such an $n$ exists; otherwise, $D_A(q)$ is undefined.
  If $Q$ is finite and $D_A(q)$ is defined for every $q \in Q$, we write $D_A$ for the \emph{depth of $A$}, i.e., the maximum of $D_A(q)$ for all $q \in Q$.
\end{defi}

If $A$ is bounded, then one can use the definition of $\prec_A$ to argue that $D_A(q)$ is defined for every state $q$.
Moreover, if $A$ is finite, then $D_A$ is defined.

\begin{exa}
The PA in \cref{figure:run-confusion} has depth $2$, since $q_3 \prec_A q_1$.
The PA in \cref{figure:fork-cycle} has depth $3$, as $q_5 \prec q_4 \prec q_1$.
\end{exa}

\subsection{Implementation}
We will perform a number of transformations on automata to enforce desirable properties.
To ensure correctness, we require these constructions to transform an automaton $A$ into an automaton $A'$ implementing $A$, in the following sense:
\begin{defi}
  Let $A = \angl{Q, F, \delta, \gamma}$ and $A' = \angl{Q', F', \delta', \gamma'}$ be pomset automata.
  We say that $A'$ \emph{implements} $A$ if the following hold:
  \begin{enumerate}[(i)]
  \item
    $Q \subseteq Q'$, and for every $q \in Q$ it holds that $L_A(q) = L_A(q')$, and

  \item
    if $A$ is fork-acyclic, then so is $A'$.
  \end{enumerate}
\end{defi}

\noindent
Any PA $A$ can be restricted to a support-closed set of states, such that $A$ implements the restricted PA\@.
In particular, if $A$ is bounded, this means that the language of any particular state $q$ can also be described by a finite PA, simply by restricting $A$ to the support of $q$; moreover, if $A$ is bounded, then so is this new PA\@.
Formally, we have the following.

\begin{lem}%
  \label{lemma:restrict-bounded}
  If $Q' \subseteq Q$ is support-closed, then we can construct a PA $A[Q']$ with states $Q'$ such that $A$ implements $A[Q']$.
  Moreover, if $A$ is bounded (resp.\ finite), then so is $A[Q']$.
\end{lem}
\begin{proof}
  Since $Q'$ is support-closed, we can regard $\delta$ as a function of type $Q' \times \Sigma \to 2^{Q'}$, and $\gamma$ as a function of type $Q' \times \M(Q') \to 2^{Q'}$.
  We choose $A[Q'] = \angl{Q', Q' \cap F, \delta, \gamma}$.

  We should show that, for all $q \in Q'$, it holds that $L_{A[Q']}(q) = L_A(q)$.
  The inclusion from left to right should be clear: any (accepting) run in $A[Q']$ can be replayed in $A$ by construction.
  For the other inclusion, we prove more generally that if $q \in Q'$ and $q \arun{U}_A q'$ then $q' \in Q'$ and $q \arun{U}_{A[Q']} q'$, by induction on $\rightarrow_A$.
  In the base, there are two cases.
  \begin{itemize}
  \item
    If $U = 1$ and $q = q'$, then $q' \in Q'$ and $q \arun{U}_{A[Q']} q'$ immediately.

  \item
    If $U = \ltr{a}$ for some $\ltr{a} \in \Sigma$, and $q' \in \delta(q, \ltr{a})$, then $q' \preceq_A q$, and hence $q' \in Q'$.
    Furthermore, we find that $q \arun{U}_{A[Q']} q'$.
  \end{itemize}

  \noindent
  For the inductive step, there are two cases to consider.
  \begin{itemize}
  \item
    Suppose that $q \arun{U}_A q'$ because $U = V \cdot W$, and there exists a $q'' \in Q$ such that $q \arun{V}_A q''$ and $q'' \arun{W}_A q'$.
    It follows that $q'' \in Q'$ and $q \arun{V}_{A[Q']} q''$ by induction.
    Similarly, we find $q' \in Q'$ and $q'' \arun{W}_{A[Q']} q'$, again by induction.
    We then conclude that $q \arun{U}_{A[Q']} q'$.

  \item
    Suppose that $q \arun{U}_A q'$ because $U = U_1 \parallel \dots \parallel U_n$, and there exist $r_1, \dots, r_n \in Q$ and $q_1', \dots, q_n' \in F'$ such that for $1 \leq i \leq n$ we have $q_i \arun{U_i}_A q_i'$, as well as $q' \in \gamma(q, \mset{r_1, \dots, r_n})$.
    We have $r_1, \dots, r_n \preceq_A q$, and thus $r_1, \dots, r_n \in Q'$, as well as $q' \in Q'$.
    By induction, we then find for $1 \leq i \leq n$ that $r_i' \in Q'$ and $r_i \arun{U_i}_{A[Q']} r_i'$.
    We conclude that $q \arun{U}_{A[Q']} q'$.
  \end{itemize}

  \noindent
  A variation of the above argument shows that for $q, q' \in Q$, we have $q \preceq_{A[Q']} q'$ if and only if $q \preceq_A q'$ and $q, q' \in Q'$; hence, if $A$ is fork-acyclic, then so is $A[Q']$.
  This also shows that, for $q \in Q'$, we have that $\supp_{A[Q']}(q) = \supp_A(q)$; hence, if $A$ is bounded, then so is $A[Q']$.
  Finally, if $A$ is finite, then clearly $Q'$ is finite, and hence $A[Q']$ has finitely many states.
\end{proof}

\section{Language equivalence}%
\label{section:language-equivalence}

We consider the following decision problem: given a PA $A = \angl{Q, F, \delta, \gamma}$ and states $q, q' \in Q$, do $q$ and $q'$ accept the same language?
As stated previously, this problem is undecidable for finite PAs that need not be fork-acyclic~\cite{kappe-brunet-luttik-silva-zanasi-2019}.
In this section, we investigate whether there exists a general decision procedure for fork-acyclic finite PAs.

To fully appreciate the complexity of this problem, we start by illustrating the intricacies of PAs through a series of examples.
These show how PAs with very different structures accept the same language.
Any procedure to decide language equivalence for fork-acyclic PAs must take such cases into account.

\begin{exa}[Run confusion]%
\label{example:run-confusion}
In \cref{figure:run-confusion}, we find that both $q_1$ and $q_3$ accept the singleton language $\set{\ltr{a}}$.
However, $q_1$ and $q_3$ have a very different transition structure, since $\delta(q_1, \ltr{a}) = \emptyset$ and $\delta(q_3, \ltr{a}) = \set{q_5}$, with $q_5 \in F$.
Indeed, $q_1$ accepts $\ltr{a}$ by means of a parallel unit run (forking to $q_3$ and $q_4$), while $q_3$ accepts $\ltr{a}$ by means of a sequential unit run.
\end{exa}

More generally, a state could accept a more complicated pomset by means of a composite run (i.e., of length greater than one), while another state accepts the same pomset with a parallel unit run (i.e., of length one) forking into one or more accepting states.
A similar phenomenon occurs when a state forks into a multiset of size one.

\begin{exa}[Empty forks]%
\label{example:empty-forks}
  The definition of pomset automata does not prohibit forks into the empty multiset; in a sense, these are analogous to $\epsilon$-transitions in NFAs, since the parallel composition of zero pomsets is $1$.
  This could allow a non-accepting state to accept the empty pomset, or a state without parallel transitions to accept a parallel pomset.
\end{exa}

Hence, different types of runs may be labelled by the same pomset, especially when fork targets can be accepting, or when unary or nullary forks are allowed.
The next example is about how nested forks may encode the same behaviour differently.

\begin{figure}
  \begin{subfigure}[b]{\textwidth}
    \centering
    \begin{tikzpicture}
      \node[state] (q1) {$q_1$};
      \node[state,above right=7mm of q1] (q3) {$q_3$};
      \node[state,below right=7mm of q1] (q4) {$q_4$};
      \node[state,above=7mm of q3] (q5) {$q_5$};
      \node[state,right=7mm of q3] (q6) {$q_6$};
      \node[state,accepting,below right=7mm of q6] (q2) {$q_2$};

      \draw[-] (q1) edge (q3);
      \draw[-] (q1) edge (q4);
      \draw[-] (q3) edge (q5);
      \draw[-] (q3) edge (q6);

      \draw[dashed] (q1) edge[-latex] (q2);
      \draw[dashed,rounded corners=5pt,-latex] (q3) -- +(1,1) -| (q2);

      \draw (q1) + (-45:7mm) arc (-45:45:7mm);
      \draw (q3) + (0:7mm) arc (0:90:7mm);

      \draw (q6) edge[-latex] node[above right] {$\ltr{b}$} (q2);
      \draw[rounded corners=5pt,-latex] (q4) -| node[above,xshift=-4em] {$\ltr{c}$} (q2);
      \draw[rounded corners=5pt,-latex] (q5) -| +(3.8,-1) node[right,yshift=-1.5em] {$\ltr{a}$} |- (q2.east);

      \node[state,right=6cm of q5] (q4p) {$q_4'$};
      \node[state,below left=7mm of q4p] (q1p) {$q_1'$};
      \node[state,below right=7mm of q1p] (q3p) {$q_3'$};
      \node[state,below=7mm of q3p] (q5p) {$q_5'$};
      \node[state,right=7mm of q3p] (q6p) {$q_6'$};
      \node[state,accepting,above right=7mm of q6p] (q2p) {$q_2'$};

      \draw[-] (q1p) edge (q3p);
      \draw[-] (q1p) edge (q4p);
      \draw[-] (q3p) edge (q5p);
      \draw[-] (q3p) edge (q6p);

      \draw[dashed] (q1p) edge[-latex] (q2p);
      \draw[dashed,rounded corners=5pt,-latex] (q3p) -- +(1,-1) -| (q2p);

      \draw (q1p) + (-45:7mm) arc (-45:45:7mm);
      \draw (q3p) + (0:7mm) arc (0:-90:7mm);

      \draw (q6p) edge[-latex] node[below right] {$\ltr{b}$} (q2p);
      \draw[rounded corners=5pt,-latex] (q4p) -| node[above,xshift=-4em] {$\ltr{a}$} (q2p);
      \draw[rounded corners=5pt,-latex] (q5p) -| +(3.8,1) node[right,yshift=1.5em] {$\ltr{c}$} |- (q2p.east);
    \end{tikzpicture}
    \caption{Associativity of parallelism.}\label{figure:fork-associativity}
  \end{subfigure}\vspace{5mm}

  \begin{subfigure}{\textwidth}
    \centering
    \begin{tikzpicture}
      \node[state] (q1) {$q_1$};

      \node[state,above left=7mm of q1] (q2) {$q_2$};
      \node[state,below left=7mm of q1] (q3) {$q_3$};
      \node[state,accepting,left=15mm of q1] (q6) {$q_6$};

      \node[state,above right=7mm of q1] (q4) {$q_4$};
      \node[state,below right=7mm of q1] (q5) {$q_5$};
      \node[state,accepting,right=15mm of q1] (q7) {$q_7$};

      \draw[-] (q1) edge (q2);
      \draw[-] (q1) edge (q3);
      \draw[dashed] (q1) edge[-latex] (q6);
      \draw (q1) + (135:7mm) arc (135:225:7mm);

      \draw (q2) edge[-latex] node[above left] {$\ltr{a}$} (q6);
      \draw (q3) edge[-latex] node[below left] {$\ltr{b}$} (q6);

      \draw[-] (q1) edge (q4);
      \draw[-] (q1) edge (q5);
      \draw[dashed] (q1) edge[-latex] (q7);
      \draw (q1) + (-45:7mm) arc (-45:45:7mm);

      \draw (q4) edge[-latex] node[above right] {$\ltr{a}$} (q7);
      \draw (q5) edge[-latex] node[below right] {$\ltr{c}$} (q7);

      \node[state,right=10mm of q7] (q1p) {$q_1'$};

      \node[state,above right=7mm of q1p] (q2p) {$q_2'$};
      \node[state,below right=7mm of q1p] (q3p) {$q_3'$};
      \node[state,accepting,right=15mm of q1p] (q4p) {$q_4'$};

      \draw[-] (q1p) edge (q2p);
      \draw[-] (q1p) edge (q3p);
      \draw[dashed] (q1p) edge[-latex] (q4p);
      \draw (q1p) + (-45:7mm) arc (-45:45:7mm);

      \draw (q2p) edge[-latex] node[above right] {$\ltr{a}$} (q4p);
      \draw (q3p) edge[-latex] node[below right] {$\ltr{b}, \ltr{c}$} (q4p);
    \end{tikzpicture}
    \caption{Distributivity of parallelism over union.}\label{figure:fork-distributivity}
  \end{subfigure}\vspace{5mm}

  \begin{subfigure}{\textwidth}
    \centering
    \begin{tikzpicture}
      \node[state] (q1) {$q_1$};

      \node[state,below left=7mm of q1] (q2) {$q_2$};
      \node[state,below right=7mm of q1] (q4) {$q_4$};

      \node[state,above left=7mm of q1] (q3) {$q_3$};
      \node[state,above right=7mm of q1] (q5) {$q_5$};

      \node[state,accepting,right=15mm of q1] (q6) {$q_6$};
      \node[state,accepting,left=15mm of q1] (q7) {$q_7$};

      \draw[-] (q1) edge (q2);
      \draw[-] (q1) edge (q4);
      \draw[dashed,rounded corners=5pt,-latex] (q1) -| (q4);
      \draw (q1) + (135:7mm) arc (135:225:7mm);

      \draw (q2) edge[-latex] node[below left] {$\ltr{b}$} (q7);
      \draw (q4) edge[-latex] node[below right] {$\ltr{b}$} (q6);

      \draw[-] (q1) edge (q3);
      \draw[-] (q1) edge (q5);
      \draw[dashed,rounded corners=5pt,-latex] (q1) -| (q3);
      \draw (q1) + (-45:7mm) arc (-45:45:7mm);

      \draw (q3) edge[-latex] node[above left] {$\ltr{a}$} (q7);
      \draw (q5) edge[-latex] node[above right] {$\ltr{a}$} (q6);

      \node[state,right=10mm of q6] (q1p) {$q_1'$};

      \node[state,above right=7mm of q1p] (q2p) {$q_2'$};
      \node[state,below right=7mm of q1p] (q3p) {$q_3'$};
      \node[state,right=15mm of q1p] (q4p) {$q_4'$};
      \node[state,accepting,right=38mm of q1p] (q5p) {$q_5'$};

      \draw[-] (q1p) edge (q2p);
      \draw[-] (q1p) edge (q3p);
      \draw[dashed] (q1p) edge[-latex] (q4p);
      \draw (q1p) + (-45:7mm) arc (-45:45:7mm);

      \draw[rounded corners=5pt,-latex] (q2p) -| node[above right,xshift=-4em,yshift=1mm] {$\ltr{a}$} (q5p);
      \draw[rounded corners=5pt,-latex] (q3p) -| node[below right,xshift=-4em,yshift=-1mm] {$\ltr{b}$} (q5p);

      \draw (q4p) edge[-latex] node[above] {$\ltr{a}, \ltr{b}$} (q5p);
    \end{tikzpicture}
    \caption{Distributivity of sequential composition over union.}\label{figure:fork-distributivity-bis}
  \end{subfigure}

  \caption{Examples of PAs where $q_1$ and $q_1'$ accept the same language, while having different transition structures.}%
  \label{figure:confounding-pas}
\end{figure}
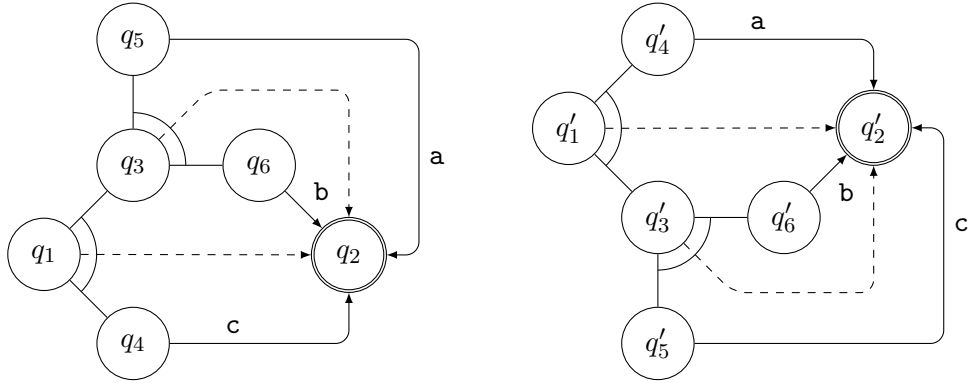
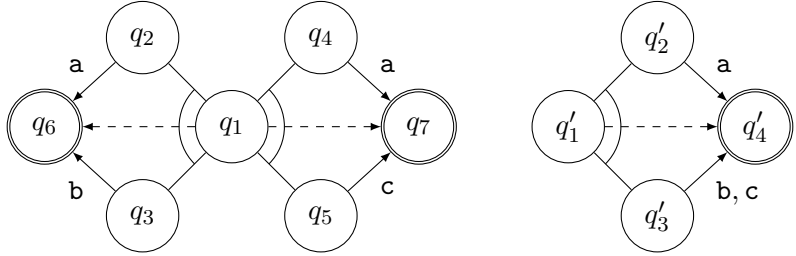
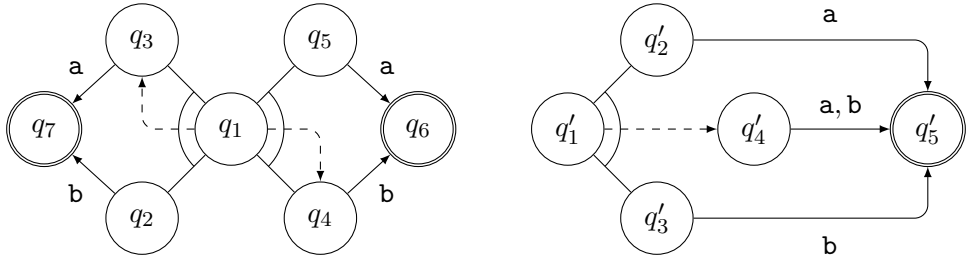

\begin{exa}[Associativity]%
\label{example:associativity}
Consider the PA in \cref{figure:fork-associativity}, where both $q_1$ and $q_1'$ accept the language $\set{\ltr{a} \parallel \ltr{b} \parallel \ltr{c}}$.
In the transition from $q_1$ to $q_2$, the pomset $\ltr{a} \parallel \ltr{b}$ is contributed by $q_3$, and $\ltr{c}$ comes from $q_4$, while in the transition from $q_1'$ to $q_2'$, we obtain $\ltr{a}$ from $q_4'$, and $\ltr{b} \parallel \ltr{c}$ from $q_3'$.
The language of $q_3$ is distinct from, and in fact incomparable with, the languages of $q_3'$ and $q_4'$.
Hence, to compare the language of $q_1$ to that of $q_1'$, we need not only consider the states they may fork into, but also the states where those states may fork into, provided that the second-level forks land in an accepting state.
\end{exa}

We can counteract the phenomena exhibited in \cref{example:run-confusion,example:empty-forks,example:associativity}, by preventing the structures that enable this kind of behaviour.
This leads to the following definition.

\begin{defi}[Well-structured]
A pomset automaton $A = \angl{Q, F, \delta, \gamma}$ is \emph{well-structured} if for $q, q' \in Q$, $\phi \in \M(Q)$ with $q' \in \phi$ and $\gamma(q, \phi) \neq \emptyset$, all of the following hold:
\begin{mathpar}
|\phi| \geq 2\enskip,
\and
q' \not\in F\enskip,\ \text{and}
\and
\forall \psi \in \M(Q).\ \gamma(q', \psi) \cap F = \emptyset\enskip.
\end{mathpar}
\end{defi}
\begin{exa}
The PAs in \cref{figure:example-pa,figure:fork-cycle} are well-structured.
On the other hand, a PA with a state $q$ such that $\gamma(q, \mempty) \neq \emptyset$, or $\gamma(q, \mset{r}) \neq \emptyset$ for some state $r$, is not well-structured (by the first condition).
The PA in \cref{figure:run-confusion} is not well-structured (by the second condition) because $\gamma(q_1, \mset{q_3, q_4}) \neq \emptyset$, while $q_4 \in F$.
Finally, the PA in \cref{figure:fork-associativity} is not well-structured (by the third condition) because $\gamma(q_1, \mset{q_3, q_4}) \neq \emptyset$, while $q_2 \in \gamma(q_3, \mset{q_5, q_6}) \cap F$.
\end{exa}

Well-structuredness does not diminish the expressive power of PAs, as we shall formally prove in \cref{sec:well-structured-automata}.
For instance, the behaviour of the PA discussed in \cref{example:associativity} (see \cref{figure:fork-associativity}) can be expressed using a ternary fork, and the behaviour of the PA discussed in \cref{example:run-confusion} (see \cref{figure:run-confusion}) can be obtained by a sequential transition.
For now, we focus our decision procedure on the fragment of well-structured PAs.

One of the benefits of well-structured PAs is that the type of a run is uniquely determined by the kind of its pomset, as spelled out by the following lemma.

\begin{lem}%
\label{lemma:well-structured-no-confusion}
If $A$ is well-structured and $q \arun{U}_A q'$, then the following hold:
\begin{enumerate}[label={(\roman*)}]
    \item\label{lemma:well-structured-no-confusion:claim:empty}
    $q \arun{U}_A q'$ is trivial if and only if $U$\! is empty.

  \item\label{lemma:well-structured-no-confusion:claim:primitive}
    $q \arun{U}_A q'$ is a sequential unit run if and only if $U$\! is primitive.

  \item\label{lemma:well-structured-no-confusion:claim:sequential}
    $q \arun{U}_A q'$ is a composite run if and only if $U$\! is sequential.

  \item\label{lemma:well-structured-no-confusion:claim:parallel}
    $q \arun{U}_A q'$ is a parallel unit run if and only if $U$\! is parallel.
  \end{enumerate}
\end{lem}
\begin{proof}
  We start by treating~\ref{lemma:well-structured-no-confusion:claim:empty} in detail.
  Here, we note that the implication from left to right holds by definition.
  For the proof from right to left, we proceed by induction on the construction of $q \arun{U}_A q'$.
  In the base, we need only consider the case where $q \arun{U}_A q'$ is already trivial.
  For the inductive step, there are two cases to consider.
  \begin{itemize}
  \item
    Suppose that $U = V \cdot W$ and there exists a $q'' \in Q$ such that $q \arun{V}_A q''$ and $q'' \arun{W}_A q'$, then $V = W = 1$.
    By induction, we then know that $q \arun{V}_A q''$ and $q'' \arun{W}_A q'$ are trivial, and hence $q = q'' = q'$.
    We conclude that $q \arun{U}_A q'$ must also be trivial.

    \item
    Suppose that $U = U_1 \parallel \dots \parallel U_n$ and there exist $r_1, \dots, r_n \in Q$ as well as $r_1', \dots, r_n' \in F$ such that for $1 \leq i \leq n$ we have $r_i \arun{U_i}_A r_i'$, and furthermore $q' \in \gamma(q, \mset{r_1, \dots, r_n})$.
    Then necessarily $U_1, \dots, U_n = 1$.
    Since $A$ is well-structured, we also know that $r_1, \dots, r_n \not\in F$, and furthermore that $n \geq 2$.
    However, by induction, we know that for $1 \leq i \leq n$ it holds that $r_i \arun{U_i}_A r_i'$ is trivial, and hence it would follow that $r_i = r_i'$.
    We have now reached a contradiction, for $r_1 \in F$ while also $r_1 \not\in F$.
    We can therefore disregard this case.
  \end{itemize}

  \noindent
  We treat the implications from left to right for the remaining claims as follows.
  \begin{itemize}
  \item
    For~\ref{lemma:well-structured-no-confusion:claim:primitive}, we find that $U$ is primitive by definition of sequential unit runs.

  \item
    For~\ref{lemma:well-structured-no-confusion:claim:sequential}, suppose that $q \arun{U}_A q'$ is composite.
    We then know that $U = V \cdot W$ and there exists a $q'' \in Q$ such that $q \arun{V}_A q''$ and $q'' \arun{W}_A q'$ are nontrivial.
    By~\ref{lemma:well-structured-no-confusion:claim:empty}, we then know that $V$ and $W$ must be non-empty, and hence $U$ is sequential.

  \item
    For~\ref{lemma:well-structured-no-confusion:claim:parallel}, suppose that $q \arun{U}_A q'$ is a parallel unit run.
    We then know that $U = U_1 \parallel \dots \parallel U_n$ and there exist $r_1, \dots, r_n \in Q$ and $r_1', \dots, r_n' \in F$, such that for $1 \leq i \leq n$ it holds that $r_i \arun{U_i}_A r_i'$, and furthermore $q' \in \gamma(q, \mset{r_1, \dots, r_n})$.
    By the premise that $A$ is well-structured, we know that $r_1, \dots, r_n \not\in F$ and $n \geq 2$.
    It then follows that for $1 \leq i \leq n$, the run $r_i \arun{U_i}_A r_i'$ is non-trivial, and hence $U_i$ is non-empty by the above.
    From this and the fact that $n \geq 2$, we can conclude that $U$ is parallel.
  \end{itemize}
  The implications from right to left for the latter three claims now follow from \cref{lemma:sp-unique-kind}.
  For instance, if $U$ is primitive, then $q \arun{U}_A q'$ must be a sequential unit run, for if it were trivial then $U$ would be empty, if it were composite then $U$ would be sequential, and if it were a parallel unit run then $U$ would be parallel.
\end{proof}

\noindent
Furthermore, because of the restriction on fork targets in a well-structured PA, pomsets in their languages are not parallel:

\begin{restatable}{lem}{restateassociativity}%
\label{lemma:well-structured-associativity}
If $A$ is well-structured and $q$ is a fork target in $A$, then all pomsets in $L_A(q)$ are parallel primes.
\end{restatable}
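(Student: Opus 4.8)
We must show that if $A$ is well-structured and $q$ is a fork target (i.e., there exist $p, \phi$ with $q \in \phi$ and $\gamma(p, \phi) \neq \emptyset$), then every pomset $U \in L_A(q)$ is a parallel prime. Recall that by \cref{lemma:sp-unique-kind}, each sp-pomset is exactly one of empty, primitive, sequential, or parallel. A parallel prime is a non-empty pomset that cannot be written as $V \parallel W$ with both $V, W$ non-empty — equivalently, by \cref{lemma:sp-unique-kind}, a pomset that is primitive or sequential (but not empty and not parallel). So the real content is: $U$ is neither empty nor parallel.

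\begin{proof}
Let $q$ be a fork target, so there are $p \in Q$ and $\phi \in \M(Q)$ with $q \in \phi$ and $\gamma(p, \phi) \neq \emptyset$. Since $A$ is well-structured, the defining conditions applied to this witness give $q \not\in F$ and, for all $\phi' \in \M(Q)$, $\gamma(q, \phi') \cap F = \emptyset$. Now let $U \in L_A(q)$; by \cref{definition:runs} there is a state $q' \in F$ with $q \arun{U}_A q'$. By \cref{lemma:sp-unique-kind}, it suffices to rule out the cases that $U$ is empty or parallel, as the remaining two cases (primitive or sequential) both yield parallel primes.

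First suppose $U$ is empty. By \cref{lemma:well-structured-no-confusion}\ref{lemma:well-structured-no-confusion:claim:empty}, the run $q \arun{U}_A q'$ is then trivial, so $q' = q$; but $q' \in F$ would force $q \in F$, contradicting $q \not\in F$. Hence $U$ is not empty. Next suppose $U$ is parallel. By \cref{lemma:well-structured-no-confusion}\ref{lemma:well-structured-no-confusion:claim:parallel}, the run $q \arun{U}_A q'$ is a parallel unit run, so by the last rule of \cref{definition:runs} there is some $\phi' \in \M(Q)$ with $q' \in \gamma(q, \phi')$. Since $q' \in F$, this gives $q' \in \gamma(q, \phi') \cap F$, contradicting the third well-structuredness condition $\gamma(q, \phi') \cap F = \emptyset$. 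Hence $U$ is not parallel, and so $U$ is a parallel prime.
\end{proof}

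The plan above is essentially complete and the argument is short; the only subtlety is correctly unpacking what a parallel prime \emph{is} via \cref{lemma:sp-unique-kind}, namely that excluding ``empty'' and ``parallel'' is exactly what the claim requires. The main obstacle, such as it is, is making sure the two well-structuredness hypotheses used here ($q \not\in F$ and $\gamma(q,-) \cap F = \emptyset$) are correctly matched to the two forbidden pomset kinds via \cref{lemma:well-structured-no-confusion}; once that correspondence is fixed, each case is a one-line contradiction. No induction on the structure of $U$ is needed, since \cref{lemma:well-structured-no-confusion} already packages the run-kind/pomset-kind correspondence that would otherwise have to be reproved here.
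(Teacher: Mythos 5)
Your proof is correct and follows essentially the same route as the paper: invoke \cref{lemma:well-structured-no-confusion} to match pomset kinds to run kinds, then derive a contradiction with the well-structuredness conditions when $U$ is parallel. You are in fact slightly more careful than the paper's own proof, which only rules out the parallel case explicitly and silently skips the empty case (needed since parallel primes are non-empty by definition) that you correctly dispatch via $q \notin F$ and claim~(i) of \cref{lemma:well-structured-no-confusion}.
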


\medskip
Even for well-structured PAs, however, there are still structural factors confounding language equivalence with which a decision procedure will need to reckon.

\begin{exa}[Distributivity I]%
\label{example:distributivity}
In \cref{figure:fork-distributivity}, we have the state $q_1$ which may fork into $q_2$ and $q_3$, as well as $q_4$ and $q_5$.
Now, $q_1$ accepts the language $\set{\ltr{a} \parallel \ltr{b}, \ltr{a} \parallel \ltr{c}}$, where the former behaviour stems from forking into $q_2$ and $q_3$, but the latter is obtained by forking into $q_4$ and $q_5$.
On the other hand, we have the state $q_1'$, which may fork into $q_2'$ and $q_3'$; here, $q_1'$ accepts the same language as $q_1$, but the two pomsets are due to the \emph{same} fork.
This is a consequence of the distributivity of parallel composition over union of pomset languages:
\[
    L_A(q_1)
    = (\set{\ltr{a}} \parallel \set{\ltr{b}}) \cup (\set{\ltr{a}} \parallel \set{\ltr{c}})
    = \set{\ltr{a}} \parallel (\set{\ltr{b}} \cup \set{\ltr{c}})
    = L_A(q_1')
  \]
\end{exa}
This example illustrates how behaviour of language-equivalent states may be spread out across different parallel transitions, and that this division may differ locally.

The last example stems from an implicit kind of non-determinism that is supported by PAs, as a result of overlap between the languages of fork targets.

\begin{exa}[Distributivity II]%
  \label{example:implicit-non-determinism}
  In \cref{figure:fork-distributivity-bis}, $q_1$ can read $\ltr{a} \parallel \ltr{b}$ to arrive in \emph{either} $q_3$ or $q_4$.
  From that point on, $q_3$ can read $\ltr{a}$ to reach an accepting state, while $q_4$ can read $\ltr{b}$ to do the same.
  In contrast, $q_1'$ can read $\ltr{a} \parallel \ltr{b}$ to arrive in \emph{only one state}, $q_4'$, whence it can read either $\ltr{a}$ or $\ltr{b}$ to arrive in $q_5'$ and accept.
  Nevertheless, $q_1$ and $q_1'$ accept the same language.
  Even though the behaviour implemented by the forks from $q_1$ is the same, the state where they land is not uniquely determined.
  In a sense, this is a consequence of distributivity of sequential composition over union of pomset languages:
  \[
    L_A(q_1)
    = (\set{\ltr{a} \parallel \ltr{b}} \cdot \set{\ltr{a}}) \cup (\set{\ltr{a} \parallel \ltr{b}} \cdot \set{\ltr{b}})
    = \set{\ltr{a} \parallel \ltr{b}} \cdot (\set{\ltr{a}} \cup \set{\ltr{b}})
    = L_A(q_1')
  \]
\end{exa}

\cref{example:distributivity,example:implicit-non-determinism} illustrate that there are many structurally different ways to express parallel behaviour in a pomset automaton.
It is not so clear how to choose for one particular way of representing parallel behaviour.
Instead, we design our algorithm to directly deal with the illustrated equivalences.
To understand how this can be done, it is convenient to first shift perspective from trying to decide language equivalence to trying to find out which states do and do not overlap in terms of their language~\cite{laurence-struth-2014}.

\begin{defi}[Atoms]
  Let $A$ be a PA or an NFA, with states $Q$ and $\alpha \subseteq Q$.
  We write
  \[
    L_A(\alpha) = \Bigl( \bigcap_{q \in \alpha} L_A(q) \Bigr) \setminus \Bigl( \bigcup_{q \not\in \alpha} L_A(q) \Bigr)
  \]
  When $L_A(\alpha) \neq \emptyset$, we say that $\alpha$ is an \emph{atom} of $A$; we write $\At_A$ for the set of atoms of $A$.
\end{defi}
\begin{exa}%
\label{example:atoms}
  In the PA in \cref{figure:fork-distributivity}, $\set{q_3, q_3'}$ is an atom, as is $\set{q_5, q_3'}$; the set $\set{q_2, q_4, q_2'}$ is also an atom.
  In fact, these are all atoms that contain fork targets of \cref{figure:fork-distributivity}.

  The PA in \cref{figure:fork-distributivity-bis} has $\set{q_3, q_5, q_2', q_4'}$ as an atom; similarly, $\set{q_2, q_4, q_3', q_4'}$ is an atom.
  These are again all of the atoms that contain fork targets in \cref{figure:fork-distributivity-bis}.
\end{exa}

The following lemma shows how atoms can be used to decide language equivalence.

\begin{lem}%
\label{lemma:atoms-vs-equality}
Let $A$ be a PA or NFA with states $q_1$ and $q_2$.
Then $L_A(q_1) = L_A(q_2)$ if and only if for all $\alpha \in \At_A$ it holds that $q_1 \in \alpha$ precisely when $q_2 \in \alpha$.
\end{lem}
\begin{proof}
First, suppose that $q_1$ and $q_2$ have the same language, and let $\alpha \in \At$ with $q_1 \in \alpha$.
Then surely $q_2 \in \alpha$, because otherwise we have that $L_A(\alpha) \subseteq L_A(q_1) \setminus L_A(q_2) = \emptyset$, which contradicts that $\alpha$ is an atom.
Similarly, we have that $q_2 \in \alpha$ implies $q_1 \in \alpha$.

For the other implication, let $U \in L_A(q_1)$.
Choose $\alpha = \setcompr{q \in Q}{U \in L_A(q)}$, and note that $\alpha$ is an atom, since $U \in L_A(\alpha)$ by construction.
Because $q_1 \in \alpha$, we know that $q_2 \in \alpha$ by the premise, and thus (by the above), we have that $U \in L_A(q_2)$.
This shows that $L_A(q_1) \subseteq L_A(q_2)$; the other inclusion follows by symmetry.
\end{proof}

Thus, if we can compute the set of atoms of a PA then we can decide language equivalence of states.
As it turns out, this is possible if the PA is finite, fork-acyclic and well-structured.

\begin{lem}\label{lemma:atom-computation}
  If $A$ is finite, fork-acyclic, and well-structured, then $\At_A$ is computable.
\end{lem}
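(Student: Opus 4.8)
The goal is to compute $\At_A$, the set of atoms, for a finite, fork-acyclic, well-structured PA $A$. Since an atom is a subset $\alpha \subseteq Q$ with $L_A(\alpha) \neq \emptyset$, and there are only finitely many subsets of the finite state set $Q$, it suffices to decide, for each $\alpha \subseteq Q$, whether $L_A(\alpha) \neq \emptyset$. The plan is to proceed by induction on the fork-depth $D_A$ of the automaton, using fork-acyclicity to peel off the outermost layer of parallelism. Because $A$ is finite, we may freely restrict attention to $A[q] = A[\supp_A(q)]$ for individual states, and more importantly we can reduce questions about runs to questions about an associated NFA for which the theory of atoms is already well understood via \cref{lemma:atoms-vs-equality}.

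The key construction is to build, from the PA $A$, an ordinary NFA over an \emph{enlarged} alphabet $\Delta$ that encodes parallel behaviour as fresh primitive letters, together with an atomic substitution $\zeta$ (in the sense of \cref{lemma:atomic-properties}) mapping each fresh letter back to the pomset language it abbreviates. Concretely, for each fork target $q$ of $A$, the pomsets in $L_A(q)$ are parallel primes by \cref{lemma:well-structured-associativity}, and because $A$ is fork-acyclic these fork targets have strictly smaller support; so by the induction hypothesis we may assume their languages, and in particular their atoms, are already computable. I would introduce one fresh symbol per atom of the lower-depth sub-automaton spanned by the fork targets, let $\zeta$ send that symbol to the corresponding atomic language, and then replace every parallel unit run in $A$ by a sequential transition over the matching fresh letter. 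The well-structuredness hypotheses are exactly what make this replacement faithful: $|\phi| \geq 2$ rules out degenerate forks, $q' \notin F$ and $\gamma(q', \phi') \cap F = \emptyset$ together guarantee (via \cref{lemma:well-structured-no-confusion}) that a run's type is pinned down by its pomset's type, so a parallel pomset is read by exactly one parallel unit run and nothing sneaks across the fork boundary.

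Having encoded parallel behaviour into fresh letters, the resulting object is a genuine NFA $A'$ over $\Delta$ whose state languages $L_{A'}(q)$ are words, and the design ensures $\zeta(L_{A'}(q)) = L_A(q)$. The atoms of a finite NFA are computable by a standard fixpoint/partition-refinement argument (one computes the languages up to the equivalence induced by \cref{lemma:atoms-vs-equality}), so $\At_{A'}$ is in hand. I would then transport atoms back across $\zeta$: because $\zeta$ is atomic, \cref{lemma:atomic-properties} gives that $\zeta$ commutes with intersection and set difference and that $\zeta(L) = \emptyset \iff L = \emptyset$. Writing $L_A(\alpha)$ as the Boolean combination $\bigl(\bigcap_{q \in \alpha} L_A(q)\bigr) \setminus \bigl(\bigcup_{q \notin \alpha} L_A(q)\bigr)$ and substituting $L_A(q) = \zeta(L_{A'}(q))$, atomicity lets me pull $\zeta$ outside the whole Boolean combination, so that $L_A(\alpha) = \zeta\bigl(L_{A'}(\alpha)\bigr)$; hence $\alpha$ is an atom of $A$ if and only if it is an atom of $A'$, and the latter is decidable. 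This closes the induction.

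The main obstacle I anticipate is making the encoding step genuinely \emph{correct and finite} simultaneously: one must show that replacing forks by fresh letters neither adds nor loses pomsets in the language, which requires a careful induction on the structure of runs (via \cref{lemma:run-deconstruct} to split a run into unit runs, and \cref{lemma:well-structured-no-confusion} to classify each unit run by its pomset), and one must ensure the supply of fresh letters stays finite. Finiteness of $\Delta$ follows from fork-acyclicity plus finiteness of $A$: there are finitely many fork targets, each with strictly smaller support, so the induction bottoms out and the lower-depth sub-automata have finitely many atoms. The subtle point is that the languages of distinct fork targets may overlap (\cref{example:implicit-non-determinism}), so the fresh letters must be indexed by \emph{atoms} of the fork-target sub-automaton rather than by the fork targets themselves — this is precisely what guarantees clause~(ii) of atomicity, namely that $\zeta(\ltr{a}) \cap \zeta(\ltr{b}) \neq \emptyset$ forces $\ltr{a} = \ltr{b}$. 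Verifying this disjointness, and that the substitution lands in sequential primes as clause~(i) demands (using \cref{lemma:well-structured-associativity}), is where the well-structuredness and fork-acyclicity hypotheses must be deployed with the most care.
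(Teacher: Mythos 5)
Your proposal follows the paper's proof essentially step for step: induction on the depth $D_A$, restriction (via \cref{lemma:restrict-bounded}) to a support-closed portion of strictly smaller depth whose atoms are obtained from the induction hypothesis, construction of an NFA over an enlarged alphabet whose fresh letters stand for the languages read by parallel unit runs, an atomic substitution $\zeta$ satisfying $L_A(q) = \zeta(L_{A'}(q))$, the Boolean-combination calculation that transports atoms across $\zeta$ using \cref{lemma:atomic-properties}, and a standard algorithm for the atoms of a finite NFA. Your observation that the fresh letters must be indexed by atoms of the lower-depth automaton rather than by fork targets, precisely to secure clause~(ii) of atomicity in the presence of overlapping fork-target languages, is exactly the paper's key move.

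One point in your encoding needs repair, because as literally stated it would fail. You introduce ``one fresh symbol per atom'' with $\zeta$ sending that symbol to the single language $L_A(\alpha)$, yet you also want to replace each parallel unit run by a \emph{single} transition over ``the matching fresh letter.'' These two choices are incompatible: a fork into $\mset{q_1, \dots, q_n}$ with $n \geq 2$ has no single matching atom, and encoding it instead as a word of $n$ per-atom letters would be wrong, since $\zeta$ distributes over sequential composition, so the $\zeta$-image of such a word is $L_A(\alpha_1) \cdots L_A(\alpha_n)$ rather than $L_A(\alpha_1) \parallel \cdots \parallel L_A(\alpha_n)$. Moreover, with per-atom letters clause~(i) of atomicity breaks: by \cref{lemma:well-structured-associativity} the pomsets in a fork target's language are parallel primes, which may well be sequential pomsets and hence not sequential primes. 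The paper's letters are therefore \emph{multisets} $\mset{\alpha_1, \dots, \alpha_n}$ of atoms of the restricted automaton, with $\zeta(\mset{\alpha_1, \dots, \alpha_n}) = L_A(\alpha_1) \parallel \cdots \parallel L_A(\alpha_n)$; it is this parallel composition of $n \geq 2$ languages not containing the empty pomset that consists of sequential primes (establishing clause~(i)), while clause~(ii) follows from unique parallel factorisation (\cref{lemma:parallel-factorisation-unique}) combined with the disjointness of atom languages --- so \cref{lemma:well-structured-associativity} is deployed for clause~(ii), not clause~(i). With the letters corrected to multisets of atoms, your argument coincides with the paper's.
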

\begin{proof}
  We proceed by induction on $D_A$ (c.f. \cref{definition:depth}).
  In the base, where $D_A = 0$, we have $Q = \emptyset$, since if $q \in Q$, then $D_A(q) \geq 1$; hence, $\emptyset$ is the only atom in this case.\footnote{%
    The empty intersection is assumed to be the set of all sp-pomsets $\SP(\Sigma)$.
  }

  For the inductive step, let $D_A > 0$ and suppose that the claim holds for pomset automata of strictly smaller depth.
  We choose $Q' = \setcompr{q \in Q}{D_A(q) < D_A}$, and note that $Q'$ is support-closed: if $q' \preceq_A q \in Q'$, then $D_A(q') \leq D_A(q)$, and hence $q' \in Q'$.
  By \autoref{lemma:restrict-bounded}, we can then restrict $A$ to obtain $A[Q']$, which, by construction, is of depth strictly less than $D_A$.
  By applying the induction hypothesis, we can compute the atoms of $A[Q']$.

  To compute the atoms of $A$ proper, we shall construct an NFA $A' = \angl{Q, \delta', F}$ whose atoms are precisely those of $A$; the claim then follows because we can compute the atoms of an NFA using a standard algorithm such as the one by Brzozowski and Tamm~\cite{brzozowski-tamm-2014}.
  The idea behind this NFA is that it contains the $\delta$-transitions of $A$, and it encodes the $\gamma$-transitions by transitions labelled with symbols built from the atoms of $A[Q']$.

  The alphabet of our NFA will contain the symbols from $\Sigma$, as well as additional symbols that encode (parts of) the languages of pomsets that can label parallel unit transitions, as divided up by the atoms of $A[Q']$; more precisely, we choose
  \[
    \Delta = \Sigma \cup
        \setcompr{
            \mset{\alpha_1, \dots, \alpha_n} \in \M(\At_{A[Q']})
        }{
            \begin{array}{c}
            \exists q \in Q,\ q_1 \in \alpha_1,\ \dots,\ q_n \in \alpha_n.\\
            \gamma(q, \mset{q_1, \dots, q_n}) \neq \emptyset
            \end{array}
        }
  \]
  Here, we assume without loss of generality that the two sets are disjoint, i.e., that none of the multisets are already symbols in $\Sigma$.
  This alphabet is finite because $\At_{A[Q']}$ is finite, and because by definition of pomset automata, there are finitely many multisets $\phi$ s.t.\ $\gamma(q,\phi)\neq \emptyset$.

  We define the sequential transition function $\delta': Q \times \Delta \to 2^Q$, as follows.
  \begin{mathpar}
    \delta'(q, \ltr{a}) = \delta(q, \ltr{a})
    \and
    \delta'(q, \mset{\alpha_1, \dots, \alpha_n}) = \bigcup \setcompr{\gamma(q, \mset{q_1, \dots, q_n})}{q_1 \in \alpha_1,\ \cdots,\ q_n \in \alpha_n}
  \end{mathpar}

  \begin{figure}
    \begin{subfigure}{\textwidth}
      \centering
      \begin{tikzpicture}
        \node[state] (q1) {$q_1$};

        \node[state,accepting,left=20mm of q1] (q6) {$q_6$};
        \node[state,accepting,right=20mm of q1] (q7) {$q_7$};

        \node[state,above right=7mm of q6] (q2) {$q_2$};
        \node[state,below right=7mm of q6] (q3) {$q_3$};
        \node[state,above left=7mm of q7] (q4) {$q_4$};
        \node[state,below left=7mm of q7] (q5) {$q_5$};

        \draw (q2) edge[-latex] node[above left] {$\ltr{a}$} (q6);
        \draw (q3) edge[-latex] node[below left] {$\ltr{b}$} (q6);

        \draw (q4) edge[-latex] node[above right] {$\ltr{a}$} (q7);
        \draw (q5) edge[-latex] node[below right] {$\ltr{c}$} (q7);

        \draw (q1) edge[-latex] node[above] {\small $\mset{\alpha_3, \alpha_1}$} (q6);
        \draw (q1) edge[-latex] node[above] {\small $\mset{\alpha_3, \alpha_2}$} (q7);

        \node[state,right=10mm of q7] (q1p) {$q_1'$};

        \node[state,accepting,right=20mm of q1p] (q4p) {$q_4'$};
        \node[state,above left=7mm of q4p] (q2p) {$q_2'$};
        \node[state,below left=7mm of q4p] (q3p) {$q_3'$};

        \draw (q2p) edge[-latex] node[above right] {$\ltr{a}$} (q4p);
        \draw (q3p) edge[-latex] node[below right] {$\ltr{b}, \ltr{c}$} (q4p);

        \draw (q1p) edge[-latex] node[above] {\small $\mset{\alpha_3, \alpha_1}$} node[below] {\small $\mset{\alpha_3, \alpha_2}$} (q4p);
      \end{tikzpicture}
      \caption{NFA created in inductive step of atom computation for \cref{figure:fork-distributivity}.}%
      \label{figure:fork-distributivity-atoms}
    \end{subfigure}\vspace{5mm}

    \begin{subfigure}{\textwidth}
      \centering
      \begin{tikzpicture}
        \node[state] (q1) {$q_1$};

        \node[state,left=2cm of q1] (q3) {$q_3$};
        \node[state,right=2cm of q1] (q4) {$q_4$};

        \node[state,accepting,below=of q3] (q7) {$q_7$};
        \node[state,accepting,below=of q4] (q6) {$q_6$};

        \node[state,right=of q7] (q2) {$q_2$};
        \node[state,left=of q6] (q5) {$q_5$};

        \draw (q2) edge[-latex] node[below] {$\ltr{b}$} (q7);
        \draw (q4) edge[-latex] node[right] {$\ltr{b}$} (q6);

        \draw (q3) edge[-latex] node[left] {$\ltr{a}$} (q7);
        \draw (q5) edge[-latex] node[below] {$\ltr{a}$} (q6);

        \draw (q1) edge[-latex] node[above] {\small $\mset{\alpha_1, \alpha_2}$} (q3);
        \draw (q1) edge[-latex] node[above] {\small $\mset{\alpha_1, \alpha_2}$} (q4);

        \node[state,above right=5mm and 13mm of q6] (q1p) {$q_1'$};

        \node[state,above right=7mm of q1p] (q2p) {$q_2'$};
        \node[state,below right=7mm of q1p] (q3p) {$q_3'$};
        \node[state,right=15mm of q1p] (q4p) {$q_4'$};
        \node[state,accepting,right=38mm of q1p] (q5p) {$q_5'$};

        \draw[rounded corners=5pt,-latex] (q2p) -| node[above right,xshift=-4em,yshift=1mm] {$\ltr{a}$} (q5p);
        \draw[rounded corners=5pt,-latex] (q3p) -| node[below right,xshift=-4em,yshift=-1mm] {$\ltr{b}$} (q5p);

        \draw (q4p) edge[-latex] node[above] {$\ltr{a}, \ltr{b}$} (q5p);

        \draw (q1p) edge[-latex] node[above] {\small $\mset{\alpha_1, \alpha_2}$} (q4p);
      \end{tikzpicture}
      \caption{NFA created in inductive step of atom computation for \cref{figure:fork-distributivity-bis}.}%
      \label{figure:fork-distributivity-bis-atoms}
    \end{subfigure}

    \caption{Examples of PAs obtained in the inductive step of atom computation.}%
  \end{figure}
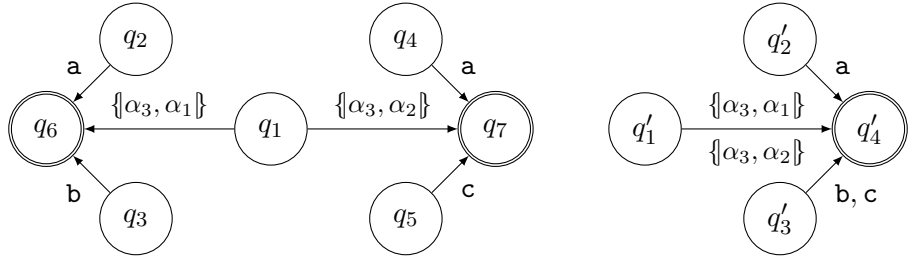
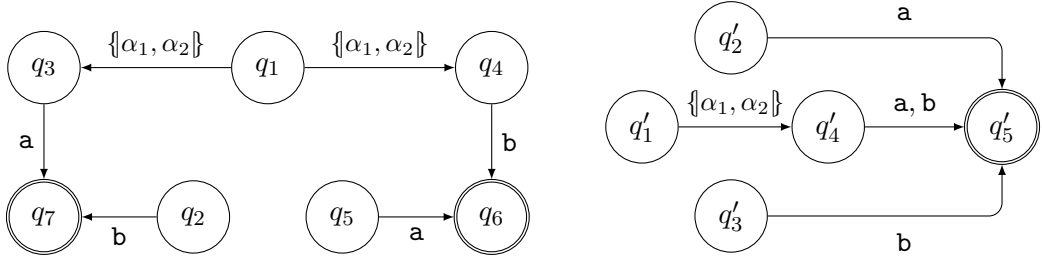

  \begin{exa}
  Let $A$ be the PA in \cref{figure:fork-distributivity}.
  In \cref{example:atoms}, we saw that $\alpha_1 = \set{q_3, q_3'}$, $\alpha_2 = \set{q_5, q_3'}$ and $\alpha_3 = \set{q_2, q_4, q_2'}$ are the atoms that contain fork targets.
  The resulting NFA is drawn in \cref{figure:fork-distributivity-atoms}.
  There, we see that $q_6 \in \delta'(q_2, \ltr{a})$ because $q_6 \in \delta(q_2, \ltr{a})$.
  Furthermore, $q_6 \in \delta'(q_1, \mset{\alpha_3, \alpha_1})$ because $q_6 \in \gamma(q_1, \mset{q_2, q_3})$ and $q_2 \in \alpha_3$ while $q_3 \in \alpha_1$.
  \end{exa}

  \begin{exa}
  Let $A$ be the PA in \cref{figure:fork-distributivity-bis}.
  In \cref{example:atoms}, we found atoms $\alpha_1 = \set{q_3, q_5, q_2', q_4'}$ and $\alpha_2 = \set{q_2, q_4, q_3', q_4'}$.
  The resulting NFA is drawn in \cref{figure:fork-distributivity-bis-atoms}.
  There, we see that $q_7 \in \delta'(q_1, \mset{\alpha_1, \alpha_2})$ because $q_7 \in \gamma(q_1, \mset{q_3, q_2})$ with $q_3 \in \alpha_1$ and $q_2 \in \alpha_2$.
  \end{exa}

  Having defined our target NFA, it remains to show that the atoms of $A$ are the same as those of $A'$.
  To this end, we need to relate the languages of $A'$ to the languages of $A$; we do this by means of the substitution $\zeta: \Delta \to 2^{\SP(\Sigma)}$, given by:
  \begin{mathpar}
    \zeta(\ltr{a}) = \set{\ltr{a}}
    \and
    \zeta(\mset{\alpha_1, \dots, \alpha_n}) = L_A(\alpha_1) \parallel \cdots \parallel L_A(\alpha_n)
  \end{mathpar}

  \noindent
  We need the following two technical properties of $\zeta$.
  The first of these relates the languages of the states of the NFA $A'$ to the languages of the states of the PA $A$ by means of $\zeta$.

\begin{restatable}{fact}{restatesubstitutionvsruns}%
\label{proposition:substitution-vs-runs}
For all $q \in Q$, it holds that $L_A(q) = \zeta(L_{A'}(q))$.
\end{restatable}

  The second property that we need says that $\zeta$ is atomic; this is a consequence of the fact that $A$ is well-structured, and hence if $\mset{\alpha_1, \dots, \alpha_n} \in \Delta$, then $n \geq 2$ and $1 \not\in L_A(\alpha_i)$ for all $1 \leq i \leq n$, meaning that $\zeta(\mset{\alpha_1, \dots, \alpha_n})$ must consist of sequential primes.

\begin{restatable}{fact}{restatesubstitutionatomic}%
\label{proposition:substitution-atomic}
The substitution $\zeta$ is atomic.
\end{restatable}

  \noindent
  Let $\alpha \subseteq Q$; we can then use the above observations to calculate that
  \begin{align*}
    \zeta(L_{A'}(\alpha))
    &= \zeta \Bigl( \Bigl( \bigcap_{q \in \alpha} L_{A'}(q) \Bigr) \setminus \Bigl( \bigcup_{q \not\in \alpha} L_{A'}(q) \Bigr) \Bigr)
      \tag{def. $L_{A'}$ on sets of states} \\
    &= \Bigl( \bigcap_{q \in \alpha} \zeta(L_{A'}(q)) \Bigr) \setminus \Bigl( \bigcup_{q \not\in \alpha} \zeta(L_{A'}(q)) \Bigr)
      \tag{\cref{proposition:substitution-atomic,lemma:atomic-properties}} \\
    &= \Bigl( \bigcap_{q \in \alpha} L_{A}(q) \Bigr) \setminus \Bigl( \bigcup_{q \not\in \alpha} L_{A}(q) \Bigr)
      \tag{\cref{proposition:substitution-vs-runs}} \\
    &= L_{A}(\alpha)
      \tag{def. $L_{A}$ on sets of states}
  \end{align*}
  To wrap up, suppose $\alpha$ is an atom of $A$; then $\zeta(L_{A'}(\alpha)) = L_A(\alpha) \neq \emptyset$, and hence $L_{A'}(\alpha) \neq \emptyset$ by \cref{proposition:substitution-atomic,lemma:atomic-properties}, making $\alpha$ an atom of $A'$.
  Conversely, if $\alpha$ is an atom of $A'$ then $L_A(\alpha) = \zeta(L_{A'}(\alpha)) \neq \emptyset$ by \cref{proposition:substitution-atomic,lemma:atomic-properties}, meaning $\alpha$ is an atom of $A$.
\end{proof}

\cref{lemma:atom-computation,lemma:atoms-vs-equality} now give us the decidability result for well-structured PAs.

\begin{thm}%
  \label{theorem:decision-procedure-well-structured}
  Let $A$ be a finite, fork-acyclic and well-structured PA\@.
  Given states $q_1$ and $q_2$, it is decidable whether $L_A(q_1) = L_A(q_2)$.
\end{thm}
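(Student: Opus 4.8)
The plan is to derive the theorem immediately by combining the two lemmas that directly precede it, namely \cref{lemma:atom-computation} and \cref{lemma:atoms-vs-equality}. The key observation is that these two results partition the work neatly: one gives us a \emph{computable} handle on the atoms of $A$, and the other reduces language equivalence of two states to a purely set-theoretic condition expressed in terms of those atoms.

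Concretely, I would first invoke \cref{lemma:atom-computation}: since $A$ is finite, fork-acyclic, and well-structured, the set $\At_A$ is computable. Because every atom is a subset of the finite state set $Q$, the set $\At_A$ is itself finite, and the lemma hands us an explicit enumeration of its members. Next, I would appeal to \cref{lemma:atoms-vs-equality}, which tells us that $L_A(q_1) = L_A(q_2)$ holds if and only if, for every $\alpha \in \At_A$, we have $q_1 \in \alpha$ exactly when $q_2 \in \alpha$. The decision procedure then simply computes $\At_A$, iterates over its finitely many elements, and checks the membership biconditional $q_1 \in \alpha \iff q_2 \in \alpha$ for each $\alpha$; it answers ``equivalent'' precisely when all these checks succeed. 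Each individual check is a trivial test of membership in a finite subset of $Q$, so the whole procedure terminates and is correct by the two lemmas.

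There is no genuine obstacle remaining at this stage, since all the difficulty has already been absorbed into \cref{lemma:atom-computation}, whose proof carries out the induction on $D_A$ and the NFA construction. If any subtlety deserves a word, it is only to confirm \emph{finiteness} of $\At_A$ — but this is immediate from $\At_A \subseteq 2^Q$ with $Q$ finite, so the enumeration produced by \cref{lemma:atom-computation} has finitely many entries and the membership loop is guaranteed to halt. The proof is therefore essentially a one-paragraph assembly of the preceding results.

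\begin{proof}
By \cref{lemma:atom-computation}, the set $\At_A$ is computable, and since each atom is a subset of the finite set $Q$, there are only finitely many atoms to enumerate. By \cref{lemma:atoms-vs-equality}, we have $L_A(q_1) = L_A(q_2)$ if and only if $q_1 \in \alpha$ precisely when $q_2 \in \alpha$, for all $\alpha \in \At_A$. We can thus decide language equivalence of $q_1$ and $q_2$ by computing $\At_A$ and checking, for each of its finitely many elements $\alpha$, whether $q_1 \in \alpha \iff q_2 \in \alpha$; the states are language-equivalent exactly when all of these checks succeed.
\end{proof}
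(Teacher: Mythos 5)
Your proposal is correct and matches the paper exactly: the paper states this theorem as an immediate consequence of \cref{lemma:atom-computation} and \cref{lemma:atoms-vs-equality}, precisely the assembly you carry out. Your added remark that $\At_A \subseteq 2^Q$ is finite is a harmless (and valid) explicit justification of why the membership checks terminate.
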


The complexity of the above procedure is dominated by the complexity of calculating the atoms of the NFA constructed in each inductive step.
Brzozowski and Tamm's algorithm is doubly exponential in the number of states~\cite{brzozowski-tamm-2014}, which implies a rough upper bound of $O(D_A \cdot 2^{2^{|Q|}})$ on our algorithm.
Despite this, the algorithm in~\cite{brzozowski-tamm-2014} is known to have good performance in practice~\cite{almeida-moreira-reis-2007}.

\section{Well-structured automata}%
\label{sec:well-structured-automata}

In this section we show that for every finite and fork-acyclic automaton, we can construct a finite, fork-acyclic, and \emph{well-structured} automaton that implements it.
More precisely, we show that this transformation can take a \emph{bounded} and fork-acyclic automaton, and produce a bounded, fork-acyclic and well-structured automaton; the former claim then follows, since every finite automaton is bounded, and every bounded automaton can be made finite while preserving fork-acyclicity, by \cref{lemma:restrict-bounded}.\footnote{It is clear that the construction from this lemma also preserves well-structuredness.}
We first decompose the definition of well-structured automata into three simpler properties.

\begin{defi}[$n$-forking]
Let $n \in \naturals$.
A pomset automaton $A$ is \emph{$n$-forking} if for every state $q\in Q$ and every multiset of states $\phi\in\M(Q)$ such that $\gamma(q,\phi)\neq\emptyset$ we have $|\phi| \geq n$.
\end{defi}

\begin{defi}[Parsimony]
A pomset automaton $A$ is said to be \emph{parsimonious} if, whenever $p \in Q$ and $q \in \phi \in \M(Q)$ such that $\gamma(p, \phi) \neq \emptyset$, we have that $1 \not\in L_A(q)$.
\end{defi}

\begin{defi}[Flat-branching]
A pomset automaton $A$ is \emph{flat-branching} if for all states $q, q'\in Q$ and every multiset $\phi \in \M(Q)$, if $\gamma(q,\phi) \neq \emptyset$ and $q' \in \phi$ then
\begin{mathpar}
\forall \psi \in \M(Q).\; \gamma(q', \psi) \cap F =\emptyset.
\end{mathpar}
\end{defi}

\begin{exa}
All automata displayed in \cref{figure:example-pa,figure:problematic-pas,figure:confounding-pas} are $2$-forking.
The automata in \cref{figure:confounding-pas} are parsimonious, but the ones in \cref{figure:problematic-pas} are not; for instance, in \cref{figure:run-confusion}, we have that $\gamma(q_1, \mset{q_3, q_4}) \neq \emptyset$, while $1 \in L_A(q_4)$.
The automata in \cref{figure:problematic-pas,figure:fork-distributivity,figure:fork-distributivity-bis} are flat-branching, but the one in \cref{figure:fork-associativity} is not; in particular, in that automaton we have that $\gamma(q_1, \mset{q_3, q_4}) \neq \emptyset$, while $q_2 \in \gamma(q_3, \mset{q_5, q_6}) \cap F$.
\end{exa}
One can prove that the above properties are indeed equivalent to well-structuredness.
\vspace{-0.3em}
\begin{restatable}{lem}{restatewellstructuredifftriple}%
\label{lemma:well-structured-iff-triple}
A pomset automaton $A$ is well-structured if and only if it is $2$-forking, parsimonious, and flat-branching.
\end{restatable}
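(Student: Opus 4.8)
The plan is to observe that well-structuredness is, by definition, the conjunction of three conditions, and to match each one against one of the three named properties. Two of the matches are immediate once the quantifier patterns are aligned: the condition $|\phi| \geq 2$ (for every $q$ and every $\phi$ with $\gamma(q,\phi) \neq \emptyset$) is verbatim $2$-forking, and the condition $\forall \phi'.\ \gamma(q',\phi') \cap F = \emptyset$ (for every fork target $q'$) is verbatim flat-branching after renaming bound variables. So the only genuine work lies in reconciling the remaining well-structuredness condition, $q' \not\in F$, with parsimony, $1 \not\in L_A(q')$. Both are quantified over fork targets $q'$, so it suffices to prove their equivalence pointwise.

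First I would dispatch the easy implication: parsimony implies $q' \not\in F$. Here I use only the trivial run $q' \arun{1}_A q'$, which shows that $q' \in F$ would force $1 \in L_A(q')$; contraposing, $1 \not\in L_A(q')$ yields $q' \not\in F$. This needs no structural hypotheses, so it feeds the direction ``$2$-forking $+$ parsimonious $+$ flat-branching $\Rightarrow$ well-structured''; together with the two immediate matches, that direction is complete.

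For the reverse direction I would argue that, \emph{under} well-structuredness, $q' \not\in F$ upgrades to $1 \not\in L_A(q')$. The key ingredient is \cref{lemma:well-structured-no-confusion}: since $A$ is well-structured, its first part tells us that any run labelled by the empty pomset is trivial. Consequently, if $1 \in L_A(q')$ then some $q' \arun{1}_A q''$ with $q'' \in F$ is trivial, forcing $q' = q''$ and hence $q' \in F$; contraposing against the well-structuredness condition $q' \not\in F$ gives $1 \not\in L_A(q')$, i.e.\ parsimony. This closes ``well-structured $\Rightarrow$ $2$-forking $+$ parsimonious $+$ flat-branching''.

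The point to watch is that the equivalence between $q' \not\in F$ and $1 \not\in L_A(q')$ is genuinely asymmetric: parsimony is a priori strictly stronger, since in an arbitrary PA an empty-labelled run reaching an accepting state need not be trivial---a fork into $\mempty$ can make a non-accepting state accept $1$, as noted in the remark following \cref{example:run-sorts}. The asymmetry is resolved only by invoking the no-confusion lemma, and this invocation is legitimate precisely because we are in the direction where well-structuredness is assumed. I would therefore flag this single use of \cref{lemma:well-structured-no-confusion} as the crux of the argument, the other two matches being bookkeeping.
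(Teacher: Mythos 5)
Correct, and essentially the paper's own argument: the paper likewise treats the $2$-forking and flat-branching clauses as verbatim matches and concentrates on the equivalence, for fork targets $q'$, of $q' \not\in F$ and $1 \not\in L_A(q')$, obtaining the nontrivial direction from \cref{fact:parsimony-vs-empty-pomset}, which (like part (i) of \cref{lemma:well-structured-no-confusion}, the result you invoke) packages the same inductive fact that empty-labelled runs are trivial under these hypotheses. Your one deviation---citing \cref{lemma:well-structured-no-confusion} instead of \cref{fact:parsimony-vs-empty-pomset}---is harmless and non-circular, since that lemma is proved directly from the definition of well-structuredness, which is exactly the direction in which you use it.
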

Our task can therefore be reduced to converting a bounded and fork-acyclic PA into an equivalent bounded and fork-acyclic PA that is also $2$-forking, parsimonious and flat-branching.
Note that these properties can be at cross purposes: for instance, ensuring parsimony may introduce forks of smaller sizes, which could make it so that the automaton is no longer $2$-forking.
Similarly, eliminating forks into the empty multiset may introduce new accepting states, which can invalidate flat-branching.

Thus, establishing all properties simultaneously requires some care.
The remainder of this section describes a series of transformations that establish one property while maintaining the ones already established.
More precisely, our construction to convert a bounded PA $A$ into a bounded, $2$-forking, parsimonious and flat-branching PA goes as follows:
\begin{enumerate}
    \item
    we first show how to implement $A$ with a parsimonious automaton $A_0$;

    \item
    then we discuss how to implement $A$ with a $1$-forking automaton $A_1$;

    \item
    we proceed to show how to implement $A$ with a $2$-forking automaton $A_2$;

    \item
    finally, we show that $A$ can be implemented by a flat-branching PA $A_3$.
\end{enumerate}

\noindent
Since each of these transformations preserves the established properties (e.g., $A_2$ is still bounded, $1$-forking and parsimonious), we end up with a bounded PA that implements $A$ and is $2$-forking, parsimonious, and flat-branching, and hence well-structured by \cref{lemma:well-structured-iff-triple}.

\medskip
Before we get into the weeds, we discuss some technical properties that will help simplify the constructions.
First, note that we shall need to rule out a form of $\epsilon$-transitions.
Indeed, although our model does not include such transitions explicitly, they may be achieved by branching to states accepting the empty pomset (if the automaton is not parsimonious), or by using a $\gamma(p, \mempty)$-transition (if the automaton is not $1$-forking).
The result is a run $p\arun 1_A q$ that is non-trivial.
To reason about such transitions, we observe the following:

\begin{restatable}{lem}{restateepsilondecidable}%
\label{lemma:epsilon-decidable}
For any bounded PA $A$, the predicate $p\arun 1_{A}q$ is decidable.
\end{restatable}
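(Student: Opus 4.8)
The plan is to characterize the empty-labelled runs as the least fixpoint of a monotone operator on the \emph{finite} set $Q \times Q$, and then compute that fixpoint by iteration. First I would pin down the shape of a run $p \arun{1}_A q$. By \cref{lemma:run-deconstruct}, such a run decomposes as $p = q_0 \arun{U_1}_A q_1 \cdots \arun{U_\ell}_A q_\ell = q$ with each step a unit run and $U_1 \cdots U_\ell = 1$. Since sequential unit runs are labelled by nonempty primitive pomsets, every $U_i$ must equal $1$ and every step must be a parallel unit run. Unfolding the last rule of the run relation, $q_{i-1} \arun{1}_A q_i$ means there is $\phi = \mset{r_1, \dots, r_m} \in \M(Q)$ with $q_i \in \gamma(q_{i-1}, \phi)$ and, for each $1 \le j \le m$, a run $r_j \arun{1}_A r_j'$ with $r_j' \in F$; the case $m = 0$ corresponds exactly to a $\gamma(q_{i-1}, \mempty)$-transition. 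Thus empty-labelled runs are generated by reflexivity, composition, and these fork steps, where the fork steps themselves refer back to empty-labelled runs landing in accepting states.

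This motivates defining an operator $\Phi : 2^{Q \times Q} \to 2^{Q \times Q}$ by
\[
  \Phi(R) = \{(q,q) : q \in Q\}
  \cup \{(p,q) : \exists r.\ (p,r),(r,q) \in R\}
  \cup \{(p,q) : \exists \phi = \mset{r_1,\dots,r_m}.\ q \in \gamma(p, \phi) \text{ and } \forall j\ \exists r_j' \in F.\ (r_j, r_j') \in R\}.
\]
The operator $\Phi$ is monotone, so by Knaster--Tarski it has a least fixpoint $R^\star = \bigcup_{k \in \naturals} \Phi^k(\emptyset)$. I would then prove $R^\star = \{(p,q) : p \arun{1}_A q\}$. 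The inclusion from left to right follows by induction on the least $k$ with $(p,q) \in \Phi^k(\emptyset)$, reading each generating clause of $\Phi$ off as the corresponding inference rule to build an explicit empty-labelled run. The converse follows from the decomposition above by induction on the (finite) proof tree of $p \arun{1}_A q$: every constituent parallel unit step is witnessed by a $\gamma$-transition whose threads are strictly smaller empty-labelled runs into $F$, which the induction hypothesis places in $R^\star$, so the third clause of $\Phi$ applies.

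Finally, for decidability: because $A$ is finite, $Q \times Q$ is finite, and by the defining condition of a pomset automaton there are, for each $p$, only finitely many $\phi$ with $\gamma(p,\phi) \neq \emptyset$. Hence each application of $\Phi$ is effectively computable, and the ascending chain $\emptyset \subseteq \Phi(\emptyset) \subseteq \Phi^2(\emptyset) \subseteq \cdots$ stabilizes after at most $|Q|^2$ steps. Computing $R^\star$ therefore decides membership $p \arun{1}_A q$. The step I expect to be the main obstacle is the correctness argument for the fork clause: one must verify both that every empty-labelled parallel unit run genuinely has all of its threads labelled by $1$ and terminating in accepting states, so that it is indeed captured by $\Phi$, and that the self-reference through $F$ does not admit spurious pairs. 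This is precisely where taking the \emph{least} fixpoint, rather than an arbitrary one, is essential, since only the least fixpoint corresponds to the existence of a well-founded (finite) proof tree.
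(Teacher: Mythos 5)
Your proposal is correct and is essentially the paper's own proof: the paper likewise characterises $p \arun{1}_A q$ as the least relation generated by reflexivity, composition, and the fork rule (its relation $\leadsto_A$ in \cref{fact:characterise-empty-pomset-runs}, proved equivalent to $\arun{1}_A$ by the same two inductions you sketch), and then decides it by saturating to a fixpoint over the finite set $Q \times Q$, using that only finitely many $\phi$ satisfy $\gamma(p,\phi) \neq \emptyset$. Your Knaster--Tarski framing of the ``smallest relation satisfying the rules'' is just a rephrasing, not a different route.
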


Another useful notion for this section will be that of \emph{weak implementation}.
Essentially, a weak implementation of a PA is another PA where the behaviour of each state of the first PA is spread out across a set of states, rather than just one (as is the case for implementation).

\begin{defi}
  A PA $A'$ \emph{weakly implements} a PA $A$ if the following hold:
  \begin{enumerate}[(i)]
  \item for every state $q$ in $A$ there is a finite set of states $Q_q$ in $A'$ s.t.\ $L_A(q)=\bigcup_{x\in Q_q}L_{A'}(x)$.
  \item if $A$ is fork-acyclic, then so is $A'$.
  \end{enumerate}
\end{defi}

\noindent
To prove that there exists an automaton implementing $A$ that satisfies some of the properties above, it suffices to find one that weakly implements $A$.

\begin{restatable}{lem}{restateweakimplementationiffimplementation}%
\label{lem:weak-implem-iff-implem}
If a PA $A'$ weakly implements $A$, then we can construct a PA $A''$ implementing $A$.
If $A'$ is bounded (resp.\ $n$-forking, flat-branching, parsimonious), then so is $A''$.
\end{restatable}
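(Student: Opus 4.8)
The plan is to build $A''$ from $A'$ by adjoining, for each state $q$ of $A$, a fresh \emph{entry state} (which I also call $q$, so that $Q \subseteq Q''$) whose transitions collect those of the states in $Q_q$. Concretely, writing $A' = \angl{Q', F', \delta', \gamma'}$ and assuming without loss of generality that $Q$ and $Q'$ are disjoint, I would set $Q'' = Q' \cup Q$, keep $\delta''$ and $\gamma''$ equal to $\delta'$ and $\gamma'$ on $Q'$, and for each new state define
\[
  q \in F'' \iff Q_q \cap F' \neq \emptyset,
  \qquad
  \delta''(q, \ltr{a}) = \bigcup_{x \in Q_q} \delta'(x, \ltr{a}),
  \qquad
  \gamma''(q, \phi) = \bigcup_{x \in Q_q} \gamma'(x, \phi).
\]
The crucial structural feature is that the new states are pure \emph{sources}: no transition of $A''$ has a state of $Q$ as target or as fork target, and every transition out of a new state lands in $Q'$. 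Since each $Q_q$ is finite and each $x \in Q_q$ has finitely many $\phi$ with $\gamma'(x, \phi) \neq \emptyset$, each new state still has finitely many parallel transitions, so $A''$ is a genuine PA\@.

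First I would record that the construction leaves the old states untouched: as runs issuing from a state of $Q'$ (including the subsidiary runs of their fork targets, which again lie in $Q'$) only ever visit states of $Q'$, an easy induction using \cref{lemma:run-deconstruct} gives $q' \arun{U}_{A''} r \iff q' \arun{U}_{A'} r$ for all $q', r \in Q'$, whence $L_{A''}(q') = L_{A'}(q')$. The heart of the argument is then $L_{A''}(q) = L_A(q)$ for $q \in Q$. By \cref{lemma:run-deconstruct} any nontrivial run out of $q$ factors through a first unit step $q \arun{U_1}_{A''} p_1$, and by the displayed definitions such a step exists exactly when some $x \in Q_q$ admits the same unit step $x \arun{U_1}_{A'} p_1$ (for a parallel unit step this uses that the fork targets lie in $Q'$, so their accepting subsidiary runs are already runs of $A'$). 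Since $p_1 \in Q'$ and no run can return to the source $q$, the remainder of the run lives entirely in $Q'$; combined with the previous observation, this shows that the nontrivial accepted pomsets of $q$ are precisely the union over $x \in Q_q$ of the nontrivial accepted pomsets of $x$, while the trivial run contributes $1$ exactly when $q \in F''$, i.e.\ when some $x \in Q_q$ is accepting. Hence $L_{A''}(q) = \bigcup_{x \in Q_q} L_{A'}(x) = L_A(q)$, which is condition~(i) of implementation.

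It remains to transfer the side conditions. Fork-acyclicity is preserved because the support relation restricted to $Q'$ is unchanged (no $Q'$-state reaches a source, so no $\preceq_{A''}$-path between $Q'$-states can route through a new state), while each new state $q$ sits strictly above its fork targets: being a source, $q \not\preceq_{A''} r$ for every $r \in Q'$, so $r \prec_{A''} q$ and the fork-acyclicity condition holds at $q$ too. Finiteness is immediate, as we add only $|Q|$ states. For the three named properties, note once more that every fork target of $A''$ is a state $r \in Q'$ whose transitions and language agree with those in $A'$: $n$-forking transfers because $\gamma''(q,\phi) \neq \emptyset$ forces some $\gamma'(x,\phi) \neq \emptyset$, hence $|\phi| \geq n$; parsimony transfers because such an $r$ still satisfies $1 \notin L_{A'}(r) = L_{A''}(r)$; and flat-branching transfers because $\gamma''(r,\psi) \cap F'' = \gamma'(r,\psi) \cap F' = \emptyset$. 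I expect the main obstacle to be the middle paragraph—making the run-level bookkeeping precise so that the union of transitions really computes the union of languages—for which the source property of the new states is exactly what rules out any spurious interaction between the copies.
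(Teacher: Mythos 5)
Your construction is essentially the paper's own: the same adjunction of fresh source states whose transitions are the unions $\bigcup_{x \in Q_q} \delta'(x,\cdot)$ and $\bigcup_{x \in Q_q} \gamma'(x,\cdot)$, the same run-level observations that states of $Q'$ keep their languages and that any nontrivial run from a source immediately drops into $Q'$ never to return, and the same transfers of finiteness, fork-acyclicity, $n$-forking, parsimony and flat-branching. The only deviation is your accepting condition $Q_q \cap F' \neq \emptyset$ where the paper takes $F'' = F' \uplus \setcompr{q \in Q}{1 \in L_A(q)}$; this is harmless, because an $x \in Q_q$ that accepts $1$ only via a nontrivial run (e.g.\ through a nullary fork) still lifts that run through the source $q$, so your split into trivial versus nontrivial runs --- rather than empty versus nonempty pomsets --- correctly yields $L_{A''}(q) = \bigcup_{x \in Q_q} L_{A'}(x)$ all the same.
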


\subsection{Parsimony}%
\label{sec:parsimony}

Let $A = \angl{Q, F, \delta, \gamma}$ be a bounded and fork-acyclic PA;\@ we want to implement $A$ with a fork-acyclic and parsimonious PA $A_0$.
There are two key ideas to this translation:
\begin{itemize}
    \item
    We introduce a new state $\top$, which will be the only accepting state of the automaton; in fact, it will be the only state accepting the empty pomset.
    This state will not have any outgoing transitions, so its language is exactly $\set{1}$.
    We will modify the transition functions, such that if a transition in $A$ can lead to a state that accepts $1$, it can also non-deterministically lead to $\top$ in $A_0$.

  \item
    To ensure our construction preserves successful runs, we need to add $\gamma$ transitions to mitigate the previous modification.
    More precisely, if $p \in Q$ can fork into $\phi \sqcup \psi \in \M(Q)$ to reach $q \in Q$, i.e., $q \in \gamma(p, \phi \sqcup \psi)$, and all states in $\psi$ accept the empty pomset in $A$, then we make sure that, in $A_0$, $q$ can also fork into $\phi$ to reach $p$, simulating the acceptance of the empty pomset from states in $\psi$.
\end{itemize}
Doing so, we obtain an automaton weakly implementing $A$ --- indeed, if $1 \not\in L_A(q)$, then $L_A(q) = L_{A_0}(q)$, and otherwise $L_A(q) = L_{A_0}(q) \cup L_{A_0}(\top)$.
Since $\top$ cannot be a fork target, and any other state cannot accept the empty pomset, this new automaton is parsimonious.
Finiteness and fork-acyclicity are also maintained by this construction.

\begin{defi}[$A_0$]%
\label{definition:parsimonification}
The PA $A_0$ is given by the tuple $\angl{Q_0, F_0, \delta_0, \gamma_0}$ where $Q_0 = Q  \cup \set{\top}$ (with $\top \not\in Q$), and $F_0 = \set{\top}$.
Furthermore, $\delta_0$ is generated by the rules
\begin{mathpar}
\inferrule{%
    q' \in \delta(q, \ltr{a})
}{%
    q' \in \delta_0(q, \ltr{a})
}
\and
\inferrule{%
    q' \in \delta(q, \ltr{a}) \\
    1 \in L_A(q')
}{%
    \top \in \delta_0(q, \ltr{a})
}
\end{mathpar}
Also, $\gamma_0$ is generated by the following rules for all $q \in Q$ and $\phi \in \M(Q)$:
\begin{mathpar}
\inferrule{%
    q' \in \gamma(q, \phi \sqcup \psi) \\\\
    \forall r \in \psi.\ 1 \in L_A(r)
}{%
    q' \in \gamma_0(q, \phi)
}
\and
\inferrule{%
    q' \in \gamma(q, \phi \sqcup \psi) \\
    1 \in L_A(q') \\\\
    \phi \neq \mempty \\
    \forall r \in \psi.\ 1 \in L_A(r)
}{%
    \top \in \gamma_0(q, \phi)
}
\end{mathpar}
Lastly, $\delta_0(\top, \ltr{a}) = \emptyset$ for all $\ltr{a} \in \Sigma$ and $\gamma_0(\top, \phi) = \emptyset$ for all $\phi \in \M(Q_0)$.
\end{defi}

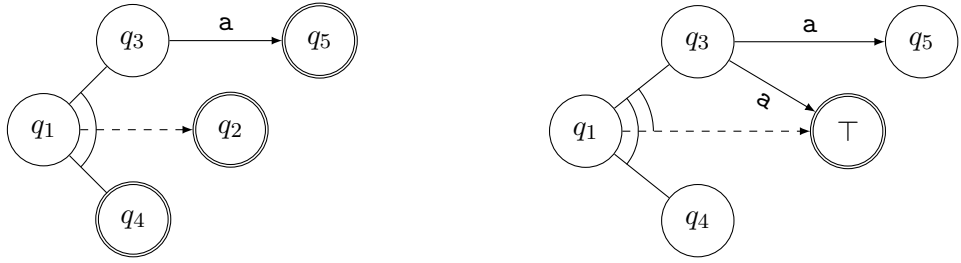
\begin{figure}
    \begin{subfigure}[b]{0.49\textwidth}
        \centering
        \begin{tikzpicture}
            \node[state] (q1) {$q_1$};
            \node[state,above right=7mm of q1] (q3) {$q_3$};
            \node[state,accepting,below right=7mm of q1] (q4) {$q_4$};
            \node[state,accepting,right=15mm of q3] (q5) {$q_5$};
            \node[state,accepting,right=15mm of q1] (q2) {$q_2$};

            \draw[-] (q1) edge (q3);
            \draw[-] (q1) edge (q4);
            \draw[dashed] (q1) edge[-latex] (q2);
            \draw (q3) edge[-latex] node[above] {$\ltr{a}$} (q5);
            \draw (q1) + (-45:7mm) arc (-45:45:7mm);
        \end{tikzpicture}
        \caption{A PA $A$ that is not parsimonious.}%
        \label{figure:parsimonification-before}
    \end{subfigure}
    \begin{subfigure}[b]{0.49\textwidth}
        \centering
        \begin{tikzpicture}
          \node[state] (q1) {$q_1$};

          \node[state,above right=5mm and 8mm of q1] (q3) {$q_3$};
          \node[state,below right=5mm and 8mm of q1] (q4) {$q_4$};
          \node[state,accepting,right=25mm of q1] (top) {$\top$};
          \node[state,right=2cm of q3] (q5) {$q_5$};

          \draw[-] (q1) edge (q3);
          \draw[-] (q1) edge (q4);
          \draw[dashed] (q1) edge[-latex] (top);
          \draw (q3) edge[-latex] node[below,sloped] {$\ltr{a}$} (top);
          \draw (q3) edge[-latex] node[above,sloped] {$\ltr{a}$} (q5);
          \draw (q1) + (-39:7mm) arc (-39:39:7mm);
          \draw (q1) + (0:9mm) arc (0:39:9mm);
        \end{tikzpicture}
        \caption{Part of the PA $A_0$ obtained from $A$.}%
        \label{figure:parsimonification-after}
    \end{subfigure}
    \caption{Example of construction to ensure parsimony.}%
    \label{figure:parsimonification}
\end{figure}

\begin{exa}
Recall the automaton $A$ from \cref{figure:run-confusion}, depicted in \cref{figure:parsimonification-before}.
Part of the resulting PA $A_0$ is drawn in \cref{figure:parsimonification-after}.
For instance, since $q_5 \in \delta(q_3, \ltr{a})$ in $A$, we have $q_5 \in \delta_0(q_3, \ltr{a})$ and $\top \in \delta_0(q_3, \ltr{a})$ in $A_0$ by the first and second rules generating $\delta_0$ respectively.

Furthermore, since $1\in L_A(q_4)$ and $q_2 \in \gamma(q_1, \mset{q_3, q_4})$ while $1 \in L_A(q_2)$, we have that $\top \in \gamma_0(q_1, \mset{q_3, q_4})$ as well as $\top \in \gamma_0(q_1, \mset{q_3})$ by the second rule generating $\gamma_0$.
Not drawn are the transitions witnessed by $q_2 \in \gamma_0(q_1, \mset{q_3, q_4})$ and $q_2 \in \gamma_0(q_1, \mset{q_3})$, both of which are consequences of the first rule generating $\gamma_0$, but do not contribute to the language of the automaton.
Indeed in the new automaton, no accepting state is present in the support of $q_2$, so visiting this state never leads to an accepting run.

Note that even though $A$ was $2$-forking, this is no longer the case in $A_0$, as a result of the fact that $\top \in \gamma_0(q_1, \mset{q_3})$.
We will remedy the appearance of unary forks later on.
\end{exa}

Our construction is correct, in the following sense.

\begin{restatable}{lem}{restateparsimonificationcorrectness}%
\label{lemma:parsimonification-correctness}
$A_0$ is bounded, parsimonious, and weakly implements $A$.
\end{restatable}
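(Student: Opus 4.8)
The plan is to dispatch the two structural requirements of weak implementation (finiteness and fork-acyclicity) essentially by inspection, and then to spend the real effort on the language identity, from which parsimony will fall out as a byproduct. Finiteness is immediate since $Q_0 = Q \cup \set{\top}$. For fork-acyclicity, I would observe that every fork of $A_0$ arises from a fork of $A$ by dropping part of its multiset: both generating rules for $\gamma_0$ have premise $q' \in \gamma(q, \phi \sqcup \psi)$ and produce a transition out of $\gamma_0(q, \phi)$, so each fork target of $A_0$ is a fork target of $A$. The only genuinely new support edges either lie inside $Q$ (in which case they already hold in $A$, since the $\delta_0$- and first $\gamma_0$-rule copy $\delta$- and $\gamma$-edges) or point into the sink $\top$, which has no outgoing transitions and hence supports nothing. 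Consequently the restriction of $\preceq_{A_0}$ to $Q \times Q$ is contained in $\preceq_A$, and since $\top$ is never a fork target, fork-acyclicity of $A$ transfers verbatim to $A_0$.

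The crux is the language identity. The first ingredient I would isolate is the auxiliary claim that no original state accepts the empty pomset in $A_0$, i.e.\ $1 \notin L_{A_0}(q)$ for every $q \in Q$. I would prove this by induction along $\prec_{A_0}$: deconstructing a putative run $q \arun{1}_{A_0} q' \in F_0 = \set{\top}$ into unit runs (\cref{lemma:run-deconstruct}) forces every factor to read $1$, and its final step must be a parallel unit run landing in $\top$; but the only rule producing $\top \in \gamma_0(\cdot, \phi)$ requires $\phi \neq \mempty$, so some fork target $r \prec_{A_0} q$ would have to satisfy $1 \in L_{A_0}(r)$, contradicting the induction hypothesis. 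Granting this, $L_{A_0}(\top) = \set{1}$ is immediate ($\top$ is accepting and has no outgoing transitions), and it remains to establish $L_{A_0}(q) = L_A(q) \setminus \set{1}$ for every $q \in Q$.

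I would prove the two inclusions by a simultaneous induction ordered lexicographically by depth $D_{A_0}(q)$ and then by run length, fork-acyclicity guaranteeing that every fork target is strictly shallower. For soundness ($L_{A_0}(q) \subseteq L_A(q)$), I translate an accepting $A_0$-run back into $A$ step by step: a $\delta_0$-edge inside $Q$ mirrors a $\delta$-edge; a $\delta_0$-edge to $\top$ comes from some $q' \in \delta(q, \ltr{a})$ with $1 \in L_A(q')$, so $\ltr{a} \in L_A(q)$; and a $\gamma_0$-fork into $\phi$ comes from a fork $\gamma(q, \phi \sqcup \psi)$ in $A$ in which every state of $\psi$ accepts $1$, so the $\psi$-threads can be run on $1$ in $A$ while the $\phi$-threads are handled by the induction hypothesis, reassembling a run in $A$ on the same label. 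For completeness ($L_A(q) \setminus \set{1} \subseteq L_{A_0}(q)$), I take an accepting run $q \arun{U}_A q_f \in F$ with $U \neq 1$, deconstruct it into unit runs, and let $m$ be the largest index whose factor is non-empty; then the tail reads $1$, so $1 \in L_A(q_m)$. The steps before $m$ are mirrored inside $Q$ (in each parallel step, pruning the threads that contribute $1$ via the first $\gamma_0$-rule and mirroring the surviving non-empty threads by induction), and step $m$ is redirected into $\top$: if sequential via the second $\delta_0$-rule, and if parallel via the second $\gamma_0$-rule, whose side conditions $\phi \neq \mempty$ and $1 \in L_A(q_m)$ are exactly what the choice of $m$ supplies.

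Finally, the weak implementation statement follows by taking $Q_q = \set{q}$ when $1 \notin L_A(q)$ and $Q_q = \set{q, \top}$ otherwise, using $L_{A_0}(q) = L_A(q) \setminus \set{1}$ together with $L_{A_0}(\top) = \set{1}$; and parsimony follows because every fork target of $A_0$ already lies in $Q$ (the forked multisets are in $\M(Q)$, and $\top$ has no incoming fork edge), for which $1 \notin L_{A_0}$ was established above. The step I expect to be the main obstacle is the completeness direction: getting the bookkeeping of the pruned empty-accepting threads right so that the parallel product of the surviving threads equals the original label, and checking that the lexicographic induction is genuinely well-founded where nested forks and sequential composition interact.
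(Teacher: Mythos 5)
Your proposal is correct and matches the paper's proof in all essentials: both directions of the language identity $L_{A_0}(q) = L_A(q) \setminus \set{1}$ are obtained by translating runs between $A$ and $A_0$ (redirecting the last non-empty step into $\top$ via the second rules generating $\delta_0$ and $\gamma_0$, and pruning the empty-accepting threads via the first $\gamma_0$-rule), after which weak implementation follows by taking $Q_q = \set{q}$ or $\set{q, \top}$, and parsimony from the observation that every fork target of $A_0$ lies in $Q$. The only differences are organizational: the paper packages your ``last non-empty factor'' bookkeeping into a strengthened statement proved by structural induction on run derivations (its forward fact asserts a run to $\top$ whenever $U \neq 1$ and $1 \in L_A(q)$, with the case split $W = 1$ versus $W \neq 1$ doing the work your choice of $m$ does), and it reads off $1 \notin L_{A_0}(q)$ from the backward simulation instead of your separate $\prec_{A_0}$-induction, while your explicit fork-acyclicity argument spells out a preservation claim the paper merely asserts.
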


\subsection{Removing nullary forks}%
\label{sec:remove-epsilon}

Let $A = \angl{Q, F, \delta, \gamma}$ be a bounded, parsimonious, and fork-acyclic automaton.
We want to implement $A$ with a $1$-forking PA $A_1$ while maintaining boundedness, parsimony and fork-acyclicity.
As mentioned, the \emph{nullary} forks that we want to eliminate --- i.e., those where $q \in \gamma(p, \mempty)$ --- essentially furnish silent transitions $p \epstrans q$, analogous to classic NFAs.
We proceed eliminate these by means of the classic method, i.e., by saturating the transition relations.

\begin{defi}[$A_1$]
The PA $A_1$ is defined to be $\angl{Q,F,\delta_1,\gamma_1}$, where $\delta_1$ and $\gamma_1$ are generated by the following inference rules for all $\ltr{a} \in \Sigma$ and $\phi \in \M(Q)$ with $\phi \neq \mempty$.
\begin{mathpar}
\inferrule{%
    p \epstrans q \\
    r \in \delta(q, \ltr{a}) \\
    r \epstrans s
}{%
    s \in \delta_1(p, \ltr{a})
}
\and
\inferrule{%
    p \epstrans q \\
    r \in \gamma(q, \phi) \\
    r \epstrans s
}{%
    s \in \gamma_1(p, \phi)
}
\end{mathpar}
\end{defi}

\begin{figure}
    \begin{subfigure}{0.49\textwidth}
        \centering
        \begin{tikzpicture}
          \node[state] (q1) {$q_1$};
          \node[state,above right=7mm of q1] (q3) {$q_3$};
          \node[state,below right=7mm of q1] (q4) {$q_4$};
          \node[state,right=35mm of q3] (q5) {$q_5$};
          \node[state,right=15mm of q1] (q2) {$q_2$};
          \node[state,accepting,right=10mm of q2] (q6) {$q_6$};

          \draw[-] (q1) edge (q3);
          \draw[-] (q1) edge (q4);
          \draw[dashed] (q1) edge[-latex] (q2);
          \draw (q1) + (-45:7mm) arc (-45:45:7mm);

          \draw[dashed] (q3) edge[-latex] (q5);
          \draw[dashed] (q2) edge[-latex] (q6);
          \draw[-] (q5) edge[-latex] node[above,sloped] {$\ltr{a}$} (q6);
          \draw[-] (q4) edge[-latex] node[below,sloped] {$\ltr{b}$} (q2);
        \end{tikzpicture}
        \caption{A PA $A$ with nullary forks.}%
        \label{figure:nullary-fork-removal-before}
    \end{subfigure}
    \begin{subfigure}{0.49\textwidth}
        \centering
        \begin{tikzpicture}
          \node[state] (q1) {$q_1$};
          \node[state,above right=7mm of q1] (q3) {$q_3$};
          \node[state,below right=7mm of q1] (q4) {$q_4$};
          \node[state,right=35mm of q3] (q5) {$q_5$};
          \node[state,right=15mm of q1] (q2) {$q_2$};
          \node[state,accepting,right=10mm of q2] (q6) {$q_6$};

          \draw[-] (q1) edge (q3);
          \draw[-] (q1) edge (q4);
          \draw[dashed] (q1) edge[-latex] (q2);
          \draw[dashed] (q1) edge[-latex, bend left=25] (q6);
          \draw (q1) + (-45:7mm) arc (-45:45:7mm);

          \draw[-] (q5) edge[-latex] node[above,sloped] {$\ltr{a}$} (q6);
          \draw[-] (q4) edge[-latex] node[below,sloped] {$\ltr{b}$} (q2);
          \draw[-] (q3) edge[-latex, bend left] node[above,sloped] {$\ltr{a}$} (q6);
          \draw[-] (q4) edge[-latex, bend right] node[below,sloped] {$\ltr{b}$} (q6);
        \end{tikzpicture}
        \caption{The PA $A_1$ obtained from $A$.}%
        \label{figure:nullary-fork-removal-after}
    \end{subfigure}
    \caption{Example of nullary fork removal.}%
    \label{figure:nullary-fork-removal}
\end{figure}
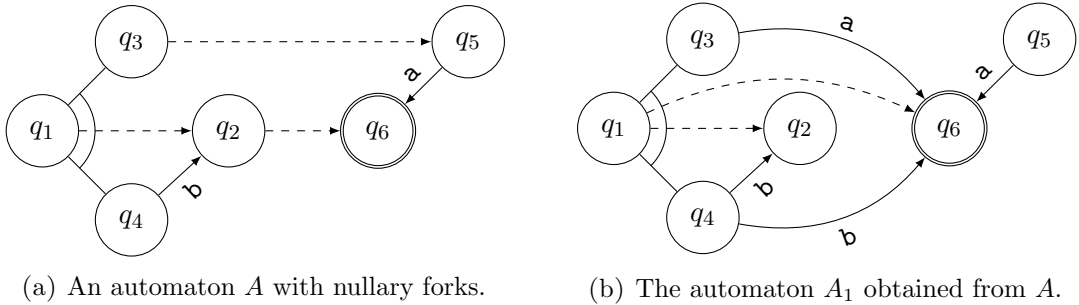

\begin{exa}
Suppose $A$ is the automaton in \cref{figure:nullary-fork-removal-before}.
This automaton has two nullary forks: $q_5 \in \gamma(q_3, \mempty)$ and $q_6 \in \gamma(q_2, \mempty)$.
We have drawn the automaton $A_1$ obtained from $A$ in \cref{figure:nullary-fork-removal-after}.
Here, $q_6 \in \delta_1(q_3, \ltr{a})$, because $q_3 \epstrans q_5$, $q_6 \in \delta(q_5, \ltr{a})$ and $q_6 \epstrans q_6$.
Similarly, $q_6 \in \gamma_1(q_1, \mset{q_3, q_4})$ because $q_1 \epstrans q_1$, $q_2 \in \gamma(q_1, \mset{q_3, q_4})$ and $q_2 \epstrans q_6$.
\end{exa}

We conclude by stating correctness of our translation, in the following sense.

\begin{restatable}{lem}{restateepsiloneliminationcorrectness}%
\label{lemma:epsilon-elimination-correctness}
$A_1$ is bounded, $1$-forking, parsimonious, and weakly implements $A$.
\end{restatable}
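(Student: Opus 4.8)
The plan is to prove all three assertions --- that $A_1$ is $1$-forking, that it is parsimonious, and that it weakly implements $A$ --- by first establishing a tight correspondence between runs of $A$ and runs of $A_1$. The $1$-forking property is immediate: the rules generating $\gamma_1$ fire only when $\phi \neq \mempty$, so any $\phi$ with $\gamma_1(p,\phi) \neq \emptyset$ has $|\phi| \geq 1$. Everything else rests on two run-transfer statements, each proved by induction on derivations.

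First I would establish \emph{soundness}: if $p \arun{U}_{A_1} s$ then $p \arun{U}_A s$. This proceeds by induction on the $A_1$-derivation; for a unit run arising from $\delta_1$ or $\gamma_1$, one unfolds the generating rule and reassembles the witnessing $A$-run by composing the absorbed $\epstrans$-moves (which are genuine $\arun{1}_A$-runs) around the underlying $\delta$- or $\gamma$-transition, invoking the induction hypothesis on the fork-target threads in the parallel case.

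The crux is the converse, \emph{completeness}, which must account for the fact that the saturated transitions of $A_1$ swallow $\epstrans$-moves on both sides. I would prove the strengthened statement: for every $U \neq 1$, if $p \arun{U}_A s$ then $p_0 \arun{U}_{A_1} s_0$ for all $p_0 \epstrans p$ and all $s \epstrans s_0$. The induction is on the derivation of $p \arun{U}_A s$. The unit cases feed the surrounding $\epstrans$-moves straight into the $\delta_1$/$\gamma_1$ rules. The composite case $U = U_1 \cdot U_2$ is where the both-sided strengthening pays off: when a factor, say $U_1$, is empty, the run $p \arun{1}_A t$ \emph{is} an $\epstrans$-move, which merges with the given prefix ($p_0 \epstrans p \epstrans t$) so that the nonempty factor is discharged by the induction hypothesis; the symmetric case merges a trailing $\epstrans$-move into the suffix. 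The essential use of the assumption that $A$ is parsimonious occurs in the parallel-unit case: each fork-target thread $r_i \arun{W_i}_A r_i' \in F$ has $r_i$ a fork target, so $1 \notin L_A(r_i)$ and hence $W_i \neq 1$, which is precisely what lets the induction hypothesis apply to the threads. Getting this $U \neq 1$ bookkeeping to go through --- and recognising that parsimony supplies it for the threads --- is the main obstacle.

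With both transfer statements in hand, the remaining claims follow. For \emph{weak implementation} I would take $Q_q = \setcompr{s}{q \epstrans s}$, which is finite since $A$ is, and show $L_A(q) = \bigcup_{s \in Q_q} L_{A_1}(s)$: the inclusion $\supseteq$ is soundness composed with $q \epstrans s$, while $\subseteq$ splits on whether the accepted pomset is empty --- a nonempty pomset transfers by completeness with trivial prefix and suffix (so $q \in Q_q$ suffices), and the empty pomset is witnessed by some accepting $q'$ with $q \epstrans q'$, i.e.\ $q' \in F \cap Q_q$, using the trivial $A_1$-run at $q'$. Finiteness is clear since $A_1$ shares its carrier with $A$; for fork-acyclicity I would verify $\preceq_{A_1} \subseteq \preceq_A$ by checking each generator of $\preceq_{A_1}$ against $\preceq_A$ (every $\epstrans$-step contributes a $\preceq_A$-step via nullary forks), and then note that a fork target $r \in \phi$ of $A_1$ is a fork target of some $q$ in $A$ with $p \epstrans q$, so fork-acyclicity of $A$ gives $q \not\preceq_A r$, whence $p \not\preceq_{A_1} r$ and $r \prec_{A_1} p$. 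Finally, \emph{parsimony} of $A_1$ holds because any fork target of $A_1$ is a fork target of $A$, hence does not accept $1$ in $A$; soundness then transfers $1 \notin L_A(q)$ to $1 \notin L_{A_1}(q)$, as any $A_1$-run labelled $1$ maps to an $\epstrans$-run in $A$ ending in $F$.
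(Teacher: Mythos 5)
Your proposal is correct and follows essentially the same route as the paper: the same two run-transfer facts (soundness of $A_1$-runs in $A$, and completeness for $U \neq 1$ with $\epstrans$-moves absorbed on both sides, with parsimony of $A$ supplying the non-emptiness of the fork-target threads), the same witness sets $Q_q = \setcompr{s}{q \epstrans s}$, and the same empty/non-empty case split for weak implementation. The only minor deviations are harmless: you derive parsimony of $A_1$ directly from soundness where the paper invokes \cref{fact:parsimony-vs-empty-pomset}, and you spell out the fork-acyclicity argument (via $\preceq_{A_1} \subseteq \preceq_A$) that the paper dismisses as clearly preserved.
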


\subsection{Removing unary forks}

We now show that, given a fork-acyclic, bounded, parsimonious and $1$-forking PA $A = \angl{Q, F, \delta, \gamma}$, we can implement it using a $2$-forking PA $A_2$ that retains the properties of $A$.
The main idea is to simulate unary forks by keeping a ``call stack'' in the state.
When $A$ follows a unary fork, e.g., $q' \in \gamma(q, \mset{r})$, $A_2$ will push $q'$ on the stack to ``remember'' where we should continue after completing the computation in $r$; once we reach an accepting state, the transitions of $q'$ will become available.
By fork-acyclicity we can bound the height of this stack by the depth of the automaton.

To keep track of this call stack, we need to know which unary forks can occur in sequence from any given state.
This is captured by the following.

\begin{defi}
The relation $\uparrow$, is the least subset of $Q \times Q^*$ satisfying the following rules:
\begin{mathpar}
\inferrule{~}{%
    q \mathrel{\uparrow} q
}
\and
\inferrule{%
    r \mathrel{\uparrow} w \\
    q' \in \gamma(q, \mset{r})
}{%
    q \mathrel{\uparrow} wq'
}
\end{mathpar}
\end{defi}
Intuitively, if $q \mathrel{\uparrow} q_1\cdots{}q_n$, then $q$ can perform a series of unary forks leading to state $q_1$.
Once the computation starting in $q_1$ reaches an accepting state, that state will be at the top of the stack; since it is accepting, we can then pop it off the stack to continue trying to resolve $q_2$, and so on.
The first rule covers the case where no fork takes place, while the second rule allows to extend an existing series of forks with one more.

\begin{figure}
    \begin{subfigure}[b]{0.49\textwidth}
        \centering
        \begin{tikzpicture}
            \node[state] (q1) {$q_1$};
            \node[state,accepting,above=of q1] (q2) {$q_2$};
            \node[state,right=of q1] (q3) {$q_3$};
            \node[state,above=of q3] (q4) {$q_4$};
            \node[state,accepting,right=of q4] (q5) {$q_5$};
            \node[state,right=of q3] (q6) {$q_6$};

            \draw[-] (q1) edge (q3);
            \draw[dashed] (q1) edge[-latex] (q2);
            \draw (q1) + (0:7mm) arc (0:90:7mm);

            \draw[-] (q3) edge (q6);
            \draw[dashed] (q3) edge[-latex] (q4);
            \draw (q3) + (0:7mm) arc (0:90:7mm);

            \draw[-] (q4) edge[-latex] node[above] {$\ltr{c}$} (q5);
            \draw[-] (q3) edge[-latex] node[above,sloped] {$\ltr{a}$} (q2);
            \draw[-] (q6) edge[-latex] node[right] {$\ltr{b}$} (q5);
        \end{tikzpicture}
        \caption{A PA $A$ with unary forks.}%
        \label{figure:unary-fork-removal-before}
    \end{subfigure}
    \begin{subfigure}[b]{0.49\textwidth}
        \centering
        \begin{tikzpicture}
            \begin{scope}[every node/.style={draw,rounded corners=5pt,inner sep=2mm}]
                \node (q1) {$q_1$};
                \node[accepting,above=of q1] (q2q2) {$q_2 q_2$};
                \node[right=of q1] (q5q4q2) {$q_5 q_4 q_2$};
                \node[accepting,above=of q5q4q2] (q5q2) {$q_5 q_2$};
            \end{scope}

            \draw[-] (q1) edge[-latex] node[right] {$\ltr{a}$} (q2q2);
            \draw[-] (q1) edge[-latex] node[above] {$\ltr{b}$} (q5q4q2);
            \draw[-] (q5q4q2) edge[-latex] node[right] {$\ltr{c}$} (q5q2);
        \end{tikzpicture}
        \vspace{3mm}
        \caption{Part of the PA $A_2$ obtained from $A$.}%
        \label{figure:unary-fork-removal-after}
    \end{subfigure}
    \caption{An example of unary fork removal.}%
    \label{figure:unary-fork-removal}
\end{figure}
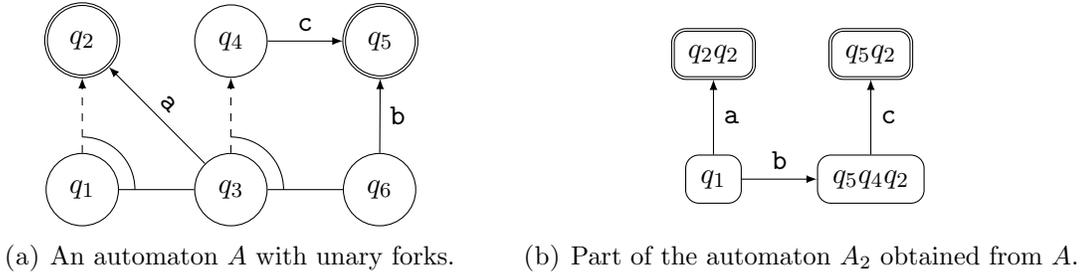

\begin{exa}%
\label{example:fork-stack}
Suppose $A$ is the PA in \cref{figure:unary-fork-removal-before}.
We first note that $q_6 \mathrel{\uparrow} q_6$ by the first rule; hence, since $q_4 \in \gamma(q_3, \mset{q_6})$, we have $q_3 \mathrel{\uparrow} q_6 q_4$ by applying the second rule.
Furthermore, since $q_2 \in \gamma(q_1, \mset{q_3})$, we find that $q_1 \mathrel{\uparrow} q_6 q_4 q_2$, again by the second rule.
Hence, $q_1$ can fork into $q_6$, and after completing a computation there and in $q_4$, we can carry on in $q_2$.
\end{exa}

We can now define our transformation, as follows.
\begin{defi}
  The PA $A_2$ is defined to be $\angl{Q_2, F_2, \delta_2, \gamma_2}$, where $Q_2 = Q^*$ and $F_2 = F^*$.
  Also, $\delta_2$ and $\gamma_2$ are generated by the following rules for $\ltr{a} \in \Sigma$ and $\phi \in \M(Q)$ with $|\phi| \geq 2$:
\begin{mathpar}
\inferrule{%
    q \mathrel{\uparrow} rw \\
    q' \in \delta(r, \ltr{a}) \\
    q'wx \in Q^*
}{%
    q'wx \in \delta_2(qx, \ltr{a})
}
\and
\inferrule{%
    w' \in \delta_2(w, \ltr{a}) \\
    q \in F \\
    qw \in Q^*
}{%
    w' \in \delta_2(qw, \ltr{a})
}
\\
\inferrule{%
    q \mathrel{\uparrow} rw \\
    q' \in \gamma(r, \phi) \\
    q'wx \in Q^*
}{%
    q'wx \in \gamma_2(qx, \phi)
}
\and
\inferrule{%
    w' \in \gamma_2(w, \phi) \\
    q \in F \\
    qw \in Q^*
}{%
    w' \in \gamma_2(qw, \phi)
}
\end{mathpar}
\end{defi}

\noindent
The first rule allows us to look at the state $q$ on top of the stack, and see where it can fork to; if the state $r$ where we end up allows a $\delta$-transition to $q'$, we push $q'$ onto the stack, along with the unresolved states $w$ gained from the unary fork.%
\footnote{%
    Note that since $q \mathrel{\uparrow} q$, we have that $q'x \in \delta_2(qx, \ltr{a})$ whenever $q' \in \delta(q, \ltr{a})$.
}
The third rule works analogously, for (non-unary) parallel transitions.

The second rule says that we can also look at states further down the stack, provided that they are preceded only by accepting states.
This allows us to pop states off the stack when their computation has reached an accepting state, while continuing in the next state.

\begin{exa}
Let $A$ be the PA in \cref{figure:unary-fork-removal-before}.
We have drawn the support of $q_1$ in the automaton $A_2$ obtained from $A$ in \cref{figure:unary-fork-removal-after}.
There, we have $q_2 q_2, q_5 q_2 \in F_2$ because $q_2, q_5 \in F$.
Also, since $q_1 \mathrel{\uparrow} q_3 q_2$ (see \cref{example:fork-stack}) and $q_2 \in \delta(q_3, \ltr{a})$, we have $q_2 q_2 \in \delta_2(q_1, \ltr{a})$ by the first rule above.
Furthermore, since $q_1 \mathrel{\uparrow} q_6 q_4 q_2$ (see \cref{example:fork-stack}) and $q_5 \in \delta(q_6, \ltr{b})$, we have $q_5 q_4 q_2 \in \delta_2(q_1, \ltr{b})$, again by the first rule above.
Lastly, $q_4 \mathrel{\uparrow} q_4$, so $q_5 q_2 \in \delta_2(q_4 q_2, \ltr{c})$ by the first rule.
Since $q_5 \in F$, we find that $q_5 q_2 \in \delta_2(q_5 q_4 q_2, \ltr{c})$ by the second rule.

The effect of the third and fourth rule is not shown, but is largely analogous.
\end{exa}

We again conclude with a statement of correctness of the transformation.

\begin{restatable}{lem}{restateunaryforkremovalcorrectness}%
\label{lemma:unary-fork-removal-correctness}
$A_2$ is bounded, parsimonious, $2$-forking, and implements $A$.
\end{restatable}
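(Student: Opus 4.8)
The plan is to verify the three conclusions in order of increasing difficulty, isolating the language-preservation part of ``implements'' as the crux. Finiteness of $A_2$ is immediate, since $Q_2$ consists of strings over the finite set $Q$ of length at most $D_A$. For $2$-forking, observe that the only rule introducing a fresh multiset into $\gamma_2$ is the third rule, which requires $|\phi| \geq 2$, while the fourth rule merely propagates an existing $\phi$; a straightforward induction on the derivation of $\gamma_2$-transitions then shows that $\gamma_2(\cdot, \phi) \neq \emptyset$ forces $|\phi| \geq 2$. Before turning to languages, I would record one consequence of parsimony of $A$ that simplifies the handling of the empty pomset: since $A$ is parsimonious and $1$-forking, no fork of $A$ can read $1$ (a unary or wider fork reading $1$ would need a fork target accepting $1$), so $1 \in L_A(q)$ holds if and only if $q \in F$.

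The heart of the argument is the language identity
\[
  L_{A_2}(q_1 \cdots q_n) = L_A(q_1) \cdots L_A(q_n),
\]
whose $n = 1$ instance is exactly the language clause of ``implements''. I would prove this by a run correspondence between $A$ and $A_2$, with the relation $\uparrow$ serving as the bridge that encodes maximal sequences of unary forks. For the inclusion $\supseteq$, given accepting runs $q_i \arun{U_i}_A q_i' \in F$, I would first establish a single-state lemma: $q \arun{U}_A q'$ implies $q x \arun{U}_{A_2} f\, q' x$ for some accepting prefix $f \in F^*$, by induction on the structure of the $A$-run --- a letter or wide fork of the deepest pending state is replayed by the first or third rule (using $q \uparrow r w$ to ``jump into'' that state), while reaching an accepting state leaves the corresponding return address on the stack. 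Concatenating these simulations over $i = 1, \dots, n$ using the pop rules (the second and fourth rules, which let an accepting prefix $f \in F^*$ be discarded at the next genuine transition) yields a run ending in a stack all of whose entries are accepting, hence in $F_2$. The inclusion $\subseteq$ is the reverse decoding: every $A_2$-run is parsed back into $A$-runs on the individual stack entries by induction on its length, casing on whether each transition arises from a push rule or a pop rule.

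A point requiring care throughout is the height bound. The truncation of $Q_2$ to length $\leq D_A$ must not lose behaviour, and this is guaranteed by fork-acyclicity: one shows that $q \uparrow w$ implies $|w| \leq D_A(q)$, since each unary fork strictly decreases the state in $\prec_A$ (so $D_A$ drops by at least one per push). This bounds the nesting of pending forks by $D_A$ and makes the side-conditions ``$\cdots \in Q_2$'' vacuous on the configurations the simulation actually needs. With the language identity in hand, parsimony of $A_2$ follows quickly: every fork target of $A_2$ is a length-one state $r \in Q$ occurring in some $\phi$ of a $\gamma_2$-transition, and by the form of the $\gamma_2$-rules such an $r$ is a fork target of $A$; parsimony of $A$ gives $1 \notin L_A(r) = L_{A_2}(r)$.

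Finally, fork-acyclicity of $A_2$ I would obtain from a monotonicity lemma: if $w' \preceq_{A_2} w$, then $\max_i D_A(w'_i) \leq \max_j D_A(w_j)$, checked against the three support rules (letters, wide forks, and pops never increase the maximal $A$-depth among stack entries, the last after noting that $\uparrow$ implies $\preceq_A$). Since a fork target $r$ of $A_2$ satisfies $D_A(r) < D_A$ of the top of the forking stack, no forking stack can lie in $\pi_{A_2}(r)$, so $r \prec_{A_2} (\text{that stack})$, as required. I expect the main obstacle to be the $\subseteq$ direction of the language identity together with getting the stack invariant precise: the pop rules fuse a return with the next real transition rather than acting alone, so the reverse decoding must correctly attribute each transition of $A_2$ to the right stack entry and reconstruct the unary forks of $A$ that the $\uparrow$-compression has hidden.
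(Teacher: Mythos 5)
Your overall architecture coincides with the paper's: $2$-forking by inspection of the rules generating $\gamma_2$; parsimony by tracing fork targets of $A_2$ back to fork targets of $A$ together with the fact that in a parsimonious $1$-forking PA one has $1 \in L_A(q)$ iff $q \in F$; fork-acyclicity via a componentwise monotonicity lemma relating $\preceq_{A_2}$ to $\preceq_A$ (the paper shows every component of $w'$ with $w' \preceq_{A_2} w$ sits below some component of $w$, which is essentially your max-depth version); and the language clause via a push/pop simulation mediated by $\mathrel{\uparrow}$, plus a decoding lemma splitting accepting $A_2$-runs into component $A$-runs. The decoding direction (your $\subseteq$) is sound as sketched and matches the paper's argument.

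The genuine gap is in the simulation direction, where the truncation of $Q_2$ to strings of length at most $D_A$ bites harder than you allow. As stated, both your central identity $L_{A_2}(q_1 \cdots q_n) = L_A(q_1) \cdots L_A(q_n)$ and your single-state lemma ($q \arun{U}_A q'$ implies $qx \arun{U}_{A_2} f\,q'x$ for some $f \in F^*$) are false: take $|x| = D_A - 1$ and a run $q \arun{U}_A q'$ that genuinely uses a unary fork; the push rule would need a stack of length at least $2 + |x| > D_A$, which is not in $Q_2$, so the transition simply does not exist and the pomset is lost. Concretely, with $q' \in \gamma(p, \mset{r}) \cap F$ and $f \in \delta(r, \ltr{a}) \cap F$ one gets $\ltr{a} \in L_A(p)$ but $\ltr{a} \notin L_{A_2}(px)$ once $|px| = D_A$, so the identity fails for full-height stacks even though its $n = 1$ instance (the one you actually need) is true. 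Your appeal to ``$q \mathrel{\uparrow} w$ implies $|w| \leq D_A(q)$'' bounds the width of a \emph{single} push, but not the cumulative stack height across unary forks nested at different levels of the recursion, and it says nothing about the suffix $x$, which consumes part of the budget; so the side-conditions ``$\cdots \in Q_2$'' are not vacuous for the induction as you have set it up. What is needed --- and what the paper supplies with its graded run relation $w \arun{U}_{A_2/g} w'$ --- is a quantitative strengthening: track the maximal stack height $g$ used along a run, prove padding lemmas with explicit budget hypotheses ($g + |x| \leq D_A$ for right-padding, $g < D_A$ for following a unary fork, each costing at most $+|x|$ resp.\ $+1$), and show that an $A$-run of unary-fork nesting depth $d$ simulates with $g \leq d$; fork-acyclicity gives $d \leq D_A$, closing the argument for single states. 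Note also that this inverts your expectation of where the difficulty lies: the decoding direction is routine, while the simulation direction carries all the bookkeeping.
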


\subsection{Flat-branching}%
\label{sec:remove-forks}

In this section, we enforce flat-branching.
We start from a PA $A$ that is assumed to be fork-acyclic, bounded, parsimonious, and $2$-forking, and we want to construct a flat-branching PA $A_3$ that weakly implements $A$, but keeps the properties of $A$.

The first idea of this construction is fairly obvious: to remove chains of forks while retaining the same language, we will saturate the parallel transitions by unfolding every possible chain of forks.
For instance, if $q\in\gamma(p,\mset{q_1,q_2})$ and $\gamma(q_1, \mset{r_1,r_2}) \cap F \neq \emptyset$, we want to have $q \in \gamma_3(p, \mset{r_1,r_2,q_2})$.
Fork-acyclicity is instrumental for this construction to terminate, as fork cycles could introduce infinitely many $\phi \in \M(Q)$ such that $\gamma(p, \phi) \neq \emptyset$.
We make this idea formal as follows:

\begin{defi}
We define $\blacktriangleleft$ as the smallest reflexive relation on $\M(Q)$ satisfying:
\begin{mathpar}
\inferrule{
    \gamma(p, \chi) \cap F \neq \emptyset \\
    \phi \blacktriangleleft \psi \sqcup \mset{p}
}{%
    \phi \blacktriangleleft \psi \sqcup \chi
}
\end{mathpar}
\end{defi}

Intuitively, $\phi \blacktriangleleft \psi$ when a fork into $\phi$ can be expanded to a fork into $\psi$, by forking from one or more of its states, provided the new fork reaches an accepting state.

\begin{exa}
Recall the left half of the automaton in \cref{figure:fork-associativity}, depicted in \cref{figure:flat-branching-construction-before}.
Since $\blacktriangleleft$ is reflexive, we have $\mset{q_3} \blacktriangleleft \mset{q_3}$.
Then, because $\gamma(q_3, \mset{q_5, q_6}) \cap F \neq \emptyset$, we have that $\mset{q_3} \blacktriangleleft \mset{q_5, q_6}$ applying the rule.
Similarly, we have that $\mset{q_1} \blacktriangleleft \mset{q_3, q_4}$.
Combining these two using the rule above then tells us that $\mset{q_1} \blacktriangleleft \mset{q_5, q_6, q_4}$.
Thus, any fork into $\mset{q_1}$ to reach some $q'$ can be expanded to a fork into $\mset{q_5, q_6, q_4}$ to reach $q'$.
\end{exa}

Next, we want to make sure that the original forks cannot be executed in succession anymore, by forcing all forks to expand maximally before continuing with some non-forking transition.
The main idea is to split each state $q$ into $\sst{q}$ and $\pst{q}$.
The state $\sst{q}$ will ensure that $\gamma_3(\sst{q}, \phi) \cap F = \emptyset$ for any multiset $\phi$, i.e., no forks are allowed.
We leverage this property to get flat-branching, by making sure that for any state $p \in Q_3$ of the new automaton, $\gamma_3(p, \phi) \neq \emptyset$ implies that every state in $\phi$ is of the $\sst{q}$ variety.
On the other hand, from the state $\pst{q}$, one cannot perform $\delta$-transitions, and furthermore for any multiset $\phi$ we have $\gamma_3(\pst{q}, \phi) \subseteq \set{\top}$, where $\top$ is the unique accepting state of $A_3$ (as in \cref{sec:parsimony}).

\begin{defi}[$A_3$]
The PA $A_3$ is the quadruple $\angl{Q_3,F_3,\delta_3,\gamma_3}$, where
\begin{mathpar}
Q_3 = \setcompr{\pst{q}}{q\in Q} \cup \setcompr{\sst{q}}{q \in Q} \cup \set \top
\and
F_3 = \set{\top}
\end{mathpar}
with $\top$ a fresh state, such that for all $\ltr{a} \in \Sigma$ and $\phi \in \M(Q_3)$ we have $\delta_3(\top, \ltr{a}) = \gamma_3(\top, \phi) = \emptyset$.
Furthermore, the action of $\delta_3$ and $\gamma_3$ on states different from $\top$ is generated by the following rules for all $\ltr{a} \in \Sigma$ and all $\psi, \phi \in \M(Q)$ with $\psi \blacktriangleleft \phi$:
\begin{mathpar}
\inferrule{%
    q \in \delta(p, \ltr{a})
}{%
    \sst{q}, \pst{q} \in \delta_3(\sst{p}, \ltr{a})
}
\and
\inferrule{%
    \delta(p, \ltr{a}) \cap F \neq \emptyset
}{%
    \top \in \delta_3(\sst{p}, \ltr{a})
}
\and
\inferrule{%
    q \in \gamma(p, \psi)
}{%
    \sst{q}, \pst{q} \in \gamma_3(\sst{p}, \sst{\phi})
}
\and
\inferrule{%
    \gamma(p, \psi) \cap F \neq \emptyset
}{%
    \top \in \gamma_3(\pst{p}, \sst{\phi})
}
\end{mathpar}
in which $\sst{\phi} = \mset{\sst{q_1}, \dots, \sst{q_n}}$ whenever $\phi = \mset{q_1, \dots, q_n}$.
\end{defi}

\begin{figure}
    \begin{subfigure}[b]{0.49\textwidth}
        \centering
        \begin{tikzpicture}
          \node[state] (q1) {$q_1$};
          \node[state,above right=7mm of q1] (q3) {$q_3$};
          \node[state,below right=7mm of q1] (q4) {$q_4$};
          \node[state,above=7mm of q3] (q5) {$q_5$};
          \node[state,right=7mm of q3] (q6) {$q_6$};
          \node[state,accepting,below right=7mm of q6] (q2) {$q_2$};

          \draw[-] (q1) edge (q3);
          \draw[-] (q1) edge (q4);
          \draw[-] (q3) edge (q5);
          \draw[-] (q3) edge (q6);

          \draw[dashed] (q1) edge[-latex] (q2);
          \draw[dashed,rounded corners=5pt,-latex] (q3) -- +(1,1) -| (q2);

          \draw (q1) + (-45:7mm) arc (-45:45:7mm);
          \draw (q3) + (0:7mm) arc (0:90:7mm);

          \draw (q6) edge[-latex] node[above right] {$\ltr{b}$} (q2);
          \draw[rounded corners=5pt,-latex] (q4) -| node[above,xshift=-4em] {$\ltr{c}$} (q2);
          \draw[rounded corners=5pt,-latex] (q5) -| +(3.8,-1) node[right,yshift=-1.5em] {$\ltr{a}$} |- (q2.east);
        \end{tikzpicture}
        \caption{A PA $A$ with nested forks.}%
        \label{figure:flat-branching-construction-before}
    \end{subfigure}
    \begin{subfigure}[b]{0.49\textwidth}
        \centering
        \begin{tikzpicture}
          \node[state] (pq1) {$\pst{q_1}$};
          \node[state,below right=7mm of pq1] (sq4) {$\sst{q_4}$};
          \node[state,above=7mm of pq1] (sq5) {$\sst{q_5}$};
          \node[state,accepting,right=20mm of pq1] (top) {$\top$};
          \node[state,above=7mm of top] (sq6) {$\sst{q_6}$};

          \node[state,left=15mm of sq5] (sq2) {$\sst{q_2}$};
          \node[state,left=15mm of pq1] (pq2) {$\pst{q_2}$};

          \draw[-] (pq1) edge (sq4);
          \draw[-] (pq1) edge (sq5);
          \draw[-] (pq1) edge (sq6);

          \draw[dashed] (pq1) edge[-latex] (top);

          \draw (pq1) + (-45:7mm) arc (-45:90:7mm);

          \draw (sq6) edge[-latex] node[right] {$\ltr{b}$} (top);
          \draw (sq5) edge[-latex] node[above] {$\ltr{a}$} (sq2);
          \draw (sq5) edge[-latex] node[above,sloped] {$\ltr{a}$} (pq2);
          \draw (sq5) edge[-latex] node[above,sloped,pos=0.25] {$\ltr{a}$} (top);
          \draw (sq4) edge[-latex] node[above,sloped] {$\ltr{c}$} (top);
        \end{tikzpicture}
        \vspace{1em}
        \caption{Part of the PA $A_3$ obtained from $A$.}%
        \label{figure:flat-branching-construction-after}
    \end{subfigure}
    \caption{Example of construction to ensure flat-branching.}%
    \label{figure:flat-branching-construction}
\end{figure}
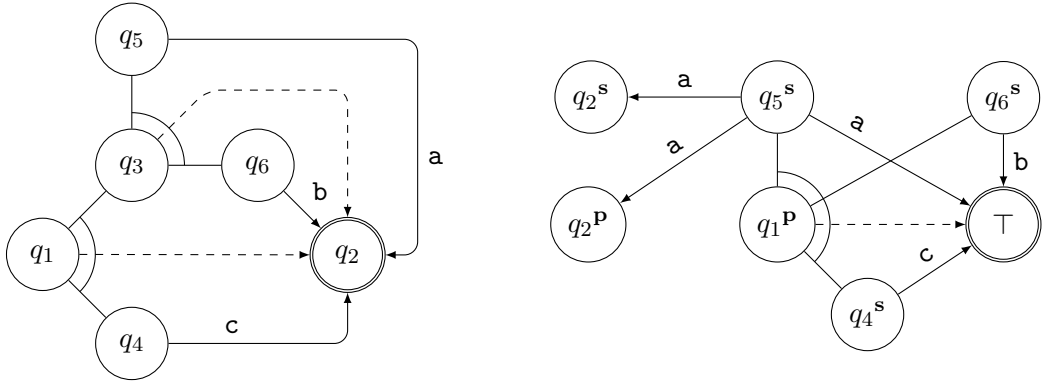

\begin{exa}
Let $A$ be the PA in \cref{figure:flat-branching-construction-before}.
Part of the support of $\pst{q_1}$ is drawn in \cref{figure:flat-branching-construction-after}.
There, we find that, since $q_2 \in \gamma(q_1, \mset{q_3, q_4}) \cap F$ and $\mset{q_3, q_4} \blacktriangleleft \mset{q_4, q_5, q_6}$ (refer to the previous example), also $\top \in \gamma_3(\pst{q_1}, \mset{\sst{q_4}, \sst{q_5}, \sst{q_6}})$ by the last rule above.
Furthermore, since $q_2 \in \delta(q_5, \ltr{a}) \cap F$, we have that $\top \in \delta_3(\sst{q_5}, \ltr{a})$ by the second rule above, but also $\sst{q_2}, \pst{q_2} \in \delta_3(\sst{q_5}, \ltr{a})$, by the first rule above.
Not drawn are the transitions $\sst{q_2}, \pst{q_2} \in \gamma_3(\sst{q_1}, \mset{\sst{q_3}, \sst{q_4}})$ which would result from applying the third rule.
These transitions do not contribute to the language, since they lead to states from which no accepting run is possible.
\end{exa}

We conclude this transformation by stating the desired correctness.

\begin{restatable}{lem}{restateflatbranchingcorrectness}%
\label{lem:flat-branching}
$A_3$ is bounded, $2$-forking, parsimonious, flat-branching (i.e., well-structured) and weakly implements $A$.
\end{restatable}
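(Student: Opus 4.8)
The plan is to dispatch the three structural properties directly from the inference rules, and then to prove weak implementation through a soundness/completeness argument mediated by the relation $\blacktriangleleft$. For the structural properties: $A_3$ is $2$-forking because every $\gamma_3$-transition (third or fourth rule) forks into a multiset $\sst{\phi}$ with $\psi \blacktriangleleft \phi$ for some $\psi$ with $\gamma(p,\psi) \neq \emptyset$; since $A$ is $2$-forking, $|\psi| \geq 2$, and since each generating step of $\blacktriangleleft$ replaces one state $p$ by a multiset $\chi$ with $\gamma(p,\chi) \neq \emptyset$ (hence $|\chi| \geq 2$), the size never decreases, so $|\phi| \geq |\psi| \geq 2$. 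Flat-branching is immediate: $\top$ is the only accepting state, and the only rule producing a $\gamma_3$-transition into $\top$ is the fourth, whose source is a $\pst{p}$-state, whereas every fork target is an $\sst{\cdot}$-state. Parsimony follows because the fork targets are exactly the $\sst{r}$, and $1 \notin L_{A_3}(\sst{r})$: as $A_3$ has no nullary forks, every non-trivial unit run carries a non-empty label, so any run labelled by the empty pomset is trivial and cannot reach $\top \neq \sst{r}$. Finiteness is clear ($|Q_3| = 2|Q|+1$), and fork-acyclicity transfers along the projection $\sst{q},\pst{q} \mapsto q$, since the fork targets $\sst{s}$ of any $\sst{r}$ or $\pst{r}$ satisfy $s \prec_A r$ — because $\psi \blacktriangleleft \phi$ only introduces iterated fork targets of the states of $\psi$, all strictly below under $\prec_A$.

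For weak implementation I would take $Q_q = \set{\sst{q},\pst{q}}$, augmented with $\top$ exactly when $1 \in L_A(q)$ (using $L_{A_3}(\top) = \set{1}$), and aim for the identity $L_{A_3}(\sst{q}) \cup L_{A_3}(\pst{q}) = L_A(q) \setminus \set{1}$. The bridge is a \emph{contraction lemma} for $\blacktriangleleft$: writing $P_A(\chi) = \set{U_1 \parallel \cdots \parallel U_m}{\chi = \mset{r_1,\dots,r_m},\ r_i \arun{U_i}_A r_i' \in F}$ for the fork-language reaching $F$, I claim $\psi \blacktriangleleft \phi \implies P_A(\phi) \subseteq P_A(\psi)$. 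This goes by induction on the derivation of $\blacktriangleleft$: in the step replacing $p$ by $\chi$, a pomset of $P_A(\phi)$ has a sub-part $\parallel_{r \in \chi} U_r$ that is recognised as a single fork-to-accept run of $p$ via $\gamma(p,\chi) \cap F \neq \emptyset$, contracting it into an element of $P_A(\psi)$.

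Soundness, $L_{A_3}(\sst{q}) \cup L_{A_3}(\pst{q}) \subseteq L_A(q)$, I would obtain by a mutual induction on $A_3$-runs showing that $\sst{p} \arun{U}_{A_3} \sst{r}$ (or $\pst{r}$) implies $p \arun{U}_A r$, and that any run from $\sst{p}$ or $\pst{p}$ to $\top$ witnesses $U \in L_A(p)$. The trivial and sequential cases are routine translations of the first two $\delta_3$-rules. The only substantive cases are the $\gamma_3$-transitions: there the $A_3$-run forks into $\sst{\phi}$ with $\psi \blacktriangleleft \phi$ and $q' \in \gamma(p,\psi)$ (resp.\ $\gamma(p,\psi) \cap F \neq \emptyset$); the induction hypothesis turns the accepting sub-runs of the $\sst{r}$ into accepting $A$-runs of the $r \in \phi$, so the read pomset lies in $P_A(\phi)$, and the contraction lemma rewrites it as an element of $P_A(\psi)$, which the genuine fork $q' \in \gamma(p,\psi)$ reads in $A$.

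Completeness, $L_A(q) \setminus \set{1} \subseteq L_{A_3}(\sst{q}) \cup L_{A_3}(\pst{q})$, is where the difficulty concentrates. I would induct on $A$-runs decomposed by \cref{lemma:run-deconstruct} into unit runs. A leading sequential unit run is simulated step-for-step, keeping the computation inside the $\sst{\cdot}$ states until a final letter into $F$ (the $\top$-rule) or a transition into a $\pst{\cdot}$ state. The hard case is a leading parallel unit run $q_1 \in \gamma(q_0,\psi)$ whose threads each reach $F$: to respect flat-branching I must replace every thread whose accepting sub-run \emph{is itself} a single fork-to-accept step by the targets of that fork, expanding $\psi$ to some $\phi$ with $\psi \blacktriangleleft \phi$ in which no thread begins this way, so that each $\sst{r}$ ($r \in \phi$) can genuinely simulate its thread. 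The main obstacle is showing this flattening terminates and is exhaustive: termination rests on fork-acyclicity, as each replacement pushes the affected states strictly downward in the well-founded order $\prec_A$ (so no minimal state can be a sole fork-to-accept, by $2$-forking); exhaustiveness requires careful simultaneous bookkeeping of the $\sst{q}$/$\pst{q}$ split, since an immediate length-one fork-to-accept from $q_0$ can only be realised from $\pst{q_0}$ (fourth rule), whereas any longer run is realised from $\sst{q_0}$ (third rule producing both $\sst{q_1}$ and $\pst{q_1}$). Combining soundness, completeness, and the empty-pomset case then yields that $A_3$ weakly implements $A$.
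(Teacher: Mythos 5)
Your proposal is correct, and its overall skeleton --- structural properties read directly off the generating rules, soundness by translating $A_3$-runs back into $A$ via $\blacktriangleleft$, completeness by forward simulation with the $\sst{q}$/$\pst{q}$/$\top$ bookkeeping --- matches the paper's, but your completeness direction takes a genuinely different route. The paper proves the forward direction by induction on the derivation of the $A$-run: for each thread $q_i \arun{U_i}_A q_i' \in F$ of a parallel unit run, the induction hypothesis already yields either $\sst{q_i} \arun{U_i}_{A_3} \top$ or $\pst{q_i} \arun{U_i}_{A_3} \top$, and in the latter case that run is by construction a single, already maximally flattened fork, so it arrives packaged with a certificate $\mset{q_i} \blacktriangleleft \mset{q_{i,1}, \dots, q_{i,n_i}}$; these certificates are then merged using transitivity of $\blacktriangleleft$. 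The flattening is thus performed implicitly by the recursion on the run tree, and no separate termination argument is needed. You instead flatten explicitly at the level of $A$-runs, replacing every thread that is a sole fork-to-accept by that fork's targets, and justify termination by well-foundedness of $\prec_A$ under the multiset extension --- a correct argument that makes the role of fork-acyclicity more visible, at the cost of the extra termination-and-exhaustiveness analysis which the paper's run-tree induction gets for free. Your standalone contraction lemma, $\psi \blacktriangleleft \phi \Rightarrow P_A(\phi) \subseteq P_A(\psi)$, is exactly the nested induction the paper inlines inside its backward fact, merely extracted as a reusable statement; and your explicit check that fork-acyclicity transfers along the projection $\sst{q}, \pst{q} \mapsto q$ covers a point the paper leaves implicit. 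One small repair: your parsimony argument (``every non-trivial unit run carries a non-empty label'') is circular as stated, since a parallel unit run carries label $1$ exactly when all of its threads accept $1$, which is what you are trying to exclude; this is ruled out by the mutual induction of \cref{fact:parsimony-vs-empty-pomset}, which applies because $A_3$ is ($2$-, hence $1$-)forking with non-accepting fork targets --- cite that fact rather than asserting its consequence.
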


In total, we have proved the following theorem.
\begin{thm}
Let $A$ be a finite and fork-acyclic PA\@.
We can construct a finite and fork-acyclic PA $A'$ that is well-structured and implements $A$.
\end{thm}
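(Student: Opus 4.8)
The plan is to read the theorem off as a corollary of the four construction lemmas of this section, glued together with \cref{lemma:well-structured-iff-triple,lem:weak-implem-iff-implem}. Concretely, I would apply the four transformations in the order they are presented, feeding the output of each step as the input of the next, and perform a single upgrade from weak implementation to implementation only at the very end.

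First I would record the two bookkeeping facts that make the chaining go through. Ordinary implementation is a special case of weak implementation (take the singleton witness $Q_q = \set{q}$ for every state $q$), and weak implementation is transitive: if $A''$ weakly implements $A'$ with witnesses $Q'_x$ and $A'$ weakly implements $A$ with witnesses $Q_q$, then $L_A(q) = \bigcup_{x \in Q_q} L_{A'}(x) = \bigcup_{x \in Q_q} \bigcup_{y \in Q'_x} L_{A''}(y)$, so the finite set $\bigcup_{x \in Q_q} Q'_x$ witnesses that $A''$ weakly implements $A$; finiteness and fork-acyclicity propagate because each link preserves them.

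Next I would run the pipeline. Starting from the finite, fork-acyclic $A$, \cref{lemma:parsimonification-correctness} gives a parsimonious $A_0$ weakly implementing $A$; as parsimony is the only added hypothesis, $A_0$ is a legal input to the next step, and \cref{lemma:epsilon-elimination-correctness} then yields a $1$-forking, parsimonious $A_1$ weakly implementing $A_0$. Feeding $A_1$ into \cref{lemma:unary-fork-removal-correctness} produces a parsimonious, $2$-forking $A_2$ implementing $A_1$; here it matters that $2$-forking subsumes $1$-forking and that parsimony is retained, so that $A_2$ meets the hypotheses of the final transformation. \cref{lem:flat-branching} then delivers a $2$-forking, parsimonious, flat-branching $A_3$ weakly implementing $A_2$. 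Finiteness and fork-acyclicity survive each link, so by transitivity $A_3$ weakly implements the original $A$.

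Finally, $A_3$ is finite, fork-acyclic, $2$-forking, parsimonious and flat-branching, hence well-structured by \cref{lemma:well-structured-iff-triple}, and it weakly implements $A$. Applying \cref{lem:weak-implem-iff-implem} then yields an automaton $A'$ that genuinely implements $A$ and remains $2$-forking, parsimonious and flat-branching, hence well-structured again by \cref{lemma:well-structured-iff-triple}; since implementation preserves finiteness and fork-acyclicity, $A'$ is also finite and fork-acyclic, which is exactly what the theorem asks for. Since this statement is essentially a corollary, I do not expect a genuine mathematical obstacle; the only step with any content is the transitivity observation together with checking that the structural hypotheses line up across the four constructions — in particular that $2$-forking implies $1$-forking and that parsimony survives every stage — while the rest is a direct quotation of the section's lemmas.
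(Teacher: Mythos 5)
Your proposal is correct and follows essentially the same route as the paper: the theorem is precisely the composite of the four constructions of this section (\cref{lemma:parsimonification-correctness,lemma:epsilon-elimination-correctness,lemma:unary-fork-removal-correctness,lem:flat-branching}), glued together by \cref{lemma:well-structured-iff-triple} and \cref{lem:weak-implem-iff-implem}. The only (harmless) bookkeeping difference is that the paper upgrades each weakly implementing stage to a genuine implementation via \cref{lem:weak-implem-iff-implem} as it goes, whereas you chain the weak implementations by transitivity and apply that lemma once at the end --- which works because each construction's hypotheses (finiteness, fork-acyclicity, parsimony, $1$- or $2$-forking) are structural properties of the input automaton itself, all of which your intermediate automata possess.
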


The complexity of the construction above is hard to pin down, because the number of states supporting any single state in the well-nested automaton is obscured by the infinite nature of the intermediate automata.
Nevertheless, we anticipate that the removal of unary forks can be expected to provide the worst blowup in automaton size (possibly exponential in $D_A$), as the number of states grows to accommodate all possible ``call stacks''.
The removal of nullary forks and the flat-branching construction both contribute a relatively small number of new states; the same is true for parsimonification, although this step may add a significant number of parallel transitions.

\section{From expressions to automata}%
\label{section:expressions-to-automata}

Recall that sp-languages can be defined denotationally, in terms of a sr-expression $e \in \terms$.
In this section, we show how to obtain a fork-acyclic and finite PA that accepts the semantics of $e$, starting at some state.
Concretely, we define a bounded pomset automaton where \emph{every} sr-expression is a state, and the language of this state is intended to be the semantics of said sr-expression.
The transition functions are set up such that, starting from $e \in \terms$, reading a particular pomset brings us to $e' \in \terms$, which describes the sr-language that remains to be read.
This generalizes Antimirov's (partial) derivatives of rational expressions~\cite{antimirov-1996}; we therefore refer to the transition functions on expressions as \emph{derivatives}.

Before we define the derivatives, it is convenient to introduce two shorthands.
The first of these allows us to include a set of terms based on whether or not another term is in $\sacc$, the set of sr-expressions that accept the empty pomset (c.f. \cref{definition:accepting-terms}).
The second shorthand allows us to concatenate a set of terms with another term on the right.

\begin{defi}[Derivatives]
We define $\sderiv: \terms \times \Sigma \to 2^\terms$ and $\pderiv: \terms \times \M(\terms) \to 2^\terms$ inductively.
To this end, the following shorthands are useful, for all $e \in \terms$ and $T \subseteq \terms$:
\begin{mathpar}
e \star T =
    \begin{cases}
    T & e \in \sacc \\
    \emptyset & \text{otherwise}
    \end{cases}
\and
T \fatsemi e = \setcompr{f \cdot e}{f \in T}
\end{mathpar}
This then allows us to define $\sderiv$ and $\pderiv$ inductively, as follows:
\begin{align*}
\sderiv(0, \ltr{a}) &= \emptyset
    & \pderiv(0, \phi) &= \emptyset \\
    \sderiv(1, \ltr{a}) &= \emptyset
    & \pderiv(1, \phi) &= \emptyset \\
    \sderiv(\ltr{b}, \ltr{a}) &= \setcompr{1}{\ltr{a} = \ltr{b}}
    & \pderiv(\ltr{b}, \phi) &= \emptyset \\
    \sderiv(e + f, \ltr{a}) &= \sderiv(e, \ltr{a}) \cup \sderiv(f, \ltr{a})
    & \pderiv(e + f, \phi) &= \pderiv(e, \phi) \cup \pderiv(f, \phi) \\
    \sderiv(e \cdot f, \ltr{a}) &= \sderiv(e, \ltr{a}) \fatsemi f \; \cup \; e \star \sderiv(f, \ltr{a})
    & \pderiv(e \cdot f, \phi) &= \pderiv(e, \phi) \fatsemi f \; \cup \; e \star \pderiv(f, \phi) \\
    \sderiv(e \parallel f, \ltr{a}) &= \emptyset
    & \pderiv(e \parallel f, \phi) &= \setcompr{1}{\phi = \mset{e, f}} \\
    \sderiv(e^*, \ltr{a}) &= \sderiv(e, \ltr{a}) \fatsemi e^*
    & \pderiv(e^*, \phi) &= \pderiv(e, \phi) \fatsemi e^*
  \end{align*}
\end{defi}

We can now define our pomset automaton that operates on  terms.

\begin{defi}
  The \emph{syntactic PA}, denoted by $A_\Sigma$, is the PA $\angl{\terms, \sacc, \sderiv, \pderiv}$.
\end{defi}

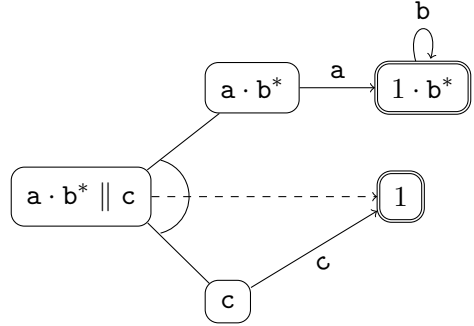
\begin{figure}
    \begin{tikzpicture}
        \begin{scope}[every node/.style={draw,rounded corners=5pt,inner sep=2mm}]
            \node (e2) {$\ltr{a} \cdot \ltr{b}^* \parallel \ltr{c}$};
            \node[above right=1cm of e2] (e3) {$\ltr{a} \cdot \ltr{b}^*$};
            \node[accepting,right=1cm of e3] (e4) {$1 \cdot \ltr{b}^*$};
            \node[below right=1cm of e2] (e5) {$\ltr{c}$};
            \node[accepting,right=3cm of e2] (e6) {$1$};
        \end{scope}

        \draw[-,shorten <=-2pt] (e2.north east) edge (e3);
        \draw[-,shorten <=-2pt,shorten >=-2pt] (e2.south east) edge (e5);
        \draw (e2.east) + (-76:5mm) arc (-76:76:5mm);
        \draw[dashed,->] (e2) edge (e6);
        \draw[->] (e3) edge node[above] {$\ltr{a}$} (e4);
        \draw[->] (e5) edge node[below,sloped] {$\ltr{c}$} (e6);
        \draw[->] (e4) edge[loop above] node[above] {$\ltr{b}$} (e4);
    \end{tikzpicture}
    \caption{Part of the syntactic pomset automaton.}\label{figure:example-syntactic-pa}
\end{figure}

We simplify subscripts, writing $\rightarrow_\Sigma$ instead of $\rightarrow_{A_\Sigma}$, and so forth.

\begin{exa}
We have drawn part of the syntactic PA, specifically the support of $\ltr{a} \cdot \ltr{b}^* \parallel \ltr{c}$, in \cref{figure:example-syntactic-pa}.
There, we see that $1 \cdot \ltr{b}^*$ is an accepting state, because $1, \ltr{b}^* \in \sacc$.
The sequential transitions are generated by $\sderiv$; for instance, $1 \cdot \ltr{b}^* \in \sderiv(\ltr{a} \cdot \ltr{b}^*, \ltr{a})$, because $1 \in \sderiv(\ltr{a}, \ltr{a})$ and $\sderiv(\ltr{a}, \ltr{a}) \fatsemi \ltr{b}^* \subseteq \sderiv(\ltr{a} \cdot \ltr{b}^*, \ltr{a})$; also, $1 \cdot \ltr{b}^* \in \sderiv(1 \cdot \ltr{b}^*, \ltr{b})$, because $1 \in \sderiv(\ltr{b}, \ltr{b})$, and $\sderiv(\ltr{b}, \ltr{b}) \fatsemi \ltr{b}^* \subseteq \sderiv(1 \cdot \ltr{b}^*, \ltr{b})$.
Lastly, $1 \in \pderiv(\ltr{a} \cdot \ltr{b}^* \parallel \ltr{c}, \mset{\ltr{a} \cdot \ltr{b}^*, \ltr{c}})$ by definition of $\pderiv$.
\end{exa}

To show that the language of a state $e$ in the syntactic PA is exactly $\sem{e}$, we first need to show a number of intermediary lemmas that analyze runs that can occur.

\subsection{Deconstruction lemmas}
The first set of lemmas are called \emph{deconstruction lemmas}: they tell us how, given a run originating in $e$, we can obtain one or more runs of terms originating in subterms of $e$.
First, we show how to do this for sums.

\begin{lem}%
  \label{lemma:run-deconstruct-plus}
  Let $e_1, e_2 \in \terms$, $f \in \sacc$ and $U \in \SP(\Sigma)$, such that $e_1 + e_2 \sarun{U} f$.
  There exists an $f' \in \sacc$ with $e_1 \sarun{U} f'$ or $e_2 \sarun{U} f'$.
\end{lem}
\begin{proof}
  Let $\ell$ be the length of $e_1 + e_2 \sarun{U} f$.
  There are two cases to consider.
  \begin{itemize}
  \item
    If $\ell = 0$, then $e_1 + e_2 \sarun{U} f$ is trivial.
    In that case, $f = e_1 + e_2$, and so $e_i \in \sacc$ for some $i \in \{1,2\}$; if we choose $f' = e_i$, we find $e_i \sarun{U} f'$, and the claim is satisfied.

  \item
    Otherwise, if $\ell > 0$, then $U = U_0 \cdot U'$ and there exists a $g \in \terms$ such that $e_1 + e_2 \sarun{U_0} g$ is a unit run, and $g \sarun{U'} f$ is of length $\ell - 1$.
    If $e_1 + e_2 \sarun{U_0} g$ is a sequential unit run, then $U_0 = \ltr{a}$ for some $\ltr{a} \in \Sigma$, and $g \in \sderiv(e_1 + e_2, \ltr{a})$.
    Without loss of generality, let $g \in \sderiv(e_1, \ltr{a})$; in that case, $e_1 \sarun{U_0} g$.
    If we then choose $f' = f$, we find that $e_1 \sarun{U} f'$.
    The case where $e_1 + e_2 \sarun{U_0} g$ is a parallel unit run is similar.
    \qedhere
  \end{itemize}
\end{proof}

\noindent
Next, we show how a run that starts in a sequential composition gives rise to runs that originate in each of the components.
In this case, it is useful to also relate the length of the original run to the lengths of the deconstructed runs, so that we can use this result in the analogous deconstruction result for the Kleene star (\cref{lemma:run-deconstruct-star}).

\begin{restatable}{lem}{restaterundeconstructsequential}%
\label{lemma:run-deconstruct-sequential}
Let $e_1, e_2 \in \terms$, $f \in \sacc$ and $U \in \SP(\Sigma)$, with $e_1 \cdot e_2 \sarun{U} f$ (of length $\ell$).
Then $U = U_1 \cdot U_2$ and there exist $f_1, f_2 \in \sacc$ with $e_i \sarun{U_i} f_i$ for $i \in \set{1, 2}$ (of length at most $\ell$).
\end{restatable}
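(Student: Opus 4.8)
The plan is to induct on the length $\ell$ of the run $e_1 \cdot e_2 \sarun{U} f$, universally quantifying over $e_1, e_2, f$ and $U$, and to drive the case analysis by the two summands appearing in the definition of $\sderiv$ and $\pderiv$ on a sequential composition.

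In the base case $\ell = 0$ the run is trivial, so $U = 1$ and $f = e_1 \cdot e_2 \in \sacc$. Since the only rule generating $\sacc$ whose conclusion is a sequential composition requires both factors to be in $\sacc$, we obtain $e_1, e_2 \in \sacc$; the trivial runs $e_1 \sarun{1} e_1$ and $e_2 \sarun{1} e_2$, together with the factorisation $U = 1 \cdot 1$, then witness the claim.

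For the inductive step I would first invoke \cref{lemma:run-deconstruct} to split off the first unit run, writing $U = V \cdot U'$ with a unit run $e_1 \cdot e_2 \sarun{V} g'$ and a run $g' \sarun{U'} f$ of length at most $\ell - 1$. Regardless of whether the first run is a sequential unit run (so $g' \in \sderiv(e_1 \cdot e_2, \ltr{a})$ with $V = \ltr{a}$) or a parallel unit run (so $g' \in \pderiv(e_1 \cdot e_2, \phi)$), the governing derivative has the same shape, namely $D(e_1) \fatsemi e_2 \cup e_1 \star D(e_2)$ with $D$ the corresponding derivative; this lets me treat both kinds of unit run uniformly and split into two cases according to which summand contains $g'$. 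In the left-progress case, $g' = g \cdot e_2$ with $g$ in the matching derivative of $e_1$; reusing the same witnesses shows that the first unit run restricts to a unit run $e_1 \sarun{V} g$. Applying the induction hypothesis to $g \cdot e_2 \sarun{U'} f$ gives $U' = U_1' \cdot U_2$ together with runs $g \sarun{U_1'} f_1$ and $e_2 \sarun{U_2} f_2$ (with $f_1, f_2 \in \sacc$) of length at most $\ell - 1$; prepending $e_1 \sarun{V} g$ yields $e_1 \sarun{V \cdot U_1'} f_1$ of length at most $\ell$, so $U_1 = V \cdot U_1'$ finishes this case. In the switch-to-the-right case the summand $e_1 \star D(e_2)$ forces $e_1 \in \sacc$ and places $g'$ in the derivative of $e_2$, so the first unit run is already a unit run $e_2 \sarun{V} g'$; hence $e_2 \sarun{U} f$ is a run of length at most $\ell$, and together with the trivial run $e_1 \sarun{1} e_1$ and the factorisation $U = 1 \cdot U$ this settles the claim.

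The hard part is bookkeeping rather than conceptual: I must verify that the sequential and parallel first unit runs really are governed by the identical structural clause of the derivative, so that the two cases collapse, and I must track the length bounds carefully across the prepending step to keep the resulting runs within length $\ell$. A minor point to check is that, in the left-progress case, the witnessing subruns of a parallel unit transition are untouched by the $\fatsemi e_2$, so that the restricted transition $e_1 \sarun{V} g$ is indeed a legitimate unit run.
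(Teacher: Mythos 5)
Your proposal is correct and follows essentially the same route as the paper's proof: induction on run length, splitting off the first unit run, and a two-way case split on which summand of the derivative of $e_1 \cdot e_2$ contains the successor term, with the induction hypothesis applied to $g \cdot e_2 \sarun{U'} f$ in the left-progress case and the trivial run $e_1 \sarun{1} e_1$ (using $e_1 \in \sacc$ from the $\star$-guard) in the switch case. Your uniform treatment of sequential and parallel unit runs via the common shape $D(e_1) \fatsemi e_2 \cup e_1 \star D(e_2)$ merely makes explicit what the paper dispatches with ``the parallel case can be treated similarly,'' and your flagged side-condition --- that the witnessing subruns of a parallel unit run survive the $\fatsemi\, e_2$ restriction --- is indeed the point that makes that collapse legitimate.
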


There is also a deconstruction lemma for runs originating in a parallel composition, where the pomset accepted is a parallel composition of pomsets accepted by the operands.

\begin{restatable}{lem}{restaterundeconstructparallel}%
\label{lemma:run-deconstruct-parallel}
Let $e_1, e_2 \in \terms$, $f \in \sacc$ and $U \in \SP(\Sigma)$ such that $e_1 \parallel e_2 \sarun{U} f$.
Then $U = U_1 \parallel U_2$, and there exist $f_1, f_2 \in \sacc$ such that $e_i \sarun{U_i} f_i$ for $i \in \set{1, 2}$.
\end{restatable}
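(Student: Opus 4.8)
The plan is to reduce the given run to its constituent unit runs and then exploit the fact that, in the syntactic PA, the state $e_1 \parallel e_2$ has essentially a single outgoing transition. Concretely, I would first apply \cref{lemma:run-deconstruct} to $e_1 \parallel e_2 \sarun{U} f$, obtaining a factorisation $U = U_1' \cdots U_\ell'$ together with states $e_1 \parallel e_2 = g_0, g_1, \dots, g_\ell = f$ such that each $g_{i-1} \sarun{U_i'} g_i$ is a unit run, and then argue by cases on the length $\ell$.

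If $\ell = 0$, the run is trivial, so $U = 1$ and $f = e_1 \parallel e_2$. Since $f \in \sacc$ and $e_1 \parallel e_2$ is syntactically a parallel composition, inversion on the rules defining $\sacc$ gives $e_1, e_2 \in \sacc$; taking $U_1 = U_2 = 1$ and $f_i = e_i$ settles this case, using that $1 = 1 \parallel 1$. If $\ell \geq 1$, I would inspect the first piece $g_0 \sarun{U_1'} g_1$. Because $\sderiv(e_1 \parallel e_2, \ltr{a}) = \emptyset$ for every $\ltr{a}$, this cannot be a sequential unit run, so it must be a parallel unit run; and since $\pderiv(e_1 \parallel e_2, \phi) \neq \emptyset$ only for $\phi = \mset{e_1, e_2}$, with $\pderiv(e_1 \parallel e_2, \mset{e_1, e_2}) = \set{1}$, the fork has exactly two targets, it lands in $g_1 = 1$, and the rule furnishes $U_1' = U_1 \parallel U_2$ together with subruns $e_1 \sarun{U_1} f_1$ and $e_2 \sarun{U_2} f_2$ into accepting states $f_1, f_2 \in \sacc$ (recall the accepting set of the syntactic PA is $\sacc$).

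It then remains to exclude $\ell \geq 2$, and this is the one place that needs an observation rather than routine unfolding. Since $g_1 = 1$ and both $\sderiv(1, \ltr{a})$ and $\pderiv(1, \phi)$ are empty, the state $1$ admits no unit run at all, so a second factor $g_1 \sarun{U_2'} g_2$ cannot exist; hence $\ell = 1$, $f = 1$, and $U = U_1 \parallel U_2$ with the $f_i$ already produced, completing the proof. I expect this dead-end property of the state $1$ to be the main obstacle in the sense of being the conceptual crux: it is precisely what guarantees that a parallel fork in the syntactic automaton can never be followed by further computation, so that the parallel factorisation read off from the single parallel unit run is exactly the one claimed. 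Note that only \emph{existence} of the splitting $U = U_1 \parallel U_2$ is asserted, so I do not anticipate needing uniqueness of parallel factorisation (\cref{lemma:parallel-factorisation-unique}).
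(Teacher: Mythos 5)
Your proposal is correct and matches the paper's own proof in essence: both handle the trivial case via inversion on $\sacc$, peel off the first unit run, rule out the sequential case because $\sderiv(e_1 \parallel e_2, \ltr{a}) = \emptyset$, deduce from the definition of $\pderiv$ that the fork is into $\mset{e_1, e_2}$ landing in $1$, and conclude the remainder of the run is trivial since $1$ admits no unit runs. Your explicit use of \cref{lemma:run-deconstruct} and the case split on $\ell$ is only a cosmetic repackaging of the paper's direct peeling of one unit run followed by the observation that the residual run from $1$ must be trivial.
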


For the last deconstruction lemma, we consider terms with the Kleene star as the topmost operator.
Here, we show how to obtain a number of runs, each of which originates in the expression directly below the Kleene star.

\begin{lem}%
\label{lemma:run-deconstruct-star}
Let $e \in \terms$, $f \in \sacc$ and $U \in \SP(\Sigma)$ be such that $e^* \sarun{U} f$.
There exist $f_1, f_2, \dots, f_n \in \sacc$ such that $U = U_1 \cdot U_2 \cdots U_n$ and for $1 \leq i \leq n$ it holds that $e \sarun{U_i} f_i$.
\end{lem}
\begin{proof}
The proof proceeds by induction on the length $\ell$ of $e^* \sarun{U} f$.
In the base, where $\ell = 0$, we have that $f = e^*$ and $U = 1$; it suffices to choose $n = 0$.

In the inductive step, let $e^* \sarun{U} f$ be of length $\ell + 1$.
We find $g \in \terms$ and $U = U_0 \cdot U'$ such that $e^* \sarun{U_0} g$ is a unit run, and $g \sarun{U'} f$ is of length $\ell$.
If $e^* \sarun{U_0} g$ is a sequential unit run, then $U_0 = \ltr{a}$ for some $\ltr{a} \in \Sigma$, and $g \in \sderiv(e^*, \ltr{a}) = \sderiv(e, \ltr{a}) \fatsemi e^*$.
Hence, $g = g' \cdot e^*$, with $g' \in \sderiv(e, \ltr{a})$.
By \cref{lemma:run-deconstruct-sequential}, we find $f'', f' \in \sacc$ such that $U' = V \cdot W$ as well as $g' \sarun{V} f''$ and $e^* \sarun{W} f'$, of length at most $\ell$.
By induction, we find $f_2, f_3, \dots, f_n \in \sacc$ such that $W = U_2 \cdot U_3 \cdots U_n$, and for $1 < i \leq n$ we have $e \sarun{U_i} f_i$.

We then choose $f_1 = f''$ and $U_1 = U_0 \cdot V$.
For these choices, $U = U_0 \cdot U' = U_0' \cdot V \cdot W = U_1 \cdot U_2 \cdots U_n$.
Since $e \sarun{U_0} \sderiv(e, \ltr{a}) \sarun{V} f'$, we also have $e \sarun{U_1} f_1$, satisfying the claim.

The case where $e^* \sarun{U_0} g$ is a parallel unit run is similar.
\end{proof}

\subsection{Construction lemmas}
Next, we prove the \emph{construction lemmas}, so called because they tell us how to combine runs that originate in some terms into runs that originate in some composition of those terms.
Keeping the same order of operators as before, we start by showing how to construct runs originating from terms that are sums.

\begin{lem}%
  \label{lemma:run-construct-plus}
  Let $e_1, e_2 \in \terms$, $f_1, f_2 \in \sacc$ and $U \in \SP(\Sigma)$ be such that $e_1 \sarun{U} f_1$ or $e_2 \sarun{U} f_2$.
  There exists an $f \in \sacc$ such that $e_1 + e_2 \sarun{U} f$.
\end{lem}
\begin{proof}
  We treat the case where $e_1 \sarun{U} f_1$; the case where $e_2 \sarun{U} f_2$ can be treated similarly.
  There are two possibilities to consider, based on the length $\ell$ of $e_1 \sarun{U} f_1$.
  \begin{itemize}
  \item
    If $\ell = 0$, then $f_1 = e_1$ and $U = 1$.
    We then choose $f = e_1 + e_2$.

  \item
    If $\ell > 0$, then we find $e_1' \in \terms$ and $U = U_0 \cdot U'$ such that $e_1 \sarun{U_0} e_1'$ is a unit run, and $e_1' \sarun{U'} f_1$.
    If $e_1 \sarun{U_0} e_1'$ is a sequential unit run, then $U_0 = \ltr{a}$ for some $\ltr{a} \in \Sigma$, and $e_1' \in \sderiv(e_1, \ltr{a})$.
    But then $e_1' \in \sderiv(e_1 + e_2, \ltr{a})$ as well, and hence $e_1 + e_2 \sarun{U_0} e_1'$.
    Putting this together, we find that $e_1 + e_2 \sarun{U} f_1$; choosing $f = f_1$ then satisfies the claim.

    The case where $e_1 \sarun{U_0} e_1'$ is a parallel unit run is similar.
    \qedhere
  \end{itemize}
\end{proof}

\noindent
The construction lemma for sequential composition says that we can compose runs to accepting states in the following way.

\begin{restatable}{lem}{restaterunconstructsequential}%
\label{lemma:run-construct-sequential}
Let $e_1, e_2 \in \terms$, $f_1, f_2 \in \sacc$ and $U, V \in \SP(\Sigma)$ such that $e_1 \sarun{U} f_1$ and $e_2 \sarun{V} f_2$.
There exists an $f \in \sacc$ with $e_1 \cdot e_2 \sarun{U \cdot V} f$.
\end{restatable}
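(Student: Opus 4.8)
The plan is to build the required run of $e_1 \cdot e_2$ in two stages: first process $U$ while dragging the suffix $e_2$ along, reaching $f_1 \cdot e_2$, and then — since $f_1 \in \sacc$ — process $V$ after discarding the exhausted prefix. The first stage rests on an auxiliary \emph{right-concatenation} property: for all $e, g, h \in \terms$ and $U \in \SP(\Sigma)$, if $e \sarun{U} g$ then $e \cdot h \sarun{U} g \cdot h$. I would prove this by induction on the derivation of $e \sarun{U} g$ (the length given by \cref{lemma:run-deconstruct} would serve equally well). The trivial case is immediate, and the composite case follows from the induction hypothesis together with the composition rule. For a sequential unit run with $g \in \sderiv(e, \ltr{a})$, the clause $\sderiv(e \cdot h, \ltr{a}) = \sderiv(e, \ltr{a}) \fatsemi h \cup e \star \sderiv(h, \ltr{a})$ gives $g \cdot h \in \sderiv(e, \ltr{a}) \fatsemi h \subseteq \sderiv(e \cdot h, \ltr{a})$, whence $e \cdot h \sarun{\ltr{a}} g \cdot h$. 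For a parallel unit run witnessed by fork targets $\mset{d_1, \dots, d_m}$ and subruns $d_j \sarun{W_j} c_j \in \sacc$, the analogous clause $\pderiv(e \cdot h, \phi) = \pderiv(e, \phi) \fatsemi h \cup e \star \pderiv(h, \phi)$ places $g \cdot h$ in $\pderiv(e \cdot h, \mset{d_1, \dots, d_m})$; reusing the very same subruns then yields $e \cdot h \sarun{U} g \cdot h$. Applying this to $e_1 \sarun{U} f_1$ gives the first stage, $e_1 \cdot e_2 \sarun{U} f_1 \cdot e_2$.

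For the second stage I would show that $f_1 \cdot e_2 \sarun{V} f$ for some $f \in \sacc$, exploiting that $f_1 \in \sacc$ lets us ``shed'' the prefix $f_1 \cdot {}$ at the first transition. Using \cref{lemma:run-deconstruct}, write $e_2 \sarun{V} f_2$ as a chain of unit runs $e_2 = g_0 \sarun{V_1} g_1 \sarun{V_2} \cdots \sarun{V_\ell} g_\ell = f_2$ with $V = V_1 \cdots V_\ell$. If $\ell = 0$, then $V = 1$ and $f_2 = e_2$, so $f_1 \cdot e_2 = f_1 \cdot f_2 \in \sacc$ (by the $\sacc$-rule for sequential composition), and the trivial run $f_1 \cdot e_2 \sarun{1} f_1 \cdot e_2$ finishes the stage with $f = f_1 \cdot f_2$. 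If $\ell \geq 1$, I examine the first unit run $e_2 \sarun{V_1} g_1$: since $f_1 \in \sacc$, the operation $f_1 \star ({-})$ appearing in the definitions of $\sderiv$ and $\pderiv$ at $f_1 \cdot e_2$ acts as the identity, so $g_1 \in \sderiv(f_1 \cdot e_2, \ltr{a})$ in the sequential case and $g_1 \in \pderiv(f_1 \cdot e_2, \mset{d_1, \dots, d_m})$ in the parallel case; reusing the witnessing subruns gives $f_1 \cdot e_2 \sarun{V_1} g_1$. Composing with the remaining chain $g_1 \sarun{V_2 \cdots V_\ell} f_2$ yields $f_1 \cdot e_2 \sarun{V} f_2$, so here $f = f_2 \in \sacc$.

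Finally, I would combine the two stages $e_1 \cdot e_2 \sarun{U} f_1 \cdot e_2$ and $f_1 \cdot e_2 \sarun{V} f$ using the composition rule of the run relation to obtain $e_1 \cdot e_2 \sarun{U \cdot V} f$ with $f \in \sacc$, as required. I expect the main obstacle to be the parallel unit case of the right-concatenation property: one must verify that appending $\cdot h$ at the top level interacts correctly with the fork structure through the $\fatsemi$ clause of $\pderiv$ while leaving the witnessing subruns entirely untouched. The prefix-shedding step of the second stage is the other delicate point, but it reduces cleanly to the observation that $f_1 \in \sacc$ makes $f_1 \star ({-})$ the identity at the first transition, together with the small edge case $V = 1$ handled via the $\sacc$-closure of sequential composition.
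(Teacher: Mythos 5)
Your proposal is correct and follows essentially the same route as the paper's proof: the paper likewise splits the argument into a carrying fact ($e_1 \cdot e_2 \sarun{U} f_1 \cdot e_2$, proved by induction using the $\fatsemi$ clauses of $\sderiv$ and $\pderiv$) and a shedding fact ($f_1 \cdot e_2 \sarun{V} f_2$, using that $f_1 \in \sacc$ makes $f_1 \star ({-})$ the identity at the first unit run, with the $V = 1$ edge case handled by taking $f = f_1 \cdot e_2 \in \sacc$), then composes the two runs. Your slightly more general statement of the right-concatenation property and your use of the unit-run chain from \cref{lemma:run-deconstruct} are inessential variations on the paper's length-based inductions.
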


The next operator is parallel composition; here, the construction lemma is easy.

\begin{restatable}{lem}{restaterunconstructparallel}%
\label{lemma:run-construct-parallel}
Let $e_1, e_2 \in \terms$, $f_1, f_2 \in \sacc$, and $U, V \in \SP(\Sigma)$ such that $e_1 \sarun{U} f_1$ and $e_2 \sarun{V} f_2$.
Then $e_1 \parallel e_2 \sarun{U \parallel V} 1$.
\end{restatable}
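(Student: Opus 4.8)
The plan is to prove this directly, by a single application of the fourth rule defining the run relation (the parallel unit run rule), instantiated at the syntactic automaton $A_\Sigma$. The key observation, which is precisely why this lemma is easy, is that the parallel derivative $\pderiv$ was designed so that the state $e_1 \parallel e_2$ forks into the multiset $\mset{e_1, e_2}$ and lands directly in the state $1$. So the hypotheses already provide exactly the two threads that the parallel run rule requires.

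First I would unfold the defining equation for $\pderiv$ on parallel compositions to observe that $\pderiv(e_1 \parallel e_2, \mset{e_1, e_2}) = \setcompr{1}{\mset{e_1, e_2} = \mset{e_1, e_2}}$, so in particular $1 \in \pderiv(e_1 \parallel e_2, \mset{e_1, e_2})$. Next, I would note that the hypotheses supply runs $e_1 \sarun{U} f_1$ and $e_2 \sarun{V} f_2$ with $f_1, f_2 \in \sacc$; since $\sacc$ is exactly the set of accepting states of $A_\Sigma$, these are runs that terminate in accepting states, as demanded by the premises of the rule.

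Then I would instantiate the parallel unit run rule with $n = 2$, taking the source state $e_1 \parallel e_2$, the fork multiset $\mset{e_1, e_2}$, the target state $1$, and the two threads $(e_1, U, f_1)$ and $(e_2, V, f_2)$. This immediately yields $e_1 \parallel e_2 \sarun{U \parallel V} 1$, as required. There is essentially no obstacle here: the whole argument is the syntactic match between the shape of $\pderiv(e_1 \parallel e_2, -)$ and the premises of the parallel run rule, so neither induction nor case analysis is needed. Unlike the corresponding deconstruction lemma, no appeal to uniqueness of factorisation or to well-structuredness is required, since we are assembling a run rather than analysing one.
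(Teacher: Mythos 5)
Your proposal is correct and matches the paper's proof exactly: the paper likewise observes that $1 \in \pderiv(e_1 \parallel e_2, \mset{e_1, e_2})$ and concludes by a single application of the parallel unit run rule, using the hypothesized runs into $f_1, f_2 \in \sacc$ as the accepting threads. Your write-up merely spells out the rule instantiation that the paper leaves implicit.
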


The last construction lemma concerns the Kleene star.
Here, a succession of runs originating in the expression below the star allows us to create a run originating in the starred expression, provided each of these runs reaches an accepting state.

\begin{lem}%
\label{lemma:run-construct-star}
Let $e \in \terms$, $f_1, f_2, \dots, f_n \in \sacc$, and $U_1, U_2, \dots, U_n \in \SP(\Sigma)$ be such that for $1 \leq i \leq n$ it holds that $e \sarun{U_i} f_i$.
There exists an $f \in \sacc$ such that $e^* \sarun{U_1 \cdot U_2 \cdots U_n} f$.
\end{lem}
\begin{proof}
We can assume w.l.o.g.\ that for $0 \leq i < n$ it holds that $e \sarun{U_i} f_i$ is non-trivial.
We proceed by induction on $n$.
In the base, where $n = 0$, we can choose $f = e^*$.

For the inductive step, let $n > 0$ and assume that the claim holds for $n-1$.
By induction, we find $f' \in \sacc$ with $e^* \sarun{U_2 \cdot U_3 \cdots U_n} f'$.
Since $e \sarun{U_1} f_1$ is non-trivial, we find $e' \in \terms$ and $U_1 = U_0 \cdot U_1'$ such that $e \sarun{U_0} e'$ is a unit run, and $e' \sarun{U_1'} f_1$.
By \cref{lemma:run-construct-sequential}, we find $f \in \sacc$ such that $e' \cdot e^* \sarun{U_1' \cdot U_2 \cdot U_3 \cdots U_n} f$.
If $e \sarun{U_0} e'$ is a sequential unit run, then $U_0 = \ltr{a}$ for some $\ltr{a} \in \Sigma$, and $e' \in \sderiv(e, \ltr{a})$.
But then $e' \cdot e^* \in \sderiv(e^*, \ltr{a})$, and hence $e^* \sarun{U_0} e' \cdot e^*$.
Thus, $e^* \sarun{U_0} e' \cdot e^* \sarun{U_1' \cdot U_2 \cdot U_3 \cdots U_n} f$ and therefore $e^* \sarun{U_1 \cdot U_2 \cdots U_n} f$.

The case where $e \sarun{U_0} e'$ is a parallel unit run is similar.
\end{proof}

\subsection{Correctness of the syntactic PA}
The (de)construction lemmas combine to prove the equations claimed by the lemma below; the deconstruction lemmas show inclusion from left to right, whereas construction lemmas show the inclusion from right to left. 

\begin{restatable}{lem}{restatebrzozowskimorphism}%
  \label{lemma:brzozowski-morphism}
  Let $e, f \in \terms$.
  The following hold:
  \begin{align*}
    L_\Sigma(0) &= \emptyset
    & L_\Sigma(e + f) &= L_\Sigma(e) \cup L_\Sigma(f)
    & L_\Sigma(e^*) &= {L_\Sigma(e)}^* \\
    L_\Sigma(1) &= \set{1}
    & L_\Sigma(e \cdot f) &= L_\Sigma(e) \cdot L_\Sigma(f) \\
    L_\Sigma(\ltr{a}) &= \set{\ltr{a}}
    & L_\Sigma(e \parallel f) &= L_\Sigma(e) \parallel L_\Sigma(f)
  \end{align*}
\end{restatable}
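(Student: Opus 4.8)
The plan is to treat the seven equations one connective at a time, handling the three base cases ($0$, $1$, and a letter $\ltr{a}$) by direct inspection of the derivatives $\sderiv$, $\pderiv$ and the accepting set $\sacc$, and the four compositional cases ($+$, $\cdot$, $\parallel$, and ${}^*$) by pairing, for each connective, the relevant deconstruction lemma (which yields the left-to-right inclusion) with the relevant construction lemma (which yields the right-to-left inclusion), exactly as the preamble indicates. Crucially, no induction on the structure of the expression is needed at this level: the inductive content has already been absorbed into the (de)construction lemmas, which are stated for arbitrary terms, so the present proof is purely a matter of assembling them.

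For the base cases I would invoke \cref{lemma:run-deconstruct} to reduce every run to a concatenation of unit runs, and then read off the available unit runs from the derivatives. Since $\sderiv(0, \ltr{a}) = \pderiv(0, \phi) = \emptyset$ and $0 \notin \sacc$, the only run out of $0$ is trivial and non-accepting, so $L_\Sigma(0) = \emptyset$; likewise $1$ has no outgoing transitions but $1 \in \sacc$, giving $L_\Sigma(1) = \set{1}$. For a letter, the unique non-trivial unit run is $\ltr{a} \sarun{\ltr{a}} 1$ (as $\sderiv(\ltr{a}, \ltr{a}) = \set{1}$, $\sderiv(\ltr{a}, \ltr{b}) = \emptyset$ for $\ltr{b} \neq \ltr{a}$, and $\pderiv(\ltr{a}, \phi) = \emptyset$), and since $1$ is a dead end with $1 \in \sacc$ while $\ltr{a} \notin \sacc$, the only accepted pomset is $\ltr{a}$, so $L_\Sigma(\ltr{a}) = \set{\ltr{a}}$.

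The compositional cases all follow the same pattern; sequential composition is representative. If $U \in L_\Sigma(e \cdot f)$, then $e \cdot f \sarun{U} g$ for some $g \in \sacc$, and \cref{lemma:run-deconstruct-sequential} splits this as $U = U_1 \cdot U_2$ with $e \sarun{U_1} g_1$ and $f \sarun{U_2} g_2$ for $g_1, g_2 \in \sacc$, whence $U \in L_\Sigma(e) \cdot L_\Sigma(f)$; conversely any factor-wise accepted $U_1 \cdot U_2$ is accepted by $e \cdot f$ via \cref{lemma:run-construct-sequential}. The cases for $\parallel$ and ${}^*$ are identical, using \cref{lemma:run-deconstruct-parallel,lemma:run-construct-parallel} and \cref{lemma:run-deconstruct-star,lemma:run-construct-star} respectively, together with the definitions of $L \parallel L'$ and of $L^* = \bigcup_{n} L^n$.

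The only points needing extra care---and the closest thing to an obstacle in an otherwise purely organisational argument---are two bookkeeping mismatches. First, the construction lemma for sums, \cref{lemma:run-construct-plus}, is phrased only for the left summand, so for the right-to-left inclusion of $L_\Sigma(e_1 + e_2)$ one must additionally cover a pomset accepted by $e_2$; this follows from the evident symmetric instance of that lemma, the derivatives of $e_1 + e_2$ being symmetric in the two summands. Second, in the star case one must reconcile the $n = 0$ term of $L_\Sigma(e)^* = \bigcup_n L_\Sigma(e)^n$ with the automaton: the empty product $1$ lies in $L_\Sigma(e^*)$ because $e^* \in \sacc$, so the trivial run already accepts it, matching the degenerate instance of \cref{lemma:run-construct-star}, while \cref{lemma:run-deconstruct-star} returns a factorisation into $n \geq 0$ pieces landing in precisely the right $L_\Sigma(e)^n$.
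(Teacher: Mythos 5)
Your proposal is correct and follows essentially the same route as the paper's proof: base cases by direct inspection of $\sderiv$, $\pderiv$ and $\sacc$, and each compositional case by pairing the deconstruction lemma (left-to-right) with the construction lemma (right-to-left), with no structural induction at this level (that is deferred to \cref{lemma:brzozowski-soundness}). Your two bookkeeping remarks also match the paper, which handles the right summand of $+$ ``similarly'' by symmetry and covers the $n=0$ star case in the base of \cref{lemma:run-construct-star}.
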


\noindent
A straightforward inductive argument now helps us validate the following:

\begin{restatable}{lem}{restatebrzozowskisoundness}%
\label{lemma:brzozowski-soundness}
For all $e \in \terms$, we have $L_\Sigma(e) = \sem{e}$.
\end{restatable}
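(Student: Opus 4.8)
The plan is to proceed by a routine structural induction on $e$, leaning entirely on \cref{lemma:brzozowski-morphism} to carry the weight. The crucial observation is that \cref{lemma:brzozowski-morphism} establishes, for $L_\Sigma$, exactly the seven defining equations of the semantics $\sem{-}$. Consequently, once that lemma is in hand, the two functions agree clause by clause on the generating grammar of $\terms$, and the induction merely bookkeeps this agreement. I would set up the induction over the inductive definition of $\terms$, so that the base cases are $0$, $1$, and the primitive expressions $\ltr{a}$, while the inductive cases cover $+$, $\cdot$, $\parallel$, and $(-)^*$.

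For the base cases I would simply read off the corresponding equations: \cref{lemma:brzozowski-morphism} gives $L_\Sigma(0) = \emptyset$, $L_\Sigma(1) = \set{1}$, and $L_\Sigma(\ltr{a}) = \set{\ltr{a}}$, and these coincide with $\sem{0}$, $\sem{1}$, and $\sem{\ltr{a}}$ by the very definition of $\sem{-}$. There is nothing further to verify here.

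For the inductive step, assuming the claim for the immediate subexpressions, each constructor is handled by the same three-move pattern: first apply \cref{lemma:brzozowski-morphism} to distribute $L_\Sigma$ over the constructor, then apply the induction hypothesis to each component, and finally fold the result back up using the defining clause of $\sem{-}$. For sequential composition, for instance,
\begin{align*}
L_\Sigma(e \cdot f) &= L_\Sigma(e) \cdot L_\Sigma(f) \\
&= \sem{e} \cdot \sem{f} \\
&= \sem{e \cdot f},
\end{align*}
where the first step is \cref{lemma:brzozowski-morphism}, the second is the induction hypothesis, and the third is the definition of $\sem{-}$. The cases for $+$, $\parallel$, and $(-)^*$ are verbatim analogues, replacing $\cdot$ by $\cup$, $\parallel$, and Kleene closure respectively.

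I expect no genuine obstacle at this level, since all the difficulty has already been absorbed into \cref{lemma:brzozowski-morphism}, whose proof in turn splits each equation into a left-to-right inclusion handled by the deconstruction lemmas and a right-to-left inclusion handled by the construction lemmas. The only point deserving minor care is ensuring the induction is organised over the same grammar that defines both $\sem{-}$ and the syntactic PA, so that every constructor is matched by exactly one clause on each side; given the parallel structure of the two definitions, this is immediate.
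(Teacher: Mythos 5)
Your proposal is correct and matches the paper's proof essentially verbatim: both proceed by structural induction on $e$, dispatch the base cases $0$, $1$, $\ltr{a}$ directly via \cref{lemma:brzozowski-morphism}, and handle each inductive case by the same three-step chain of that lemma, the induction hypothesis, and the defining clause of $\sem{-}$ (the paper spells out the case $e = e_1 + e_2$ where you spell out $e \cdot f$, but the pattern is identical). No gaps to report.
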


This covers the correctness of our expressions-to-automata construction as far as the languages are concerned.
It remains to show that the automaton is also fork-acyclic, which comes down to showing that if $e \preceq_\Sigma f$, then the nesting depth of $\parallel$ in $e$ is bounded by that of $f$.
Since forks only occur to expressions with strictly lower parallel nesting depth, neither $e$ nor $f$ can support $e \parallel f$.

\begin{restatable}{lem}{restatesyntacticpaforkacyclic}%
  \label{lemma:syntactic-pa-fork-acyclic}
  The syntactic pomset automaton is fork-acyclic.
\end{restatable}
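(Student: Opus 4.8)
The plan is to find a quantity that strictly decreases along forks but never increases along the support relation $\preceq_\Sigma$, and to use it to rule out the fork-cycles forbidden by fork-acyclicity. The natural candidate, as the paragraph preceding the statement hints, is the \emph{parallel nesting depth} of an expression: define $\pdepth{e}$ by induction, setting $\pdepth{0} = \pdepth{1} = \pdepth{\ltr{a}} = 0$, $\pdepth{e + f} = \pdepth{e \cdot f} = \max(\pdepth{e}, \pdepth{f})$, $\pdepth{e^*} = \pdepth{e}$, and $\pdepth{e \parallel f} = 1 + \max(\pdepth{e}, \pdepth{f})$. The whole argument will hinge on two monotonicity facts about this measure with respect to the two pieces of the support relation on $A_\Sigma$, namely $\sderiv$ and $\pderiv$.

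First I would establish the \emph{non-increase along support} property: if $e' \preceq_\Sigma f$, then $\pdepth{e'} \leq \pdepth{f}$. Since $\preceq_\Sigma$ is the smallest preorder generated by the three rules in the definition of support, it suffices to check that each generating rule does not increase $\pdepth{\cdot}$, after which transitivity and reflexivity are immediate. For the sequential rule, I must show that $e' \in \sderiv(e, \ltr{a})$ implies $\pdepth{e'} \leq \pdepth{e}$; this is a routine induction on the structure of $e$, using that each clause of $\sderiv$ produces either subterms of $e$, subterms concatenated via $\fatsemi$ with a factor already present in $e$ (note $\fatsemi$ forms sequential, not parallel, compositions, so it adds nothing to the parallel depth), or the empty set. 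For the two parallel rules, I need that $\pderiv(e, \phi) \neq \emptyset$ forces a drop: specifically, a fork of $e$ into some $\phi = \mset{e_1, \dots, e_n}$ can only arise, tracing the inductive clauses of $\pderiv$, from a subterm of the form $e_1 \parallel \cdots$ inside $e$, and each fork target $e_i$ satisfies $\pdepth{e_i} < \pdepth{e}$ because the $\parallel$ that licensed the fork contributes a $+1$ that strips off as we descend to $e_i$. Likewise $\gamma_\Sigma(e, \phi) = \setcompr{1}{\phi = \mset{e_1, e_2}}$-style results are terms of parallel depth $0$, hence $\leq \pdepth{e}$.

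The crux is therefore the \emph{strict decrease at fork targets}: if $r \in \phi$ with $\pderiv(e, \phi) \neq \emptyset$, then $\pdepth{r} < \pdepth{e}$. This is the statement I expect to cost the most care, since I must track through the inductive definition of $\pderiv$ exactly which multisets $\phi$ can have nonempty derivative, and confirm that every member of such a $\phi$ is a parallel operand nested strictly inside $e$. The base is the clause $\pderiv(e_1 \parallel e_2, \phi) = \setcompr{1}{\phi = \mset{e_1, e_2}}$, where indeed $\pdepth{e_i} \leq \max(\pdepth{e_1}, \pdepth{e_2}) < 1 + \max(\pdepth{e_1}, \pdepth{e_2}) = \pdepth{e_1 \parallel e_2}$; the inductive clauses for $+$, $\cdot$, and $^*$ either pass $\phi$ to a subterm of equal or smaller depth or use the $\fatsemi e$ / $\star$ operators, which preserve the fork target unchanged, so the strict inequality propagates.

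Once both properties are in hand, fork-acyclicity follows directly. Suppose toward a contradiction that $A_\Sigma$ is not fork-acyclic, so there are $e, r$ and $\phi$ with $r \in \phi$, $\pderiv(e, \phi) \neq \emptyset$, and $e \preceq_\Sigma r$. By the strict-decrease property $\pdepth{r} < \pdepth{e}$, while by the non-increase property $e \preceq_\Sigma r$ gives $\pdepth{e} \leq \pdepth{r}$, a contradiction. Hence no state can be supported by one of its own fork targets, which is exactly the definition of fork-acyclicity for $A_\Sigma$.
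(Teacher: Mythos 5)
Your proposal is correct and follows essentially the same route as the paper's own proof: the identical measure $\pdepth{\cdot}$, the check that each of the three rules generating $\preceq_\Sigma$ is non-increasing (with the strict inequality $\pdepth{r} < \pdepth{e}$ for fork targets, i.e., when $r \in \phi$ and $\pderiv(e,\phi) \neq \emptyset$), and the concluding contradiction between $\pdepth{e} \leq \pdepth{r}$ and $\pdepth{r} < \pdepth{e}$. The only caveat is a sketch-level imprecision --- results of $\pderiv$ are of parallel depth $0$ only in the base clause for $e_1 \parallel e_2$, while the clauses for $\cdot$ and $^*$ yield sequential compositions needing the inductive bound --- but your inductions, carried out in full, coincide with the paper's.
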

\begin{proof}[Proof sketch]
We construct a function $\pdepth{-}: \terms \to \naturals$, and we show that if $e \preceq_\Sigma f$, then $\pdepth{e} \leq \pdepth{f}$.
In the process, we also show that if $e, f \in \terms$ and $\phi \in \M(\terms)$ with $\gamma(e, \phi) \neq \emptyset$ and $f \in \phi$, then $\pdepth{f} < \pdepth{e}$; together with the earlier proprety of $\pdepth{-}$, this tells us that $f \prec_\Sigma e$, as desired.
\end{proof}

Furthermore, each state in the syntactic PA has finite support.
This is done by constructing a (finite) overestimation of the expressions that support $e$ in $A_\Sigma$.

\begin{restatable}{lem}{restatesyntacticpabounded}%
  \label{lemma:syntactic-pa-bounded}
  The syntactic pomset automaton is bounded.
\end{restatable}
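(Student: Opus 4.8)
The plan is to prove boundedness by exhibiting, for each $e \in \terms$, an explicit \emph{finite} set of expressions that contains the support $\pi_\Sigma(e)$; since every subset of a finite set is finite, this suffices. Following the analogy with Antimirov~\cite{antimirov-1996}, I would separate the states reachable from $e$ by \emph{sequential derivatives and parallel continuations alone} from those reachable as \emph{fork targets}, and control each kind independently.

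Concretely, let $\mathsf{sub}(e)$ denote the (finite) set of syntactic subterms of $e$, and let $\partial(e)$ be the smallest subset of $\terms$ containing $e$ and closed under $\sderiv(-, \ltr{a})$ and $\pderiv(-, \phi)$ for all $\ltr{a} \in \Sigma$ and $\phi \in \M(\terms)$; that is, $\partial(e)$ collects the states reachable from $e$ by repeatedly applying the first two generating rules of $\preceq_\Sigma$, ignoring fork targets. The containment I would establish is
\[
  \pi_\Sigma(e) \subseteq \bigcup_{t \in \mathsf{sub}(e)} \partial(t),
\]
after which boundedness is immediate once the right-hand side is shown to be finite.

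The argument rests on three steps, carried out in this order. \textbf{(A)} Each $\partial(e)$ is finite. This is the Antimirov cardinality bound, by structural induction on $e$: the cases $0, 1, \ltr{a}$ are immediate, and for $e + f$, $e \cdot f$, $e^*$ one uses that both $\sderiv$ and $\pderiv$ only ever extend a subexpression by a \emph{fixed} tail through $\fatsemi$ (e.g.\ $\partial(e \cdot f) \subseteq (\partial(e) \fatsemi f) \cup \partial(f)$), so that cardinality grows only additively; since $\sderiv$ and $\pderiv$ obey the same recursion, a single induction covers both. \textbf{(B)} A structural invariant: if $d \in \partial(e)$, then every subexpression of $d$ of the form $u \parallel v$ is already a subterm of $e$. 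This is a routine induction on the clauses defining $\partial$: the $\fatsemi$-append and the starring never create new parallel subterms, and the only clause producing a fresh parallel transition, $\pderiv(u \parallel v, \mset{u, v}) = \set{1}$, arises from a parallel subterm already present and yields $1$, which has none. Consequently, whenever $\pderiv(d, \phi) \neq \emptyset$ for $d \in \partial(e)$, we have $\phi = \mset{u, v}$ with $u \parallel v$ a subterm of $e$, so every fork target of $d$ lies in $\mathsf{sub}(e)$. \textbf{(C)} The displayed containment: by induction on the length of a $\preceq_\Sigma$-derivation witnessing $f \preceq_\Sigma e$, I would prove the stronger statement that $f \in \partial(t)$ for some $t \in \mathsf{sub}(e)$. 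Sequential-derivative and parallel-continuation steps keep us inside the current $\partial(t)$ (which is closed under both by definition), while a fork-target step lands, by (B), on a subterm of $t$, hence of $e$, which we take as the new $t$.

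The main obstacle is step (A): showing that the derivative closure is finite is precisely the delicate part of Antimirov's theory, and here it must be done for the closure under $\sderiv$ \emph{and} $\pderiv$ jointly. The saving grace is that $\pderiv$ is defined by the same structural recursion as $\sderiv$, differing only in its base clauses, so the tail-appending behaviour, and hence the additive cardinality bound, is identical. Step (B) is the conceptual crux that confines fork targets to $\mathsf{sub}(e)$ and thereby prevents the support from growing without bound; once (A) and (B) are in place, (C) and the finiteness conclusion are bookkeeping.
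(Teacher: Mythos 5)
Your proposal is correct, but it organizes the argument differently from the paper. The paper likewise proves boundedness by exhibiting a finite superset of the support, but it does so in one structural induction: for each $e$ it directly defines a finite \emph{support-closed} set $S(e)$, built recursively from the supports of the immediate subterms --- e.g.\ $S(e_1 \cdot e_2) = \ssupp(e_1) \fatsemi e_2 \cup e_1 \star \ssupp(e_2) \cup \ssupp(e_1)$ and $S(e_1^*) = \ssupp(e_1) \fatsemi e_1^* \cup \ssupp(e_1) \cup \set{e_1^*}$ --- and verifies all three closure conditions (under $\sderiv$, under $\pderiv$, and under fork targets) simultaneously in each case. You instead factor the two mechanisms that generate support: the joint derivative closure $\partial(e)$, shown finite by the Antimirov-style tail-appending bound, and the fork targets, which your invariant (B) confines to $\mathsf{sub}(e)$; the covering set $\bigcup_{t \in \mathsf{sub}(e)} \partial(t)$ then need not itself be support-closed, which is harmless since any finite superset of $\ssupp(e)$ witnesses finiteness. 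Your (B) also makes globally explicit a fact the paper only uses locally, in the ``fork target'' bullet of each case of its induction: forks in $A_\Sigma$ always go to the components $\mset{u, v}$ of a parallel subterm $u \parallel v$ already present in the original expression (the same fact that underlies fork-acyclicity in \cref{lemma:syntactic-pa-fork-acyclic}). What each buys: your factoring isolates the derivative-closure finiteness as a reusable statement and cleanly separates it from the fork structure; the paper's version needs only one induction and produces a support-closed set outright, which meshes directly with the definition of $\ssupp$. Two small points to tidy in a write-up: in step (A) you should record the (trivial) parallel case, $\partial(e_1 \parallel e_2) \subseteq \set{e_1 \parallel e_2, 1}$, since $\sderiv(e_1 \parallel e_2, \ltr{a}) = \emptyset$ and $\pderiv(e_1 \parallel e_2, \phi) \subseteq \set{1}$; and in step (C) you should note explicitly that, $\preceq_\Sigma$ being a preorder, $\ssupp(e) = \setcompr{f \in \terms}{f \preceq_\Sigma e}$, so that induction on the length of a chain of generating-rule steps from $e$ indeed exhausts the support.
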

\begin{proof}[Proof sketch]
  We should show that for $e \in \terms$, it holds that $\ssupp(e)$ is finite.
  Since $\ssupp(e)$ is the smallest closed set that contains $e$, it suffices to find a finite closed set $S(e)$ such that $e \in S(e)$.
  The construction of this set and the verification of its closure is entirely routine; we refer to the appendix for details.
\end{proof}

This then allows us to state one half of our Kleene theorem, as follows.

\begin{thm}[Expressions to automata]%
\label{theorem:expressions-to-automata}
Let $e \in \terms$.
We can obtain a fork-acyclic and finite PA $A$ with a state $q$ such that $L_A(q) = \sem{e}$.
\end{thm}
\begin{proof}
Choose $A = A_\Sigma[e]$.
By \cref{lemma:syntactic-pa-bounded,lemma:syntactic-pa-fork-acyclic} as well as \cref{lemma:restrict-bounded}, $A$ is finite and fork-acyclic.
Finally, by \cref{lemma:restrict-bounded,lemma:brzozowski-soundness}, we have that $L_{A[\pi_\Sigma(e)]}(e) = L_\Sigma(e) = \sem{e}$.
\end{proof}

It also shows that equivalence of sr-expressions is decidable.

\begin{cor}
  Given $e, f \in \terms$, it is decidable whether $\sem{e} = \sem{f}$.
\end{cor}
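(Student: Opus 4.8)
The plan is to assemble the decision procedure by chaining together the constructions developed throughout the paper. First I would turn the two expressions $e, f \in \terms$ into states of a single finite, fork-acyclic pomset automaton. The syntactic PA $A_\Sigma$ already has every sr-expression as a state, and $L_\Sigma(e) = \sem{e}$ by \cref{lemma:brzozowski-soundness}; moreover $A_\Sigma$ is fork-acyclic by \cref{lemma:syntactic-pa-fork-acyclic} and bounded by \cref{lemma:syntactic-pa-bounded}. Hence the supports $\pi_\Sigma(e)$ and $\pi_\Sigma(f)$ are finite, and their union $Q' = \pi_\Sigma(e) \cup \pi_\Sigma(f)$ is support-closed (a union of support-closed sets is support-closed, straight from the definition). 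Restricting via \cref{lemma:restrict-bounded} then yields a single finite, fork-acyclic PA $A = A_\Sigma[Q']$ in which $e$ and $f$ are states with $L_A(e) = \sem{e}$ and $L_A(f) = \sem{f}$. (Alternatively, one could apply \cref{theorem:expressions-to-automata} to each expression separately and take the disjoint union of the two automata.)

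Next I would make this automaton well-structured while preserving the other properties. By the main theorem of \cref{sec:well-structured-automata}, there is a finite, fork-acyclic, well-structured PA $A'$ that implements $A$. Since implementation guarantees $Q \subseteq Q'$ together with $L_{A'}(q) = L_A(q)$ for every original state $q$, the states $e$ and $f$ survive in $A'$, still satisfying $L_{A'}(e) = \sem{e}$ and $L_{A'}(f) = \sem{f}$.

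Finally I would invoke \cref{theorem:decision-procedure-well-structured}, which states that language equivalence of states is decidable in a finite, fork-acyclic, well-structured PA. Applying it to the states $e$ and $f$ of $A'$ decides whether $L_{A'}(e) = L_{A'}(f)$, which is exactly whether $\sem{e} = \sem{f}$. This establishes the claim.

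I do not expect a genuine obstacle here: the corollary is essentially a matter of wiring together previously established results in the correct order, and each link in the chain is a direct appeal to a cited lemma or theorem. The only point demanding a small check is the first step, namely that a \emph{single} automaton can carry both expressions as states; this is immediate for the syntactic PA once one observes that the union of the two supports is finite (by boundedness) and support-closed (so that \cref{lemma:restrict-bounded} applies). Everything downstream is then a faithful composition of language-preserving, property-preserving transformations.
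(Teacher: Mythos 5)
Your proposal is correct and follows essentially the same route the paper intends: construct a finite, fork-acyclic PA containing the expressions as states (via the syntactic PA, \cref{lemma:brzozowski-soundness,lemma:syntactic-pa-fork-acyclic,lemma:syntactic-pa-bounded,lemma:restrict-bounded}, as in \cref{theorem:expressions-to-automata}), make it well-structured via the theorem of \cref{sec:well-structured-automata}, and apply \cref{theorem:decision-procedure-well-structured}. Your only addition --- restricting to the union $\pi_\Sigma(e) \cup \pi_\Sigma(f)$ so that both expressions live in a \emph{single} automaton, as the decision procedure requires --- is a detail the paper leaves implicit, and you handle it correctly.
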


\section{From automata to expressions}%
\label{section:automata-to-expressions}

Let $A=\angl{Q, F, \delta, \gamma}$ be a finite and fork-acyclic pomset automaton.
In this section, we set out to obtain the converse of the result in the previous section, that is, to construct for every state $q \in Q$ of the automaton $A$ a series-rational expression $e_q$ such that $\sem{e_q} = L_A(q)$.
Since $A$ is finite and fork-acyclic, the strict support relation $\prec_A$ is well-founded.
We can therefore proceed building $e_q$ by $\prec_A$-induction.
Our \emph{main induction hypothesis} is as follows:

\begin{quote}
If $r \in Q$ with $r \prec_A q$, then there exists an sr-expression $e_r$ with $\sem{e_r} = L_A(r)$.
\end{quote}

\noindent
Our task is to define for $q$ a series-rational expression $e_q$ such that $\sem{e_q} = L_A(q)$.
As a matter of fact, it is convenient to establish a slightly more general result: we let
\begin{equation*}
Q' = \setcompr{p \in Q}{q \preceq_A p \preceq_A q}
\end{equation*}
and simultaneously define for every $p \in Q'$ an sr-expression $e_p$ such that $\sem{e_p} = L_A(p)$.

To define these expressions, we use an approach due to McNaughton and Yamada~\cite{mcnaughton-yamada-1960}.
The first step in this approach is to define, for all $p \in Q'$ and $r \in Q$, an sr-expression $e_{pr}$ that captures the pomsets that can be read when transitioning from $p$ to $r$ by means of a (sequential or parallel) unit run.
This can be done as follows:
\[
    e_{pr} =
        \sum_{r \in \delta(p,\ltr{a})} \ltr{a} +
        \sum_{r \in \gamma(p,\mset{s_1,\dots,s_n})} e_{s_1} \parallel \cdots \parallel e_{s_n}
\]
Intuitively, the first sum accounts for sequential unit runs: if $r \in \delta(p, \ltr{a})$, then $p \arun{\ltr{a}}_A r$ is a unit run.
For the second term, we note that if $s_1, \dots, s_n \in Q$ and $r \in \gamma(p, \mset{s_1, \dots, s_n})$, then $s_1, \dots, s_n \prec_A p$ by fork-acyclicity; therefore, by the main induction hypothesis, $e_{s_i}$ is already defined such that $L_A(s_i) = \sem{e_{s_i}}$.
Thus, if for $1 \leq i \leq n$, we have $U_i \in \sem{e_{s_i}}$, then the pomset $U_i$ is accepted by $s_i$, which means that $p \arun{U_1 \parallel \dots \parallel U_n}_A r$ is a parallel unit run.

The following asserts that $e_{pr}$ is fit for purpose:
\begin{restatable}{lem}{restateautomatatoexpressionssmallstep}%
\label{lemma:automata-to-expressions-small-step}
Let $p \in Q'$ and $r \in Q$.
Now $U \in \sem{e_{pr}}$ if and only if $p \arun{U}_A r$ is a unit run.
\end{restatable}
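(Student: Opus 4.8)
The plan is to prove both implications by unfolding the semantics of $e_{pr}$ and matching each disjunct against the corresponding clause in the definition of $\runrel$. First I would record the shape of $\sem{e_{pr}}$: by the semantic clauses for $+$ and $\parallel$, together with the main induction hypothesis (which lets me replace $\sem{e_{s_i}}$ by $L_A(s_i)$, justified below), we have
\[
  \sem{e_{pr}}
  = \bigcup_{r \in \delta(p, \ltr{a})} \set{\ltr{a}}
    \;\cup\;
    \bigcup_{r \in \gamma(p, \mset{s_1, \dots, s_n})} L_A(s_1) \parallel \cdots \parallel L_A(s_n).
\]
Membership in $\sem{e_{pr}}$ thus splits into a ``sequential'' disjunct and a ``parallel'' disjunct, mirroring exactly the sequential-unit and parallel-unit cases of a unit run, so the whole argument reduces to a disjunct-by-disjunct correspondence.

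For the forward direction I would assume $U \in \sem{e_{pr}}$ and distinguish the two disjuncts. If $U = \ltr{a}$ comes from a summand with $r \in \delta(p, \ltr{a})$, then the $\delta$-rule of $\runrel$ immediately yields the sequential unit run $p \arun{\ltr{a}}_A r$. If instead $U \in L_A(s_1) \parallel \cdots \parallel L_A(s_n)$ for some $\mset{s_1, \dots, s_n}$ with $r \in \gamma(p, \mset{s_1, \dots, s_n})$, then $U = U_1 \parallel \cdots \parallel U_n$ with each $U_i \in L_A(s_i)$; unfolding the definition of $L_A(s_i)$ produces accepting states with $s_i \arun{U_i}_A s_i' \in F$, and the fork rule then delivers the parallel unit run $p \arun{U}_A r$.

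For the backward direction I would assume $p \arun{U}_A r$ is a unit run, which by definition is either sequential or parallel. In the sequential case the $\delta$-rule was applied last, forcing $U = \ltr{a}$ with $r \in \delta(p, \ltr{a})$, so $U$ lands in the corresponding $\delta$-summand of $\sem{e_{pr}}$. In the parallel case the fork rule was applied, so $U = U_1 \parallel \cdots \parallel U_n$ with $r \in \gamma(p, \mset{s_1, \dots, s_n})$ and $s_i \arun{U_i}_A s_i' \in F$ for each $i$; hence $U_i \in L_A(s_i) = \sem{e_{s_i}}$, placing $U$ in the matching $\gamma$-summand.

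The only point requiring care — and the step I expect to be the crux — is justifying the substitution $\sem{e_{s_i}} = L_A(s_i)$, i.e.\ checking that the induction hypothesis genuinely applies to every fork target $s_i$. By fork-acyclicity, $r \in \gamma(p, \mset{s_1, \dots, s_n})$ entails $s_i \prec_A p$; combined with $p \preceq_A q$ (since $p \in Q'$) this yields $s_i \prec_A q$, because $q \preceq_A s_i$ would force $p \preceq_A s_i$, contradicting $s_i \prec_A p$. Thus each $e_{s_i}$ is available from the main induction hypothesis with $\sem{e_{s_i}} = L_A(s_i)$, after which everything else is a direct translation between definitional clauses.
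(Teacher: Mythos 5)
Your proposal is correct and takes essentially the same route as the paper's own proof: both directions split on the two summands of $e_{pr}$ versus the two kinds of unit run, invoking the main induction hypothesis to exchange $\sem{e_{s_i}}$ and $L_A(s_i)$ for the fork targets. Your final paragraph, checking that $s_i \prec_A p$ together with $p \in Q'$ indeed gives $s_i \prec_A q$ so the induction hypothesis applies, makes explicit a step the paper only glosses in the text preceding the lemma, which is a welcome addition rather than a deviation.
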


The next step is to extend the expression $e_{pr}$ into an expression that describes the labels of (non-trivial) runs that may pass through intermediate states in $Q'$.
More precisely, for $S \subseteq Q'$, we choose $e_{pr}^S$ to describe the language of pomsets that can be read while transitioning from $p$ to $r$, while passing through states in $S$ as intermediates.%
\footnote{The inductive step is ambiguous, in that we do not specify \emph{which} $s \in S$ to choose --- any choice will do.}
\[
    e_{pr}^S =
    \begin{cases}
    e_{pr} & S = \emptyset \\[2mm]
    e_{pr}^{S'} + e_{ps}^{S'} \cdot {\left(e_{ss}^{S'}\right)}^{*} \cdot e_{sr}^{S'} & S = S' \cup \set{s},\ s \not\in S'
    \end{cases}
\]
Intuitively, when $S = \emptyset$, no intermediate states are allowed, so we need a unit run from $p$ to $r$.
When $S$ is non-empty, we single out the state $s$; a run from $p$ to $r$ either does not pass through $s$ (the first term), or it passes through $s$ at least once, where it may loop several times, before continuing on to $r$ (the second term).

The following asserts the correctness of $e_{pr}^{S}$ w.r.t.\ our intention:

\begin{lem}%
\label{lemma:automata-to-expressions-big-step}
Let $p \in Q'$, $r \in Q$, and $S \subseteq Q'$.
Now it holds that $U \in \sem{e_{pr}^S}$ if and only if there exist $\ell \geq 1$ and $s_1, \dots, s_{\ell-1} \in S$ and $U = U_1 \cdots U_n$, such that
\[
    p = s_0 \arun{U_1}_A s_1 \arun{U_2}_A s_2 \arun{U_3}_A \cdots \arun{U_{\ell-1}}_A s_{\ell-1} \arun{U_\ell}_A s_\ell = r
\]
in which each of the above is a unit run.
\end{lem}
\begin{proof}
We proceed by induction on $S$.
The case where $S = \emptyset$, is covered by \cref{lemma:automata-to-expressions-small-step}.

In the inductive step, let $S = S' \cup \set{s}$, with $s \not\in S'$.
For the implication from left to right, suppose $U \in \sem{e_{pr}^S}$.
Then, there are two cases:
\begin{itemize}
    \item
    If $U\in\sem{e^{S'}_{pr}}$, then the claim follows immediately by induction.

    \item
    If $U\in\sem{e^{S'}_{ps} \cdot {(e^{S'}_{ss})}^* \cdot e^{S'}_{sr}}$, then, in accordance with the semantics of sequential composition and Kleene star, there exist $U_0 \in \sem{e^S_{ps}}$, $U_1, \dots, U_n \in \sem{e^{S'}_{ss}}$ and $U_{n+1} \in \sem{e^{S'}_{sr}}$ such that $U = U_0 \cdot U_1 \cdots U_n \cdot U_{n+1}$.
    With $n+2$ applications of the induction hypothesis we then we get an appropriate sequence of unit runs.
\end{itemize}

\noindent
For the converse implication, we can identify all occurrences of $s$ among the $s_1, \dots, s_{\ell-1}$, and partition the sequence into subsequences that do not travel through $s$ as an intermediate state.
We can then apply the induction hypothesis to find that $U \in \sem{e^{S'}_{ps} \cdot {(e^{S'}_{ss})}^* \cdot e^{S'}_{sr}}$.
\end{proof}

The final step is to create an expression $e_p$ that characterises $L_A(p)$, i.e., the pomsets $U$ such that $p \arun{U}_A r$ for some $r \in F$.
Note that, if this run has exited $Q'$ to reach $s$, then $p \prec_A s$; in particular, this means that once a run has left $Q'$, it cannot pass through (or fork into) states in $Q'$ anymore.
Therefore, $p \arun{U}_A r$ only passes through states in $Q'$, or $p \arun{U}_A r$ eventually leaves $Q'$, never to return again.

This guides us to define $e_p$ as follows:
\begin{mathpar}
e_p =
    \sum_{r \in Q' \cap F} e_{pr}^{Q'} +
    \sum_{r \prec_A p} e_{pr}^{Q'} \cdot e_r +
    \begin{cases}
    1 & p \in F \\
    0 & \text{otherwise}
    \end{cases}
\end{mathpar}
Intuitively, the first term accounts for the possibility that the run from $p$ to some accepting state stays within $Q'$; the second term incorporates that we can reach some $r \prec_A p$ (for which we already have the expression $e_r$, by the main induction hypothesis), whence we reach an accepting state.
The last term is there to allow for $p$ to accept immediately.

We can then prove that $e_p$ indeed describes the language of $p$.

\begin{lem}%
\label{lemma:automata-to-expressions-correctness}
For all $p \in Q'$ it holds that $\sem{e_p} = L_A(p)$.
\end{lem}
\begin{proof}
For the inclusion from left to right, suppose that $U \in \sem{e_p}$; we distinguish three cases:
\begin{itemize}
    \item
    If $U \in \sem{e^{Q'}_{pr}}$ for some $r \in Q'\cap F$, then by \cref{lemma:automata-to-expressions-big-step} we have that $p \arun{U}_A r$ and hence, since $r \in F$, we find that $U \in L_A(p)$.

    \item
    If $U \in \sem{e^{Q'}_{pr} \cdot e_r}$ for some $r \prec_A p$, then there exist pomsets $V$ and $W$ such that $U = V \cdot W$ with $V \in \sem{e^{Q'}_{pr}}$ and $W \in \sem{e_r}$.
    From $V \in \sem{e^{Q'}_{pr}}$ it follows by \cref{lemma:automata-to-expressions-big-step} that $p \arun{V}_A r$; from $W \in \sem{e_p}$ it follows by the main induction hypothesis that $W \in L_A(r)$, and hence $r \arun{W}_A r'$ for some $r' \in F$.
    In total, we have that $p \arun{V \cdot W}_A r'$, and therefore $U = V \cdot W \in L_A(p)$.

    \item
    If $U = 1$ and $p \in F$, then $p \arun{U} p \in F$, and hence $U \in L_A(p)$ immediately.
\end{itemize}

\noindent
For the other inclusion, suppose that $U \in L_A(p)$.
Then there exists $r \in F$ such that $p \arun{U}_A r$, and hence, by \cref{lemma:run-deconstruct}, there exist states $p = s_0, \dots, s_n = r$ and pomsets $U_1, \dots, U_n$ such that $U = U_1 \cdots U_n$ and $s_{i-1} \arun{U_i}_A s_i$ is a unit run for all $1 \leq i \leq n$.
If $n = 0$, then $p \arun{U} r$ is trivial, meaning $p \in F$ and $U = 1$, and hence $U \in \sem{e_p}$ immediately.
If $n \geq 1$ and $s_i \in Q'$ for all $0 \leq i < n$, then by \cref{lemma:automata-to-expressions-big-step}, we have $U \in \sem{e^{Q'}_{pr}} \subseteq \sem{e_p}$.
Otherwise, let $s_i$ be the first intermediate state such that $s_i \not\in Q'$.
Since $s_i \preceq_A p$, we have that $s_i \prec_A p$, so from $U_i \cdots U_n \in L_A(s_i)$ it follows by the main induction hypothesis that $U_i \cdots U_n \in \sem{e_{s_i}}$.
Moreover, since $s_j \in Q'$ for all $0 \leq j < i$, we have by \cref{lemma:automata-to-expressions-big-step} that $U_1 \cdots U_{i-1} \in \sem{e^k_{p s_i}}$.
We conclude that $U_1 \cdots U_n \in \sem{e^{Q'}_{ps_i} \cdot e_{s_i}} \subseteq \sem{e_p}$.
\end{proof}

By induction on strict support, we have now proved the following theorem.
\begin{thm}[Automata to expressions]%
  \label{theorem:automata-to-expressions}
  Let $A = \angl{Q, F, \delta, \gamma}$ be a finite and fork-acyclic PA\@.
  For every $q \in Q$, we can construct an $e_q \in \terms$ such that
  $\sem{e_q} = L_A(q)$.
\end{thm}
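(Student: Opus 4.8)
The plan is to prove the statement by well-founded induction on the strict support relation $\prec_A$, which is well-founded precisely because $A$ is both finite and fork-acyclic. Fixing $q \in Q$, I assume as the main induction hypothesis that for every $r \prec_A q$ there is already an sr-expression $e_r$ with $\sem{e_r} = L_A(r)$; the goal is then to produce $e_q$ with $\sem{e_q} = L_A(q)$.

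First I would form the set $Q' = \setcompr{p \in Q}{q \preceq_A p \preceq_A q}$ of states mutually supporting $q$, noting that $q \in Q'$ by reflexivity of $\preceq_A$. I then carry out the McNaughton--Yamada construction exactly as set up above: define the unit-run expressions $e_{pr}$, extend them to the intermediate-state expressions $e_{pr}^S$ by the Kleene-star recursion on $S \subseteq Q'$, and finally assemble $e_p$ for each $p \in Q'$. The only places that invoke the induction hypothesis are the parallel summand of $e_{pr}$ and the second summand of $e_p$, where one references $e_{s_i}$ for fork targets $s_i$ of $p$ and $e_r$ for $r \prec_A p$. I would confirm these are legitimately available, i.e.\ that $s_i \prec_A q$ and $r \prec_A q$: since $p \in Q'$ we have $p \preceq_A q$, so transitivity with $s_i \prec_A p$ (guaranteed by fork-acyclicity) yields $s_i \preceq_A q$, while $q \preceq_A s_i$ would force $p \preceq_A s_i$, contradicting $s_i \prec_A p$; hence $s_i \prec_A q$, and symmetrically $r \prec_A q$.

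With the expressions in hand, I would invoke \cref{lemma:automata-to-expressions-correctness}, which guarantees $\sem{e_p} = L_A(p)$ for every $p \in Q'$. The heavy lifting --- matching the semantics of $e_{pr}$ and $e_{pr}^S$ to unit runs and bounded-intermediate runs via \cref{lemma:automata-to-expressions-small-step,lemma:automata-to-expressions-big-step}, and splitting an accepting run of $p$ into an in-$Q'$ phase and an escape phase into some $r \prec_A p$ --- is already discharged there. Specialising to $p = q \in Q'$ gives $\sem{e_q} = L_A(q)$, completing the induction step and hence the theorem.

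Since the correctness lemmas are assumed, the genuine obstacle in this final wrap-up is bookkeeping rather than construction: one must verify that the induction is well-founded and, more delicately, that the main induction hypothesis is discharged against the right order, so that every subexpression referenced in $e_p$ --- for an \emph{arbitrary} $p \in Q'$, not merely for $p = q$ --- is available. I expect this preorder bookkeeping, confirming $s_i, r \prec_A q$ uniformly over all $p \in Q'$, to be the one place where care is actually required.
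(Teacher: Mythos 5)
Your proposal is correct and follows the paper's own argument essentially verbatim: $\prec_A$-induction with the stated main induction hypothesis, the simultaneous McNaughton--Yamada construction of $e_p$ for all $p$ in $Q' = \setcompr{p \in Q}{q \preceq_A p \preceq_A q}$, and an appeal to \cref{lemma:automata-to-expressions-correctness} specialised to $p = q$. Your explicit verification that $s_i \prec_A q$ and $r \prec_A q$ for arbitrary $p \in Q'$ (so the induction hypothesis genuinely covers every expression referenced in the construction) is bookkeeping the paper leaves implicit, and it is carried out correctly.
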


\cref{theorem:expressions-to-automata,theorem:automata-to-expressions} now combine to form our Kleene theorem.
\begin{thm}[Kleene theorem]
  Let $L$ be a pomset language.
  The following are equivalent:
  \begin{enumerate}
   \item $L$ is series-rational;
   \item $L$ is  accepted by a state of a finite and fork-acyclic PA\@.
   \end{enumerate}
\end{thm}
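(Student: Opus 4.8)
The plan is to derive this Kleene theorem directly as a corollary of the two halves already established, namely \cref{theorem:expressions-to-automata,theorem:automata-to-expressions}; essentially no new work is required beyond unfolding the definitions of the two properties in play. Recall that $L$ being \emph{series-rational} means, by definition, that $L = \sem{e}$ for some $e \in \terms$, while $L$ being \emph{accepted by a state of a finite and fork-acyclic PA} means that there exists a finite, fork-acyclic PA $A$ together with a state $q$ of $A$ such that $L = L_A(q)$.

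For the implication from (1) to (2), I would start from a witnessing expression $e \in \terms$ with $L = \sem{e}$, and apply \cref{theorem:expressions-to-automata} to obtain a finite, fork-acyclic PA $A$ with a state $q$ satisfying $L_A(q) = \sem{e} = L$; this pair $A$ and $q$ then witnesses property (2). For the converse implication from (2) to (1), I would take a finite, fork-acyclic PA $A$ and a state $q$ with $L = L_A(q)$, and invoke \cref{theorem:automata-to-expressions} to produce an sr-expression $e_q \in \terms$ with $\sem{e_q} = L_A(q) = L$; hence $L = \sem{e_q}$ is series-rational by definition.

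Since each direction follows by a single application of a previously proved theorem, I do not expect any genuine obstacle here: the real difficulty of the Kleene theorem has already been discharged in the constructions of \cref{section:expressions-to-automata,section:automata-to-expressions}. The only point demanding any care is purely bookkeeping --- matching the phrasing ``accepted by a state of a finite and fork-acyclic PA'' to the exact hypotheses and conclusions of the two theorems. Both theorems are stated for precisely this same class of automata, so the two descriptions coincide exactly and no side conditions are dropped in either direction.
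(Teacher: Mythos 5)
Your proposal is correct and matches the paper exactly: the paper derives the Kleene theorem by combining \cref{theorem:expressions-to-automata} for the direction from expressions to automata and \cref{theorem:automata-to-expressions} for the converse, just as you do. No further argument is needed beyond these two previously established results.
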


\section{Further work}%
\label{section:further-work}

The decision procedure for language equivalence in pomset automata as outlined in \cref{section:language-equivalence} is not very efficient.
For one, the algorithm computes the atoms generated by the languages of \emph{all} states, rather than just the ones being compared.
Consequently, to use equivalence checking of pomset automata as a proxy for equivalence checking of sr-expressions in an applied setting, a more efficient procedure is necessary.
Bisimulation up-to equivalence~\cite{hopcroft-karp-1971} and extensions thereof would be an obvious class of  optimizations to explore.

Another direction for future work is to transfer the techniques from~\cite{brunet-pous-struth-2017} to fork-acyclic pomset automata.
The advantage of such an approach is that it would most likely not require the input PA to be well-structured.
Furthermore, since the techniques from~\cite{brunet-pous-struth-2017} can also be used to compare safe Petri nets ``up-to sequentialisation'', an adaptation of this algorithm could help us decide an analogous property for pomset automata.

In~\cite{kappe-brunet-luttik-silva-zanasi-2019}, we presented a Kleene theorem that  includes the parallel variant of the Kleene star, sometimes known as \emph{parallel star} or \emph{replication}.
This version of the theorem covers a strictly larger class of pomset automata, including those that are not fork-acyclic.
However, the structural condition that describes this larger class is not overly complicated and we would like to extend the decision procedure of the present paper to work for this class as well.
Doing so would distinguish our algorithm from~\cite{brunet-pous-struth-2017}, which does not (and cannot) include this operator, as it would necessarily construct Petri nets that are unbounded.

\paragraph{\bf Acknowledgements}
We would like to thank the anonymous reviewers for their insightful comments, which helped to substantially improve the initial version of this paper.

\bibliographystyle{alpha}
\bibliography{bibliography}
\appendix
\pagebreak

\section{Lemmas about pomsets and pomset automata}

To prove \cref{lemma:parallel-factorisation-unique}, we first recall the following lemma from~\cite{kappe-brunet-silva-zanasi-2018}.

\begin{lem}%
  \label{lemma:pomset-levi-parallel}
  Let $U, V, W, X$ be sp-pomsets such that $U \parallel V = W \parallel X$.
  There exist sp-pomsets $Y_0, Y_1, Z_0, Z_1$ such that
  \begin{mathpar}
    U = Y_0 \parallel Y_1
    \and
    V = Z_0 \parallel Z_1
    \and
    W = Y_0 \parallel Z_0
    \and
    X = Y_1 \parallel Z_1
  \end{mathpar}
\end{lem}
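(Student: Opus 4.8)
The plan is to work at the level of labelled posets on a single shared carrier and to obtain the four pieces as \emph{restrictions} to intersected blocks. Fix a representative $\lp{p} = \angl{S, \leq, \lambda}$ of the common pomset $P := U \parallel V = W \parallel X$. Unfolding the definition of parallel composition and transporting representatives along the witnessing isomorphisms, the equation $P = U \parallel V$ supplies a partition $S = A_U \sqcup A_V$ such that no element of $A_U$ is $\leq$-comparable to any element of $A_V$, and such that the restrictions satisfy $P|_{A_U} = U$ and $P|_{A_V} = V$. Likewise, $P = W \parallel X$ supplies a partition $S = A_W \sqcup A_X$ with the analogous incomparability and $P|_{A_W} = W$, $P|_{A_X} = X$. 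The crucial observation is that both parallel decompositions are recorded on the \emph{same} carrier as partitions into mutually incomparable blocks.

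First I would intersect the two partitions. Setting $B_{00} = A_U \cap A_W$, $B_{01} = A_U \cap A_X$, $B_{10} = A_V \cap A_W$ and $B_{11} = A_V \cap A_X$ yields a partition of $S$ into four blocks, and I define $Y_0 = P|_{B_{00}}$, $Y_1 = P|_{B_{01}}$, $Z_0 = P|_{B_{10}}$ and $Z_1 = P|_{B_{11}}$. Each of the four target equations then follows by slicing one original block along the other partition: for instance $A_U = B_{00} \sqcup B_{01}$ with $B_{00} \subseteq A_W$ and $B_{01} \subseteq A_X$, so these two blocks are incomparable (being contained in the incomparable $A_W$ and $A_X$), whence $U = P|_{A_U} = Y_0 \parallel Y_1$. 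The equations $V = Z_0 \parallel Z_1$, $W = Y_0 \parallel Z_0$ and $X = Y_1 \parallel Z_1$ are obtained symmetrically by slicing $A_V$, $A_W$ and $A_X$.

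What remains --- and where the real work lies --- is to check that each restriction is again a \emph{series-parallel} pomset, since the statement demands $Y_0, Y_1, Z_0, Z_1 \in \SP(\Sigma)$. For this I would isolate an auxiliary claim: if $P \in \SP(\Sigma)$ and $S$ is partitioned into mutually incomparable blocks $A \sqcup B$, then both $P|_A$ and $P|_B$ lie in $\SP(\Sigma)$. Applying this claim to the decompositions $U = Y_0 \parallel Y_1$ and $V = Z_0 \parallel Z_1$ established above immediately gives that all four pieces are sp-pomsets. The claim itself I would prove by structural induction on $P$ using \cref{lemma:sp-unique-kind}: when $P$ is empty, primitive, or sequential it cannot be parallel, so any partition with both parts non-empty would, by the definition of parallel pomsets, exhibit $P$ as a parallel pomset, a contradiction; hence the partition is trivial and the parts are $P$ and $1$, both in $\SP(\Sigma)$. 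When $P = P_1 \parallel P_2$ with carriers $S_1 \sqcup S_2$, the partition restricts to a partition into incomparable blocks on each of $S_1$ and $S_2$, the induction hypothesis applies to the structurally smaller $P_1$ and $P_2$, and the identity $P|_A = (P_1|_{A \cap S_1}) \parallel (P_2|_{A \cap S_2})$ recombines two sp-pomsets into one. I expect this closure claim to be the main obstacle, as it is the only step that genuinely uses induction and the interplay with \cref{lemma:sp-unique-kind}; the remainder is routine bookkeeping about intersecting the two partitions.
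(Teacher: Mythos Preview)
Your argument is correct. The paper itself does not prove this lemma: it merely recalls it from~\cite{kappe-brunet-silva-zanasi-2018}, so there is no in-paper proof to compare against. Your approach---realising both decompositions as incomparable partitions on a single carrier, intersecting them to obtain the four blocks, and then showing by structural induction via \cref{lemma:sp-unique-kind} that a restriction of an sp-pomset to an incomparable block is again sp---is sound and self-contained. The only step worth tightening in a final write-up is the ``transporting representatives along the witnessing isomorphisms'' remark: you should state explicitly that the isomorphism witnessing $P = U \parallel V$ carries the disjoint carriers of chosen representatives of $U$ and $V$ to a partition of $S$ with the required incomparability, so that both partitions genuinely live on the same $S$.
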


\restateparallelfactorisationunique*
\begin{proof}
  Let $V_1, \dots, V_n$ and $W_1, \dots, W_m$ be sp pomsets that form parallel factorisations of $U$.
  We set out to prove that $\mset{V_1, \dots, V_n} = \mset{W_1, \dots, W_m}$.
  The proof proceeds by induction on $\min(n, m)$.
  In the base, there are two cases to consider.
  If $\min(n, m) = 0$, then the claim follows by \cref{lemma:sp-unique-kind}.
  If $\min(n, m) = 1$, the claim follows by definition of parallel primes.

  For the inductive step, we assume that the claim holds for $n', m'$ with $\min(n', m') < \min(n, m)$, and that $\min(n, m) > 1$.
  By \cref{lemma:pomset-levi-parallel}, we find sp-pomsets $Y_0, Y_1, Z_0, Z_1$ s.t.
  \begin{mathpar}
    V_1 = Y_0 \parallel Y_1
    \and
    V_2 \parallel \dots \parallel V_n = Z_0 \parallel Z_1
    \and
    W_1 = Y_0 \parallel Z_0
    \and
    W_2 \parallel \dots \parallel W_m = Y_1 \parallel Z_1
  \end{mathpar}
  Since $V_1$ is a parallel prime, we find that $Y_0$ or $Y_1$ is empty (but not both).
  In the former case, we find that $V_1 = Y_1$ and $W_1 = Z_0$.
  Now, let $X_1, \dots, X_k$ be a parallel factorisation of $Z_1$.
  By induction, we then find that $\mset{V_2, \dots, V_n} = \mset{Z_0, X_1, \dots, X_k}$, as well as $\mset{W_2, \dots, W_m} = \mset{Y_1, X_1, \dots, X_k}$.
  Consequently, we have that
  \begin{align*}
    \mset{V_1, \dots, V_n}
    &= \mset{V_1, Z_0, X_1 \dots, X_k} \\
    &= \mset{Y_1, W_1, X_1, \dots, X_k} \\
    &= \mset{W_1, \dots, W_m}
  \end{align*}
  In the latter case, we find that $Y_0$ is not empty, and thus $Z_0$ is empty, by the fact that $W_1$ is a parallel prime --- thus, $W_1 = Y_0 = V_1$.
  Furthermore, $V_2 \parallel \dots \parallel V_m = W_2 \parallel \dots \parallel W_m$.
  Hence, by induction it follows that $\mset{V_2, \dots, V_n} = \mset{W_2, \dots, W_m}$; the claim then follows.
\end{proof}

\restateatomicproperties*
\begin{proof}
We first claim that for every $U \in \SP(\Delta)$ there exists at most one $w \in \Sigma^*$ such that $U \in \zeta(w)$.
To see this, suppose that $w, w' \in \Sigma^*$ such that $U \in \zeta(w)$ and $U \in \zeta(w')$.
This means that we can write $w = \ltr{a}_1\cdots\ltr{a}_n$ and $w' = \ltr{a}_1'\cdots\ltr{a}_{n'}'$, as well as $U = U_1\cdots{}U_n$ and $U = U_1'\cdots{}U_{n'}'$ such that for $1 \leq i \leq n$ we have $U_i \in \zeta(\ltr{a}_i)$, and for $1 \leq i \leq n'$ we have $U_i' \in \zeta(\ltr{a}_i')$.
Indeed, by the first restriction on $\zeta$ and \cref{lemma:sequential-factorisation-unique}, we find that $n = n'$, and that for $1 \leq i \leq n$ we have that $U_i = U_i'$.
This also means that for $1 \leq i \leq n$ we have that $U_i \in \zeta(\ltr{a}_i) \cap \zeta(\ltr{a}_i')$; the second restriction on $\zeta$ then tells us that for $1 \leq i \leq n$ we have that $\ltr{a}_i = \ltr{a}_i'$, which entails that $w = w'$.
Hence, we conclude that $w = w'$.

We now treat the claims in the order given.
\begin{enumerate}[label={(\roman*)}]
    \item
    First suppose that $U \in \zeta(L \cap L')$; then there exists a $w \in L \cap L'$ such that $U \in \zeta(w)$.
    We then have that $U \in \zeta(L)$ and $U \in \zeta(L')$, meaning that $U \in \zeta(L) \cap \zeta(L')$.

    For the other inclusion, let $U \in \zeta(L) \cap \zeta(L')$.
    Then there exist $w \in L$ and $w' \in L'$ such that $U \in \zeta(w)$ and $U \in \zeta(w')$; by the above, we conclude that $w = w'$, and hence $w \in L \cap L'$, which means that $U \in \zeta(L \cap L')$.

  \item
    For the third equality, first suppose that $U \in \zeta(L \cap L')$.
    In that case there exists a $w \in L \setminus L'$ such that $U \in \zeta(w)$.
    In particular, this means that $w \in L$ and hence $U \in \zeta(L)$.
    Furthermore, suppose that $U \in \zeta(L')$; we then find a $w' \in L'$ such that $U \in \zeta(w')$.
    By the above, this would mean that $w = w'$, and hence $w \in L'$ --- a contradiction.
    We therefore know that $U \not\in \zeta(L')$, and hence $U \in \zeta(L) \setminus \zeta(L')$.

    For the other inclusion, let $U \in \zeta(L) \setminus \zeta(L')$.
    Then we obtain $w \in L$ such that $U \in \zeta(L)$, and furthermore there cannot be a $w' \in L'$ with $U \in \zeta(w')$, which in particular means that $w \not\in L'$.
    We conclude that $w \in L \setminus L'$, and hence $U \in \zeta(L \setminus L')$.

  \item
    Lastly, if $L = \emptyset$, then $\zeta(L) = \emptyset$ by definition.
    For the other direction, suppose that $L \neq \emptyset$, i.e., that $w \in L$.
    By the second condition, we have that $\zeta(\ltr{a}) \neq \emptyset$ for all $\ltr{a} \in \Sigma$, and hence $\zeta(w)$ cannot be empty, which means that $\zeta(L)$ cannot be empty either.
    \qedhere
  \end{enumerate}
\end{proof}

\restatesemvssacc*
\begin{proof}
  The proof from left to right proceeds by induction on the construction of $\sacc$.
  In the base, we have $e = 1$ or $e = f^*$ for some $f \in \terms$; in both cases, we find $1 \in \sem{e}$ by definition.

  For the inductive step, there are three cases to consider.
  First, if $e \in \sacc$ because $e = f + g$ and $f \in \sacc$ or $g \in \sacc$, then $1 \in \sem{f}$ or $1 \in \sem{g}$ by induction, and hence $1 \in \sem{e}$.
  Second, if $e \in \sacc$ because $e = f \cdot g$ and $f, g \in \sacc$, then by induction $1 \in \sem{f}$ and $1 \in \sem{g}$, meaning that $1 = 1 \cdot 1 \in \sem{e}$.
  Third, if $e \in \sacc$ because $e = f \parallel g$ and $f, g \in \sacc$, then by induction $1 \in \sem{f}$ and $1 \in \sem{g}$, meaning that $1 = 1 \parallel 1 \in \sem{e}$.

  The proof from right to left proceeds by induction on the structure of $e$.
  In the base, the only case to consider is $e = 1$, where we find $e \in \sacc$ immediately.

  For the inductive step, there are four cases to consider.
  First, $e = f + g$, then $1 \in \sem{e}$ implies that $1 \in \sem{f}$ or $1 \in \sem{g}$.
  In the former case, $f \in \sacc$ by induction, and hence $e \in \sacc$ by definition.
  Second, if $e = f \cdot g$, then $1 \in \sem{e}$ implies that $1 \in \sem{f}$ and $1 \in \sem{g}$.
  By induction, we then find that $f, g \in \sacc$, and thus $e \in \sacc$ by definition.
  The case where $e = f \parallel g$ can be argued similarly.
  Lastly, if $e = f^*$, then $e \in \sacc$ by definition, and there is nothing to prove.
\end{proof}

\restaterundeconstruct*
\begin{proof}
  We proceed by induction on $\rightarrow_A$.
  In the base, there are two cases to consider.
  \begin{itemize}
  \item
    If $q \arun{U}_A q'$ because $q = q'$ and $U = 1$, then we choose $\ell = 0$ to satisfy the claim.

  \item
    If $q \arun{U}_A q'$ because $U = \ltr{a}$ for some $\ltr{a} \in \Sigma$ and $q' \in \delta(q, \ltr{a})$, then choose $\ell = 1$ and $U_1 = \ltr{a}$.
  \end{itemize}

  \noindent
  For the inductive step, there are also two cases to consider.
  \begin{itemize}
  \item
    On the one hand, if $q \arun{U}_A q'$ because $U = V \cdot W$ and there exists a $q'' \in Q$ such that $q \arun{V}_A q''$ and $q'' \arun{W}_A q'$, then we can obtain the necessary states and pomsets by induction from those runs.

  \item
    On the other hand, if $q \arun{U}_A q'$ because $U = U_1 \parallel \dots \parallel U_n$ and there exist $r_1, \dots, r_n \in Q$ as well as $r_1', \dots, r_n' \in F$ such that for $1 \leq i \leq n$ we have $r_i \arun{U_i'} r_i'$, and furthermore $q' \in \gamma(q, \mset{r_1, \dots, r_n})$, then we can choose $\ell = 1$ and $U_1 = U$ to satisfy the claim.
    \qedhere
  \end{itemize}
\end{proof}

\section{Lemmas towards the decision procedure}

\restateassociativity*
\begin{proof}
Let $p, q \in Q$ and $\phi \in \M(Q)$ such that $\gamma(p, \phi) \neq \emptyset$ and $q \in \phi$.
Suppose, towards a contradiction, that $U \in L_A(q)$ is parallel.
We then know that there exists a $q' \in F$ with $q \arun{U}_A q'$, which is a parallel unit run by \cref{lemma:well-structured-no-confusion}.
In particular, $q' \in \gamma(q, \phi')$ for some $\phi' \in \M(Q)$.
However, the latter contradicts the fact that $q' \in F$, by the premise that $A$ is well-structured.
It thus follows that $U$ must be either primitive or sequential.
\end{proof}

\restatesubstitutionvsruns*
\begin{proof}
  It suffices to prove that, for $q, q' \in Q$ as well as $U \in \SP(\Sigma)$, it holds that $q \arun{U}_A q'$ is a unit run if and only if there exists an $\ltr{a} \in \Delta$ with $U \in \zeta(\ltr{a})$ and $q' \in \delta'(q, \ltr{a})$; straightforward inductive arguments on run length (for the forward inclusion, using \cref{lemma:run-deconstruct}) and word length (for the backwards inclusion) then complete the proof.

  First suppose that $q \arun{U}_A q'$ is a unit run.
  We have two cases to consider.
  \begin{itemize}
  \item
    If $q \arun{U}_A q'$ is a sequential unit run, then $U = \ltr{b}$ for some $\ltr{b} \in \Sigma$, and $q' \in \delta(q, \ltr{b})$.
    We can then choose $\ltr{a} = \ltr{b}$ to find that $q' \in \delta'(q, \ltr{a})$ by definition of $\delta'$.

    \item
    If $q \arun{U}_A q'$ is a parallel unit run, then $U = U_1 \parallel \cdots \parallel U_n$ and we obtain $q_1, \ldots, q_n \in Q$ such that $q' \in \gamma(q, \mset{q_1, \dots, q_n})$ as well as $U_i \in L_A(q_i)$.
    For $1 \leq i \leq n$, we then choose $\alpha_i = \setcompr{r \in Q'}{U_i \in L_A(r)} \in \At_{A[Q']}$, and set $\ltr{a} = \mset{\alpha_1, \dots, \alpha_n} \in \Delta$.
    We now find that $U \in \zeta(\ltr{a})$ as well as $q' \in \delta'(q, \mset{\alpha_1, \dots, \alpha_n})$.
  \end{itemize}
  Conversely, suppose $\ltr{a} \in \Delta$ s.t.\ $U \in \zeta(\ltr{a})$ and $q' \in \delta'(q, \ltr{a})$.
  We have two cases to consider.
  \begin{itemize}
  \item
    If $\ltr{a} \in \Sigma$, then $q' \in \delta(q, \ltr{a})$ and $U = \ltr{a}$.
    Hence $q \arun{U}_A q'$ is a sequential unit run.

  \item
    If $\ltr{a} = \mset{\alpha_1, \dots, \alpha_n}$, then $q' \in \gamma(q, \mset{q_1, \dots, q_n})$ with $q_i \in \alpha_i$.
    Since $U \in \zeta(\ltr{a})$, also $U = U_1 \parallel \cdots \parallel U_n$ with $U_i \in L_A(q_i)$.
    Thus, $q \arun{U}_A q'$ is a parallel unit run.
    \qedhere
  \end{itemize}
\end{proof}

\restatesubstitutionatomic*
\begin{proof}
  If $\ltr{a} \in \Delta$ and $U \in \zeta(\ltr{a})$, then $U = \ltr{b}$ for some $\ltr{b} \in \Sigma$, or $U \in L_A(\alpha_1) \parallel \cdots \parallel L_A(\alpha_n)$ for atoms $\alpha_1, \dots, \alpha_n \in \At_{A[Q']}$.
  In the former case, we know that $U$ is primitive, and hence a sequential prime.
  In the latter case, we know that none of these atoms contain the empty set and $n \geq 2$ (since $A$ is well-structured); it follows that $U$ is a sequential prime.

  Furthermore, suppose $\ltr{a}, \ltr{b} \in \Delta$ such that $U \in \zeta(\ltr{a})$ and $U \in \zeta(\ltr{b})$.
  By the above, we know that $U$ is either primitive and $\ltr{a}, \ltr{b} \in \Sigma$, or $U$ is parallel and $\ltr{a}, \ltr{b} \in \Delta \setminus \Sigma$.
  In the former case, it follows that $\ltr{a} = U = \ltr{b}$ by definition of $\zeta$.
  In the latter case, we know that $U = U_1 \parallel \cdots \parallel U_n$ and $U = U_1' \parallel \cdots \parallel U_{n'}'$ as well as $\ltr{a} = \mset{\alpha_1, \dots, \alpha_n}$ and $\ltr{b} = \mset{\alpha_1', \dots, \alpha_{n'}'}$, such that for $1 \leq i \leq n$ we have that $U_i \in L_A(\alpha_i)$ and for $1 \leq i \leq n'$ we have that $U_i' \in L_A(\alpha_i')$.
  Because $A$ is well-structured, each of the $U_i$ and $U_i'$ is a parallel prime; by \cref{lemma:parallel-factorisation-unique}, we then find that $n = n'$, and without loss of generality we have that $U_i = U_i'$.
  Since $U_i \in L_A(\alpha_i) \cap L_A(\alpha_i')$, it follows that $\alpha_i = \alpha_i'$, and hence $\ltr{a} = \ltr{b}$.
\end{proof}

\section{Lemmas about well-structured automata}%
\label{sec:proofs-well-struct}

The following characterisation of runs labelled by the empty pomset will be useful.

\begin{fact}%
\label{fact:characterise-empty-pomset-runs}
Let $\leadsto_A$ be the smallest relation on $Q$ satisfying the rules
\begin{mathpar}
\inferrule{~}{%
  p \leadsto_A p
}
\and
\inferrule{%
  p \leadsto_A r \\
  r \leadsto_A q
}{%
  p \leadsto_A q
}
\and
\inferrule{%
  q \in \gamma(p, \mset{q_1, \dots, q_n})\and
  \forall 1 \leq i \leq n.\ q_i \leadsto_A q_i' \in F
}{%
  p \leadsto_A q
}
\end{mathpar}
Now $p \leadsto_A q$ holds if and only if $p \arun{1}_A q$.
\end{fact}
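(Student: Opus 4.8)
The plan is to prove the two implications separately, each by a routine induction that mirrors the inductive structure of the relation on the respective side, and to reduce everything to the unit laws for $1$ and a simple event-counting fact about the empty pomset.

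For the direction from $p \leadsto_A q$ to $p \arun{1}_A q$, I would induct on the derivation of $p \leadsto_A q$. The reflexive case $p \leadsto_A p$ is discharged immediately by the trivial run rule $p \arun{1}_A p$. In the transitive case, where $p \leadsto_A r$ and $r \leadsto_A q$, the induction hypothesis gives $p \arun{1}_A r$ and $r \arun{1}_A q$; composing these runs yields $p \arun{1 \cdot 1}_A q$, and since $1$ is the unit of sequential composition we have $1 \cdot 1 = 1$, so $p \arun{1}_A q$. In the fork case, where $q \in \gamma(p, \mset{q_1, \dots, q_n})$ and each $q_i \leadsto_A q_i'$ for some $q_i' \in F$, the induction hypothesis gives $q_i \arun{1}_A q_i'$ with $q_i' \in F$; the parallel unit run rule then produces $p \arun{1 \parallel \dots \parallel 1}_A q$, and since $1$ is also the unit of parallel composition this is exactly $p \arun{1}_A q$. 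The case $n = 0$ (a fork into $\mempty$) is subsumed here, as the empty parallel composition is $1$.

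For the converse, from $p \arun{1}_A q$ to $p \leadsto_A q$, I would induct on the derivation witnessing $p \arun{U}_A q$, carrying the side condition $U = 1$ through the case analysis. The trivial run gives $p = q$ and is handled by reflexivity of $\leadsto_A$. The sequential unit run cannot occur, since its label is primitive and hence nonempty. In the composition case $p \arun{U \cdot V}_A q$, the key observation is that $U \cdot V = 1$ forces $U = V = 1$, so that both sub-runs are again labelled by $1$; the induction hypothesis then yields $p \leadsto_A r$ and $r \leadsto_A q$, and transitivity of $\leadsto_A$ closes the case. In the parallel unit run case, the analogous observation that $U_1 \parallel \dots \parallel U_n = 1$ forces each $U_i = 1$ lets the induction hypothesis deliver $q_i \leadsto_A q_i'$ with $q_i' \in F$, after which the fork rule of $\leadsto_A$ gives $p \leadsto_A q$.

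The two inductions are genuinely routine; the only point requiring a moment's care is the factorisation observation in the converse direction, namely that the empty pomset admits no nontrivial sequential or parallel decomposition. This is immediate from the event-counting argument already used for \cref{lemma:sp-unique-kind}, since sequential and parallel composition take disjoint unions of carriers while $1$ has an empty carrier, forcing every factor to be empty. I expect this to be the main (if minor) obstacle, since it is precisely what guarantees that the sub-runs obtained by deconstructing a composite or parallel run are again labelled by $1$, keeping the induction hypothesis applicable at each step.
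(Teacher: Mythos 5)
Your proposal is correct and follows essentially the same route as the paper's proof: each implication by induction on the respective derivation, with the trivial, transitive/composite, and fork/parallel cases handled exactly as in the paper, and the converse direction hinging on the same observation that $1 = U \cdot V$ (resp.\ $1 = U_1 \parallel \dots \parallel U_n$) forces all factors to be empty. Your explicit remark about the $n = 0$ fork and the event-counting justification of the factorisation step are minor elaborations the paper leaves implicit, but the argument is the same.
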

\begin{proof}
The first inclusion is proved by induction on the construction of $\leadsto_A$.
In the base, $p \leadsto_A q$ because $p = q$, thus $p \arun{1}_A q$ immediately.
In the inductive step, there are two cases.
\begin{itemize}
    \item
    If $p \leadsto_A q$ because there exists an $r \in Q$ such that $p \leadsto_A r$ and $r \leadsto_A q$, then by induction we have that $p \arun{1}_A r$ and $r \arun{1}_A q$.
    It then follows that $p \arun{1}_A q$.

    \item
    Suppose $p \leadsto_A q$ because there exist $q_1, \dots, q_n \in Q$ such that $q \in \gamma(p, \mset{q_1, \dots, q_n})$ and $q_1', \dots, q_n' \in F$ such that for $1 \leq i \leq n$ we have that $q_i \arun{1}_A q_i'$.
    By induction, we find for $1 \leq i \leq n$ that $q_i \arun{1}_A q_i'$.
    We can then conclude that $p \arun{1}_A q$.
\end{itemize}

\noindent
For the other inclusion, we proceed by induction on the construction of $\arun{1}_A$.
In the base, we have $p \arun{1}_A q$ because $p = q$, and so we are done.
The case where $p \arun{1}_A q$ is a sequential unit run can be disregarded.
For the inductive step, there are two cases to consider.
\begin{itemize}
    \item
    If $p \arun{1}_A q$ because $1 = U \cdot V$ and there exists an $r \in Q$ such that $p \arun{U}_A r$ and $r \arun{V}_A q$, then note that $U = V = 1$.
    Hence $p \leadsto_A r$ and $r \leadsto_A q$ by induction, meaning $p \leadsto_A q$.

    \item
    Suppose $p \arun{1}_A q$ because there exist $q_1, \dots, q_n \in Q$ such that $q \in \gamma(p, \mset{q_1, \dots, q_n})$ and there exist $q_1' \dots, q_n' \in F$ such that $1 = U_1 \parallel \cdots \parallel U_n$ and for $1 \leq i \leq n$ we have $q_i \arun{U_i}_A q_i'$.
    By induction, we find for $1 \leq i \leq n$ that $q_i \leadsto_A q_i'$.
    We can then conclude that $p \leadsto_A q$.
    \qedhere
\end{itemize}
\end{proof}

\noindent
To prove \cref{lemma:well-structured-iff-triple}, the following is useful.

\begin{fact}%
\label{fact:parsimony-vs-empty-pomset}
Let $A$ be a $1$-forking PA such that if $q$ is a fork target, then $q \not\in F$.
For any states $p, q \in Q$, we have $p \arun{1}_A q$ iff $p = q$.
Furthermore, $1 \in L_A(p)$ iff $p \in F$.
\end{fact}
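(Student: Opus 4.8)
The plan is to handle the two claims in sequence, deriving the second from the first. For the biconditional $p \arun{1}_A q \iff p = q$, the direction from right to left is immediate: the trivial-run rule gives $p \arun{1}_A p$ for every state $p$. For the converse, the natural move is to invoke \cref{fact:characterise-empty-pomset-runs}, which tells us that $p \arun{1}_A q$ holds exactly when $p \leadsto_A q$. Thus it suffices to prove, by induction on the construction of $\leadsto_A$, that $p \leadsto_A q$ forces $p = q$.

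The base case ($p \leadsto_A p$) is trivial, and the transitive case follows directly from the induction hypothesis applied to the two subderivations, since $p = r$ and $r = q$ give $p = q$. The crux is the fork case, where $q \in \gamma(p, \mset{q_1, \dots, q_n})$ and $q_i \leadsto_A q_i' \in F$ for each $i$. Here I would argue that this case is vacuous: because $A$ is $1$-forking and $\gamma(p, \mset{q_1, \dots, q_n}) \neq \emptyset$, we have $n \geq 1$, so some $q_1$ exists and is a fork target of $p$. By the standing hypothesis that fork targets are non-accepting, $q_1 \not\in F$. On the other hand, the induction hypothesis applied to $q_1 \leadsto_A q_1'$ yields $q_1 = q_1'$, and since $q_1' \in F$ this gives $q_1 \in F$, a contradiction. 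Hence no derivation of $\leadsto_A$ ends with the fork rule, and $p \leadsto_A q$ always reduces to $p = q$.

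For the second statement, I would simply unfold the definition of the language: $1 \in L_A(p)$ means there is some $q' \in F$ with $p \arun{1}_A q'$. By the first part this forces $q' = p$, so $p \in F$; conversely, if $p \in F$ then the trivial run $p \arun{1}_A p$ witnesses $1 \in L_A(p)$. The only delicate point in the whole argument is the fork case above, specifically the need to combine the $1$-forking assumption (to guarantee at least one fork target exists) with the non-accepting-fork-target hypothesis to reach the contradiction; once that interplay is spelled out, the rest is routine bookkeeping.
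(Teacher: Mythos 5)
Your proposal is correct and follows essentially the same route as the paper's proof: both reduce $p \arun{1}_A q$ to $p \leadsto_A q$ via \cref{fact:characterise-empty-pomset-runs}, induct on the derivation, and dispose of the fork case by the same contradiction --- $1$-forking guarantees a fork target $q_i$ exists, the induction hypothesis forces $q_i = q_i' \in F$, contradicting the non-accepting-fork-target hypothesis --- before deriving the language statement from the first part exactly as the paper does. No gaps.
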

\begin{proof}
For the first part, the implication from right to left holds by definition of $\runrel_A$.
For the other implication, note that $p \arun{1}_A q$ is equivalent to $p \leadsto_A q$ by \cref{fact:characterise-empty-pomset-runs}.
We proceed by induction on the derivation $p \leadsto_A q$.
In the base, we have that $p \leadsto_A q$ because $p = q$, and we are done.
In the inductive step, there are two cases to consider.
\begin{itemize}
    \item
    If $p \leadsto_A q$ because $p \leadsto_A r$ and $r \leadsto_A q$ for some $r \in Q$, then by induction $p = r = q$.

    \item
    If $p \leadsto_A q$ because there exist $q_1, \dots, q_n \in Q$ such that $r \in \gamma(p, \mset{q_1, \dots, q_n})$, and $q_1', \dots, q_n' \in F$ such that for $1 \leq i \leq n$ we have that $q_i \leadsto_A q_i'$, then by induction we have for $1 \leq i \leq n$ that $q_i = q_i'$.
    Since $A$ is $1$-forking, it follows that $q_1 = q_1' \in F$.
    But this contradicts our premise, because $q_1$ is a fork target; we can therefore exclude this case.
\end{itemize}

\noindent
As for the second part, the implication from right to left holds by definition of $L_A(p)$.
For the other implication, note that if $1 \in L_A(p)$, then $p \arun{1}_A q$ for some $q \in F$.
Since $p = q$ by the first part, we conclude that $p \in F$.
\end{proof}

\restatewellstructuredifftriple*
\begin{proof}
By definition, an automaton is well-structured iff it is $2$-forking, flat-branching, and satisfies the property that for all $p \in Q$ and $q \in \phi \in \M(Q)$ with $\gamma(p, \phi) \neq \emptyset$, it holds that $q\notin F$.
If the automaton is parsimonious, then this property is satisfied.
By \cref{fact:parsimony-vs-empty-pomset}, if $A$ is well-structured, then $q\notin F\Leftrightarrow 1\notin L_A(q)$, so it is in particular parsimonious.
\end{proof}

\restateepsilondecidable*
\begin{proof}
By \cref{fact:characterise-empty-pomset-runs}, it suffices to show that $\leadsto_A$ is decidable.
Since $A$ finite, we can build $\leadsto_A$ by saturation, starting from the identity and adding pairs until we reach a fixpoint; this fixpoint is reached after finitely many steps as a result of boundedness.
\end{proof}

\restateweakimplementationiffimplementation*
\begin{proof}
Let $A = \angl{Q, F, \delta, \gamma}$, and $A' = \angl{Q', F' ,\delta' ,\gamma'}$.
We can assume without loss of generality that $Q$ and $Q'$ are disjoint.
We define $A'' = \angl{Q'', F'', \delta'', \gamma''}$ as follows:
\begin{mathpar}
Q'' = Q' \uplus Q
\and
F'' = F' \uplus \setcompr{q \in Q}{1 \in L_A(q)}
\\
\delta''(q'',\ltr a) =
    \begin{cases}
    \delta'(q'', \ltr a) & q'' \in Q' \\[2mm]
    \bigcup_{x\in Q_{q''}} \delta'(x,\ltr a) & q'' \in Q
    \end{cases}
\and
\gamma''(q'', \phi) =
    \begin{cases}
    \gamma'(q'', \phi) & q'' \in Q' \\[2mm]
    \bigcup_{x \in Q_{q''}} \gamma'(x, \phi) & q'' \in Q
    \end{cases}
\end{mathpar}
Here, $Q_{q''}$ is the set of states of $Q$ implementing $q'' \in Q$, by weak implementation.
Note that if $q'' \in Q$, then there are finitely many $\phi \in \M(Q'')$ such that $\gamma''(q'', \phi) \neq \emptyset$, because $Q_{q''}$ is finite, and for each $x \in Q_{q''}$ there are finitely many $\phi \in \M(Q')$ such that $\gamma'(x, \phi) \neq \emptyset$.

We now prove that $A''$ implements $A$.
First, we show for $q \in Q$ that $L_A(q) = L_{A''}(q)$.
\begin{itemize}
    \item
    For the inclusion from left to right, suppose that $U \in L_A(q)$.
    If $U = 1$, then $q \in F''$ by definition of $F''$, and thus $U = 1 \in L_{A''}(q'')$ immediately.

    On the other hand, suppose $U \neq 1$.
    There exists an $x \in Q_q$ such that $U \in L_{A'}(x)$, because $A'$ weakly implements $A$.
    Hence, there must exist a $q' \in F'$ such that $x \arun{U}_{A'} q'$; a straightforward inductive argument then shows that $x \arun{U}_{A''} q'$.
    Now, since $U \neq 1$, the latter run must be non-trivial.
    We thus find $x' \in Q''$ and $U = V \cdot W$ such that $x \arun{V}_{A''} x'$ is a unit run, and $x' \arun{W}_{A''} q'$.
    By construction of $A''$, it follows that $q \arun{V}_{A''} x'$ is a unit run too, and hence $q \arun{U}_{A''} q'$.
    Since $q' \in F''$, it follows that $U \in L_{A''}(q)$.

    \item
    For the inclusion from right to left, suppose that $U \in L_{A''}(q)$.
    There must then exist a $q' \in F''$ such that $q \arun{U}_{A''} q'$.
    Now, if this run is trivial, then $q = q'$ and $U = 1$.
    This means that $q \in F''$; since $q \in Q$, it follows that $q \not\in F'$, and hence $U = 1 \in L_A(q)$.

    Otherwise, if $q \arun{U}_{A''} q'$ is non-trivial, then there exists a $q'' \in Q''$ and $U = V \cdot W$ such that $q \arun{V}_{A''} q''$ is a unit run, and $q'' \arun{W}_{A''} q'$.
    By construction of $A''$, we find an $x \in Q_q$ such that $x \arun{V}_{A''} q''$ is a unit run, and hence $x \arun{U}_{A''} q'$.
    A simple inductive argument then shows that $x \arun{U}_{A'} q'$ and $q' \in F'$, and thus $U \in L_{A'}(x)$.
    Because $A'$ weakly implements $A$, we conclude that $U \in L_A(q)$.
\end{itemize}

\noindent
We should also show that if $A$ is fork-acyclic, then so is $A''$.
To this end, note that fork-acyclicity of $A$ already implies fork-acyclicity of $A'$; hence, we need only show that our construction above preserves fork-acyclicity.
The following property is useful.
\begin{fact}%
\label{fact:support-under-new}
For any $q \in Q'$ and $q' \in Q''$, if $q' \preceq_{A''} q$ then $q' \in Q'$ and $q' \preceq_{A'} q$.
\end{fact}
\begin{proof}
It suffices to argue that if $q' \preceq_{A''} q$ arises from one of the rules that generate $\preceq_{A''}$, then $q' \in Q'$ and $q' \preceq_{A'} q$.
This is straightforward, for if $q \in Q'$ then $\delta''(q, \ltr{a})$ coincides with $\delta'(q, \ltr{a})$ for all $\ltr{a} \in \Sigma$, and $\gamma''(q, \phi)$ coincides with $\gamma'(q, \phi)$ for all $\phi \in \M(Q'')$.
\end{proof}
Let $r,q \in Q''$ and $\phi \in \M(Q'')$ with $r \in \phi$ and $\gamma''(q,\phi) \neq \emptyset$.
By construction of $A''$, we have that $r \in Q'$.
If $q \preceq_{A''} r$, then by \cref{fact:support-under-new} we have that $q \in Q'$ and $q \preceq_{A'} r$.
In that case, also $\gamma'(q, \phi) \neq \emptyset$, and hence $r \preceq_{A'} q$.
This, however, contradicts the premise that $A'$ is fork-acyclic; we thus have that $r \prec_{A''} q$.
Hence, $A''$ must be fork-acyclic as well.

We should also show that our construction preserves boundedness, $n$-forking, flat-branching and parsimony.
For boundedness, the following is useful.
\begin{fact}%
\label{fact:support-under-old}
For any $q \in Q$ and $q' \in Q''$, if $q' \preceq_{A''} q$, then $q = q'$, or $q' \in Q'$ such that $q' \preceq_{A'} x$ for some $x \in Q_q$.
\end{fact}
\begin{proof}
    To see this, we proceed by induction on the construction of $\preceq_{A''}$.
    In the base, either $q = q'$ by reflexivity, or one of three cases applies.
    \begin{itemize}
        \item
        If $q' \preceq_{A''} q$ because $q' \in \delta''(q, \ltr{a}) \subseteq Q'$ for some $\ltr{a} \in \Sigma$, then $q' \in \delta'(x, \ltr{a})$ for some $x \in Q_q$, and therefore $q' \preceq_{A'} x$.

        \item
        If $q' \preceq_{A''} q$ because $q \in \gamma''(q, \phi)$ for some $\phi \in \M(Q'')$, then $q' \in \gamma'(x, \phi) \subseteq Q'$ for some $x \in Q_q$, and therefore $q' \preceq_{A'} x$.

        \item
        If $q' \preceq_{A''} q$ because there exists a $\phi \in \M(Q'')$ such that $q' \in \phi$ and $\gamma''(q, \phi) \neq \emptyset$, then $\phi \in \M(Q')$ and there exists an $x \in Q_q$ such that $\gamma'(x, \phi) \neq \emptyset$, and hence $q' \preceq_{A'} x$.
    \end{itemize}
    For the inductive step, $q' \preceq_{A''} q$ because $q'' \in Q''$ and $q' \preceq_{A''} q''$ and $q'' \preceq_{A''} q$.
    By induction, either $q = q''$ (in which case the claim follows by applying induction to $q' \preceq_{A''} q'' = q$), or $q'' \preceq_{A'} x$ for some $x \in Q_q$, in which case $q' \preceq_{A'} q''$ by \cref{fact:support-under-new}.
\end{proof}
By \cref{fact:support-under-old} we have that the support of $q \in Q$ in $A''$ is given by $\set{q} \cup \bigcup_{x \in Q_q} \pi_{A'}(x)$; if $A'$ is bounded, this set must be finite.
Furthermore, by~\ref{fact:support-under-new}, we know that the support of $q \in Q'$ in $A''$ is given by the support of $q$ in $A'$, which is finite if $A'$ is bounded.
Thus, if $A'$ is bounded, then so is $A''$.

We now prove that our construction preserves $n$-forking, parsimony and flat-branching.
\begin{itemize}
    \item
    If $A'$ is $n$-forking, let $\phi \in \M(Q'')$ and $q \in Q''$ such that $\gamma''(q,\phi)\neq\emptyset$.
    By construction there exists $q' \in Q'$ such that $\gamma'(q',\phi) \neq \emptyset$.
    Since $A'$ is $n$-forking we may conclude $|\phi| \geq n$.

    \item
    Assume, towards a contradiction, that $A''$ is not parsimonious.
    Then there exist $q \in Q''$ and $r \in \phi \in \M(Q'')$ such that $\gamma''(q, \phi) \neq \emptyset$ but $1 \in L_{A''}(r)$.
    By construction there exists $q' \in Q'$ such that $\gamma'(q',\phi) \neq \emptyset$.
    We also know that it must be the case that $r \in Q'$, hence $1 \in L_{A'} (r)$.
    This contradicts the premise that $A'$ is parsimonious.

    \item
    Assume, towards a contradiction, that $A$ is not flat-branching.
    There must then exist $q \in Q''$ and $\phi, \psi \in \M(Q'')$ with $r \in \phi$ such that $\gamma''(q, \phi) \neq \emptyset$ and $\gamma''(r, \psi) \cap F \neq \emptyset$.
    By construction of $A''$, we find that $\phi \in \M(Q')$, and there exists a $q' \in Q'$ such that $\gamma'(q', \phi) \neq \emptyset$.
    Also by construction of $A''$ and the fact that $r \in \phi$ (and hence $r \in Q'$), we find that $\psi \in \M(Q')$ and $\gamma'(r, \psi) \cap F' \neq \emptyset$.
    This contradicts the premise that $A'$ is flat-branching.
    \qedhere
\end{itemize}
\end{proof}

\subsection{Ensuring parsimony}

\restateparsimonificationcorrectness*
\begin{proof}
We relate runs in $A$ to runs in $A_0$ and vice versa, as follows:

\begin{fact}
For any run $p \arun{U}_A q$ we have $p \arun{U}_{A_0} q$, and if in addition we know that $1 \in L_A(q)$ and $U \neq 1$, then $p \arun U_{A_0} \top$.
\end{fact}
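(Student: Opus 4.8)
The plan is to prove both conjuncts simultaneously by induction on the derivation of the run $p \arun{U}_A q$, i.e.\ on the structure of $\rightarrow_A$. I will refer to the first conjunct (``$p \arun{U}_{A_0} q$'') as claim~(1) and the second (``if $1 \in L_A(q)$ and $U \neq 1$, then $p \arun{U}_{A_0} \top$'') as claim~(2). The two are genuinely entangled: the only accepting state of $A_0$ is $\top$, so reconstructing a parallel unit run of $A$ inside $A_0$ forces each of its threads to terminate in $\top$ rather than in some state of $F$, which is exactly what claim~(2) provides inductively. Conversely, claim~(2) will lean on claim~(1) to transport the non-terminal prefixes of a run.

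First I would dispatch the easy cases. For a trivial run ($U = 1$, $p = q$) claim~(1) is immediate and claim~(2) is vacuous. For a sequential unit run, $q \in \delta(p, \ltr{a})$ yields $q \in \delta_0(p, \ltr{a})$ by the first rule generating $\delta_0$, giving claim~(1); if moreover $1 \in L_A(q)$, the second rule generating $\delta_0$ gives $\top \in \delta_0(p, \ltr{a})$, settling claim~(2). For a composite run $p \arun{V}_A q'' \arun{W}_A q$ with $U = V \cdot W$, claim~(1) follows by composing the two inductively obtained runs in $A_0$. For claim~(2) I would split on whether $W = 1$: if $W \neq 1$ I apply claim~(2) inductively to the second factor (reaching $\top$) and claim~(1) to the first, then compose; if $W = 1$ then $U = V \neq 1$ and $1 \in L_A(q'')$ (since $q'' \arun{1}_A q$ prepends to a witnessing run for $1 \in L_A(q)$), so claim~(2) applies directly to the first factor.

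The crux is the parallel unit run $p \arun{U_1 \parallel \cdots \parallel U_n}_A q$ arising from $q \in \gamma(p, \mset{q_1, \dots, q_n})$ with each $q_i \arun{U_i}_A q_i' \in F$. Here I would partition the fork targets by whether their thread is empty, setting $\phi = \mset{q_i : U_i \neq 1}$ and $\psi = \mset{q_i : U_i = 1}$, so that $\phi \sqcup \psi = \mset{q_1, \dots, q_n}$. Each $q_i \in \psi$ satisfies $1 \in L_A(q_i)$ (its thread is labelled $1$ and ends in $F$), so the side condition $\forall r \in \psi.\ 1 \in L_A(r)$ of the rules generating $\gamma_0$ holds. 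For claim~(1) I invoke the first rule for $\gamma_0$ to get $q \in \gamma_0(p, \phi)$; for each $q_i \in \phi$ (where $U_i \neq 1$ and $1 \in L_A(q_i')$ since $q_i' \in F$) claim~(2) gives $q_i \arun{U_i}_{A_0} \top$, so the parallel-unit rule of $A_0$ yields a run from $p$ to $q$. For claim~(2), note that $U \neq 1$ forces $\phi \neq \mempty$, so with $1 \in L_A(q)$ the second rule for $\gamma_0$ gives $\top \in \gamma_0(p, \phi)$, and the same threads produce a run from $p$ to $\top$. In both sub-cases the emitted label is the parallel composition of the $U_i$ with $U_i \neq 1$, which equals $U_1 \parallel \cdots \parallel U_n$ because the omitted factors are the unit $1$.

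The main obstacle is precisely this last case: the bookkeeping that re-routes the empty-labelled threads into the $\psi$-component of a single $\gamma_0$-transition while re-routing the non-empty ones, via the inductive claim~(2), to $\top$, and then verifying that the two resulting labels coincide up to the unit law for $\parallel$. The remaining cases are routine. A minor degenerate point to handle is when every $U_i = 1$ (forcing $U = 1$, so only claim~(1) applies): then $\phi = \mempty$ and the empty parallel composition reconstructs $p \arun{1}_{A_0} q$ via a nullary fork $q \in \gamma_0(p, \mempty)$.
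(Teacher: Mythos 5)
Your proposal is correct and follows essentially the same route as the paper's proof: simultaneous induction on the run derivation, the same case split on $W = 1$ versus $W \neq 1$ in the composite case (including the observation that $1 \in L_A(q'')$ propagates backwards along the empty-labelled suffix), and the same partition of the fork targets into $\phi$ (non-empty threads, sent to $\top$ via the inductive second claim) and $\psi$ (empty threads, absorbed by the side condition of the rules generating $\gamma_0$). The degenerate subcase $\phi = \mempty$ you flag is handled implicitly in the paper by the first rule for $\gamma_0$, exactly as you describe.
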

\begin{proof}
In the base, there are two cases.
\begin{itemize}
    \item
    If $p \arun{U}_A q$ because $p = q$ and $U = 1$, then $p \arun{U}_{A_0} q$; the second claim holds vacuously.

    \item
    If $p \arun{U}_A q$ because $U = \ltr{a}$ for some $\ltr{a} \in \Sigma$ and $q \in \delta(p, \ltr{a})$, then $q \in \delta_0(p, \ltr{a})$ by construction, so indeed $p \arun{U}_{A_0} q$.
    If $1 \in L_A(q)$, then $\top \in \delta_0(p, \ltr{a})$ as well, so $p \arun{\ltr{a}}_{A_0} \top$.
\end{itemize}

\noindent
For the inductive step, we have two more cases.
\begin{itemize}
    \item
    If $p \arun{U}_A q$ because $U = V \cdot W$ and there exists $r \in Q$ such that $p \arun{V}_A r$ and $r \arun{W}_A q$, then by induction we know that $p \arun{V}_{A_0} r$ and $r \arun{W}_{A_0} q$, so $p \arun{U}_{A_0} q$.

    If furthermore $1 \in L_A(q)$ and $V \cdot W \neq 1$, then we distinguish two subcases.
    \begin{itemize}
        \item
        If $W = 1$, then $V = U$, and $1 \in L_A(r)$; therefore by induction we get $p \arun{U}_{A_0} \top$.
        \item
        If $W \neq 1$, then by induction we get $r \arun{W}_{A_0} \top$, hence $p \arun {V \cdot W}_{A_0} q$.
    \end{itemize}

    \item
    Suppose that $p \arun{U}_A q$ because there exist $q_1, \dots, q_n \in Q$ with $q \in \gamma(p, \mset{q_1, \dots, q_n})$, and there exist $q_1', \dots, q_n' \in F$ such that $U = U_1 \parallel \cdots \parallel U_n$ and for $1 \leq i \leq n$ we have $q_i \arun{U_i}_A q_i'$.
    We then partition $\mset{q_1, \dots, q_n}$ into $\phi$ and $\psi$ such that:
    \begin{mathpar}
    \mset{q_1,\dots,q_n} = \phi \sqcup \psi
    \and
    \forall i.\, q_i \in \phi \Rightarrow U_i \neq 1
    \and
    \forall i.\, q_i \in \psi \Rightarrow U_i = 1.
    \end{mathpar}
    By construction of $\gamma_0$, we have $q \in \gamma_0(p, \phi)$.
    By induction, since $\forall i. q_i \in \phi \Rightarrow U_i \neq 1$, we get $\forall i. q_i \in \phi \Rightarrow q_i \arun{U_i}_{A_0} \top$.
    Consequently, we have $p \arun{U_1 \parallel \dots \parallel U_n}_{A_0} q$.

    \smallskip
    If additionally $1 \in L_A(q)$ and $U_1\parallel\dots\parallel U_n\neq 1$, then there must exist a $1 \leq i \leq n$ with $U_i \neq 1$, so we know that $\phi \neq \mempty$; hence, $\top \in \gamma_0(p,\phi)$.
    As result, we get $p \arun{U_1 \parallel \dots \parallel U_n}_{A_0} \top$.
    \qedhere
\end{itemize}
\end{proof}

\begin{fact}
For any $p \in Q$, if $p\arun U_{A_0} q$, then if $q \in Q$ we have $p \arun U_{A} q$, and if $q = \top$ we have $U \in L_A(p)$ but $U \neq 1$.
\end{fact}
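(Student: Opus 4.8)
The plan is to prove this converse by induction on the derivation of the run $p \arun{U}_{A_0} q$, in parallel with the previous fact. Two global observations will drive every case. First, $\top$ is a \emph{sink} in $A_0$: since $\delta_0(\top, \ltr{a}) = \emptyset$ and $\gamma_0(\top, \phi) = \emptyset$, the only run originating in $\top$ is the trivial one $\top \arun{1}_{A_0} \top$. Second, $1$ is a unit for $\parallel$, which lets me collapse ``padding'' factors when reconstructing parallel runs of $A$. In the base, a trivial run $p \arun{1}_{A_0} p$ ends in $p \in Q$, and $p \arun{1}_A p$ holds at once. For a sequential unit run $p \arun{\ltr{a}}_{A_0} q$ I split on the target: if $q \in Q$ then the transition can only stem from the first rule generating $\delta_0$, so $q \in \delta(p, \ltr{a})$ and $p \arun{\ltr{a}}_A q$; if $q = \top$ it stems from the second rule, which supplies a $q' \in \delta(p, \ltr{a})$ with $1 \in L_A(q')$, whence $\ltr{a} \in L_A(p)$ and $\ltr{a} \neq 1$.

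For the inductive step on a sequential composition $U = V \cdot W$ with $p \arun{V}_{A_0} r \arun{W}_{A_0} q$, I case on the intermediate state $r$. If $r = \top$, the sink property forces $W = 1$ and $q = \top$, so $U = V$ and the induction hypothesis applied to $p \arun{V}_{A_0} \top$ gives $V \in L_A(p)$ with $V \neq 1$, settling the claim. If $r \in Q$, the induction hypothesis gives $p \arun{V}_A r$, and applying it again to $r \arun{W}_{A_0} q$ yields either $r \arun{W}_A q$ (when $q \in Q$), so $p \arun{U}_A q$ by composing runs, or $W \in L_A(r)$ with $W \neq 1$ (when $q = \top$), from which I compose $p \arun{V}_A r$ with an accepting run of $r$ to obtain $U \in L_A(p)$ and $U \neq 1$.

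The parallel case is where I expect to spend the most care. Here $U = U_1 \parallel \cdots \parallel U_n$, each $r_i \arun{U_i}_{A_0} \top$ (since $F_0 = \set{\top}$), and $q \in \gamma_0(p, \mset{r_1, \dots, r_n})$. Because $\gamma_0$ forks only into multisets over $Q$, every $r_i \in Q$, so the induction hypothesis gives accepting runs $r_i \arun{U_i}_A s_i \in F$ with $U_i \neq 1$. I then invert the definition of $\gamma_0$: a target $q \in Q$ can only arise from the first rule, recovering a $\psi \in \M(Q)$ and an original $q \in \gamma(p, \mset{r_1, \dots, r_n} \sqcup \psi)$ with $1 \in L_A(w)$ for every $w \in \psi$; padding the fork with trivial accepting runs $w \arun{1}_A t_w \in F$ and cancelling the extra $1$'s via the unit law reconstructs the parallel unit run $p \arun{U}_A q$. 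A target $q = \top$ can only arise from the second rule, which additionally provides a state $q'$ with $q' \in \gamma(p, \mset{r_1, \dots, r_n} \sqcup \psi)$, $1 \in L_A(q')$, and $\mset{r_1, \dots, r_n} \neq \mempty$; the same padding argument gives $p \arun{U}_A q'$, appending $q' \arun{1}_A f \in F$ yields $U \in L_A(p)$, and $n \geq 1$ together with each $U_i \neq 1$ forces $U \neq 1$.

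The main obstacle is precisely this parallel case, and it is bookkeeping rather than genuine difficulty: I must correctly invert the two $\gamma_0$ rules to recover both the surviving fork targets $r_1, \dots, r_n$ and the discarded multiset $\psi$ (all of whose states accept $1$ in $A$), and then assemble a \emph{single} parallel unit run of $A$ whose label agrees with $U$ only after cancelling the inserted empty-pomset factors using the unit law for $\parallel$. The clean dichotomy ``$q \in Q$ comes from the first rule, $q = \top$ comes from the second'' — which holds because $\gamma$ has codomain restricted to $Q$ — is what keeps the case analysis manageable.
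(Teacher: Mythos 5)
Your proposal is correct and follows essentially the same route as the paper's proof: induction on the derivation of $p \arun{U}_{A_0} q$, with the same case splits on whether the target (or intermediate state) is $\top$ or lies in $Q$, the same inversion of the two rules generating $\delta_0$ and $\gamma_0$ (clean because $\gamma$ and $\delta$ land in $Q$), and the same padding of the recovered fork $\mset{r_1,\dots,r_n} \sqcup \psi$ with empty-pomset accepting runs supplied by $1 \in L_A(w)$ for $w \in \psi$. The only quibble is terminological: the runs $w \arun{1}_A t_w \in F$ you call ``trivial'' need not be trivial in the paper's technical sense (they may use forks), but their existence follows from $1 \in L_A(w)$, which is all your argument actually uses.
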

\begin{proof}
In the base, there are two cases.
\begin{itemize}
    \item
    If $p \arun{U}_{A_0} q$ because $U = 1$ and $p = q$, then $p \arun{U}_A q$ immediately.

    \item
    If $p \arun{U}_{A_0} q$ because $U = \ltr{a}$ for some $\ltr{a} \in \Sigma$ and $q \in \delta_0(p, \ltr{a})$, then there are two subcases:
    \begin{itemize}
        \item
        If $q \in Q$, then $q \in \delta(p, \ltr{a})$, and hence $p \arun{U}_A q$.

        \item
        If $q = \top$, then there exists a $q' \in \delta(p, \ltr{a})$ such that $1 \in L_A(q')$.
        In particular, this means that $p \arun{U}_A q'$ and $q' \arun{1}_A q''$ for some $q'' \in F$.
        This means that $p \arun{U}_A q''$, and hence $U \in L_A(p)$.
        Since $\ltr{a} \neq 1$, we rightfully get $U \neq 1$.
    \end{itemize}
\end{itemize}
For the inductive step, there are again two cases.
\begin{itemize}
    \item
    Suppose $p \arun{U}_{A_0} q$ because $U = V \cdot W$ and there exists an $r \in Q_0$ with $p \arun{V}_{A_0} r$ and $r \arun{W}_{A_0} q$.
    We distinguish two cases based on $r$:
    \begin{itemize}
        \item
        If $r = \top$, then by construction of $A_0$ we have $W = 1$ and $q = \top$, so by induction it follows that $U = V \in L_A(p)$ while $U = V \neq 1$;

        \item
        If $r \in Q$, then by induction hypothesis we have $p \arun{U}_A r$.
        We distinguish two cases:
        \begin{itemize}
            \item
            If $q\in Q$, then by induction $r\arun{V}_A q$, so $p\arun {U\cdot V}_A q$.
            \item
            If $q = \top$, then by induction $W \in L_A(r)$ while $W \neq 1$, so $U \in L_A(p)$ while $U \neq 1$.
        \end{itemize}
    \end{itemize}

    \item
    Suppose $p \arun{U}_{A_0} q$ because there exist $q_1, \dots, q_n \in Q_0$ such that $q \in \gamma(p, \mset{q_1, \dots, q_n})$ such that $U = U_1 \parallel \cdots \parallel U_n$ and for $1 \leq i \leq n$ we have $q_i \arun{U_i}_{A_0} \top$.
    By construction of $\gamma_0$, we also know that for $1 \leq i \leq n$ we have $q_i \in Q$.
    By induction, we then know that for $1 \leq i \leq n$ we have $U_i \in L_A(q_i)$ but $U_i \neq 1$.
    We distinguish the two cases for $q$:
    \begin{itemize}
        \item
        If $q\in Q$, then by construction of $\gamma_0$ there exists a multiset $\psi$ such that:
        \begin{mathpar}
        q\in \gamma(p, \mset{q_1, \dots, q_n}\sqcup\psi)\and
        \forall r\in \psi.\,1\in L_A(r).
        \end{mathpar}
        If $\psi=\mset{q_{n+1},\dots,q_m}$, we may then complete the run by setting $\forall n<i\leq m.\, U_i=1$ and performing the following $\gamma$ step:
        \[
            \inferrule{%
                \forall 1 \leq i \leq m.\ U_i\in L_A(q_i) \and
                q \in \gamma(p, \mset{q_1, \dots, q_m})
            }{%
                p \arun{U_1 \parallel \dots \parallel U_m}_{A} q
            }
        \]
        This concludes this case, since we have:
        \[U_1\parallel\cdots\parallel U_m=U_1\parallel\cdots\parallel U_n\parallel 1\parallel\cdots\parallel 1=U_1\parallel\cdots\parallel U_n.\]

        \item
        Similarly, if $q = \top$, then $\phi\neq \mempty$ and there exist $\psi \in \M(Q)$ and $q' \in Q$ such that:
        \begin{mathpar}
        q' \in \gamma(p, \mset{q_1, \dots, q_n} \sqcup \psi)
        \and
        \forall r \in \psi.\, 1 \in L_A(r)
        \and
        1 \in L_A(q').
        \end{mathpar}
        We may proceed as in the previous case to show that $p \arun{U_1 \parallel \dots \parallel U_n}_{A} q'$.
        Since $1\in L_A(q')$, this implies $U_1 \parallel \dots \parallel U_n\in L_A(p)$.
        To conclude, note that for all $1 \leq i \leq n$ we have $U_i \neq 1$; since $\phi \neq \mempty$, also $n > 0$.
        This entails that $U = U_1 \parallel \dots \parallel U_n \neq 1$.
        \qedhere
    \end{itemize}
\end{itemize}
\end{proof}

\noindent
Summing up, we get that $L_A(q) \setminus \set{1} = L_{A_0}(q)$.
So, if $1 \not\in L_A(q)$, then we have $L_A(q) = L_{A_0}(q)$, and otherwise $L_A(q) = L_{A_0}(q) \cup L_{A_0}(\top)$.
Therefore, $A_0$ weakly implements $A$, since fork-acyclicity is clearly preserved by this construction.

To show that $A_0$ is parsimonious, notice that by construction, if $\gamma_0(p, \phi) \neq \emptyset$ and $q \in \phi$, then $q \in Q$.
As we showed above, this implies $1 \notin L_{A_0}(q)$.

To show that $A_0$ is bounded, note that if $q, q' \in Q_0$ with $q' \prec_{A_0} q$, then either $q' = \top$, $q = \top$ or $q, q' \in Q$ with $q' \preceq_A q$.
Hence, if $q \in Q_0$, then either $q = \top$ and $\pi_{A_0}(q) = \set{\top}$, or $q \in Q$ and $\pi_{A_0}(q) \subseteq \pi_A(q)$; in either case, $\pi_{A_0}(q)$ is finite, since $A$ is bounded; hence, $A_0$ is bounded, too.
\end{proof}

\subsection{Removing nullary forks}

\restateepsiloneliminationcorrectness*
\begin{proof}
To see that $A_1$ is bounded, note that $q' \preceq_{A_1} q$ if and only if $q' \preceq_A q$; hence, $\pi_{A_1}(q) \subseteq \pi_A(q)$ for all $q \in Q$; since $A$ is bounded, this implies that $A_1$ is bounded, too.
This also show that $A_1$ is fork-acyclic if $A_1$ is.

The condition $\phi \neq \mempty$ in the definition of $\gamma_1$ ensures that $A_1$ is $1$-forking.
Before we show that $A_1$ is parsimonious, note that if we have $r \in \phi$ and $\gamma_1(q, \phi) \neq \emptyset$, then there is a state $q'$ such that $q \epstrans q'$ and $\gamma(q', \phi) \neq \emptyset$.
By parsimony of $A$, we have $1 \notin L_A(r)$, hence $r \not\in F$.
By \cref{fact:parsimony-vs-empty-pomset}, we conclude that $1 \not\in L_{A_1}(p)$, hence $A_1$ is parsimonious.

We now check that $A_1$ weakly implements $A$, by relating their runs.
\begin{fact}%
\label{fact:epsilon-elimination-correctness-forward}
If $q \epstrans p \arun{U}_A p' \epstrans q'$ with $U \neq 1$, then $q \arun{U}_{A_1} q'$.
\end{fact}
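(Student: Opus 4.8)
The plan is to prove the statement by induction on the derivation of the run $p' \arun{U}_A q'$ in $\rightarrow_A$, treating $p$, $q$ and the two surrounding $\epstrans$-steps as universally quantified parameters; that is, I prove "for all $p,q$, if $p \epstrans p' \arun{U}_A q' \epstrans q$ and $U \neq 1$, then $p \arun{U}_{A_1} q$" and induct on the middle run. Throughout I use that $\epstrans$ (i.e.\ $\arun{1}_A$) is reflexive, since $r \arun{1}_A r$ is a trivial run, and transitive, since $r \arun{1}_A s$ and $s \arun{1}_A t$ compose to $r \arun{1}_A t$. The case analysis follows the last rule used. The trivial rule forces $U = 1$ and is excluded by hypothesis. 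If $p' \arun{U}_A q'$ is a sequential unit run, then $U = \ltr{a}$ for some $\ltr{a} \in \Sigma$ with $q' \in \delta(p', \ltr{a})$, and the rule defining $\delta_1$, applied to $p \epstrans p'$, $q' \in \delta(p', \ltr{a})$ and $q' \epstrans q$, yields $q \in \delta_1(p, \ltr{a})$, hence $p \arun{\ltr{a}}_{A_1} q$.

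For the composite case $U = V \cdot W$ with $p' \arun{V}_A r$ and $r \arun{W}_A q'$, I would split on which of $V, W$ is empty (not both, since $U \neq 1$). If $V = 1$, then $p' \epstrans r$, so $p \epstrans r$ by transitivity, and the induction hypothesis applied to the sub-run $r \arun{W}_A q'$ (with $W \neq 1$) gives $p \arun{W}_{A_1} q = p \arun{U}_{A_1} q$; the case $W = 1$ is symmetric, instead absorbing $r \epstrans q' \epstrans q$. If both are non-empty, the induction hypothesis applied to each sub-run, using reflexivity of $\epstrans$ at the shared endpoint $r$, gives $p \arun{V}_{A_1} r$ and $r \arun{W}_{A_1} q$, which compose to $p \arun{U}_{A_1} q$. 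Since both sub-runs are strict sub-derivations, the induction is well-founded.

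The delicate case is the parallel unit run, where $U = V_1 \parallel \cdots \parallel V_m$, $q' \in \gamma(p', \phi)$ for $\phi = \mset{r_1, \dots, r_m}$, and $r_i \arun{V_i}_A r_i' \in F$ for each $i$. Here $\phi \neq \mempty$, as otherwise $U = 1$; this is exactly the side condition the rule defining $\gamma_1$ demands, so from $p \epstrans p'$, $q' \in \gamma(p', \phi)$ and $q' \epstrans q$ I obtain $q \in \gamma_1(p, \phi)$. The main obstacle is lifting each thread run $r_i \arun{V_i}_A r_i'$ into $A_1$, because invoking the induction hypothesis requires $V_i \neq 1$. This is precisely where parsimony of $A$ is essential: since $r_i$ is a fork target and $r_i' \in F$, a run $r_i \arun{1}_A r_i'$ would place $1 \in L_A(r_i)$, contradicting parsimony, so $V_i \neq 1$; the induction hypothesis (with reflexive $\epstrans$ on both ends) then gives $r_i \arun{V_i}_{A_1} r_i' \in F$, noting $F$ is unchanged in $A_1$. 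Assembling these thread runs with $q \in \gamma_1(p, \phi)$ yields the parallel unit run $p \arun{U}_{A_1} q$, closing the induction. I expect the only genuinely subtle points to be the bookkeeping of empty sub-runs in the composite case and this parsimony argument excluding empty threads in the parallel case.
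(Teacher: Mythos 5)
Your proof is correct and takes essentially the same route as the paper's: induction on the derivation of the middle run, with the same exclusion of the trivial case, the same four-way split on empty sub-runs in the composite case, and the same appeal to parsimony of $A$ to rule out empty thread runs so the induction hypothesis applies in the parallel unit case. The only cosmetic difference is that you derive $\phi \neq \mempty$ directly from $U \neq 1$ (since $n = 0$ would force $U = 1$), where the paper routes this observation through parsimony; both are sound, and yours is if anything slightly tidier on that point.
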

\begin{proof}[Proof of \cref{fact:epsilon-elimination-correctness-forward}]
We proceed by induction on the construction of $p' \arun{U}_A q'$.
In the base, we can exclude the case where $p' = q'$ and $U = 1$, because it contradicts the premise.
This leaves the case where $p \arun{U}_A p'$ because $U = \ltr{a}$ for some $\ltr{a} \in \Sigma$ and $p' \in \delta(p, \ltr{a})$.
By construction of $\delta_1$, we then find that $q' \in \delta_1(q, \ltr{a})$, hence $q \arun{U}_{A_1} q'$.
For the inductive step, there are two cases:
\begin{itemize}
    \item
    Suppose $p \arun{U}_A p'$ because $U = V \cdot W$ and there exists $p'' \in Q$ such that $p \arun{V}_A p''$ and $p'' \arun{W}_A p'$.
    We distinguish four subcases:
    \begin{itemize}
        \item
        If $V = 1 = W$, then $U = V \cdot W = 1$, contradicting the premise; we disregard this case.

        \item
        If $V \neq 1 = W$, then $p'' \epstrans q'$ and $U = V$; by induction, we find that $q \arun{U}_{A_1} q'$.
        \item
        If $V = 1 \neq W$, then $q \epstrans p''$ and $U = W$; by induction, we find that $q \arun{U}_{A_1} q'$.
        \item
        If $V \neq 1 \neq W$, then induction we get $q \arun{V}_{A_1} p''$ and $p'' \arun{W}_{A_1} q'$, hence $q \arun{U}_{A_1} q'$.
    \end{itemize}

    \item
    Suppose $p \arun{U}_A p$ because there exist $r_1, \dots, r_n \in Q$ such that $p' \in \gamma(p, \mset{r_1, \dots, r_n})$, and $U = U_1 \parallel \cdots \parallel U_n$ such that for $1 \leq i \leq n$ there exists $r_i' \in F$ with $r_i \arun{U_1}_A r_i'$.
    By parsimony of $A$, we know that each of the $U_i$ is different from $1$.
    Hence, $U_1\parallel\dots\parallel U_n$ is itself non-empty, and we have $n > 0$.
    Therefore $q' \in \gamma_1(q, \mset{r_1, \dots, r_n})$.
    By induction we have for every $1 \leq i \leq n$ that $r_i \arun {U_i}_{A_1} r'_i \in F$, so we may conclude $q \arun{U}_{A_1} q'$.
    \qedhere
\end{itemize}
\end{proof}

\begin{fact}%
\label{fact:epsilon-elimination-correctness-backward}
If $q \arun{U}_{A_1} q'$, then $q \arun{U}_{A} q'$.
\end{fact}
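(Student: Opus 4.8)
The plan is to prove the statement by induction on the construction of the run relation $\rightarrow_{A_1}$, exploiting the fact that, by definition, $p \epstrans q$ is nothing but $p \arun{1}_A q$. Consequently, every piece of $\epsilon$-padding that the definitions of $\delta_1$ and $\gamma_1$ prepend or append to a transition of $A$ corresponds to an empty-pomset run in $A$, which can be freely absorbed using the unit laws $1 \cdot U = U \cdot 1 = U$ together with the sequential composition rule for $\rightarrow_A$. This single observation is what makes the whole argument routine.

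In the base there are two cases. If $p \arun{U}_{A_1} q$ is trivial, then $U = 1$ and $p = q$, so $p \arun{1}_A q$ holds immediately. If it is a sequential unit run, then $U = \ltr{a}$ for some $\ltr{a} \in \Sigma$ and $q \in \delta_1(p, \ltr{a})$; unfolding the generating rule for $\delta_1$ yields states $p', r$ with $p \epstrans p'$, $r \in \delta(p', \ltr{a})$ and $r \epstrans q$. Reading these as $p \arun{1}_A p'$, $p' \arun{\ltr{a}}_A r$ and $r \arun{1}_A q$ and composing sequentially gives $p \arun{1 \cdot \ltr{a} \cdot 1}_A q$, i.e.\ $p \arun{\ltr{a}}_A q$.

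For the inductive step there are again two cases. If $p \arun{U}_{A_1} q$ is composite, so $U = V \cdot W$ with $p \arun{V}_{A_1} r$ and $r \arun{W}_{A_1} q$ for some $r \in Q$, then the induction hypothesis gives $p \arun{V}_A r$ and $r \arun{W}_A q$, and hence $p \arun{U}_A q$. The only case requiring a little care is the parallel unit run: here $U = U_1 \parallel \cdots \parallel U_n$, there are $q_1, \dots, q_n \in Q$ and $q_1', \dots, q_n' \in F$ with $q_i \arun{U_i}_{A_1} q_i'$ for each $i$, and $q \in \gamma_1(p, \mset{q_1, \dots, q_n})$. Unfolding the rule generating $\gamma_1$ produces $p', r$ with $p \epstrans p'$, $r \in \gamma(p', \mset{q_1, \dots, q_n})$ and $r \epstrans q$. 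Applying the induction hypothesis to each sub-run gives $q_i \arun{U_i}_A q_i' \in F$, so the parallel unit run rule of $A$ yields $p' \arun{U}_A r$; prepending $p \arun{1}_A p'$ and appending $r \arun{1}_A q$ then gives $p \arun{1 \cdot U \cdot 1}_A q$, i.e.\ $p \arun{U}_A q$, as required.

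The main obstacle, if it can be called one, is purely bookkeeping: one must keep track of which $\epsilon$-runs are prepended versus appended when unfolding the $\delta_1$- and $\gamma_1$-rules, and confirm that reassembling them around the core transition of $A$ reproduces exactly the label $U$ after simplifying with the unit laws. No fork-acyclicity or parsimony hypothesis is needed for this direction; it holds for the saturation construction over any automaton $A$, and together with \cref{fact:epsilon-elimination-correctness-forward} it establishes $L_A(q) = L_{A_1}(q)$ for every $q \in Q$.
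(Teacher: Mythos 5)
Your proof is correct and takes essentially the same route as the paper's: the same induction on the construction of ${\rightarrow_{A_1}}$, with the same four cases, unfolding the rules generating $\delta_1$ and $\gamma_1$ into runs $p \epstrans p'$ and $r \epstrans q$ in $A$ (which are by definition runs $\arun{1}_A$) and absorbing them around the core $A$-transition via sequential composition and the unit laws. One small caveat on your closing aside only: combined with the forward fact this yields $L_A(p) = \bigcup_{p \epstrans q} L_{A_1}(q)$, i.e.\ weak implementation, rather than $L_A(q) = L_{A_1}(q)$ for every $q$ --- the empty pomset can lie in $L_A(q)$ via a nullary-fork run to an accepting state even though $q \notin F$, in which case $1 \notin L_{A_1}(q)$.
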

\begin{proof}[Proof of \cref{fact:epsilon-elimination-correctness-backward}]
We proceed by induction on $q \arun{U}_{A_1} q'$.
In the base, there are two cases.
\begin{itemize}
    \item
    If $q \arun{U}_{A_1} q'$ because $q = q'$ and $U = 1$, then $q \arun{U}_A q'$ immediately.

    \item
    If $q \arun{U}_{A_1} q'$ because $U = \ltr{a}$ for some $\ltr{a} \in \Sigma$ and $q' \in \delta_1(q, \ltr{a})$, then there exist $p, p' \in Q$ such that $q \epstrans p$ and $p' \epstrans q'$ with $p' \in \delta(p, \ltr{a})$, by construction of $\delta_1$.
    We can string these together to find that $q \arun{1}_A p \arun{\ltr{a}}_A p' \arun{1}_A q'$, hence $q \arun{U}_A q'$.
\end{itemize}
In the inductive step, there are again two cases:
\begin{itemize}
    \item
    Suppose $q \arun{U}_{A_1} q'$ because $U = V \cdot W$ and there exists a $q'' \in Q$ such that $q \arun{V}_{A_1} q''$ and $q'' \arun{W}_{A_1} q'$.
    By induction, $q \arun{V}_A q''$ and $q'' \arun{W}_A q'$, and hence $q \arun{U}_A q'$.

    \item
    Suppose $q \arun{U}_{A_1} q'$ because there exist $r_1, \dots, r_n \in Q$ with $q' \in \gamma_1(q, \mset{r_1, \dots, r_n})$, and $U = U_1 \parallel \cdots \parallel U_n$ and for $1 \leq i \leq n$ there exists $r_i' \in F$ with $r_i \arun{U_i}_{A_1} r_i'$.
    By induction, it must be the case that for $1 \leq i \leq n$ we have $r_i \arun{U_i}_{A} r_i'$.
    By construction of $\gamma_1$, we know that there are $p, p' \in Q$ such that $q \epstrans p$, $p' \epstrans q'$, and $p' \in \gamma(p, \mset{q_1, \dots, q_n})$.
    We then know that $p \arun{U_1 \parallel \cdots \parallel U_n}_A p'$, and hence $q \arun{U_1 \parallel \cdots \parallel U_n}_A q'$.
    \qedhere
\end{itemize}
\end{proof}

\noindent
We can now wrap up by showing that, for $q \in Q$, we have $L_A(q) = \bigcup_{q \epstrans p} L_{A_1}(p)$.
\begin{itemize}
    \item
    Let $U \in L_A(q)$, meaning there is $q' \in F$ such that $q \arun{U}_A q'$.
    On the one hand, if $U = 1$, then $1 \in L_{A_1}(q')$; since $q \epstrans q'$, $U$ is contained in the right-hand side.
    On the other hand, if $U \neq 1$, then $q \arun U_{A_1} q'$ by \cref{fact:epsilon-elimination-correctness-forward}, hence $U \in L_{A_1}(p)$.
    Since $q \epstrans q$, we are done.

    \item
    Let $p \in Q$ such that $q \epstrans p$, and let $U \in L_{A_1}(p)$.
    This means there is $p' \in F$ such that $p \arun{U}_{A_1} p'$, and hence $p \arun{U}_A p'$ by \cref{fact:epsilon-elimination-correctness-backward}.
    Therefore $q \arun{U}_A p' \in F$, so $U \in L_A(q)$.
\end{itemize}
Since boundedness and fork-acyclicity are preserved, $A_1$ weakly implements $A$.
\end{proof}

\subsection{Removing unary forks}

\restateunaryforkremovalcorrectness*
\begin{proof}
We start by showing that our construction preserves fork-acyclicity, and that $A_2$ is bounded.
For preservation of fork-acyclicity, we first verify the following.

\begin{fact}%
\label{fact:preceq-stacked-vs-preceq}
The following hold for all $w = q_1\cdots{}q_n \in Q_2$ and $w' = q_1'\cdots{}q_m' \in Q_2$:
\begin{enumerate}[label={(\roman*)}]
    \item\label{property:uparrow-vs-preceq}
    If $q \in Q$ such that $q \mathrel{\uparrow} w$, then for all $q_i$ we have $q_i \preceq_A q$.

    \item\label{property:delta-stacked-vs-preceq}
    If $w' \in \delta_2(w, \ltr{a})$, then for every $q_i'$ there exists a $q_j$ with $q_i' \preceq_A q_j$.

    \item\label{property:gamma-stacked-vs-preceq}
    If $w' \in \gamma_2(w, \phi)$, then for every $q_i'$ there exists a $q_j$ with $q_i' \preceq_A q_j$.

    \item\label{property:gamma-stacked-vs-preceq-strict}
    If $\gamma_2(w, \phi) \neq \emptyset$ with $r \in \phi$, then there exists a $q_i$ with $r \prec_A q_i$.

    \item\label{property:preceq-stacked-vs-preceq}
    If $w' \preceq_{A_2} w$, then for every $q_i'$ there exists a $q_j$ with $q_i' \preceq_A q_j$.
\end{enumerate}
\end{fact}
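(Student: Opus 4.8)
The plan is to prove the five properties in the order in which they are listed, since each relies on the ones before it; fork-acyclicity of $A$ is used only in~\ref{property:gamma-stacked-vs-preceq-strict}, which I expect to be the delicate step. First I would prove~\ref{property:uparrow-vs-preceq} by induction on the derivation of $q \mathrel{\uparrow} w$. The reflexive base $q \mathrel{\uparrow} q$ holds by reflexivity of $\preceq_A$. In the inductive step $q \mathrel{\uparrow} w q'$ arises from $r \mathrel{\uparrow} w$ together with $q' \in \gamma(q, \mset{r})$; the second generating rule of $\preceq_A$ gives $q' \preceq_A q$, the third gives $r \preceq_A q$, and combining the latter with the induction hypothesis on $r \mathrel{\uparrow} w$ (and transitivity of $\preceq_A$) shows every state of $w$ is $\preceq_A q$ as well.

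Next I would establish~\ref{property:delta-stacked-vs-preceq} and~\ref{property:gamma-stacked-vs-preceq} together, by induction on the derivation of membership in $\delta_2(w, \ltr{a})$ respectively $\gamma_2(w, \phi)$; the two arguments coincide once one observes that $q' \in \delta(r, \ltr{a})$ and $q' \in \gamma(r, \phi)$ both yield $q' \preceq_A r$. For the ``push'' rule, where $w = q x$ and $w' = q' y x$ with $q \mathrel{\uparrow} r y$, the suffix $x$ is matched to itself, whereas $q'$ and each state of $y$ are traced back to $q$ via~\ref{property:uparrow-vs-preceq} and transitivity (using $q' \preceq_A r \preceq_A q$ for the former). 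For the ``pop'' rule, where $w = q w''$ with $q \in F$ and $w'$ is obtained from $w''$, the claim is immediate from the induction hypothesis, since $w''$ is a suffix of $w$.

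The crux is~\ref{property:gamma-stacked-vs-preceq-strict}, which I would prove by induction on the derivation witnessing $\gamma_2(w, \phi) \neq \emptyset$. In the push case $w = q x$ and $q \mathrel{\uparrow} r y$ with $\gamma(r, \phi) \neq \emptyset$; fork-acyclicity of $A$ then yields $s \prec_A r$ for the chosen $s \in \phi$, while~\ref{property:uparrow-vs-preceq} gives $r \preceq_A q$. The subtle point is to upgrade $s \prec_A r \preceq_A q$ to $s \prec_A q$: transitivity gives $s \preceq_A q$, and if we also had $q \preceq_A s$ then $r \preceq_A q \preceq_A s$ would give $r \preceq_A s$, contradicting $s \prec_A r$; hence $q \not\preceq_A s$, so $s \prec_A q$, and $q$ is the required state of $w$. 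The pop case is again immediate from the induction hypothesis.

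Finally I would prove~\ref{property:preceq-stacked-vs-preceq} by induction on the construction of the support relation $\preceq_{A_2}$. Reflexivity is trivial; the generating rule coming from $\delta_2$ reduces to~\ref{property:delta-stacked-vs-preceq}, and the rule $\gamma_2(w, \phi) \preceq_{A_2} w$ reduces to~\ref{property:gamma-stacked-vs-preceq}. For the fork-target rule, \cref{fact:gamma-vs-gamma-prime} guarantees that the offending state $s \in \phi$ is a single element of $Q$, so~\ref{property:gamma-stacked-vs-preceq-strict} applies and yields a state $q_j$ of $w$ with $s \prec_A q_j$, hence $s \preceq_A q_j$. Transitivity of $\preceq_{A_2}$ is discharged by chaining two applications of the induction hypothesis through $\preceq_A$. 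Altogether these five properties then feed into the verification that $A_2$ is fork-acyclic.
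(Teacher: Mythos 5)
Your proof is correct and takes essentially the same route as the paper: the same induction on the derivation of $q \mathrel{\uparrow} w$ for (i), induction on the generation of $\delta_2$ and $\gamma_2$ (push and pop cases) for (ii)--(iv) with fork-acyclicity invoked exactly in the base case of (iv), and induction on the construction of $\preceq_{A_2}$ for (v), with the base cases discharged by (ii)--(iv) and transitivity chained through $\preceq_A$. If anything, you spell out two points the paper leaves implicit --- the upgrade of a $\prec_A$-step composed with $\preceq_A$-steps back to a strict $\prec_A$ in (iv), and the appeal to \cref{fact:gamma-vs-gamma-prime} in (v) to see that the fork target lies in $Q$ --- so no gaps remain.
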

\begin{proof}[Proof of \cref{fact:preceq-stacked-vs-preceq}]
We treat the claims in the order given.
\begin{enumerate}[label={(\roman*)}]
    \item
    We proceed by induction on the construction of $\uparrow$.
    In the base, where $q = q_1$, the claim holds vacuously.
    For the inductive step, we have $r \in Q$ such that $r \mathrel{\uparrow} q_1 \cdots q_{n-1}$ and $q_n \in \gamma(q, \mset{r})$.
    In that case, $q_n \preceq_A q$ and $r \preceq_A q$ immediately.
    Also, we find by induction that for all $1 \leq i < n$ it holds that $q_i \preceq_A r$, and hence $q_i \preceq_A q$.

    \item
    We proceed by induction on the construction of $\delta_2$.
    In the base, we have $w' \in \delta_2(w, \ltr{a})$ because $w' = q' \cdot x \cdot y$ and $w = q \cdot y$, and there exists an $r \in Q$ such that $q \mathrel{\uparrow} r \cdot x$ and $q' \in \delta(r, \ltr{a})$.
    By~\ref{property:uparrow-vs-preceq}, we know that $q_1' = q' \preceq_A r \preceq_A q = q_1$.
    Furthermore, if $q_i'$ appears in $x$, then also by~\ref{property:uparrow-vs-preceq} we know that $q_i' \preceq_A q = q_1$.
    Lastly, if $q_i'$ appears in $y$, then note that it also appears in $w$, and hence we can conclude by $q_i' \preceq_A q_j$ for some $j$.
    In the inductive step, $w = q \cdot x$ such that $q \in F$ and $w' \in \delta_2(x, \ltr{a})$.
    The claim then follows immediately by induction.

    \item
    This proof proceeds analogously to the one above.

    \item
    We proceed by induction on the construction of $\gamma_2$.
    In the base, we have that $w = q_{1} \cdot x$ and $w' = q_1' \cdot y \cdot x$ such that there exists an $p \in Q$ with $q_1 \mathrel{\uparrow} p \cdot y$ and $q_1' \in \gamma(p, \phi)$.
    Since $A$ is fork-acyclic, it follows that $r \prec_A p$; because $p \preceq_A q_1$ by~\ref{property:uparrow-vs-preceq}, the claim follows.

    For the inductive step, we have that $\gamma_2(w, \phi) \neq \emptyset$ because $w = q \cdot x$ such that $q \in F$ and $\gamma_2(x, \phi) \neq \emptyset$.
    By induction, we then find a $q_i$ such that $r \prec_A q_i$.

    \item
    This can be shown by induction on $\preceq_{A_2}$, noting that the base cases are covered by~\ref{property:delta-stacked-vs-preceq}--\ref{property:gamma-stacked-vs-preceq-strict}.
    For the inductive step, it suffices to note that the claimed property is transitive in nature.
    \qedhere
\end{enumerate}
\end{proof}

\noindent
Now, if $w \in \phi \in \M(Q_2)$ and $x \in Q_2$ with $\gamma_2(x, \phi) \neq \emptyset$, we should show that $w \prec_{A_2} w$.
First, note that $w = r \in Q$ for some $r \in Q$ by construction of $\gamma_2$.
By \cref{fact:preceq-stacked-vs-preceq}\ref{property:gamma-stacked-vs-preceq-strict}, we know that $w = q_1 \cdots q_n$, and there exists a $1 \leq i \leq n$ with $r \prec_A q_i$.
Suppose, towards a contradiction, that $w \preceq_{A_2} r$; then by \cref{fact:preceq-stacked-vs-preceq}\ref{property:preceq-stacked-vs-preceq} we also know that $q_i \preceq_A r$, contradicting that $r \prec_A q_i$.

\smallskip
To argue that $A_2$ is bounded, we first record the following.

\begin{fact}%
\label{fact:preceq-stacked-vs-depth}
Let $w = q_1 \cdots q_n \in Q_2$ and $x = q_1' \cdots q_{n'} \in Q_2$.
If $w \preceq_{A_2} x$, then for every $1 \leq i \leq n$ there exists a $1 \leq j \leq n'$ such that $n + D_A(q_i) - i \leq n' + D_A(q_j') - j$.
\end{fact}
\begin{proof}[Proof of \cref{fact:preceq-stacked-vs-depth}]
It suffices to verify the claim for the pairs generating $\preceq_{A_2}$.
\begin{itemize}
    \item
    If $w \in \delta_2(x, \ltr{a})$ for some $\ltr{a} \in \Sigma$, then we proceed by induction on the construction of $\delta_2$.
    In the base, there exist $r \in Q$ and $1 \leq k \leq n$ such that $q_1' \mathrel{\uparrow} r \cdot q_2 \cdots q_k$ and $q_1 \in \delta(r, \ltr{a})$, while $n' + k - 1 = n$, and for $k < i \leq n'$ we have $q_i = q_{i-k+1}'$.
    We now consider two cases.
    \begin{itemize}
        \item
        When $1 \leq i \leq k$, we choose $j = 1$; since $n \leq n'$ and $i \geq 1$ and $D_A(q_i) \leq D_A(q_1')$ by \cref{fact:preceq-stacked-vs-preceq}\ref{property:uparrow-vs-preceq}, we find that $n + D_A(q_i) - i \leq n' + D_A(q_1') - 1$.

        \item
        Otherwise, when $k < i \leq n$, we choose $j = i - k + 1$ to find that $n + D_A(q_i) - i = n + D_A(q_j') - i = n' + k - 1 - D_A(q_j') - i = n' + D_A(q_j') - j$
    \end{itemize}

    \noindent
    In the inductive step, $q_1' \in F$ and $w \in \delta_2(q_2' \cdots q_{n'}', \ltr{a})$.
    The claim then follows by induction.

    \item
    If $w \in \gamma_2(x, \phi)$ for some $\phi \in \M(Q_2)$, then the proof is similar to the previous case.

    \item
    If there exists a $\phi \in \M(Q_2)$ such that $w \in \phi$ and $\gamma_2(x, \phi) \neq \emptyset$, then note that $\phi \in \M(Q)$ by construction of $\gamma_2$, and thus that $w = r$ for some $r \in Q$.
    The proof proceeds by induction on $\gamma_2$, where it suffices to show that $D_A(r) < D_A(q_j')$ for some $1 \leq j \leq n'$.
    This is a direct consequence of \cref{fact:preceq-stacked-vs-preceq}\ref{property:gamma-stacked-vs-preceq-strict}.
    \qedhere
\end{itemize}
\end{proof}

\noindent
To see that $A_2$ is bounded, let $q_1 \cdots q_n \in Q_2$ and choose $m = \max_{1 \leq i \leq n} D_A(q_i)$.
If $q_1', \dots, q_{n'}' \in Q$ such that $q_1' \cdots q_{n'} \preceq_A q_1 \cdots q_n$, then by \cref{fact:preceq-stacked-vs-depth}, we find $1 \leq j \leq n'$ such that $n' \leq n' + D_A(q_1') - 1 \leq n + D_A(q_j) - j \leq n + m$.
By \cref{fact:preceq-stacked-vs-preceq}\ref{property:preceq-stacked-vs-preceq}, $q_1', \dots, q_n' \in \pi_A(q_1) \cup \cdots \cup \pi_A(q_n)$; the latter set is finite.
Hence, the states supporting $q_1 \cdots q_n$ in $A_2$ are words of length at most $n + m$ over a finite alphabet; thus $\pi_{A_2}(q_1 \cdots q_n)$ must be finite.

\smallskip
To show that $A_2$ can accept the same languages as $A$, the following facts are useful.

\begin{fact}%
\label{fact:finish-run}
If $q \in Q$ and $q \mathrel{\uparrow} q_1 \cdots q_n$ such that for $1 \leq i \leq n$ there exist $q_i' \in Q$ and $U_i \in \SP$ with $q_i \arun{U_i}_A q_i'$, and for $1 \leq i < n$ it holds that $q_i' \in F$, then $q \arun{U_1 \cdots U_n}_A q_n'$.
\end{fact}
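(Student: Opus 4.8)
The plan is to prove this by induction on the derivation of $q \mathrel{\uparrow} q_1 \cdots q_n$, which amounts to induction on $n$ since the second rule for $\uparrow$ appends exactly one state to the word. In the base case, the derivation is an instance of the first rule, so $n = 1$ and $q_1 = q$; the hypothesis then already supplies $q \arun{U_1}_A q_1'$, which is precisely the desired run $q \arun{U_1 \cdots U_n}_A q_n'$.

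For the inductive step, the derivation must end with the second rule, so there is a state $r \in Q$ with $r \mathrel{\uparrow} q_1 \cdots q_{n-1}$ and $q_n \in \gamma(q, \mset{r})$. I would first apply the induction hypothesis to this shorter derivation, using the runs $q_i \arun{U_i}_A q_i'$ for $1 \leq i \leq n-1$ together with the fact that $q_i' \in F$ for $1 \leq i < n-1$; this yields $r \arun{U_1 \cdots U_{n-1}}_A q_{n-1}'$. Crucially, our hypothesis also gives $q_{n-1}' \in F$ (as $n - 1 < n$), so $r$ reaches an accepting state along this run.

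Next I would invoke the parallel run rule in its degenerate, single-thread form: since $q_n \in \gamma(q, \mset{r})$ and $r \arun{U_1 \cdots U_{n-1}}_A q_{n-1}' \in F$, the rule produces $q \arun{U_1 \cdots U_{n-1}}_A q_n$, using that the parallel composition of a single pomset is that pomset itself. Finally, composing this with the remaining run $q_n \arun{U_n}_A q_n'$ via the sequential composition rule gives $q \arun{(U_1 \cdots U_{n-1}) \cdot U_n}_A q_n'$; since sequential composition of pomsets is associative, this is the desired run $q \arun{U_1 \cdots U_n}_A q_n'$.

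The argument is a routine induction, so I do not anticipate a genuine obstacle; the only point demanding care is the bookkeeping. One must notice that the $\uparrow$-derivation grows the word on the right while changing the \emph{source} state from $r$ to $q$, and that the endpoint condition used at the last step is exactly $q_{n-1}' \in F$. This is precisely why the statement requires $q_i' \in F$ only for the indices $i < n$: the final endpoint $q_n'$ is the destination of the whole run and need not be accepting, whereas every intermediate endpoint must be accepting to license the single-thread fork that resumes computation at the next stacked state.
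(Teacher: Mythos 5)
Your proof is correct and takes essentially the same route as the paper's: induction on the derivation of $q \mathrel{\uparrow} q_1 \cdots q_n$, applying the induction hypothesis to $r \mathrel{\uparrow} q_1 \cdots q_{n-1}$, using the parallel rule with the singleton multiset $\mset{r}$ (licensed by $q_{n-1}' \in F$) to obtain $q \arun{U_1 \cdots U_{n-1}}_A q_n$, and then composing sequentially with $q_n \arun{U_n}_A q_n'$. Your closing remarks on why acceptance is required only at the intermediate endpoints match the paper's use of the hypothesis exactly.
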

\begin{proof}[Proof of \cref{fact:finish-run}]
The proof proceeds by induction on the construction of $\uparrow$.
In the base, we have that $q = q_1$ and $n = 1$; it then follows immediately that $q = q_1 \arun{U_1}_A q_1' = q_n'$.

For the inductive step, we have $q \mathrel{\uparrow} q_1 \cdots q_n$ because there exists $r \in Q$ with $r \mathrel{\uparrow} q_1 \cdots q_{n-1}$ and $q_n \in \gamma(q, \mset{r})$.
By induction, we then know that $r \arun{U_1 \cdots U_{n-1}}_A q_{n-1}$; since $q_{n-1} \in F$ and $q_n \in \gamma(q, \mset{r})$ it follows that $q \arun{U_1 \cdots U_{n-1}}_A q_n \arun{U_n} q_n'$, and hence $q \arun{U_1 \cdots U_n} q_n'$.
\end{proof}

\begin{fact}%
\label{fact:split-stacked-run}
If $w \in Q_2$ and $w' \in F_2$ and $U \in \SP$ such that $w \arun{U}_{A_2} w'$, then $w = q_1\cdots{}q_n$ and $U = U_1 \cdots U_n$ such that for $1 \leq i \leq n$ there exists a $q_i' \in F$ with $q_i \arun{U_i}_A q_i'$.
\end{fact}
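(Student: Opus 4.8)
The plan is to prove the statement by well-founded induction, taking as measure the lexicographically ordered pair $(D_{A_2}(w), \ell)$, where $\ell$ is the length of the run $w \arun{U}_{A_2} w'$ in the sense of \cref{lemma:run-deconstruct}. This combination is forced by the parallel case: the threads spawned by a parallel unit run need not be shorter than the top-level run, but they start from states strictly below $w$ in $\prec_{A_2}$. Fork-acyclicity of $A_2$ (established just above) guarantees that $D_{A_2}$ drops strictly when we pass to a fork target, while any non-forking $\delta_2$- or $\gamma_2$-successor $v$ of $w$ satisfies $v \preceq_{A_2} w$, so that $D_{A_2}(v) \leq D_{A_2}(w)$ and the strict decrease of $\ell$ takes over.

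First I would dispose of the base case $\ell = 0$: then $U = 1$ and $w = w' \in F_2$, so writing $w = q_1 \cdots q_n$ every $q_i$ lies in $F$, and taking $U_i = 1$ and $q_i' = q_i$ settles the claim. For the inductive step, \cref{lemma:run-deconstruct} splits $U = V \cdot W$, where $w \arun{V}_{A_2} v$ is a unit run and $v \arun{W}_{A_2} w'$ has length $\ell - 1$. The key intermediate step is a uniform description of this first unit step: unfolding the defining rules of $\delta_2$ and $\gamma_2$ (their second and fourth rules merely peel accepting states off the top of the stack, while their first and third perform the actual transition) yields an index $k$ with $q_1, \dots, q_{k-1} \in F$, states $r, q'$ and a word $y = y_1 \cdots y_s$ with $q_k \mathrel{\uparrow} r y$, and a pomset $V$ with $r \arun{V}_A q'$, such that $v = q' \, y_1 \cdots y_s \, q_{k+1} \cdots q_n$. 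In the sequential case this is immediate, with $V = \ltr{a}$ and $q' \in \delta(r, \ltr{a})$. In the parallel case we have $q' \in \gamma(r, \mset{s_1,\dots,s_m})$ and $V = V_1' \parallel \cdots \parallel V_m'$, where each thread satisfies $s_i \arun{V_i'}_{A_2} s_i' \in F_2$; since each $s_i$ is a singleton stack with $s_i \prec_{A_2} w$, the induction hypothesis applies and produces $s_i \arun{V_i'}_A s_i'' \in F$, whence the parallel-unit rule for $\runrel_A$ gives $r \arun{V}_A q'$.

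Finally I would assemble the decomposition for $w$. Applying the induction hypothesis to $v \arun{W}_{A_2} w'$ (whose measure is smaller, as $v \preceq_{A_2} w$ and the length dropped) decomposes $v$ componentwise, producing accepting runs in $A$ for $q'$, for each $y_j$, and for $q_{k+1}, \dots, q_n$. For $i < k$ I set $U_i = 1$ and $q_i' = q_i \in F$. For $i = k$ I prepend $V$ to the run obtained for $q'$, giving $r \arun{V \cdot (\cdots)}_A \cdot \in F$, and feed this together with the runs for $y_1, \dots, y_s$ (all landing in $F$) into \cref{fact:finish-run} along $q_k \mathrel{\uparrow} r y$, producing a single run $q_k \arun{U_k}_A q_k' \in F$. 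For $i > k$ I reuse the runs from the induction hypothesis. A short check that the leading $1$'s vanish and that the pieces concatenate in order then confirms $U_1 \cdots U_n = V \cdot W = U$. The main obstacle is precisely the parallel case: making the induction well-founded despite unbounded thread length, which is exactly why the fork-acyclicity of $A_2$ and the lexicographic measure are essential; the remaining difficulty is the purely clerical stack bookkeeping needed to align the induction hypothesis for $v$ with the components of $w$.
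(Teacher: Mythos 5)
Your proof is correct, and its skeleton coincides with the paper's: a base case $\ell = 0$ in which every stack entry of $w = w' \in F_2$ is accepting; splitting off the first unit run via \cref{lemma:run-deconstruct}; normalising that unit run to the shape ``peel accepting states $q_1, \dots, q_{k-1}$ off the top, then fire from $q_k$ along $q_k \mathrel{\uparrow} r y$'' (the paper obtains this by an inner induction on the construction of $\delta_2$, you by unfolding the generating rules once and for all --- same content); applying the inductive hypothesis to the tail to decompose $v = q'\,y_1 \cdots y_s\,q_{k+1} \cdots q_n$ componentwise; and reassembling $U_k$ with \cref{fact:finish-run}, padding the peeled positions with $U_i = 1$. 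Where you genuinely depart from the paper is the induction measure, and your choice is arguably the sounder one. The paper inducts on the length $\ell$ alone and dismisses the parallel unit case as ``treated similarly''; but in that case one must convert the thread runs $s_i \arun{V_i'}_{A_2} s_i' \in F_2$ into $A$-runs by appealing to the very statement being proved, and those runs can be longer than $\ell$, so a length-only induction is not well-founded there --- the paper is implicitly performing structural induction on the derivation tree, with the thread runs as subderivations. Your lexicographic measure $(D_{A_2}(w), \ell)$ makes the recursion explicitly well-founded: fork targets of $\gamma_2$ are singleton stacks (\cref{fact:gamma-vs-gamma-prime}) lying strictly below $w$ by the fork-acyclicity of $A_2$ established just beforehand, so the depth component strictly drops on threads, while every tail successor satisfies $v \preceq_{A_2} w$, so depth is non-increasing there and the strictly decreasing length component takes over. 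Both arguments prove the same fact; yours spends the already-available fork-acyclicity of $A_2$ to eliminate the hand-waving in the parallel case, which is a worthwhile trade.
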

\begin{proof}[Proof of \cref{fact:split-stacked-run}]
We proceed by induction on the length $\ell$ of $w \arun{U}_{A_2} w'$.
In the base, where $\ell = 0$, we have $U = 1$ and $w = w' \in F_2$.
We choose for $1 \leq i \leq n$ that $U_i = 1$ and $q_i' = q_i \in F$.

For the inductive step, we have that $U = V \cdot W$ and $w'' \in Q_2$ such that $w \arun{V}_{A_2} w''$ is a unit run, and $w'' \arun{W}_{A_2} w'$ of length $\ell - 1$.
By induction, $w'' = r_1\cdots{}r_{m}$ and $W = W_1\cdots{}W_m$ such that for $1 \leq i \leq m$ there exists an $r_i' \in F$ with $r_i \arun{W_i}_A r_i'$.
Suppose that $w \arun{V}_{A_2} w''$ is a sequential unit run.
Then $V = \ltr{a}$ for some $\ltr{a} \in \Sigma$, and $w'' \in \delta_2(w, \ltr{a})$.
We proceed by induction on $\delta_2$.

In the base, we have that $w = q_1 \cdots q_n$ and $w'' = q' \cdot v \cdot q_2 \cdots q_n$ such that there exists an $r \in Q$ with $q_1 \mathrel{\uparrow} r \cdot v$ and $q' \in \delta(r, \ltr{a})$.
Note that $q' \cdot v = r_1\cdots{}r_k$ for some $k \leq m$, and that $r_{i+k-1} = q_i$ for $2 \leq i \leq n$.
We choose $U_1 = \ltr{a} \cdot W_1 \cdots W_k$.
Since $r \arun{\ltr{a} \cdot W_1}_A r_1' \in F$ and for $2 \leq i \leq k$ we have $r_i \arun{W_i}_A r_i' \in F$, it follows that $q_1 \arun{U_1}_A r_k'$ by \cref{fact:finish-run}; we set $q_1' = r_k'$.
For $i \geq 2$, we choose $q_i' = r_{i+k-1}'$ and $U_i = W_{i+k-1}$, to find that $r_{i+k-1} \arun{W_{i+k-1}}_A r_{i+k-1}'$, and hence $q_i \arun{U_i} q_i'$.
Finally, we note that $U_1 \cdots U_n = \ltr{a} \cdot W_1 \cdots W_k \cdot W_{k+1} \cdots W_m = V \cdot W = U$.

In the inductive step, $w = q_1 \cdots q_n$ and $w'' \in \delta_2(q_2 \cdots{} q_n, \ltr{a})$, with $q_1 \in F$.
By induction, $U = U_2 \cdots{} U_n$ where, for $2 \leq i \leq n$, $q_i' \in F$ with $q_i \arun{U_i}_A q_i'$.
Here, $U_1 = 1$ and $q_1' = q_1$ suffices.

The case where $w \arun{V}_{A_2} w''$ is a parallel unit run can be treated similarly.
\end{proof}

\cref{fact:split-stacked-run} tells us that for $q \in Q$ we have that $L_{A_2}(q) \subseteq L_A(q)$.
After all, if $q \arun{U}_{A_2} w$ for some $w \in F_2$, then we find $q' \in F$ such that $q \arun{U}_A q'$, and hence $U \in L_A(q)$.

For the converse inclusion, the following facts tell us how we can compose runs in $A_2$.

\begin{fact}%
\label{fact:run-pad-right}
If $w \arun{U}_{A_2} w'$ and $x \in Q_2$, then $w \cdot x \arun{U}_{A_2} w' \cdot x$.
\end{fact}
\begin{proof}[Proof of \cref{fact:run-pad-right}]
We proceed by induction on the length $\ell$ of $w \arun{U}_{A_2} w'$.
In the base, where $\ell = 0$, we have that $w = w'$ and $U = 1$.
We then know that $w \cdot x = w' \cdot x$; hence $w \cdot x \arun{U}_{A_2} w' \cdot x$.

For the inductive step, let $\ell > 1$.
We then find $w'' \in Q_2$ and $U = V \cdot W$ such that $w \arun{V}_{A_2} w''$ is a unit run, and $w'' \arun{W}_{A_2} w'$ is of length $\ell - 1$.
Hence, $w'' \cdot x \arun{W}_{A_2} w' \cdot x$ by induction.
If $w \arun{V}_{A_2} w''$ is a sequential unit run, then $V = \ltr{a}$ for some $\ltr{a} \in \Sigma$, and $w'' \in \delta_2(w, \ltr{a})$.
By construction of $\delta_2$, we have $w'' \cdot x \in \delta_2(w \cdot x, \ltr{a})$, which means that $w \cdot x \arun{V}_{A_2} w'' \cdot x$.
In total, we have $w \cdot x \arun{U}_{A_2} w' \cdot x$.

The case where $w \arun{V}_{A_2} w''$ is a parallel unit run can be treated similarly.
\end{proof}

\begin{fact}%
\label{fact:run-follow-fork}
Let $q, r \in Q$ with $q' \in \gamma(q, \mset{r})$, and let $r \arun{U}_{A_2} w$ be nontrivial.
Then $q \arun{U}_{A_2} w \cdot q'$.
\end{fact}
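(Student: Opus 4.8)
The plan is to peel the first step off the run $r \arun{U}_{A_2/g} w$, rebuild it as a step out of $q$ that records $q'$ at the bottom of the stack, and then carry $q'$ along for the remainder of the run using \cref{fact:run-pad-right}. The construction is powered by a single observation about $\uparrow$: since $q' \in \gamma(q, \mset{r})$, the second rule defining $\uparrow$ may prepend $q$ to any stack reachable from $r$. Concretely, for all $s$ and $v$ with $r \mathrel{\uparrow} sv$, applying that rule to the premise $r \mathrel{\uparrow} sv$ together with $q' \in \gamma(q, \mset{r})$ yields $q \mathrel{\uparrow} svq'$; in particular $q \mathrel{\uparrow} rq'$, using $r \mathrel{\uparrow} r$.

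Next I would decompose the run. Since it is nontrivial, \cref{lemma:run-deconstruct} (applied to $A_2$ and read off together with the grades, so that they combine by $\max$) lets me write $U = V \cdot W$ with $r \arun{V}_{A_2/g_1} s_1$ a unit run and $s_1 \arun{W}_{A_2/g_2} w$, where $g_1, g_2 \leq g$. Because $r$ is a stack of length one, the two ``pop'' rules of $\delta_2$ and $\gamma_2$ cannot fire on it, so this first unit run is produced by the first rule of $\delta_2$ (if $V$ is sequential) or the third rule of $\gamma_2$ (if $V$ is parallel). In the sequential case this means $r \mathrel{\uparrow} sv$, $t \in \delta(s, \ltr{a})$ and $s_1 = tv$ for some $s, v, t$ and $\ltr{a} \in \Sigma$. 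By the observation above $q \mathrel{\uparrow} svq'$, and since $|s_1| = g_1 \leq g < D_A$ we get $|s_1 q'| = |s_1| + 1 \leq D_A$, i.e.\ $s_1 q' \in Q_2$; the first rule of $\delta_2$ (with empty trailing context) then gives $s_1 q' = tvq' \in \delta_2(q, \ltr{a})$, hence the unit run $q \arun{V}_{A_2/g_1'} s_1 q'$ with $g_1' = \max(1, |s_1|+1) = |s_1| + 1 \leq g + 1$. The parallel case is identical, using the third rule of $\gamma_2$ in place of the first rule of $\delta_2$ and leaving the fork multiset and its accepting sub-runs untouched.

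Finally I would append $q'$ to the remainder. As $g_2 \leq g < D_A$ we have $g_2 + 1 \leq D_A$, so \cref{fact:run-pad-right} applied with $x = q'$ turns $s_1 \arun{W}_{A_2/g_2} w$ into $s_1 q' \arun{W}_{A_2/g_2'} w q'$ with $g_2' \leq g_2 + 1 \leq g + 1$. Composing the rebuilt first step with this padded remainder via the composition rule of the graded relation gives $q \arun{V \cdot W}_{A_2/g'} w q'$ with $g' = \max(g_1', g_2') \leq g + 1$, and $V \cdot W = U$, as required. I expect the main obstacle to be bookkeeping rather than the core idea: I must justify that the first-step/remainder split can be taken at the level of graded runs with both pieces bounded by $g$, and I must verify in both the sequential and parallel cases that pushing the fresh symbol $q'$ keeps every intermediate stack within the height bound $D_A$. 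This last point is exactly where the hypothesis $g < D_A$ is indispensable.
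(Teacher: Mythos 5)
Your proposal is correct in substance, and it rests on exactly the same three ingredients as the paper's proof: the observation that $q' \in \gamma(q, \mset{r})$ lets the second rule of $\uparrow$ turn $r \mathrel{\uparrow} sv$ into $q \mathrel{\uparrow} svq'$, the fact that on a length-one stack the ``pop'' rules of $\delta_2$ and $\gamma_2$ cannot produce anything (so the first step out of $r$ comes from the base rules and can be rebuilt out of $q$ with $q'$ pushed underneath), and \cref{fact:run-pad-right} to carry $q'$ through the rest, with $g < D_A$ supplying the headroom $g_2 + 1 \leq D_A$. Where you diverge is the scaffolding: the paper proves the fact by induction on the construction of the graded run itself --- the unit-run cases are handled by the $\uparrow$ observation, and the composite case applies the induction hypothesis to the (w.l.o.g.\ nontrivial) left factor and \cref{fact:run-pad-right} to the right factor --- whereas you do a one-shot peel of the leading unit run and pad the whole remainder once. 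Your route is arguably cleaner to read, but note that it leans on a \emph{graded} strengthening of \cref{lemma:run-deconstruct} (a first-step split in which both pieces keep grade at most $g$), which the paper only states for the ungraded run relation; you flag this yourself, and it is true, but proving it requires essentially the same induction on the derivation tree that the paper performs directly, so your version relocates rather than removes that induction. Two small points to tighten if you write this up: the claim $g_1 = |s_1|$ for the first unit run uses $|s_1| \geq 1$ (which holds since $s_1 = tv$ starts with a state), and in the parallel case the grades $g_1, \dots, g_n$ of the forked sub-runs reappear in the new grade $\max(g_1, \dots, g_n, 1, |s_1 q'|)$, so ``untouched'' should be checked against the grade bound --- it goes through, exactly as in the paper's computation.
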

\begin{proof}[Proof of \cref{fact:run-follow-fork}]
Since $r \arun{U}_{A_2} w$ is nontrivial, $U = V \cdot W$ and $x \in Q_2$ such that $r \arun{V}_{A_2} x$ is a unit run, and $x \arun{W}_{A_2} w$.
If $r \arun{V}_{A_2} x$ is a sequential unit run, then $V = \ltr{a}$ for some $\ltr{a} \in \Sigma$, and $x \in \delta_2(r, \ltr{a})$.
By construction of $\delta_2$, we obtain $q'', r \in Q$ and $y \in Q_2$ such that $x = q'' \cdot y$ and $r \mathrel{\uparrow} r' \cdot y$ as well as $q'' \in \delta(r', \ltr{a})$.
By definition of $\uparrow$, also $q \mathrel{\uparrow} r' \cdot y \cdot q'$, and thus $x \cdot q' = q'' \cdot y \cdot q' \in \delta_2(q, \ltr{a})$.
We then find that $q \arun{V}_{A_2} x \cdot q'$.
Since $x \cdot q' \arun{W}_{A_2} w \cdot q'$ by \cref{fact:run-pad-right}, we conclude that $q \arun{U}_{A_2} w \cdot q'$.

The case where $r \arun{V}_{A_2}$ is a parallel unit run can be argued similarly.
\end{proof}

\begin{fact}%
\label{fact:run-pad-left}
Let $w \arun{U}_{A_2} w'$ be nontrivial, and $x \in F_2$.
Then $x \cdot w \arun{U}_{A_2} w'$.
\end{fact}
\begin{proof}[Proof of \cref{fact:run-pad-left}]
Since $w \arun{U}_{A_2} w'$ is non-trivial, we find that $U = V \cdot W$ and $w'' \in Q_2$ such that $w \arun{V}_{A_2} w''$ is a unit run, and $w'' \arun{W}_{A_2} w'$.
If $w \arun{V}_{A_2} w''$ is a sequential unit run, then $V = \ltr{a}$ for some $\ltr{a} \in \Sigma$, and $w'' \in \delta_2(w, \ltr{a})$.
A simple inductive argument on the length of $x$ then tells us that $w'' \in \delta_2(x \cdot w, \ltr{a})$ as well.
From this, it follows that $x \cdot w \arun{V}_{A_2} w''$, and thus $x \cdot w \arun{U}_{A_2} w'$.

The case where $w \arun{V}_{A_2} w''$ is a parallel unit run can be argued similarly.
\end{proof}

\noindent
Finally, we can use the above to show that $A_2$ can simulate the unary forks of $A$.

\begin{fact}%
\label{fact:simulate-on-stack}
If $q \arun{U}_A q'$, then there exists $x \in F_2$ such that $q \arun{U}_{A_2} x \cdot q'$.
\end{fact}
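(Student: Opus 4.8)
The plan is to prove this by induction on the derivation of the run $q \arun{U}_A q'$, distinguishing the four ways it can be built, and to read off the unary fork depth $d$ from the same derivation so that the grade bound $g \leq d$ falls out case by case. Throughout I would carry the auxiliary invariant that the height of the final state of a graded run never exceeds its grade (immediate from the grading rules), so that whenever a sub-run ends in a configuration $x q''$ one automatically has $|x q''| \leq g \leq d \leq D_A$; here fork-acyclicity is what guarantees $d \leq D_A$, since each unary fork strictly decreases $\prec_A$ and hence a chain of nested unary forks has length at most $D_A$.

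In the two base cases I would take $x$ empty and simulate directly. For the trivial run the claim is immediate with grade $1$, and for a sequential unit run with $q' \in \delta(q, \ltr{a})$ the observation $q \mathrel{\uparrow} q$ together with the first rule for $\delta_2$ (see its accompanying footnote) gives $q' \in \delta_2(q, \ltr{a})$, so $q \arun{\ltr{a}}_{A_2/1} q'$.

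For a parallel unit run $q \arun{U_1 \parallel \cdots \parallel U_n}_A q'$ with $q' \in \gamma(q, \mset{r_1, \dots, r_n})$ and $r_i \arun{U_i}_A r_i' \in F$, I would split on $n$. When $n \geq 2$, each $r_i \arun{U_i}_A r_i'$ is a sub-derivation, so by induction $r_i \arun{U_i}_{A_2} x_i r_i'$ with $x_i r_i' \in F_2$ (since $r_i' \in F$ and $x_i \in F_2$); using $q \mathrel{\uparrow} q$ the third $\gamma_2$-rule yields $q' \in \gamma_2(q, \mset{r_1, \dots, r_n})$, whence $q \arun{U}_{A_2} q'$ directly with $x$ empty, the grade being the maximum of the sub-grades. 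When $n = 1$, i.e.\ a genuine unary fork, \emph{parsimony} forces $U_1 \neq 1$ (fork targets cannot accept the empty pomset), so the sub-run is nontrivial; by induction $r \arun{U_1}_{A_2/g''} y r'$ with $g'' \leq d - 1 < D_A$ by fork-acyclicity, and I would invoke \cref{fact:run-follow-fork} to obtain $q \arun{U_1}_{A_2} (y r') q'$, taking $x = y r'$ and paying the single extra stack slot matching the $+1$ in the unary fork depth.

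The composite case $q \arun{U}_A q'' \arun{V}_A q'$ is the crux. By induction I get $q \arun{U}_{A_2/g_1} x_1 q''$ and $q'' \arun{V}_{A_2/g_2} x_2 q'$ with $g_1, g_2 \leq d$; to resume the second run from the configuration $x_1 q''$, I would prepend the accepting block $x_1$ to the $V$-run using \cref{fact:run-pad-left}, which pops $x_1$ off before $V$'s own stack grows, and then compose the two graded runs to reach $x_2 q'$. The resulting grade is $\max(g_1, \max(|x_1 q''|, g_2)) \leq d$, using $|x_1 q''| \leq g_1$. The main obstacle is precisely the bookkeeping at this step: \cref{fact:run-pad-left} is stated with the precondition $g_2 + |x_1| \leq D_A$, which is stronger than the bound $\max(|x_1 q''|, g_2) \leq D_A$ that actually governs the glued run (the block $x_1$ never coexists at full height with the grown $V$-stack, as it is popped first). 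Certifying this precondition is the delicate point, and it is exactly where fork-acyclicity must be used to keep both $|x_1|$ and $g_2$ under control relative to $D_A$, together with the final-height-at-most-grade invariant. A final wrinkle is the degenerate sub-case $V = 1$ (and symmetrically $U = 1$), where the padding fact does not apply and one argues instead that the simulation of the non-empty part already witnesses the claim.
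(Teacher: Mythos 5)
Your proposal follows essentially the same route as the paper's proof: induction on the derivation of $q \arun{U}_A q'$, with $q \mathrel{\uparrow} q$ dispatching the trivial and sequential unit cases, \cref{fact:run-follow-fork} handling the unary fork subcase (where your explicit appeal to parsimony to get $U_1 \neq 1$, hence nontriviality, is a point the paper leaves implicit), a direct $\gamma_2$-step with $x$ empty for $n \geq 2$, and \cref{fact:run-pad-left} together with the bookkeeping $|x_1 q''| \leq g_1$ gluing the composite case. The precondition of \cref{fact:run-pad-left} that you flag as the delicate point is in fact applied without explicit verification in the paper's own proof as well, so your account matches its level of rigor and is, where it differs, slightly more careful.
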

\begin{proof}[Proof of \cref{fact:simulate-on-stack}]
We proceed by induction on $q \arun{U}_A q'$.
If $q \arun{U}_A q'$ is trivial, then the claim is satisfied by choosing $x = 1$.
Otherwise, suppose $q \arun{U}_A q'$ is a sequential unit run, i.e., $U = \ltr{a}$ for some $\ltr{a} \in \Sigma$, and $q' \in \delta(q, \ltr{a})$.
Since $q \mathrel{\uparrow} q$, we then have that $q' \in \delta_2(q, \ltr{a})$, and hence $q \arun{U}_{A_2} q'$.

For the inductive step, there are again two cases to consider.
\begin{itemize}
    \item
    Suppose $q \arun{U}_A q'$ because $U = V \cdot W$ and there exists $q'' \in Q$ such that $q \arun{V}_A q''$ and $q'' \arun{W}_A q'$.
    By induction, we obtain $x'', x' \in F_2$ such that $q \arun{V}_{A_2} x'' \cdot q''$ and $q'' \arun{W}_{A_2} x' \cdot q'$.
    Without loss of generality, $q'' \arun{W}_A q'$ is non-trivial, and hence neither is $q'' \arun{W}_{A_2} x' \cdot q'$.
    By \cref{fact:run-pad-left}, we find that $x'' \cdot q'' \arun{W}_{A_2} x' \cdot q'$.
    In total, we find that $q \arun{U}_{A_2} x \cdot q'$.

    \item
    Suppose $q \arun{U}_A q'$ because $U = U_1 \parallel \cdots \parallel U_n$, and there exist $r_1, \dots, r_n \in Q$ and $r_1', \dots, r_n' \in F$ such that for $1 \leq i \leq n$ we have $r_i \arun{U_i}_A r_i'$, and $q' \in \gamma(q, \mset{r_1, \dots, r_n})$.
    There are two subcases.
    \begin{itemize}
        \item
        If $n = 1$, then by induction we find $x_1 \in F_2$ such that $r_1 \arun{U_1}_{A_2} x_1' \cdot r_1'$.
        By \cref{fact:run-follow-fork}, we then find $q \arun{U}_{A_2} x_1' \cdot r_1' \cdot q'$.
        Choosing $x' = x_1' \cdot r_1'$ satisfies the claim.

        \item
        If $n \geq 2$, then $q' \in \gamma_2(q, \mset{r_1, \dots, r_n})$.
        By induction, we find for $1 \leq i \leq n$ an $x_i \in F_2$ with $r_i \arun{U_i}_{A_2} x_i' \cdot r_i' \in F_2$.
        Thus $q \arun{U}_{A_2} q'$; choosing $x' = 1$ satisfies the claim.
        \qedhere
    \end{itemize}
\end{itemize}
\end{proof}

\noindent
The above allows us to prove that, for $q \in Q$, we have $L_A(q) \subseteq L_{A_2}(q)$.
To this end, suppose $U \in L_A(q)$; then there exists a $q' \in F$ with $q \arun{U}_A q'$.
By \cref{fact:simulate-on-stack} we find $x' \in F_2$ with $q \arun{U}_{A_2} x \cdot q' \in F_2$, and hence $U \in L_{A_2}(q)$.
Since $L_{A_2}(q) \subseteq L_A(q)$, it follows that $L_A(q) = L_{A_2}(q)$.

\smallskip
Note that it is $2$-forking by construction.
For parsimony, observe that if $w \in \phi \in \M(Q_2)$ and $x \in Q_2$ such that $\gamma_2(x, \phi)$, then $\phi \in \M(Q)$ by definition of $\gamma_2$, and hence $w = q$ for some $q \in Q$.
A simple inductive argument then tells us that there exists an $r \in Q$ such that $\gamma(r, \phi) \neq \emptyset$.
Since $A$ is parsimonious, we know that $1 \not\in L_A(q)$; since $L_A(q) = L_{A_2}(q)$, it follows that $1 \not\in L_{A_2}(q)$.
\end{proof}

\subsection{Ensuring flat-branching}

\restateflatbranchingcorrectness*
\begin{proof}
The proof of this statement consists of several steps.
In the sequel, we will write $\pst{Q} = \setcompr{\pst{q}}{q \in Q}$ and $\sst{Q} = \setcompr{\sst{q}}{q \in Q}$.
We start by making the following observations:
\begin{fact}%
\label{fact:flat-branching-correctness-observations}
The following hold for all $\psi \in \M(Q)$:
\begin{enumerate}[label={(\roman*)}]
    \item\label{property:fork-stack-vs-size}
    If $\phi \blacktriangleleft \psi$, then $|\phi| \leq |\psi|$.

    \item\label{property:flat-branching-fork-target}
    If $p \in Q_3$ and $\phi \in \M(Q_3)$ such that $\gamma_3(p, \phi) \neq \emptyset$, then $\phi \in \M(\sst{Q})$.

    \item\label{property:flat-branching-fork-landing}
    If $p \in Q$ and $\phi \in \M(Q_3)$, then $\gamma_3(\sst{p}, \phi) \subseteq \sst{Q} \cup \pst{Q}$.
\end{enumerate}
\end{fact}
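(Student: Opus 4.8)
The plan is to prove the three claims separately, observing that only the first needs an inductive argument, while the latter two follow immediately from the shape of the rules generating $\gamma_3$.

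For the first claim I would proceed by induction on the derivation of $\phi \blacktriangleleft \psi$. In the base case $\blacktriangleleft$ is applied reflexively, so $\psi = \phi$ and thus $|\phi| = |\psi|$. For the inductive step, suppose $\phi \blacktriangleleft \psi \sqcup \chi$ was derived from $\gamma(p, \chi) \cap F \neq \emptyset$ together with $\phi \blacktriangleleft \psi \sqcup \mset{p}$. By the induction hypothesis $|\phi| \leq |\psi \sqcup \mset{p}| = |\psi| + 1$. Since $A$ is $2$-forking and $\gamma(p, \chi) \cap F \neq \emptyset$ entails $\gamma(p, \chi) \neq \emptyset$, we have $|\chi| \geq 2$, and therefore $|\phi| \leq |\psi| + 1 \leq |\psi| + |\chi| = |\psi \sqcup \chi|$, as required. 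The only subtlety is remembering to invoke the standing hypothesis that $A$ is $2$-forking in order to guarantee $|\chi| \geq 1$; without it the bound could fail, e.g.\ if $A$ admitted nullary forks (where $\chi = \mempty$).

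For the second claim I would argue by inspecting the rules that can make $\gamma_3(p, \phi)$ non-empty. Since $\gamma_3(\top, \phi) = \emptyset$ by definition, the source $p$ cannot be $\top$; among the rules acting on the remaining states, only the two $\gamma$-rules produce parallel transitions, namely the rule yielding $\sst{q}, \pst{q} \in \gamma_3(\sst{p'}, \sst{\phi'})$ and the rule yielding $\top \in \gamma_3(\pst{p'}, \sst{\phi'})$. In both cases the second argument is of the form $\sst{\phi'} = \mset{\sst{q_1}, \dots, \sst{q_n}}$ for some $\phi' \in \M(Q)$, so any $\phi$ with $\gamma_3(p, \phi) \neq \emptyset$ must lie in $\M(\sst{Q})$. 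For the third claim I would again inspect the generating rules, now filtering by the source $\sst{p}$: of the $\gamma$-rules, only the one contributing $\sst{q}, \pst{q} \in \gamma_3(\sst{p}, \sst{\phi'})$ has a source of the form $\sst{\cdot}$ (the other has source $\pst{p'}$, and $\gamma_3(\top, \cdot) = \emptyset$). Its targets $\sst{q}$ and $\pst{q}$ both lie in $\sst{Q} \cup \pst{Q}$, so $\gamma_3(\sst{p}, \phi) \subseteq \sst{Q} \cup \pst{Q}$; in particular $\top$ is never a fork target from an $\sst{\cdot}$-state, which is precisely the property that will later drive flat-branching.

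Overall none of these present a genuine obstacle: the work is entirely in carefully enumerating the rule cases and, for the first claim, in locating the one spot where $2$-forking of $A$ is actually used. The main thing to be careful about is not overlooking the $\top$-clauses of the definition of $A_3$ when asserting that the case analysis in the second and third claims is exhaustive.
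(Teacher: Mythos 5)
Your proof is correct and takes essentially the same approach as the paper: induction on the derivation of $\blacktriangleleft$ for claim (i), and direct inspection of the rules generating $\gamma_3$ (including the $\top$-clauses) for claims (ii) and (iii). The only cosmetic difference is that the paper derives $|\chi| \geq 1$ from $1$-forking, while you invoke the stronger standing hypothesis that $A$ is $2$-forking --- either suffices for the bound $|\phi| \leq |\psi| + 1 \leq |\psi \sqcup \chi|$, and you correctly pinpoint that this is the one place the forking hypothesis is actually needed.
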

\begin{proof}
We treat the claims in the order given.
\begin{enumerate}[label={(\roman*)}]
    \item
    This claim is proved by induction on $\blacktriangleleft$.
    In the base, $\phi \blacktriangleleft \psi$ because $\phi = \psi$, and so the claim holds immediately.
    For the inductive step, we have that $\phi \blacktriangleleft \psi$ because $\psi = \psi_1 \sqcup \psi_2$, such that $\phi \blacktriangleleft \psi_1$, and $\gamma(p, \psi_2) \cap F \neq \emptyset$ with $\phi \blacktriangleleft \psi_1 \sqcup \mset{p}$.
    By induction, we have that $|\phi| \leq |\psi_1| + 1$.
    Since $A$ is $1$-forking, we have that $|\psi_2| \geq 1$; hence, we conclude that $|\phi| \leq |\psi_1| + |\psi_2| = |\psi|$.

    \item
    If $p \in Q_3$ and $\phi \in \M(Q_3)$ such that $\gamma_3(p, \phi) \neq \emptyset$, then by definition of $\gamma_3$ we have that $\phi = \sst{\psi}$ for some $\psi \in \M(Q)$.
    Hence, $\phi \in \M(\sst{Q})$.

    \item
    Suppose that $p \in Q$ and $\phi \in \M(Q_3)$, and let $q \in \gamma_3(\sst{p}, \phi)$.
    By definition of $\gamma_3$, we have that $q \in \set{\sst{r}, \pst{r}}$ such that $r \in \gamma(p, \psi)$ for some $\psi \in \M(Q)$.
    Hence, $q \in \sst{Q} \cup \pst{Q}$.
    \qedhere
\end{enumerate}
\end{proof}

\noindent
We are now set to prove that $A_3$ indeed satisfies the right properties.

\begin{fact}%
\label{fact:flat-branch-properties}
$A_3$ is $2$-forking, parsimonious, and flat-branching.
\end{fact}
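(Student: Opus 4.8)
The three properties follow largely from the structural observations in \cref{fact:flat-branching-correctness-observations} together with the shape of the rules defining $\gamma_3$, and I will treat them in increasing order of difficulty. For $2$-forking, suppose $\gamma_3(q,\phi)\neq\emptyset$ for some $q\in Q_3$ and $\phi\in\M(Q_3)$. By \autoref{fact:flat-branching-correctness-observations}\ref{property:flat-branching-fork-target} we have $\phi = \sst{\chi}$ for some $\chi\in\M(Q)$, and inspecting the only two rules that populate $\gamma_3$ (the third and fourth rules defining $A_3$), this transition arises from some $\psi\blacktriangleleft\chi$ with $\gamma(p,\psi)\neq\emptyset$ for a suitable $p\in Q$. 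Since $A$ is $2$-forking, $|\psi|\geq 2$, and $|\phi| = |\chi| \geq |\psi|$ by \autoref{fact:flat-branching-correctness-observations}\ref{property:fork-stack-vs-size}; hence $|\phi|\geq 2$. For flat-branching, let $\gamma_3(q,\phi)\neq\emptyset$ with $p\in\phi$. Again \autoref{fact:flat-branching-correctness-observations}\ref{property:flat-branching-fork-target} gives $p = \sst{r}$ for some $r\in Q$, and since $F_3 = \set{\top}$ it suffices to show $\top\notin\gamma_3(\sst{r},\psi)$ for every $\psi$. But the only rule producing $\top$ as the target of a $\gamma_3$-transition is the fourth, which fires only from states of the form $\pst{p}$; the third rule, the only one applicable to a source $\sst{r}$, yields targets in $\sst{Q}\cup\pst{Q}$ exclusively. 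Thus $\gamma_3(\sst{r},\psi)\cap F_3 = \emptyset$.

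Parsimony is the substantive part. By \autoref{fact:flat-branching-correctness-observations}\ref{property:flat-branching-fork-target} every fork target of $A_3$ is of the form $\sst{q}$, so it suffices to prove that $1\notin L_{A_3}(\sst{q})$ for all $q\in Q$, which I plan to establish by well-founded induction on $\prec_A$ (available since $A$ is finite and fork-acyclic). Fixing $q$ and assuming for contradiction that $\sst{q}\arun{1}_{A_3}\top$, I would decompose this run into unit runs via \cref{lemma:run-deconstruct}: every factor is then labelled by $1$, the trivial factors keep us at $\sst{q}$, and no factor can be a sequential unit run (those read a primitive pomset). Hence the first non-trivial factor is a parallel unit run, i.e.\ a fork $s\in\gamma_3(\sst{q},\sst{\chi})$ all of whose threads accept $1$, so that $1\in L_{A_3}(\sst{c})$ for each $c\in\chi$.

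The crux is then to show that every $c\in\chi$ satisfies $c\prec_A q$, after which the induction hypothesis yields $1\notin L_{A_3}(\sst{c})$ — a contradiction, since $\chi$ is non-empty ($A$ is $1$-forking, so $|\chi|\geq|\psi|\geq 1$). To this end I would prove, by induction on the derivation of $\blacktriangleleft$, the auxiliary statement that \emph{if $\gamma(q,\psi)\neq\emptyset$ and $\psi\blacktriangleleft\chi$, then every element of $\chi$ is $\prec_A q$}. In the reflexive base, $\chi=\psi$ and its elements are fork targets of $q$, hence $\prec_A q$ by fork-acyclicity of $A$. In the inductive step, $\chi = \alpha\sqcup\beta$ arises from $\psi\blacktriangleleft\alpha\sqcup\mset{p}$ and $\gamma(p,\beta)\cap F\neq\emptyset$; the inductive hypothesis makes $p$ and all elements of $\alpha$ satisfy $\prec_A q$, while the elements of $\beta$ are fork targets of $p$, hence $\prec_A p\prec_A q$ by fork-acyclicity and transitivity of $\prec_A$. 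This auxiliary step, which ties the combinatorics of the saturation relation $\blacktriangleleft$ to the well-founded support order, is the main obstacle; once it is in place, the three properties combine as above to establish \cref{fact:flat-branch-properties}.
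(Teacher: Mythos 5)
Your proof is correct. For $2$-forking and flat-branching you follow essentially the same route as the paper: both arguments rest on the structural observations that every fork multiset of $A_3$ has the form $\sst{\chi}$ arising from some $\psi \blacktriangleleft \chi$ with $\gamma(r, \psi) \neq \emptyset$, that $\blacktriangleleft$ does not decrease size, and that $\gamma_3$-transitions out of $\sst{Q}$-sources land in $\sst{Q} \cup \pst{Q}$, which is disjoint from $F_3 = \set{\top}$. The genuine divergence is parsimony. The paper disposes of it in two lines: fork targets of $A_3$ lie in $\sst{Q}$, hence outside $F_3$, and since $A_3$ is $1$-forking (a consequence of the $2$-forking property, proved first), \cref{fact:parsimony-vs-empty-pomset} applies directly and yields $1 \in L_{A_3}(p)$ iff $p \in F_3$, so no fork target accepts the empty pomset. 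You instead re-prove the relevant instance of that fact from scratch: well-founded induction on $\prec_A$, a decomposition of a hypothetical run $\sst{q} \arun{1}_{A_3} \top$ into unit runs (all necessarily parallel, since sequential unit runs read letters), and an auxiliary induction on $\blacktriangleleft$ showing that if $\gamma(q, \psi) \neq \emptyset$ and $\psi \blacktriangleleft \chi$, then every element of $\chi$ is strictly below $q$ in the support order --- using fork-acyclicity of $A$ and transitivity of $\prec_A$ to close the contradiction. All of these steps check out. As for what each approach buys: the paper's argument is shorter and reuses machinery already needed for \cref{lemma:well-structured-iff-triple}, while yours is self-contained, and your auxiliary lemma tying $\blacktriangleleft$ to $\prec_A$ has independent value --- a statement of exactly that shape is what one needs to verify that $A_3$ preserves fork-acyclicity, a point the paper's correctness proof leaves implicit. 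Had you spotted \cref{fact:parsimony-vs-empty-pomset} in the section's toolkit, your entire inductive apparatus could have been replaced by a single citation.
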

\begin{proof}
For $2$-forking, suppose $p \in Q_3$ and $\phi \in \M(Q_3)$ such that $\gamma_3(p, \phi) \neq \emptyset$.
Then by definition of $\gamma_3$ we find $\chi, \psi \in \M(Q)$ and $r \in Q$, such that $\phi = \sst{\chi}$, $\psi \blacktriangleleft \chi$, and $\gamma(r, \psi) \neq \emptyset$.
Since $A$ is $2$-forking, we can conclude by \cref{fact:flat-branching-correctness-observations}\ref{property:fork-stack-vs-size} that $2 \leq |\psi| \leq |\chi| = |\phi|$.

For parsimony, suppose $\gamma_3(p,\phi) \neq \emptyset$ and $q \in \phi$.
Then by \cref{fact:flat-branching-correctness-observations}\ref{property:flat-branching-fork-target} we know that $q \in \sst{Q}$, so $q \not\in F_3$.
Since $A_3$ is $1$-forking, $1 \not\in L_{A_3}(q)$ by \cref{fact:parsimony-vs-empty-pomset}, hence $A_3$ is parsimonious.

For flat-branching, suppose $p \in Q_3$ is a fork target.
Then by \cref{fact:flat-branching-correctness-observations}\ref{property:flat-branching-fork-target}, we know that $p \in \sst{Q}$.
By \cref{fact:flat-branching-correctness-observations}\ref{property:flat-branching-fork-landing}, we can then conclude that
\(
    \gamma_3(p, \psi) \cap F_3
        \subseteq (\pst{Q} \cup \sst{Q}) \cap F_3
        = \emptyset.
\)
\end{proof}

\noindent
We can now relate the runs of $A_3$ to those in $A$ as follows.

\begin{fact}%
\label{fact:flat-branching-correctness-backward}
If $p \arun{U}_{A_3} q$, then the following hold:
\begin{enumerate}[label={(\roman*)}]
    \item
    If $p = \sst{p'}$ and $q \in \set{\sst{q'}, \pst{q'}}$, then $p' \arun{U}_A q'$.

    \item
    If $p \in \set{\sst{p'}, \pst{p'}}$ and $q = \top$, then there exists a $q' \in F$ with $p' \arun{U}_A q'$.
\end{enumerate}
\end{fact}
\begin{proof}
We proceed by induction on $p \arun{U}_{A_3} q$.
In the base, the case where $p \arun{U}_{A_3} q$ is trivial holds vacuously.
Otherwise, if $p \arun{U}_{A_3} q$ because $U = \ltr{a}$ for some $\ltr{a} \in \Sigma$ and $q \in \delta_3(p, \ltr{a})$, then we know that $p \not\in \pst{Q}$ by definition of $\delta_3$.
Therefore, assume $p = \sst{p'}$ for some $p' \in Q$.
There are two cases.
\begin{itemize}
    \item
    If $q\in\set{\sst {q'},\pst {q'}}$, then it must be the case that $q' \in \delta(p', \ltr{a})$, so $p' \arun{\ltr{a}}_A q'$.

    \item
    If $q = \top$, then $\delta(p', \ltr{a}) \cap F \neq \emptyset$.
    Choose $q' \in \delta(p',\ltr{a}) \cap F$ to find that so $p' \arun{\ltr{a}}_A q'$.
\end{itemize}
For the inductive step, there are again two cases.
\begin{itemize}
    \item
    Suppose $p \arun{U}_{A_3} q$ because $U = V \cdot W$ and there exists an $r \in Q_3$ such that $p \arun{V}_{A_3} r$ and $r \arun{W}_{A_3} q$.
    Furthermore, we may assume w.l.o.g.\ that neither of these runs is trivial.
    Since $\top$ does not permit nontrivial runs, we have $r \in \pst{Q} \cup \sst{Q}$.
    Furthermore, $p \not\in \pst{Q}$, because if $p \in \pst{Q}$ then $r = \top$; we set $p = \sst{p'}$.
    We do a case analysis on $r$.
    \begin{itemize}
        \item
        If $r = \pst{r'}$, then necessarily $q = \top$.
        By the induction hypothesis we get that $p' \arun{V}_A r'$ and we find $q' \in F$ such that $r' \arun{W}_A q'$, hence $p' \arun{U}_A q'$.

        \item
        If $r = \sst{r'}$, then by induction we get that $p' \arun{V \cdot W}_A r'$.
        We now look at $q$:
        \begin{itemize}
            \item
            If $q \in \set{\sst {q'},\pst{q'}}$, then by induction we have $r' \arun{W}_A q'$, so $p' \arun{V \cdot W}_A q'$.
            \item
            If $q = \top$, then induction gives us $q' \in F$ such that $r' \arun{W}_A q'$, and thus $p' \arun{V \cdot W}_A q'$.
        \end{itemize}
    \end{itemize}

    \item
    Suppose $p \arun{U}_{A_3} q$ because $q_1, \dots, q_n \in Q_3$ with $q \in \gamma_3(p, \mset{q_1, \dots, q_n})$, and $U = U_1 \parallel \cdots \parallel U_n$ such that for $1 \leq i \leq n$ we have $q_i \arun{U_i}_{A_3} \top$.
    Since $A_3$ is $1$-forking and parsimonious, each $U_i$ is non-empty by \cref{fact:parsimony-vs-empty-pomset}.
    By definition of~$\gamma_3$, for each $1 \leq i \leq n$ there exists a $q'_i \in Q$ such that $q_i = \sst{q'_i}$.
    By induction, we obtain for each $1 \leq i \leq n$ a $q_i'' \in F$ such that $q_i' \arun{U_i}_A q_i''$.

    On the one hand, suppose $p = \sst{p'}$ for $p' \in Q$.
    In that case, $q \in \set{\sst{q'}, \pst{q'}}$ for some $q' \in \gamma(p', \phi)$, with $\phi \blacktriangleleft \mset{q_1, \dots, q_n}$.
    We show $p' \arun{U}_A q'$ by induction on $\blacktriangleleft$.
    In the base, $\phi = \mset{q_1, \dots, q_n}$, and so the claim follows.
    In the inductive step, $\mset{q_1, \dots, q_n} = \mset{q_1, \dots, q_k} \sqcup \mset{q_{k+1}, \dots, q_n}$ and there exists an $r \in Q$ such that $\gamma(r, \mset{q_{k+1}, \dots, q_n}) \cap F \neq \emptyset$ and $\phi \blacktriangleleft \mset{q_1, \dots, q_k} \sqcup \mset{r}$.
    In that case, $r \arun{U_{k+1} \parallel \cdots \parallel U_n}_A r'$ for some $r' \in F$; hence, by induction, $p' \arun{U_1 \parallel \cdots \parallel U_n}_A q'$.

    On the other hand, if $p = \pst{p'}$ for some $p' \in Q$, then $q' = \top$ by definition of $\gamma_3$.
    Furthermore, there exists $q' \in \gamma(p', \phi) \cap F$ for some $\phi \in \M(Q)$ with $\phi \blacktriangleleft \mset{q_1, \dots, q_n}$.
    A similar inductive argument to the previous case then shows that $p' \arun{U}_A q'$.
    \qedhere
\end{itemize}
\end{proof}

\begin{fact}%
\label{fact:flat-branching-correctness-forward}
If $p \arun{U}_A q$ is nontrivial, then $\sst{p} \arun{U}_{A_3} \sst{q}$ and $\sst{p} \arun{U}_{A_3} \pst{q}$.

Furthermore, if $q \in F$, then either $\sst{p} \arun{U}_{A_3} \top$ or $\pst{p} \arun{U}_{A_3} \top$.
\end{fact}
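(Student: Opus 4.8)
The plan is to prove all three statements simultaneously by induction on the structure of the derivation of $p \arun{U}_A q$, proving the third claim in the sharper form that $\pst{p} \arun{U}_{A_3} \top$ holds exactly when the run is a single parallel unit run, and $\sst{p} \arun{U}_{A_3} \top$ holds in every other (nontrivial) case. This refinement is what makes the parallel case go through, and it is faithful because $\top$ is reached from an $\sst{}$-state only by the sequential rule and from a $\pst{}$-state only by the parallel rule. Since $A$ is parsimonious and $1$-forking, \cref{fact:parsimony-vs-empty-pomset} guarantees that every nontrivial run carries a nonempty label and that fork targets lie outside $F$; this lets me treat any empty-labelled sub-run as trivial and keeps all thread sub-runs nontrivial, so the induction hypothesis always applies to them.

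The easy cases go as follows. For a sequential unit run $p \arun{\ltr{a}}_A q$ with $q \in \delta(p,\ltr{a})$, the first rule for $\delta_3$ yields $\sst{p} \arun{\ltr{a}}_{A_3} \sst{q}$ and $\sst{p} \arun{\ltr{a}}_{A_3} \pst{q}$, and when $q \in F$ the second rule gives $\top \in \delta_3(\sst{p},\ltr{a})$, i.e.\ the $\sst{p}$-disjunct of the third claim. For a composite run $U = V \cdot W$ with $p \arun{V}_A s$ and $s \arun{W}_A q$ both nontrivial, I chain the runs supplied by the induction hypothesis: $\sst{p} \arun{V}_{A_3} \sst{s}$ followed by $\sst{s} \arun{W}_{A_3} \sst{q}$ (resp.\ $\pst{q}$) settles claims (1) and (2). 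For (3) I decompose the run via \cref{lemma:run-deconstruct} and route the prefix through $\sst{s}$ or $\pst{s}$ according to whether the last unit run is sequential or parallel, composing it with the matching transition into $\top$; since a sequential prefix is present, this always lands in the $\sst{p}$-disjunct.

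The crux is the parallel unit run $U = U_1 \parallel \cdots \parallel U_n$ with $q \in \gamma(p, \mset{q_1, \dots, q_n})$ and $q_i \arun{U_i}_A q_i' \in F$. The forks of $A_3$ land only in $\sst{}$-states and demand that each thread reach $F_3 = \set{\top}$, so I must realise every thread as an $\sst{}$-run into $\top$; yet the sharper claim (3) only delivers $\sst{q_i} \arun{U_i}_{A_3} \top$ when $q_i$'s run is not a single parallel unit run. I therefore build an expansion $\mset{q_1, \dots, q_n} \blacktriangleleft \phi$ by repeatedly replacing any thread whose given run \emph{is} a single parallel unit run $q_i' \in \gamma(q_i, \chi)$ with $q_i' \in F$ (so $\gamma(q_i, \chi) \cap F \neq \emptyset$, which licenses a $\blacktriangleleft$-step) by the multiset $\chi$, redistributing the label $U_i$ among the new threads. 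Once no such thread remains, the sharper claim (3) gives $\sst{s} \arun{W_s}_{A_3} \top$ for every $s$ occurring in $\phi$. The first rule for $\gamma_3$ then yields $\sst{p} \arun{U}_{A_3} \sst{q}$ and $\pst{q}$, and when $q \in F$ the fourth rule gives $\top \in \gamma_3(\pst{p}, \sst{\phi})$, i.e.\ the $\pst{p}$-disjunct.

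The main obstacle is ensuring this expansion is well-founded and that the $\blacktriangleleft$-steps it records assemble into a single relation $\mset{q_1, \dots, q_n} \blacktriangleleft \phi$. I expect to handle this by descending along the finite derivation tree of the run: each replacement trades a fork target $q_i$ for states $u \in \chi$ that are strictly smaller in $\prec_A$ by fork-acyclicity, so the process cannot loop, and the recorded replacements compose into one $\blacktriangleleft$-derivation by the inductive shape of that relation. This mirrors the role fork-acyclicity plays in the reverse direction, namely the inner $\blacktriangleleft$-induction of \cref{fact:flat-branching-correctness-backward}. The remaining bookkeeping is to verify that the redistributed thread labels recompose to $U$ up to the associativity and commutativity of $\parallel$, and that the degenerate empty-label situations are absorbed by \cref{fact:parsimony-vs-empty-pomset} as described above.
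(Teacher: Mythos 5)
Your proof is correct, and while it shares the paper's outer skeleton --- induction on the derivation of $p \arun{U}_A q$, with parsimony (\cref{fact:parsimony-vs-empty-pomset}) used to discard empty-labelled sub-runs --- it resolves the crucial parallel-fork case by a genuinely different mechanism. The paper keeps the weak disjunctive claim ``$\sst{p} \arun{U}_{A_3} \top$ or $\pst{p} \arun{U}_{A_3} \top$'', applies it once per thread, and when a thread returns the $\pst{q_i}$-variant it \emph{inverts the resulting $A_3$-run}: by construction of $A_3$, a nontrivial run from $\pst{q_i}$ to $\top$ must be a single parallel unit run, whose $\gamma_3$-transition already carries, by definition, a witness $\psi \blacktriangleleft \mset{q_{i,1},\dots,q_{i,n_i}}$ with $\gamma(q_i,\psi)\cap F \neq \emptyset$; transitivity of $\blacktriangleleft$ (proved inline) then assembles the global expansion. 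You instead strengthen the induction hypothesis --- classifying which disjunct holds by the last rule of the derivation --- and perform the expansion on the $A$-side, rewriting the fork multiset by unfolding nested parallel unit runs until none remain. Your observation that the single $\blacktriangleleft$-rule is literally ``replace one element $p$ by $\chi$ when $\gamma(p,\chi)\cap F\neq\emptyset$'' makes your chain of replacements a $\blacktriangleleft$-derivation by construction, so you never need transitivity as a separate lemma; conversely, the paper never needs your termination argument, since its single structural induction suffices (your appeal to fork-acyclicity is sound, but the simpler measure is that each replacement descends to strict sub-derivations of the original run). Two cosmetic remarks: the ``exactly when'' in your sharpened claim is stronger than needed --- only the two forward implications are ever used --- and in the composite case the detour through \cref{lemma:run-deconstruct} is unnecessary, since claims (1) and (2) of the induction hypothesis already provide runs into both $\sst{s}$ and $\pst{s}$, and the sharpened claim (3) on the suffix tells you which one to compose with.
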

\begin{proof}
We proceed by induction on $p \arun{U}_A q$.
In the base, $p \arun{U}_A q$ because $U = \ltr{a}$ for some $\ltr{a} \in \Sigma$, and $q \in \delta(p, \ltr{a})$.
Thus $\sst{q}, \pst{q} \in \delta_3(\sst{p}, \ltr{a})$, and hence $\sst{p} \arun{U}_{A_3} \sst{q}$ and $\sst{p} \arun{U}_{A_3} \pst{q}$.
Furthermore, $q \in F$, then $\top \in \delta_3(\sst{p}, \ltr{a})$, and hence $\sst{p} \arun{U}_{A_3} \top$.
In the inductive step, there are two cases.
\begin{itemize}
    \item
    If $p \arun{U}_A q$ because $U = V \cdot W$ and there exists an $r \in Q$ with $p \arun{V}_A r$ and $r \arun{W}_A q$, then we can assume without loss of generality that neither of these runs is trivial.
    By induction, we then find that $\sst{p} \arun{V}_{A_3} \sst{r}$ and $\sst{p} \arun{W}_{A_3} \sst{r}$, as well as $\sst{r} \arun{W}_{A_3} \sst{q}$ and $\sst{r} \arun{W}_{A_3} \pst{q}$.
    Putting this together, we have that $\sst{p} \arun{U}_{A_3} \sst{q}$ and $\sst{p} \arun{U}_{A_3} \pst{q}$.

    Furthermore, if $q = \top$, then it suffices to prove that $\sst{r} \arun{W}_{A_3} \top$ or $\pst{r} \arun{W}_{A_3} \top$, which we obtain from $r \arun{W}_A q$ by induction.

    \item
    Suppose $p \arun{U}_A q$ because there exist $q_1, \dots, q_n \in Q$ such that $q \in \gamma(p, \mset{q_1, \dots, q_n})$, and $U = U_1 \parallel \cdots \parallel U_n$ such that for $1 \leq i \leq n$ there exists a $q_i' \in F$ with $q_i \arun{U_i}_A q_i'$.
    Since $A$ is parsimonious, we can assume without loss of generality that none of these runs is trivial.

    We then claim that, for $1 \leq i \leq n$, there exists a $\phi_i = \mset{q_{i,1}, \dots, q_{i,n_i}} \in \M(Q)$ such that $\mset{q_i} \blacktriangleleft \phi_i$, and $U_i = U_{i,1} \parallel \dots \parallel U_{i,n_i}$, such that for $1 \leq i \leq n_i$ we have that $\sst{q_{i,j}} \arun{U_{i,j}}_{A_3} \top$.
    Applying the induction hypothesis to each $q_i \arun{U_i}_A q_i' \in F$, there are two cases to consider:
    \begin{itemize}
        \item
        If $\sst{q_i} \arun{U_i}_{A_3} \top$, then we choose $n_i = 1$ and $q_{i,1} = q_i$ and $U_{i,1} = U_i$.

        \item
        If $\pst{q_i} \arun{U_i}_{A_3} \top$, then by construction of $A_3$ this must be a parallel unit run.
        Consequently, there exist $\sst{q_{i,1}}, \dots, \sst{q_{i,n_i}} \in \sst{Q}$ with $\top \in \gamma_3(\pst{q_i}, \mset{q_{i,1}, \dots, q_{i,n_i}})$, and $U_i = U_{i,1} \parallel \cdots \parallel U_{i,n_i}$ such that for $1 \leq j \leq n_i$ we have that $\sst{q_{i,j}} \arun{U_{i,j}}_{A_3} \top$.
        By definition of $\gamma_3$, we then obtain $\psi \in \M(Q)$ such that $\gamma(q_i, \psi) \cap F \neq \emptyset$ and $\psi \blacktriangleleft \mset{q_{i,1}, \dots, q_{i,n_i}}$.
        A straightforward inductive argument on the definition of $\blacktriangleleft$ shows that it is transitive; hence, since $\mset{q_i} \blacktriangleleft \psi$, we have that $\mset{q_i} \blacktriangleleft \mset{q_{i,1}, \dots, q_{i,n_i}}$.
    \end{itemize}

    \noindent
    Using the above, it follows that $\mset{q_1, \dots, q_n} \blacktriangleleft \mset{q_{1,1}, \dots, q_{n, n_n}}$.
    Hence,
    \[
        \sst{q}, \pst{q} \in \gamma_3(\sst{p}, \mset{\sst{q_{1,1}}, \dots, \sst{q_{n, n_n}}})
    \]
    Since $U = U_{1,1} \parallel \cdots \parallel U_{n_n}$, it follows that $\sst{p} \arun{U}_{A_3} \sst{q}$ and $\sst{p} \arun{U}_{A_3} \pst{q}$.

    Furthermore, if $q \in F$, then $\top \in \gamma_3(\pst{p}, \mset{\sst{q_{1,1}}, \dots, \sst{q_{n, n_n}}})$, and hence $\pst{p} \arun{U}_{A_3} \top$.
    \qedhere
\end{itemize}
\end{proof}

\noindent
We are now ready to show that our construction preserves languages.
More specifically, \cref{fact:flat-branching-correctness-backward,fact:flat-branching-correctness-forward} together imply that for $p \in Q$, we have
\[
    L_{A_1}(q) =
        L_{A_3}(\pst q) \cup
        L_{A_3}(\sst{q}) \cup
        \begin{cases}
        L_{A_3}(\top) & q \in F \\
        \emptyset & \text{otherwise}
        \end{cases}
\]

To see that our construction preserves fork-acyclicity and that $A_3$ is bounded, one can show that if $p, q \in Q$ are such that $\sst{p} \preceq_{A_3} \sst{q}$, $\sst{p} \preceq_{A_3} \pst{q}$, $\pst{p} \preceq_{A_3} \sst{q}$ or $\pst{p} \preceq_{A_3} \pst{q}$, then $p \preceq_{A_3} q$.
A fork cycle in $A_3$ thus gives rise to a fork cycle in $A$, which means that if $A$ is fork-acyclic, then so is $A_3$.
Furthermore, the support of a state $\sst{q}$ or $\pst{q}$ in $A_3$ is contained in $\setcompr{\sst{p}, \pst{p}}{p \in \pi_A(q)} \cup \set{\top}$; since the latter is finite as $A$ is bounded, it follows that $A_3$ must also be bounded.
\end{proof}

\section{Lemmas about the syntactic pomset automaton}

\subsection{Deconstruction lemmas}

\restaterundeconstructsequential*
\begin{proof}
We proceed by induction on the length $\ell$ of $e_1 \cdot e_2 \sarun{U} f$.
In the base, where $\ell = 0$, we have $f = e_1 \cdot e_2$ (hence $e_1, e_1 \in \sacc$) and $U = 1$.
We can then choose $f_1 = e_1$ and $f_2 = e_2$ as well as $U_1 = U_2 = 1$, to find that $e_1 \sarun{U_1} f_1$ and $e_2 \sarun{U_2} f_1$, of length zero.

For the inductive step, let $e_1 \cdot e_2 \sarun{U} f$ be of length $\ell+1$.
We find that $U = U_0 \cdot U'$, and a $g \in \terms$ such that $e_1 \cdot e_2 \sarun{U_0} g$ is a unit run, and $g \sarun{U'} f$ is of length $\ell$.
If $e_1 \cdot e_2 \sarun{U_0} g$ is a sequential unit run, then $U_0 = \ltr{a}$ for some $\ltr{a} \in \Sigma$, and $g \in \sderiv(e_1 \cdot e_2, \ltr{a}) = \sderiv(e_1, \ltr{a}) \fatsemi e_2 \cup e_1 \star \sderiv(e_2, \ltr{a})$.
This gives us two cases to consider.
\begin{itemize}
    \item
    If $g \in \sderiv(e_1, \ltr{a}) \fatsemi e_2$, then $g = g_1 \cdot e_2$ such that $g_1 \in \sderiv(e_1, \ltr{a})$.
	By induction we find $f_1, f_2 \in \sacc$ and $U' = U_1' \cdot U_2'$ such that $g_1 \sarun{U_1'} f_1$, and $e_2 \sarun{U_2'} f_2$, of length at most $\ell$.
    We choose $U_1 = U_0 \cdot U_1'$ and $U_2 = U_2'$ to find $U = U_0 \cdot U' = U_0 \cdot U_1' \cdot U_2' = U_1 \cdot U_2$, as well as $e_1 \sarun{U_1} f_1$ of length at most $\ell + 1$, and $e_2 \sarun{U_2} f_2$ of length at most $\ell$.

    \item
    If $g \in e_1 \star \sderiv(e_2, \ltr{a})$, then first note that $e_1 \in \sacc$, and $g \in \sderiv(e_2, \ltr{a})$.
    We choose $U_1 = 1$ and $U_2 = U$ as well as $f_1 = e_1$ and $f_2 = f'$ to find that $U = U_1 \cdot U_2$ as well as $e_1 \sarun{U_1} f_1$ of length zero.
    Lastly, $e_2 \sarun{U_0} g \sarun{U'} f' = f_2$, meaning $e_2 \sarun{U} f_2$ of length at most $\ell + 1$.
\end{itemize}
The case where $e_1 \cdot e_2 \sarun{U_0} g$ is a parallel unit run can be treated similarly.
\end{proof}

\restaterundeconstructparallel*
\begin{proof}
If $e_1 \parallel e_2 \sarun{U} f$ is trivial, then $U = 1$ and $e_1 \parallel e_2 = f \in \sacc$.
Hence, $e_1, e_2 \in \sacc$; we can choose $f_1 = e_1$, $f_2 = e_2$ and $U_1 = U_2 = 1$ to satisfy the claim.

Otherwise, there exist $U_0, U' \in \SP(\Sigma)$ and $g \in \terms$ such that $U = V \cdot W$ and $e_1 \parallel e_2 \sarun{V} g$ is a unit run, and $g \sarun{W} f$.
We can discount the possibility that $e_1 \parallel \sarun{V} g$ is a sequential unit run, because $\sderiv(e_1 \parallel e_2, \ltr{a}) = \emptyset$ for all $\ltr{a} \in \Sigma$.
Hence, $e_1 \parallel e_2 \sarun{V} g$ is a parallel unit run, meaning that $V = V_1 \parallel \cdots \parallel V_n$ and there exists a $\phi = \mset{h_1, \dots, h_n} \in \M(\terms)$ such that $g \in \pderiv(e_1 \parallel e_1, \phi)$, and for $1 \leq i \leq n$ there exists an $h_i' \in \sacc$ with $h_i \sarun{V_i} h_i'$.
By definition of $\pderiv$, it then follows that $n = 2$ and $g = 1$ as well as (without loss of generality) $e_1 = h_1$ and $e_2 = h_2$.
Since $g = 1$, it must be that $g \sarun{W} f$ is trivial, and hence $W = 1$, meaning that $U = V$.
We choose $f_1 = h_1'$, $f_2 = h_2'$, $U_1 = V_1$ and $U_2 = V_2$ to satisfy the claim.
\end{proof}

\subsection{Construction lemmas}

\restaterunconstructsequential*
\begin{proof}
The proof consists of two phases; first, we verify the following.

\begin{fact}%
\label{fact:run-construct-carry}
We have that $e_1 \cdot e_2 \sarun{U} f_1 \cdot e_2$.
\end{fact}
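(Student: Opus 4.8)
The plan is to prove the Fact by rule induction on the derivation of the run $e_1 \sarun{U} f_1$, keeping $e_2$ fixed but quantifying over the source expression, the label, and the target; that is, I would establish the more general statement that whenever $x \sarun{Y} z$, then $x \cdot e_2 \sarun{Y} z \cdot e_2$, and then specialise to $x = e_1$, $Y = U$, $z = f_1$. The engine of the argument is the way $\sderiv$ and $\pderiv$ act on a sequential composition whose \emph{first} operand drives the transition. From the defining equation $\sderiv(e \cdot f, \ltr{a}) = \sderiv(e, \ltr{a}) \fatsemi f \;\cup\; e \star \sderiv(f, \ltr{a})$ one reads off the inclusion $\sderiv(x, \ltr{a}) \fatsemi e_2 \subseteq \sderiv(x \cdot e_2, \ltr{a})$, and likewise $\pderiv(x, \phi) \fatsemi e_2 \subseteq \pderiv(x \cdot e_2, \phi)$. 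Informally: any single transition out of $x$ lifts to the same transition out of $x \cdot e_2$, with the landing expression $g$ merely replaced by $g \cdot e_2$.

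For the base and unit cases, a trivial run $x \sarun{1} x$ lifts to $x \cdot e_2 \sarun{1} x \cdot e_2$ by the trivial rule. A sequential unit run has $Y = \ltr{a}$ for some $\ltr{a} \in \Sigma$ and $z \in \sderiv(x, \ltr{a})$; by the inclusion above, $z \cdot e_2 \in \sderiv(x \cdot e_2, \ltr{a})$, so $x \cdot e_2 \sarun{\ltr{a}} z \cdot e_2$ is again a sequential unit run. A parallel unit run has $Y = U_1 \parallel \cdots \parallel U_n$ and $z \in \pderiv(x, \phi)$ with $\phi = \mset{h_1, \dots, h_n}$ and each $h_i \sarun{U_i} h_i' \in \sacc$; here $z \cdot e_2 \in \pderiv(x \cdot e_2, \phi)$ by the second inclusion, and reusing the \emph{same} fork witnesses $h_i \sarun{U_i} h_i'$ yields $x \cdot e_2 \sarun{Y} z \cdot e_2$.

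For the only remaining rule, a composite run factors as $Y = V \cdot W$ with $x \sarun{V} y$ and $y \sarun{W} z$ for some intermediate $y$. Applying the induction hypothesis to each subrun gives $x \cdot e_2 \sarun{V} y \cdot e_2$ and $y \cdot e_2 \sarun{W} z \cdot e_2$, and composing these with the third rule of the run relation yields $x \cdot e_2 \sarun{V \cdot W} z \cdot e_2$, as required.

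Nothing here is genuinely difficult; the only point requiring care is the parallel unit case, where one must observe that the $\fatsemi e_2$ attaches solely to the landing state produced by $\pderiv$, and not to the fork targets $h_i$, so that the witnessing subruns $h_i \sarun{U_i} h_i'$ may be transplanted verbatim without any change to their labels or endpoints. This is exactly what the product clause in the definition of $\pderiv$ guarantees, which is why no auxiliary lemma about the fork components is needed.
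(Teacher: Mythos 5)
Your proof is correct and takes essentially the same route as the paper's: both arguments induct over the run (the paper on its length via the unit-run decomposition, you on the derivation tree of $\rightarrow_\Sigma$ --- a cosmetic difference) and both hinge on the same inclusions $\sderiv(x, \ltr{a}) \fatsemi e_2 \subseteq \sderiv(x \cdot e_2, \ltr{a})$ and $\pderiv(x, \phi) \fatsemi e_2 \subseteq \pderiv(x \cdot e_2, \phi)$, transplanting the fork witnesses unchanged in the parallel unit case. No gaps.
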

\begin{proof}
The proof proceeds by induction on the length $\ell$ of $e_1 \sarun{U} f_1$.
In the base, where $\ell = 0$ and $f_1 = e_1$ as well as $U = 1$, we the claim holds immediately.

In the inductive step, let $e_1 \sarun{U} f_1$ be of length $\ell + 1$.
We find $e_1' \in \terms$ and $U = U_0 \cdot U'$ such that $e_1 \sarun{U_0} e_1'$ is a unit run, and $e_1' \sarun{U'} f_1$ is of length $\ell$.
By induction, $e_1' \cdot e_2 \sarun{U'} f_1 \cdot e_2$.
If $e_1 \sarun{U_0} e_1'$ is a sequential unit run, then $U_0 = \ltr{a}$ for some $\ltr{a} \in \Sigma$, and $e_1' \in \sderiv(e_1, \ltr{a})$, meaning $e_1' \cdot e_2 \in \sderiv(e_1 \cdot e_2, \ltr{a})$, hence $e_1 \cdot e_2 \sarun{U_0} e_1' \cdot e_2$.
We conclude that $e_1 \cdot e_2 \sarun{U} f_1 \cdot e_2$.

  The case where $e_1 \sarun{U_0} e_1'$ is a parallel unit run is similar.
\end{proof}

Next, we note the following.

\begin{fact}%
\label{fact:run-construct-ledge}
There exists an $f \in \sacc$ such that $f_1 \cdot e_2 \sarun{V} f$.
\end{fact}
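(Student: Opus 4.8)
The plan is to exploit the fact that, because $f_1 \in \sacc$, the syntactic derivatives of $f_1 \cdot e_2$ subsume those of $e_2$. Concretely, the defining clauses give $\sderiv(f_1 \cdot e_2, \ltr{a}) = \sderiv(f_1, \ltr{a}) \fatsemi e_2 \cup f_1 \star \sderiv(e_2, \ltr{a})$ and $\pderiv(f_1 \cdot e_2, \phi) = \pderiv(f_1, \phi) \fatsemi e_2 \cup f_1 \star \pderiv(e_2, \phi)$, and since $f_1 \in \sacc$ we have $f_1 \star T = T$. Hence I would first record the inclusions $\sderiv(e_2, \ltr{a}) \subseteq \sderiv(f_1 \cdot e_2, \ltr{a})$ and $\pderiv(e_2, \phi) \subseteq \pderiv(f_1 \cdot e_2, \phi)$. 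These let any unit run issuing from $e_2$ be replayed as a unit run issuing from $f_1 \cdot e_2$ that reaches the very same target.

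Next I would distinguish two cases according to the run $e_2 \sarun{V} f_2$. If it is trivial, then $V = 1$ and $e_2 = f_2 \in \sacc$; since $f_1 \in \sacc$ as well, the product rule defining $\sacc$ yields $f_1 \cdot e_2 \in \sacc$, so the trivial run $f_1 \cdot e_2 \sarun{1} f_1 \cdot e_2$ witnesses the claim with $f = f_1 \cdot e_2$. If the run is non-trivial, I would invoke \cref{lemma:run-deconstruct} to write it as $e_2 = g_0 \sarun{V_1} g_1 \sarun{V_2} \cdots \sarun{V_\ell} g_\ell = f_2$ with $V = V_1 \cdots V_\ell$, $\ell \geq 1$, each step a unit run.

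The crucial structural observation is that after a single step out of $f_1 \cdot e_2$ the prefix $f_1$ disappears, so no induction is needed: I lift only the first step $e_2 \sarun{V_1} g_1$ to $f_1 \cdot e_2 \sarun{V_1} g_1$ using the inclusions above (in the sequential case $V_1 = \ltr{a}$ with $g_1 \in \sderiv(e_2, \ltr{a}) \subseteq \sderiv(f_1 \cdot e_2, \ltr{a})$; in the parallel case $g_1 \in \pderiv(e_2, \phi) \subseteq \pderiv(f_1 \cdot e_2, \phi)$ for the same witnessing multiset $\phi$). Composing this lifted step with the tail $g_1 \sarun{V_2 \cdots V_\ell} f_2$, obtained from the original run by the sequential composition rule for $\rightarrow_\Sigma$, gives $f_1 \cdot e_2 \sarun{V} f_2$, and $f = f_2 \in \sacc$ works.

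The only mildly delicate point is lifting the first unit run in the parallel case: one must observe that replacing the forking state $e_2$ by $f_1 \cdot e_2$ leaves the thread runs witnessing the fork entirely untouched, since those runs originate from the states in $\phi$ rather than from the forking state itself. Thus $g_1 \in \pderiv(f_1 \cdot e_2, \phi)$ reuses exactly the same premises and produces the same parallel unit run. Everything else is routine bookkeeping around the $f_1 \star$ clause of the derivatives.
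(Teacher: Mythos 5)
Your proof is correct and takes essentially the same route as the paper's: both dispose of the trivial run by taking $f = f_1 \cdot e_2 \in \sacc$, and otherwise peel off the first unit step of $e_2 \sarun{V} f_2$ and lift it to $f_1 \cdot e_2$ via the inclusions $\sderiv(e_2, \ltr{a}) \subseteq \sderiv(f_1 \cdot e_2, \ltr{a})$ and $\pderiv(e_2, \phi) \subseteq \pderiv(f_1 \cdot e_2, \phi)$ furnished by the $f_1 \star$ clause, then compose with the untouched tail and take $f = f_2$. Your observation that no induction is needed because the prefix $f_1$ disappears after one step is accurate --- the paper's ``inductive step'' phrasing is vestigial, since its proof likewise never invokes an induction hypothesis.
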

\begin{proof}
  There are two cases to consider, based on the length of $e_2 \sarun{V} f_1$.
  \begin{itemize}
  \item
    If $\ell = 0$, we know that $f_2 = e_2$ and $V = 1$.
    We can then choose $f = f_1 \cdot e_2$.

  \item
    In the inductive step, let $e_2 \sarun{V} f_1$ be of length $\ell + 1$.
    We find $e_2' \in \terms$ and $V = V_0 \cdot V'$ such that $e_2 \sarun{V_0} e_2'$ is a unit run, and $e_2' \sarun{V'} f_2$ is of length $\ell$.
    If $e_2 \sarun{V_0} e_2'$ is a sequential unit run, then $V_0 = \ltr{a}$ for some $\ltr{a} \in \Sigma$, and $e_2' \in \sderiv(e_2, \ltr{a})$, and thus $e_2' \in \sderiv(f_1 \cdot e_2, \ltr{a})$.
    Hence, $f_1 \cdot e_2 \sarun{V_0} e_2'$, meaning $f_1 \cdot e_2 \sarun{V} f_2$; choosing $f = f_2$ satisfies the claim.

    The case where $e_2 \sarun{V_0} e_2'$ is a parallel unit run is similar.
    \qedhere
  \end{itemize}
\end{proof}

Putting these together, we find $f \in \sacc$ such that $e_1 \cdot e_2 \sarun{U \cdot V} f$.
\end{proof}

\restaterunconstructparallel*
\begin{proof}
Since $1 \in \pderiv(e_1 \parallel e_2, \mset{e_1, e_2})$, the claim follows immediately.
\end{proof}

\subsection{Correctness of the syntactic PA}

\restatebrzozowskimorphism*
\begin{proof}
  We treat the claims case-by-case.
  \begin{itemize}
  \item
    To show $L_\Sigma(0) = \emptyset$, suppose that $U \in L_\Sigma(0)$.
    In that case, $0 \sarun{U} e$ for some $e \in \sacc$.
    Since $0 \not\in \sacc$, this means that $0 \sarun{U} e$ cannot be trivial.
    In that case, there exists an $e' \in \terms$ such that $0 \sarun{U} e'$ is a unit run.
    However, this contradicts that $\sderiv(0, \ltr{a}) = \emptyset$ for all $\ltr{a} \in \Sigma$, and $\pderiv(e, \phi) = \emptyset$ for all $\phi \in \M(\terms)$.
    Therefore, our assumption that $U \in L_\Sigma(0)$ must be false.
    We conclude that $L_\Sigma(0) = \emptyset$.

  \item
    To show $L_\Sigma(1) = \set{1}$, suppose that $U \in L_\Sigma(1)$, i.e., $1 \sarun{U} e$ for some $e \in \sacc$.
    By an argument similar to the previous case, we can argue that $1 \sarun{U} e$ is trivial, and hence $U = 1$.
    The other inclusion follows from the fact that $1 \in \sacc$ and $1 \sarun{1} 1$.

  \item
    To show $L_\Sigma(\ltr{a}) = \set{\ltr{a}}$, suppose that $U \in L_\Sigma(\ltr{a})$, i.e., $\ltr{a} \sarun{U} e$ for some $e \in \sacc$.
    Since $\ltr{a} \not\in \sacc$, we know $\ltr{a} \sarun{U} e$ must be non-trivial.
    This means that we can write $U = U_0 \cdot U'$, and there exists an $f \in \terms$ such that $\ltr{a} \sarun{U_0} f$ is a unit run, and $f \sarun{U'} e$.
    A quick glance at $\sderiv$ and $\pderiv$ then tells us that $f = 1$.
    By the previous case, we know that $f \sarun{U'} e$ must be trivial; hence $U' = 1$ and $f = e$.
    Indeed, $\ltr{a} \sarun{U_0} f$ must be a sequential unit run, for $\pderiv(\ltr{a}, \phi) = \emptyset$ for all $\phi \in \M(\terms)$.
    This tells us that $U = \ltr{b}$ and $f \in \sderiv(\ltr{a}, \ltr{b})$ for some $\ltr{b} \in \Sigma$; by definition of $\sderiv$, it follows that $\ltr{b} = \ltr{a}$.

    For the other inclusion, let $U = \ltr{a}$; then $\ltr{a} \sarun{\ltr{a}} 1$ immediately, and hence $\ltr{a} \in L_\Sigma(\ltr{a})$.

  \item
    To show $L_\Sigma(e + f) = L_\Sigma(e) \cup L_\Sigma(f)$, suppose $U \in L_\Sigma(e + f)$, i.e., $e + f \sarun{U} g$ for $g \in \sacc$.
    By \cref{lemma:run-deconstruct-plus}, we find $g' \in \sacc$ with $e \sarun{U} g'$ or $f \sarun{U} g'$, and hence $U \in L_\Sigma(e) \cup L_\Sigma(f)$.

    For the other inclusion, suppose that $U \in L_\Sigma(e)$.
    We then have that $e \sarun{U} g$ for some $g \in \sacc$.
    By \cref{lemma:run-construct-plus}, there exists a $g' \in \sacc$ such that $e + f \sarun{U} g'$, and hence $U \in L_\Sigma(e+f)$.
    The case where $U \in L_\Sigma(f)$ can be treated similarly.

  \item
    To show $L_\Sigma(e \cdot f) = L_\Sigma(e) \cdot L_\Sigma(f)$, suppose that $U \in L_\Sigma(e \cdot f)$, i.e., $e \cdot f \sarun{U} g$ for some $g \in \sacc$.
    By \cref{lemma:run-deconstruct-sequential}, we find $g_0, g_1 \in \sacc$ such that $U = U_0 \cdot U_1$ as well as $e \sarun{U_0} g_0$ and $f \sarun{U_1} g_1$.
    This means that $U_0 \in L_\Sigma(e)$ and $U_1 \in L_\Sigma(f)$, and thus $U \in L_\Sigma(e) \cdot L_\Sigma(f)$.

    If $U \in L_\Sigma(e) \cdot L_\Sigma(f)$, then $U = U_0 \cdot U_1$ such that $U_0 \in L_\Sigma(e)$ and $U_1 \in L_\Sigma(f)$.
    This means that there exist $g_0, g_1 \in \sacc$ such that $e \sarun{U_0} g_0$ and $f \sarun{U} g_1$.
    By \cref{lemma:run-construct-sequential}, there exists a $g \in \sacc$ such that $e \cdot f \sarun{U} g$, and hence $U \in L_\Sigma(e \cdot f)$.

  \item
    To show $L_\Sigma(e \parallel f) = L_\Sigma(e) \parallel L_\Sigma(f)$, suppose $U \in L_\Sigma(e \parallel f)$, i.e., $e \parallel f \sarun{U} g$ for some $g \in \sacc$.
    By \cref{lemma:run-deconstruct-parallel}, we find $g_1, g_2 \in \sacc$ and $U_1, U_2 \in \SP(\Sigma)$ such that $U = U_1 \parallel U_2$ as well as $e_1 \sarun{U_1} g_1$ and $e_2 \sarun{U_2} g_2$.
    It then follows that $U = U_1 \parallel U_2 \in L_\Sigma(e) \parallel L_\Sigma(f)$.

    If $U \in L_\Sigma(e) \parallel L_\Sigma(f)$, then $U = U_1 \parallel U_2$ such that $U_1 \in L_\Sigma(e)$ and $U_2 \in L_\Sigma(f)$.
    This means that there exist $g_1, g_2 \in \sacc$ such that $e \sarun{U_1} g_1$ and $f \sarun{U_2} g_2$.
    By \cref{lemma:run-construct-parallel}, we find that $e \parallel f \sarun{U} 1 \in \sacc$, and hence $U \in L_\Sigma(e \parallel f)$.

  \item
    To show $L_\Sigma(e^*) = {L_\Sigma(e)}^*$, suppose $U \in L_\Sigma(e^*)$, i.e., $e^* \sarun{U} f$ for $f \in \sacc$.
    By \cref{lemma:run-deconstruct-star}, we find that $U = U_1 \cdots U_n$ and $f_1, \dots, f_n \in \sacc$ such that for $1 \leq i \leq n$ we have $e \sarun{U_i} f_i$.
    Hence, we know for $1 \leq i \leq n$ that $U_i \in L_\Sigma(e)$, and therefore $U = U_1 \cdots U_n \in {L_\Sigma(e)}^*$.

    For the other direction, let $U \in {L_\Sigma(e)}^*$.
    Then we can write $U = U_1 \cdots U_n$ such that for $1 \leq i \leq n$ it holds that $U_i \in L_\Sigma(e)$.
    We find for $1 \leq i \leq n$ an $f_i \in \sacc$ such that $e \sarun{U_i} f_i$.
    By \cref{lemma:run-construct-star}, we find an $f \in \sacc$ such that $e^* \sarun{U} f$, and hence $U \in L_\Sigma(e^*)$.
    \qedhere
  \end{itemize}
\end{proof}

\restatebrzozowskisoundness*
\begin{proof}
The proof proceeds by induction on the structure of $e$.
In the base, we need to consider the cases where $e \in \set{0, 1} \cup \Sigma$, all of which go through by \cref{lemma:brzozowski-morphism}.

For the inductive step, there are four cases.
We argue the case where $e = e_1 + e_2$; the other cases are similar.
Using \cref{lemma:brzozowski-morphism} and the induction hypothesis, we can derive that
\[
    L_\Sigma(e_1 + e_2)
        = L_\Sigma(e_1) \cup L_\Sigma(e_2)
        = \sem{e_1} \cup \sem{e_2}
        = \sem{e_1 + e_2}
    \qedhere
\]
\end{proof}

\restatesyntacticpaforkacyclic*
\begin{proof}
  For fork-acyclicity, we define the following.
  For $e \in \terms$, we define $\pdepth{e}$ as follows:
  \begin{align*}
    \pdepth{0} &= 0
    & \pdepth{e + f} &= \max(\pdepth{e}, \pdepth{f})
    & \pdepth{e^*} &= \pdepth{e} \\
    \pdepth{1} &= 0
    & \pdepth{e \parallel f} &= \max(\pdepth{e}, \pdepth{f}) + 1 \\
    \pdepth{\ltr{a}} &= 0
    & \pdepth{e \cdot f} &= \max(\pdepth{e}, \pdepth{f})
  \end{align*}
  We now claim that if $e \preceq_\Sigma f$, then $\pdepth{e} \leq \pdepth{f}$.
  To see this, it suffices to prove the claim for the rules that generate $\preceq_\Sigma$; this gives us three cases to consider.
  \begin{enumerate}[(i)]
  \item
    If $e \preceq_\Sigma f$ because there exists an $\ltr{a} \in \Sigma$ with $e \in \sderiv(f, \ltr{a})$, we proceed by induction on $f$.
    In the base, $f \in \Sigma$ and $e = 1$; but then $\pdepth{e} = 0 \leq \pdepth{f}$ immediately.

    \medskip\noindent
    For the inductive step, there are four cases to consider.
    \begin{itemize}
    \item
      If $f = f_1 + f_2$, then assume without loss of generality that $e \in \sderiv(f_1, \ltr{a})$.
      By induction, $\pdepth{e} \preceq_\Sigma \pdepth{f_1}$; since $\pdepth{f_1} \leq \pdepth{f}$, the claim follows.

    \item
      If $f = f_1 \cdot f_1$, then there are two subcases to consider.
      \begin{itemize}
      \item
        If $e \in \sderiv(f_1, \ltr{a}) \fatsemi f_2$, then $e = f_1' \cdot f_2$ with $f_1' \in \sderiv(f_1, \ltr{a})$.
        By induction, $\pdepth{f_1'} \leq \pdepth{f_1}$.
        We then know that
        \[
          \pdepth{e} = \max(\pdepth{f_1'}, \pdepth{f_2}) \leq \max(\pdepth{f_1}, \pdepth{f_2}) = \pdepth{f}
        \]

      \item
        If $e \in f_1 \star \sderiv(f_2, \ltr{a})$, then $e \in \sderiv(f_2, \ltr{a})$.
        By induction, $\pdepth{e} \leq \pdepth{f_2}$.
        Since $\pdepth{f_2} \leq \pdepth{f}$, the claim follows.
      \end{itemize}

    \item
      We can disregard the case where $f = f_1 \parallel f_2$, for $\sderiv(f, \ltr{a}) = \emptyset$.

    \item
      If $f = f_1^*$, then $e = f_1' \cdot f_1^*$ with $f_1' \in \sderiv(f_1, \ltr{a})$.
      By induction, $\pdepth{f_1'} \leq \pdepth{f_1}$.
      We then know that $\pdepth{e} = \max(\pdepth{f_1'}, \pdepth{f_1}) \leq \pdepth{f_1} = \pdepth{f}$.
    \end{itemize}

  \item
    If $e \preceq_\Sigma f$ because there exists a $\phi \in \M(\terms)$ with $e \in \pderiv(f, \phi)$, we proceed by induction on $f$.
    In the base, where $f \in \set{0, 1} \cup \Sigma$, the claim holds vacuously, because $\pderiv(f, \phi) = \emptyset$.

    \medskip\noindent
    For the inductive step, all cases except the one for parallel composition are similar to the argument above.
    Now, if $f = f_1 \parallel f_2$, then $e = 1$, and hence $\pdepth{e} = 0 \leq \pdepth{f}$.

  \item\label{property:preceq-vs-depth-strict}
    If $e \preceq_\Sigma f$ because there exists a $\phi \in \M(\terms)$ with $e \in \phi$ and $\pderiv(f, \phi) \neq \emptyset$, we proceed by induction on $f$, showing $\pdepth{e} < \pdepth{f}$.
    In the base, the claim holds vacuously.

    \medskip\noindent
    For the inductive step, there are four cases to consider.
    \begin{itemize}
    \item
      If $f = f_1 + f_2$, then assume without loss of generality that $\pderiv(f_1, \phi) \neq \emptyset$.
      By induction, we have $\pdepth{e} < \pdepth{f_1}$.
      Since $\pdepth{f_1} \leq \pdepth{f}$, we are done.

    \item
      If $f = f_1 \cdot f_2$, then there are two subcases to consider.
      \begin{itemize}
      \item
        If $\pderiv(f_1, \phi) \fatsemi f_2 \neq \emptyset$, then $\pderiv(f_1, \phi) \neq \emptyset$.
        By induction, we have that $\pdepth{e} < \pdepth{f_1}$.
        Since $\pdepth{f_1} \leq \pdepth{f}$, we are done.

      \item
        If $f_1 \star \pderiv(f_2, \phi) \neq \emptyset$, then $\pderiv(f_2, \phi) \neq \emptyset$.
        By induction, we have that $\pdepth{e} < \pdepth{f_1}$.
        Since $\pdepth{f_1} \leq \pdepth{f}$, we are done.
      \end{itemize}

    \item
      If $f = f_1 \parallel f_2$, then without loss of generality $\phi = \mset{f_1, f_2}$ and $e = f_1$.
      By definition of $\pdepth$, we then find that $\pdepth{e} = \pdepth{f_1} < \pdepth{f}$.

    \item
      If $f = f_1^*$, then $\pderiv(f_1, \phi) \neq \emptyset$.
      By induction, $\pdepth{e} < \pdepth{f_1}$.
      Since $\pdepth{f_1} = \pdepth{f}$, we are done.
    \end{itemize}
  \end{enumerate}
  Now, if $e, f \in \terms$ such that there exists a $\phi \in \terms$ with $\pderiv(e, \phi) \neq \emptyset$ and $f \in \phi$, then first of all $f \preceq_\Sigma e$ by definition.
  Suppose towards a contradiction that $e \preceq_\Sigma f$; then $\pdepth{e} \leq \pdepth{f}$ by the above.
  However, we also know that $\pdepth{f} < \pdepth{e}$ by~\ref{property:preceq-vs-depth-strict} above --- a contradiction.
  We can therefore conclude that $e \not\preceq_\Sigma f$, and hence $f \prec_\Sigma e$.
\end{proof}

\restatesyntacticpabounded*
\begin{proof}
  We should show that for $e \in \terms$, it holds that $\ssupp(e)$ is finite.
  Since $\ssupp(e)$ is the smallest closed set that contains $e$, it suffices to find a finite closed set $S(e)$ such that $e \in S(e)$; since $\ssupp(e) \subseteq S(e)$ by definition, the claim then follows.
  In particular, to show that $S(e)$ is support-closed, it suffices to verify that for $f \in S(e)$ the following hold:
  \begin{itemize}
  \item
    For all $\ltr{a} \in \Sigma$, we have $\sderiv(e, \ltr{a}) \subseteq S(e)$.

  \item
    For all $\phi \in \M(\terms)$, we have $\pderiv(e, \phi) \subseteq S(e)$.

  \item
    If $e \in \phi \in \M(\terms)$ and $\pderiv(e, \phi) \neq \emptyset$, then $e \in S(e)$.
  \end{itemize}

  \noindent
  We proceed by induction on $e$.
  In the base, there are two cases to consider.
  \begin{itemize}
  \item
    If $e \in \set{0, 1}$, then we choose $S(e) = \set{e}$.
    Note that $\sderiv(e, \ltr{a}) = \emptyset$ for all $\ltr{a} \in \Sigma$, and $\pderiv(e, \phi) = \emptyset$ for all $\phi \in \M(\terms)$ --- hence, the three conditions above hold vacuously.

  \item
    If $e = \ltr{a}$ for some $\ltr{a} \in \Sigma$, then we choose $S(e) = \set{1, \ltr{a}}$.
    To see that $S(e)$ is support-closed, we inspect the derivatives of $\ltr{a}$; the derivatives for $1$ satisfy the right conditions by the previous case.
    First, for all $\ltr{b} \in \Sigma$, we have that $\sderiv(\ltr{a}, \ltr{b}) \subseteq \set{1}$, and hence $\sderiv(\ltr{b}, \ltr{a}) \subseteq S(e)$.
    Second, for all $\phi \in \M(\terms)$ we have that $\pderiv(\ltr{a}, \phi) = \emptyset$, and hence $\pderiv(\ltr{a}, \phi) \subseteq S(e)$.
    The case where $f \in \phi \in \M(\terms)$ with $\pderiv(\ltr{a}, \phi) \neq \emptyset$ cannot occur.
  \end{itemize}

  \noindent
  For the inductive step, there are four cases to consider.
  \begin{itemize}
  \item
    If $e = e_1 + e_2$, then we choose
    \[
      S(e) = \ssupp(e_1) \cup \ssupp(e_2) \cup \set{e}
    \]
    First, we note that $S(e)$ is finite by induction.
    To see that $S(e)$ is closed, it suffices to consider the derivatives of $\set{e}$, since $\ssupp(e_1)$ and $\ssupp(e_2)$ are closed by definition.
    \begin{itemize}
    \item
      For all $\ltr{a} \in \Sigma$, we have that $\sderiv(e_1 + e_2, \ltr{a}) = \sderiv(e_1, \ltr{a}) \cup \sderiv(e_2, \ltr{a})$.
      Now, since $\sderiv(e_1, \ltr{a}) \subseteq \ssupp(e_1)$ and $\sderiv(e_2, \ltr{a}) \subseteq \ssupp(e_2)$, we find that $\sderiv(e_1 + e_2, \ltr{a}) \subseteq S(e)$ as well.

    \item
      For all $\phi \in \M(\terms)$, we have that $\pderiv(e_1 + e_2, \phi) = \pderiv(e_1, \phi) \cup \pderiv(e_2, \phi)$.
      By an argument similar to the above, we find that $\pderiv(e_1 + e_2, \phi) \subseteq S(e)$.

    \item
      If $f \in \phi \in \M(\terms)$ such that $\pderiv(e_1 + e_2, \phi) \neq \emptyset$, then either $\pderiv(e_1, \phi) \neq \emptyset$ or $\pderiv(e_2, \phi) \neq \emptyset$.
      Hence, either $\pderiv(e_1, \phi) \neq \emptyset$ or $\pderiv(e_2, \phi) \neq \emptyset$, and hence $f \in \ssupp(e_1) \cup \ssupp(e_2) \subseteq S(e)$.
    \end{itemize}

  \item
    If $e = e_1 \cdot e_2$, then we choose
    \[
      S(e) = \ssupp(e_1) \fatsemi e_2 \cup e_1 \star \ssupp(e_2) \cup \ssupp(e_1)
    \]
    First, note that $S(e)$ is finite by induction.
    Also, $e \in S(e)$, because $e_1 \in \ssupp(e_1)$.
    To see that $S(e)$ is support-closed, it suffices to consider the elements of the first set above, since $e_1 \star \ssupp(e_2)$ and $\ssupp(e_1)$ are already support-closed.
    Let $e' = e_1' \cdot e_2$ for some $e_1' \in \ssupp(e_1)$.
    \begin{itemize}
    \item
      For all $\ltr{a} \in \Sigma$, we have that
      \[
        \sderiv(e_1' \cdot e_2, \ltr{a}) = \sderiv(e_1', \ltr{a}) \fatsemi e_1 \cup e_1' \star \sderiv(e_2, \ltr{a})
      \]
      Since $\sderiv(e_1', \ltr{a}) \cdot e_1 \subseteq \ssupp(e_1)$ and $\sderiv(e_2, \ltr{a}) \subseteq \ssupp(e_2)$, the claim then holds.

    \item
      For all $\phi \in \M(\terms)$, we can show that $\pderiv(e', \phi)$ again occurs in $S(e)$, by a similar argument.

    \item
      If $f \in \phi \in \M(\terms)$ such that $\pderiv(e', \phi) \neq \emptyset$, then $\pderiv(e_1', \phi) \fatsemi e_2 \neq \emptyset$, or $e_1' \star \pderiv(e_2, \phi) \neq \emptyset$.
      In the former case, $f \in \ssupp(e_1') \subseteq \ssupp(e_1) \subseteq S(e)$, while in the latter case $f \in \ssupp(e_2) \subseteq S(e)$.
    \end{itemize}

  \item
    If $e = e_1 \parallel e_2$, then we choose
    \[
      S(e) = \set{e_1 \parallel e_2, 1} \cup \ssupp(e_1) \cup \ssupp(e_2)
    \]
    We again have that $S(e)$ is finite by induction.
    To see that $S(e)$ is support-closed, it suffices to consider the derivatives of $e_1 \parallel e_2$.
    \begin{itemize}
    \item
      For all $\ltr{a} \in \Sigma$, we have that $\sderiv(e_1 \parallel e_2, \ltr{a}) = \emptyset \subseteq S(e)$.

    \item
      For $\phi \in \M(\terms)$, we have that $\pderiv(e_1 \parallel e_2, \phi) \subseteq \set{1} \subseteq S(e)$ by definition.

    \item
      For $f \in \phi \in \M(\terms)$ such that $\pderiv(e_1, \parallel e_2, \phi) \neq \emptyset$, we have that $\phi = \mset{e_1, e_2}$.
      In that case, $f \in \ssupp(e_1)$ or $f \in \ssupp(e_2)$.
    \end{itemize}

  \item
    If $e = e_1^*$, then we choose
    \[
      S(e) = \ssupp(e_1) \fatsemi e_1^* \cup \ssupp(e_1) \cup \set{e_1^*}
    \]
    First, we note that $S(e)$ is again finite by induction.
    To see that $S(e)$ is support-closed, it suffices to consider $\ssupp(e_1) \fatsemi e_1^* \cup \set{e_1^*}$.
    To this end, let $e' = e_1' \cdot e_1^*$ with $e_1' \in \ssupp(e_1)$.
    \begin{itemize}
    \item
      For all $\ltr{a} \in \Sigma$, we have that
      \begin{align*}
        \sderiv(e_1' \cdot e_1^*, \ltr{a})
        &= \sderiv(e_1', \ltr{a}) \fatsemi e_1^* \cup e_1 \star \sderiv(e_1^*, \ltr{a}) \\
        &\subseteq \sderiv(e_1', \ltr{a}) \fatsemi e_1^* \cup \sderiv(e_1^*, \ltr{a}) \\
        &= \sderiv(e_1', \ltr{a}) \fatsemi e_1^* \cup \sderiv(e_1, \ltr{a}) \fatsemi e_1^* \\
        &\subseteq \ssupp(e_1) \fatsemi e_1^* \subseteq S(e)
      \end{align*}
      Furthermore, $\sderiv(e_1^*, \ltr{a}) \subseteq S(e)$ by a similar argument.

    \item
      If $\phi \in \M(\terms)$, then $\pderiv(e_1' \cdot e_1^*, \phi) \subseteq S(e)$ and $\pderiv(e_1^*, \phi) \subseteq S(e)$ by a similar argument.

    \item
      If $f \in \phi \in \M(\terms)$ and $\pderiv(e_1' \cdot e_1^*, \phi) \neq \emptyset$, then $\pderiv(e_1', \phi) \neq \emptyset$, hence $f \in \ssupp(e_1') \subseteq \ssupp(e_1) \subseteq S(e)$.
      When $\pderiv(e_1^*, \phi) \neq \emptyset$, we have that $f \in S(e)$ by a similar argument.
      \qedhere
    \end{itemize}
  \end{itemize}
\end{proof}

\section{Lemmas about expressions to automata}

\restateautomatatoexpressionssmallstep*
\begin{proof}
For the implication from left to right, suppose that $U \in \sem{e_{pr}}$; we have two cases:
\begin{itemize}
    \item
    If $U = \ltr{a}$ such that $r \in \delta(p, \ltr{a})$, then $p \arun{U}_A r$ is a sequential unit run immediately.

    \item
    If $s_1, \dots, s_n \in Q$ with $U \in \sem{e_{s_1} \parallel \cdots \parallel e_{s_n}}$ and $r \in \gamma(p, \mset{s_1, \dots, s_n})$, then $U = U_1 \parallel \cdots \parallel U_n$ such that for all $1 \leq i \leq n$ we have $U_i \in \sem{e_{s_i}}$.
    By the main induction hypothesis we have for each $1 \leq i \leq n$ an $s_i' \in F$ with $s_i \arun{U_i}_A s_i'$.
    Hence $p \arun{U}_A r$ is a parallel unit run.
\end{itemize}

\noindent
For the converse implication, there are two cases to consider.
\begin{itemize}
    \item
    If $p \arun{U}_A r$ is a sequential unit run, then $U = \ltr{a}$ for some $\ltr{a} \in \Sigma$ and $r \in \delta(p, \ltr{a})$.
    In that case $U \in \sem{e_{pr}}$ by construction.

    \item
    If $p \arun{U}_A r$ is a parallel unit run, then $U = U_1 \parallel \cdots \parallel U_n$ and there exist $s_1, \dots, s_n \in Q$ and $s_1', \dots, s_n' \in F$ such that for $1 \leq i \leq n$ it holds that $s_i \arun{U_i}_A s_i'$, and furthermore $r \in \gamma(p, \mset{s_1, \dots, s_n})$.
    By the main induction hypothesis, we have that $U_i \in \sem{e_{s_i}}$ for $1 \leq i \leq n$, and thus $U \in \sem{U_1 \parallel \cdots \parallel U_n}$.
    We can then conclude that $U \in \sem{e_{pr}}$ again.
    \qedhere
\end{itemize}
\end{proof}

\end{document}